\documentclass[11pt]{article}

\usepackage{amsmath,amstext,amssymb,amsfonts}
\usepackage{tcolorbox}
\tcbuselibrary{breakable}
\usepackage[dvipsnames]{xcolor}
\usepackage{float}
\usepackage{enumerate}
\usepackage[T1]{fontenc}
\usepackage{mathpazo}
\usepackage{graphicx}
\usepackage{parskip}
\usepackage{accents}
\usepackage{dirtytalk}
\usepackage{bbold}
\usepackage{caption}
\usepackage{mathcomp}
\DeclareCaptionLabelFormat{nolabel}{}
\usepackage{chngcntr} 
\usepackage{geometry} 
\usepackage{tikz}

\usepackage{textcomp}
    \AtBeginDocument{%
        
    }
\usepackage{upquote} 
\usepackage{eurosym} 
\usepackage[mathletters]{ucs} 
\usepackage{inputenc} 
\usepackage{fancyvrb} 
\usepackage{grffile} 
\usepackage{longtable} 
\usepackage{booktabs} 

\usepackage[pdftex,pagebackref,colorlinks,filecolor=blue,linkcolor=blue,citecolor=magenta,urlcolor=JungleGreen]{hyperref}

\definecolor{urlcolor}{RGB}{205,0,0}
\definecolor{linkcolor}{RGB}{0,71,171}
\definecolor{citecolor}{RGB}{0,129,127}

\definecolor{ansi-black}{HTML}{3E424D}
\definecolor{ansi-black-intense}{HTML}{282C36}
\definecolor{ansi-red}{HTML}{E75C58}
\definecolor{ansi-red-intense}{HTML}{B22B31}
\definecolor{ansi-green}{HTML}{00A250}
\definecolor{ansi-green-intense}{HTML}{007427}
\definecolor{ansi-yellow}{HTML}{DDB62B}
\definecolor{ansi-yellow-intense}{HTML}{B27D12}
\definecolor{ansi-blue}{HTML}{208FFB}
\definecolor{ansi-blue-intense}{HTML}{0065CA}
\definecolor{ansi-magenta}{HTML}{D160C4}
\definecolor{ansi-magenta-intense}{HTML}{A03196}
\definecolor{ansi-cyan}{HTML}{60C6C8}
\definecolor{ansi-cyan-intense}{HTML}{258F8F}
\definecolor{ansi-white}{HTML}{C5C1B4}
\definecolor{ansi-white-intense}{HTML}{A1A6B2}

\DefineVerbatimEnvironment{Highlighting}{Verbatim}{commandchars=\\\{\}}

    \title{Introduction to python-igraph}
\makeatletter
\def\PY@reset{\let\PY@it=\relax \let\PY@bf=\relax%
    \let\PY@ul=\relax \let\PY@tc=\relax%
    \let\PY@bc=\relax \let\PY@ff=\relax}
\def\PY@tok#1{\csname PY@tok@#1\endcsname}
\def\PY@toks#1+{\ifx\relax#1\empty\else%
    \PY@tok{#1}\expandafter\PY@toks\fi}
\def\PY@do#1{\PY@bc{\PY@tc{\PY@ul{%
    \PY@it{\PY@bf{\PY@ff{#1}}}}}}}
\def\PY#1#2{\PY@reset\PY@toks#1+\relax+\PY@do{#2}}

\DeclareMathSymbol{\widehatsym}{\mathord}{largesymbols}{"62}

\expandafter\def\csname PY@tok@gd\endcsname{\def\PY@tc##1{\textcolor[rgb]{0.63,0.00,0.00}{##1}}}
\expandafter\def\csname PY@tok@gu\endcsname{\let\PY@bf=\textbf\def\PY@tc##1{\textcolor[rgb]{0.50,0.00,0.50}{##1}}}
\expandafter\def\csname PY@tok@gt\endcsname{\def\PY@tc##1{\textcolor[rgb]{0.00,0.27,0.87}{##1}}}
\expandafter\def\csname PY@tok@gs\endcsname{\let\PY@bf=\textbf}
\expandafter\def\csname PY@tok@gr\endcsname{\def\PY@tc##1{\textcolor[rgb]{1.00,0.00,0.00}{##1}}}
\expandafter\def\csname PY@tok@cm\endcsname{\let\PY@it=\textit\def\PY@tc##1{\textcolor[rgb]{0.25,0.50,0.50}{##1}}}
\expandafter\def\csname PY@tok@vg\endcsname{\def\PY@tc##1{\textcolor[rgb]{0.10,0.09,0.49}{##1}}}
\expandafter\def\csname PY@tok@vi\endcsname{\def\PY@tc##1{\textcolor[rgb]{0.10,0.09,0.49}{##1}}}
\expandafter\def\csname PY@tok@vm\endcsname{\def\PY@tc##1{\textcolor[rgb]{0.10,0.09,0.49}{##1}}}
\expandafter\def\csname PY@tok@mh\endcsname{\def\PY@tc##1{\textcolor[rgb]{0.40,0.40,0.40}{##1}}}
\expandafter\def\csname PY@tok@cs\endcsname{\let\PY@it=\textit\def\PY@tc##1{\textcolor[rgb]{0.25,0.50,0.50}{##1}}}
\expandafter\def\csname PY@tok@ge\endcsname{\let\PY@it=\textit}
\expandafter\def\csname PY@tok@vc\endcsname{\def\PY@tc##1{\textcolor[rgb]{0.10,0.09,0.49}{##1}}}
\expandafter\def\csname PY@tok@il\endcsname{\def\PY@tc##1{\textcolor[rgb]{0.40,0.40,0.40}{##1}}}
\expandafter\def\csname PY@tok@go\endcsname{\def\PY@tc##1{\textcolor[rgb]{0.53,0.53,0.53}{##1}}}
\expandafter\def\csname PY@tok@cp\endcsname{\def\PY@tc##1{\textcolor[rgb]{0.74,0.48,0.00}{##1}}}
\expandafter\def\csname PY@tok@gi\endcsname{\def\PY@tc##1{\textcolor[rgb]{0.00,0.63,0.00}{##1}}}
\expandafter\def\csname PY@tok@gh\endcsname{\let\PY@bf=\textbf\def\PY@tc##1{\textcolor[rgb]{0.00,0.00,0.50}{##1}}}
\expandafter\def\csname PY@tok@ni\endcsname{\let\PY@bf=\textbf\def\PY@tc##1{\textcolor[rgb]{0.60,0.60,0.60}{##1}}}
\expandafter\def\csname PY@tok@nl\endcsname{\def\PY@tc##1{\textcolor[rgb]{0.63,0.63,0.00}{##1}}}
\expandafter\def\csname PY@tok@nn\endcsname{\let\PY@bf=\textbf\def\PY@tc##1{\textcolor[rgb]{0.00,0.00,1.00}{##1}}}
\expandafter\def\csname PY@tok@no\endcsname{\def\PY@tc##1{\textcolor[rgb]{0.53,0.00,0.00}{##1}}}
\expandafter\def\csname PY@tok@na\endcsname{\def\PY@tc##1{\textcolor[rgb]{0.49,0.56,0.16}{##1}}}
\expandafter\def\csname PY@tok@nb\endcsname{\def\PY@tc##1{\textcolor[rgb]{0.00,0.50,0.00}{##1}}}
\expandafter\def\csname PY@tok@nc\endcsname{\let\PY@bf=\textbf\def\PY@tc##1{\textcolor[rgb]{0.00,0.00,1.00}{##1}}}
\expandafter\def\csname PY@tok@nd\endcsname{\def\PY@tc##1{\textcolor[rgb]{0.67,0.13,1.00}{##1}}}
\expandafter\def\csname PY@tok@ne\endcsname{\let\PY@bf=\textbf\def\PY@tc##1{\textcolor[rgb]{0.82,0.25,0.23}{##1}}}
\expandafter\def\csname PY@tok@nf\endcsname{\def\PY@tc##1{\textcolor[rgb]{0.00,0.00,1.00}{##1}}}
\expandafter\def\csname PY@tok@si\endcsname{\let\PY@bf=\textbf\def\PY@tc##1{\textcolor[rgb]{0.73,0.40,0.53}{##1}}}
\expandafter\def\csname PY@tok@s2\endcsname{\def\PY@tc##1{\textcolor[rgb]{0.73,0.13,0.13}{##1}}}
\expandafter\def\csname PY@tok@nt\endcsname{\let\PY@bf=\textbf\def\PY@tc##1{\textcolor[rgb]{0.00,0.50,0.00}{##1}}}
\expandafter\def\csname PY@tok@nv\endcsname{\def\PY@tc##1{\textcolor[rgb]{0.10,0.09,0.49}{##1}}}
\expandafter\def\csname PY@tok@s1\endcsname{\def\PY@tc##1{\textcolor[rgb]{0.73,0.13,0.13}{##1}}}
\expandafter\def\csname PY@tok@dl\endcsname{\def\PY@tc##1{\textcolor[rgb]{0.73,0.13,0.13}{##1}}}
\expandafter\def\csname PY@tok@ch\endcsname{\let\PY@it=\textit\def\PY@tc##1{\textcolor[rgb]{0.25,0.50,0.50}{##1}}}
\expandafter\def\csname PY@tok@m\endcsname{\def\PY@tc##1{\textcolor[rgb]{0.40,0.40,0.40}{##1}}}
\expandafter\def\csname PY@tok@gp\endcsname{\let\PY@bf=\textbf\def\PY@tc##1{\textcolor[rgb]{0.00,0.00,0.50}{##1}}}
\expandafter\def\csname PY@tok@sh\endcsname{\def\PY@tc##1{\textcolor[rgb]{0.73,0.13,0.13}{##1}}}
\expandafter\def\csname PY@tok@ow\endcsname{\let\PY@bf=\textbf\def\PY@tc##1{\textcolor[rgb]{0.67,0.13,1.00}{##1}}}
\expandafter\def\csname PY@tok@sx\endcsname{\def\PY@tc##1{\textcolor[rgb]{0.00,0.50,0.00}{##1}}}
\expandafter\def\csname PY@tok@bp\endcsname{\def\PY@tc##1{\textcolor[rgb]{0.00,0.50,0.00}{##1}}}
\expandafter\def\csname PY@tok@c1\endcsname{\let\PY@it=\textit\def\PY@tc##1{\textcolor[rgb]{0.25,0.50,0.50}{##1}}}
\expandafter\def\csname PY@tok@fm\endcsname{\def\PY@tc##1{\textcolor[rgb]{0.00,0.00,1.00}{##1}}}
\expandafter\def\csname PY@tok@o\endcsname{\def\PY@tc##1{\textcolor[rgb]{0.40,0.40,0.40}{##1}}}
\expandafter\def\csname PY@tok@kc\endcsname{\let\PY@bf=\textbf\def\PY@tc##1{\textcolor[rgb]{0.00,0.50,0.00}{##1}}}
\expandafter\def\csname PY@tok@c\endcsname{\let\PY@it=\textit\def\PY@tc##1{\textcolor[rgb]{0.25,0.50,0.50}{##1}}}
\expandafter\def\csname PY@tok@mf\endcsname{\def\PY@tc##1{\textcolor[rgb]{0.40,0.40,0.40}{##1}}}
\expandafter\def\csname PY@tok@err\endcsname{\def\PY@bc##1{\setlength{\fboxsep}{0pt}\fcolorbox[rgb]{1.00,0.00,0.00}{1,1,1}{\strut ##1}}}
\expandafter\def\csname PY@tok@mb\endcsname{\def\PY@tc##1{\textcolor[rgb]{0.40,0.40,0.40}{##1}}}
\expandafter\def\csname PY@tok@ss\endcsname{\def\PY@tc##1{\textcolor[rgb]{0.10,0.09,0.49}{##1}}}
\expandafter\def\csname PY@tok@sr\endcsname{\def\PY@tc##1{\textcolor[rgb]{0.73,0.40,0.53}{##1}}}
\expandafter\def\csname PY@tok@mo\endcsname{\def\PY@tc##1{\textcolor[rgb]{0.40,0.40,0.40}{##1}}}
\expandafter\def\csname PY@tok@kd\endcsname{\let\PY@bf=\textbf\def\PY@tc##1{\textcolor[rgb]{0.00,0.50,0.00}{##1}}}
\expandafter\def\csname PY@tok@mi\endcsname{\def\PY@tc##1{\textcolor[rgb]{0.40,0.40,0.40}{##1}}}
\expandafter\def\csname PY@tok@kn\endcsname{\let\PY@bf=\textbf\def\PY@tc##1{\textcolor[rgb]{0.00,0.50,0.00}{##1}}}
\expandafter\def\csname PY@tok@cpf\endcsname{\let\PY@it=\textit\def\PY@tc##1{\textcolor[rgb]{0.25,0.50,0.50}{##1}}}
\expandafter\def\csname PY@tok@kr\endcsname{\let\PY@bf=\textbf\def\PY@tc##1{\textcolor[rgb]{0.00,0.50,0.00}{##1}}}
\expandafter\def\csname PY@tok@s\endcsname{\def\PY@tc##1{\textcolor[rgb]{0.73,0.13,0.13}{##1}}}
\expandafter\def\csname PY@tok@kp\endcsname{\def\PY@tc##1{\textcolor[rgb]{0.00,0.50,0.00}{##1}}}
\expandafter\def\csname PY@tok@w\endcsname{\def\PY@tc##1{\textcolor[rgb]{0.73,0.73,0.73}{##1}}}
\expandafter\def\csname PY@tok@kt\endcsname{\def\PY@tc##1{\textcolor[rgb]{0.69,0.00,0.25}{##1}}}
\expandafter\def\csname PY@tok@sc\endcsname{\def\PY@tc##1{\textcolor[rgb]{0.73,0.13,0.13}{##1}}}
\expandafter\def\csname PY@tok@sb\endcsname{\def\PY@tc##1{\textcolor[rgb]{0.73,0.13,0.13}{##1}}}
\expandafter\def\csname PY@tok@sa\endcsname{\def\PY@tc##1{\textcolor[rgb]{0.73,0.13,0.13}{##1}}}
\expandafter\def\csname PY@tok@k\endcsname{\let\PY@bf=\textbf\def\PY@tc##1{\textcolor[rgb]{0.00,0.50,0.00}{##1}}}
\expandafter\def\csname PY@tok@se\endcsname{\let\PY@bf=\textbf\def\PY@tc##1{\textcolor[rgb]{0.73,0.40,0.13}{##1}}}
\expandafter\def\csname PY@tok@sd\endcsname{\let\PY@it=\textit\def\PY@tc##1{\textcolor[rgb]{0.73,0.13,0.13}{##1}}}

\makeatother
\definecolor{incolor}{rgb}{0.0, 0.0, 0.5}
\definecolor{outcolor}{rgb}{0.545, 0.0, 0.0}
  
\hypersetup{
    breaklinks=true,  %
    colorlinks=true,
    urlcolor=urlcolor,
    linkcolor=linkcolor,
    citecolor=citecolor,
    }
\usepackage{pstricks}
\usepackage{graphicx}
\usepackage{xcolor}
\usepackage{tabularx}
\usepackage{verbatim}
\usepackage{fullpage}
\usepackage[T1]{fontenc}
\usepackage{bbm}

\newtcolorbox{namedSDP}[1]{%
  breakable,
  colback=white,
  colframe=black!60,
  title={\textbf{#1}},
  fonttitle=\bfseries,
  top=1mm,
  bottom=1mm,
  left=1mm,
  right=1mm
}

\usepackage{amsthm}
\usepackage{ifdraft}
\usepackage{algorithm2e}
\usepackage{algorithmic}
\usepackage{appendix}
\usepackage{multicol}

\usepackage{nicefrac}

\newcommand{\sfP}{\mathsf{P}}
\newcommand{\sfQ}{\mathsf{Q}}
\newcommand{\sfA}{\mathsf{A}}
\newcommand{\maybe}{\mathsf{MAYBE}}
\newcommand{\Adv}{\mathsf{Adv}}
\newcommand{\Ber}{\mathsf{Ber}}

\usepackage[expansion=false]{microtype}

\newtheorem{theorem}{Theorem}
\newtheorem*{theorem*}{Theorem}

\newtheorem{claim}[theorem]{Claim}
\newtheorem*{claim*}{Claim}

\newtheorem{proposition}[theorem]{Proposition}
\newtheorem*{proposition*}{Proposition}
\newtheorem{lemma}[theorem]{Lemma}
\newtheorem*{lemma*}{Lemma}
\newtheorem{corollary}[theorem]{Corollary}

\newtheorem*{conjecture*}{Conjecture}

\newtheorem{fact}[theorem]{Fact}
\newtheorem*{fact*}{Fact}

\newtheorem*{hypothesis*}{Hypothesis}

\theoremstyle{definition}
\newtheorem{definition}[theorem]{Definition}
\newtheorem{construction}[theorem]{Construction}

\newtheorem{MP}[theorem]{MP}

\newtheorem{remark}[theorem]{Remark}

\usepackage{prettyref}
\newcommand{\savehyperref}[2]{\texorpdfstring{\hyperref[#1]{#2}}{#2}}

\newrefformat{parta}{\savehyperref{#1}{a}}
\newrefformat{partb}{\savehyperref{#1}{b}}
\newrefformat{eq}{\savehyperref{#1}{\textup{(\ref*{#1})}}}
\newrefformat{lem}{\savehyperref{#1}{Lemma~\ref*{#1}}}
\newrefformat{def}{\savehyperref{#1}{Definition~\ref*{#1}}}
\newrefformat{thm}{\savehyperref{#1}{Theorem~\ref*{#1}}}
\newrefformat{cor}{\savehyperref{#1}{Corollary~\ref*{#1}}}
\newrefformat{cha}{\savehyperref{#1}{Chapter~\ref*{#1}}}
\newrefformat{sec}{\savehyperref{#1}{Section~\ref*{#1}}}
\newrefformat{app}{\savehyperref{#1}{Appendix~\ref*{#1}}}
\newrefformat{tab}{\savehyperref{#1}{Table~\ref*{#1}}}
\newrefformat{fig}{\savehyperref{#1}{Figure~\ref*{#1}}}
\newrefformat{hyp}{\savehyperref{#1}{Hypothesis~\ref*{#1}}}
\newrefformat{alg}{\savehyperref{#1}{Algorithm~\ref*{#1}}}
\newrefformat{cp}{\savehyperref{#1}{CP~\ref*{#1}}}
\newrefformat{sdp}{\savehyperref{#1}{SDP~\ref*{#1}}}
\newrefformat{mp}{\savehyperref{#1}{MP~\ref*{#1}}}
\newrefformat{qp}{\savehyperref{#1}{QP~\ref*{#1}}}
\newrefformat{vp}{\savehyperref{#1}{VP~\ref*{#1}}}
\newrefformat{lp}{\savehyperref{#1}{LP~\ref*{#1}}}
\newrefformat{rem}{\savehyperref{#1}{Remark~\ref*{#1}}}
\newrefformat{item}{\savehyperref{#1}{Item~\ref*{#1}}}
\newrefformat{step}{\savehyperref{#1}{step~\ref*{#1}}}
\newrefformat{conj}{\savehyperref{#1}{Conjecture~\ref*{#1}}}
\newrefformat{fact}{\savehyperref{#1}{Fact~\ref*{#1}}}
\newrefformat{prop}{\savehyperref{#1}{Proposition~\ref*{#1}}}
\newrefformat{claim}{\savehyperref{#1}{Claim~\ref*{#1}}}
\newrefformat{relax}{\savehyperref{#1}{Relaxation~\ref*{#1}}}
\newrefformat{red}{\savehyperref{#1}{Reduction~\ref*{#1}}}
\newrefformat{part}{\savehyperref{#1}{Part~\ref*{#1}}}
\newrefformat{prob}{\savehyperref{#1}{Problem~\ref*{#1}}}
\newrefformat{ass}{\savehyperref{#1}{Assumption~\ref*{#1}}}
\newrefformat{cons}{\savehyperref{#1}{Construction~\ref*{#1}}}

\newcommand{\Sref}[1]{\hyperref[#1]{\S\ref*{#1}}}

\renewcommand{\leq}{\leqslant}
\renewcommand{\le}{\leqslant}
\renewcommand{\geq}{\geqslant}
\renewcommand{\ge}{\geqslant}

\usepackage{bm}

\usepackage{xspace}

\usepackage{boxedminipage}

\newcommand{\mper}{\,.}

\newcommand{\paren}[1]{\left(#1 \right )}

\newcommand{\brac}[1]{[#1 ]}
\newcommand{\Brac}[1]{\left[#1\right]}

\newcommand{\set}[1]{\left\{#1\right\}}

\newcommand{\abs}[1]{\left\lvert#1\right\rvert}

\newcommand{\ceil}[1]{\lceil #1 \rceil}
\newcommand{\floor}[1]{\lfloor #1 \rfloor}

\newcommand{\norm}[1]{\left\lVert#1\right\rVert}

\newcommand{\defeq}{\stackrel{\textup{def}}{=}}

\newcommand{\R}{\mathbb R}

\newcommand{\Esymb}{\mathbb{E}}
\newcommand{\Psymb}{\mathbb{P}}

\DeclareMathOperator*{\E}{\Esymb}
\DeclareMathOperator*{\Var}{{\sf Var}}

\DeclareMathOperator*{\ProbOp}{\Psymb}

\newcommand{\Prob}[1]{\ProbOp\Brac{#1}}

\newcommand{\Ex}[1]{\E\Brac{#1}}

\renewcommand{\Pr}[1]{\ProbOp\Brac{#1}}
\newcommand{\pr}[2]{\ProbOp_{#1}\Brac{#2}}

\newcommand{\one}{\mathbf{1}}

\definecolor{DSgray}{cmyk}{0,0,0,0.7}

\newcommand{\Tr}{\mathsf{Tr}}
\newcommand{\cA}{\mathcal A}

\newcommand{\cC}{\mathcal C}

\newcommand{\cE}{\mathcal E}
\newcommand{\cF}{\mathcal F}
\newcommand{\cG}{\mathcal G} 
\newcommand{\cH}{\mathcal H} 
\newcommand{\cI}{\mathcal I}

\newcommand{\cR}{\mathcal R}

\newcommand{\cU}{\mathcal U}

\newcommand{\bbR}{\mathbb R}

\newcommand{\bbE}{\mathbb E}

\newcommand{\Erdos}{Erd\H{o}s\xspace}
\newcommand{\Renyi}{R\'enyi\xspace}
\newcommand{\Lovasz}{Lov\'asz\xspace}

\newcommand{\Holder}{Hölder}

\newcommand{\bigO}{\mathcal{O}}

\newcommand{\polylog}{{\sf polylog}}

\newcommand{\no}{\textsc{No}\xspace}	%
\newcommand\numberthis{\addtocounter{equation}{1}\tag{\theequation}} %

\newcommand{\pE}{\widetilde{\mathbb{E}}}
\newcommand{\sos}{\mathsf{SoS}}
\usepackage{turnstile}
\newcommand{\SoSp}[2]{\sststile{#2}{#1}}
\newcommand{\St}{\mathsf{St}_2}
\newcommand{\sosalg}{\hyperref[par:sdp_recovery_algorithm]{\textup{SoS Recovery Algorithm}}}

\DeclareSymbolFont{eulerletters}{U}{eur}{m}{n}
\DeclareMathSymbol{\eulerpi}{\mathalpha}{eulerletters}{"19}\renewcommand{\leq}{\leqslant}

\author{
    Pravesh Kothari\footnote{Princeton University, Princeton}\\ \href{mailto:kothari@cs.princeton.edu}{kothari@cs.princeton.edu}
    \and
    Anand Louis\footnote{Indian Institute of Science, Bengaluru.}\\ \href{mailto:anandl@iisc.ac.in}{anandl@iisc.ac.in}
    \and 
    Rameesh Paul\footnotemark[2]\\ \href{mailto:rameeshpaul@iisc.ac.in}{rameeshpaul@iisc.ac.in}
    \and
    Prasad Raghavendra\footnote{University of California, Berkeley}\\ \href{mailto:raghavendra@berkeley.edu}{raghavendra@berkeley.edu}
}

\title{Improved Certificates for Independence Number\\ in Semirandom Hypergraphs}

\begin{document}
\maketitle

\begin{abstract}
    We study the problem of efficiently certifying upper bounds on the independence number of $\ell$-uniform hypergraphs. This is a notoriously hard problem, with efficient algorithms failing to approximate the independence number within $n^{1-\varepsilon}$ factor in the worst case \cite{Has99,Zuc07}. We study the problem in random and semirandom hypergraphs.
 
    There is a folklore reduction to the graph case achieving a certifiable bound of $\mathcal{O}\paren{\sqrt{n/p}}$. More recently, the work \cite{GKM22} improved this by constructing spectral certificates that yield a bound of $\bigO\paren{\sqrt{n}\cdot \mathsf{polylog}(n)/p^{1/(\ell/2)}}$. 
    In this work, we prove sharper bounds that get rid of the pesky logarithmic factors in $n$, and we obtain sharper bounds in $p$ that nearly attains the threshold of $\bigO\paren{\sqrt{n}/p^{1/\ell}}$. We also give evidence that this is the right scale in $p$ by showing low-degree polynomial lower bounds.
    
    Our certificate is designed by employing the \emph{proofs-to-algorithms} paradigm \cite{BS16,FKP19}, where we show an upper bound for a degree-$2\ell$ pseudo-expectation arising from a degree-$2\ell$ Sum-of-Squares (SoS) relaxation of the natural polynomial system formulation of maximum independent set problem. The more challenging case is the odd-arity hypergraphs, where we employ a tensor based analysis that reduces the problem to proving bounds on the operator norm of a natural class of random chaos matrices associated with $\ell$-uniform hypergraphs. Previous bounds \cite{AMP21,RT23} have a logarithmic dependence, which we remove by leveraging recent progress on matrix concentration inequalities \cite{BBvH23,BLNvH25}; we believe these maybe useful in other hypergraph problems. Since we deploy our certificates within the SoS framework, the bounds continue to hold in presence of monotone adversaries.
    
    Our low-degree polynomial lower bound involves constructing a \emph{quiet} planted distribution that is supported on an independent set of size $k=o\paren{\sqrt{n}/p^{1/\ell}}$ while remaining low-degree indistinguishable from the null distribution, here the null distribution is a random $\ell$-uniform hypergraph. Prior to this work, the problem of constructing a quiet planted distribution in the sparse regimes was open even for the graph case, see \cite{JPR+21,Pot22}.

    Our results are in a sharp contrast to the problem of recovering a planted independent set in a random hypergraph \cite{CZ20,AK26} where the algorithmic threshold is $k \gtrsim \sqrt{n}/p^{1/2(\ell-1)}$. In a concurrent work \cite{FS26}, they also prove matching low-degree polynomial lower bounds for weak recovery at the same threshold.

    As an application, we show our improved certificates can be combined with an SoS relaxation of a natural $r$-coloring polynomial system and a threshold rounding algorithm to recover an arbitrary planted $r$-colorable subhypergraph in a semirandom hypergraph model along the lines of \cite{LPR25} which also allows for strong adversaries.
\end{abstract}
\newpage

\tableofcontents
\newpage

\section{Introduction}
An independent set in a hypergraph $H=(V,E)$ is a set of vertices $S \subseteq V$ 
that contains no hyperedge $e\in E$ entirely contained inside S (i.e., $e \not\subset S$, $\forall e\in E$). The problem of finding the largest independent set, also called the Maximum Independent Set (MIS) is a central problem in theoretical computer science. Its worst-case intractability is well-established: it is one of the original NP-hard problems \cite{Kar72}, and notoriously hard to approximate, admitting no $n^{1-\varepsilon}$ approximation for any $\varepsilon>0$ unless P=NP \cite{Has99,Zuc07}. The maximum size of an independent set is a fundamental structural parameter of the hypergraph, known as its independence number, and denoted by $\alpha(H)$.

Given this severe worst-case intractability, it is natural to study the problem in average-case models. For random graphs, ($\ell=2$), the true value of $\alpha(H)$ is known to be  sharply concentrated at $\paren{2+o(1)}\log_{1/(1-p)}(n)$. However, despite this precise understanding of the ground truth, efficient algorithms are only able to certify (or refute) the existence of independent sets up to size $\bigO\paren{\sqrt{n/p}}$. This bound can be achieved by spectral techniques \cite{AKS98} and the \Lovasz-Theta SDP relaxation \cite{Lov79,Juh82,FK00}. The problem is also closely related to the \emph{planted clique} problem that has been extensively studied in many works including \cite{Jer92,Kuc95,AKS98,FK00,FK01,FGR+13,CSV17,BHK+16,MMT20,BKS23,BBKS24,GW25}.

In this work, we investigate the refutation task in the more general setting of random and semirandom hypergraphs. For an $\ell$-uniform random hypergraph $\cH_0(n,\ell,p)$ where each $\ell$-set is an edge independently with probability $p$, the work of Krivelevich and Sudakov \cite{KS98} shows that with high probability, the true value of independence number is bounded by
\begin{align*}
    \alpha(H) = \Theta\paren{\paren{\log n/p}^{1/(\ell-1)}}
\end{align*}
Our goal is to design efficient algorithms that certify upper bounds on $\alpha(H)$, that is as tight as possible. The problem for hypergraphs was first studied in the work \cite{CGL07} where they certify a bound of $\varepsilon n$ using Feige's XOR trick \cite{Fei02}, and later improved to $\widetilde{\bigO}_{\ell}\paren{n^{3/4}/\sqrt{p}}$ by \cite{AOW15}. More recently, the work of Guruswami, Kothari, and Manohar \cite{GKM22} improves on these bounds by directly constructing spectral certificates and certify a bound of $\bigO_{\ell}\paren{\sqrt{n}\cdot \mathsf{polylog}(n)/p^{1/(\ell/2)}}$. We improve on these previous results in two key aspects,
\begin{enumerate}
    \item \textbf{Sharper Refutation Bounds.} We provide significantly tighter certificates. Our bounds (a) eliminate the logarithmic factors achieving the conjectured tight bounds in terms of $n$, and (b) provide an improved dependence on $p$ that scales as $p^{-1/\ell}$.

    \item \textbf{Robustness.} Our certificates are deployed inside the Sum-of-Squares (SoS) proof system, and hence inherently robust. This allows our results to hold even in semirandom hypergraphs where an adversary can add additional edges.
\end{enumerate}

Additionally, we prove matching low-degree polynomial lower bounds that can be seen as an evidence that refuting below $k=\sqrt{n}/p^{1/\ell}$ is hard. This involves constructing a planted distribution that is supported on independent sets of size $k$, while not allowing any low-degree polynomial to separate the planted distribution from the null distribution of random hypergraphs $\cH_0\paren{n,\ell,p}$. A canonical planted distribution along the lines of \cite{BHK+16} only yields a bound of $k=o\paren{\sqrt{n}/p^{1/2\ell}}$. This difficulty also arises in sparse graph settings where the bound one gets is $\sqrt{n}/p^{1/4}$. Therefore, SoS results \cite{JPR+21,KPX24} have to work extra hard to show SoS lower bounds. To the best of our knowledge, a low-degree polynomial lower bound better than the scale of $\sqrt{n}/p^{1/4}$ in sparse graphs has so far remained elusive. This is also mentioned as an open problem in \cite{JPR+21,Pot22}. 

We give a construction for a \emph{quiet} planted distribution that resolves this open problem, and obtains a low-degree polynomial lower bounds for $k=o\paren{\sqrt{n}/p^{1/\ell}}$. We note that constructing quiet planted distributions to obtain sharper low-degree polynomial lower bounds is a problem specific task \cite{BKW20,BBK+21,KVWX23} and one that requires creative insights.

Our results are in sharp contrast to the problem of recovering planted independent sets in random hypergraphs. The recovery problem was implicitly studied in \cite{CZ20} where they study a projection matrix $A$ of size $n\times n$ with entry $A\Brac{i,j}$ counting the number of hyperedges passing through $i$ and $j$. More recently \cite{AK26} showed that a simple spectral algorithm based on this projection matrix can exactly recover the planted independent set when the size of the planted set is $k \gtrsim \sqrt{n}/p^{1/2(\ell-1)}$. In a concurrent work \cite{FS26}, they also show a matching low-degree polynomial lower bound for weak recovery. However, it is easy to show that this projection matrix does not give any sub-linear bounds for the certification problem we are interested in.

On the other hand, our refutation certificates could be used to design recovery algorithms when the planted set has size $k \gtrsim \sqrt{n}/p^{1/\ell}$. Importantly, the algorithms constructed from refutation certificates can handle strong adversaries such as those considered in \cite{LPR25}, whereas the projection matrix based analysis completely breaks down in this setting.

\subsection{Our Models and Results}
We consider two hypergraph generative models, a random hypergraph model denoted $\cH_0$, and a natural semirandom extension of it denoted $\cH_1$.

\begin{definition}[Random $\ell$-Uniform Hypergraph]
\label{def:random-hypergraph}
    Let $n,\ell \in \mathbb{N}$ and $p\in [0,1]$. The distribution $\cH_0\paren{n,\ell,p}$ defines a random $\ell$-uniform hypergraph on vertex set $V=[n]$ where, for each $\ell$-tuple of distinct vertices $\set{i_1,i_2,\dots,i_{\ell}} \subset V$, the hyperedge $\set{i_1,i_2,\dots,i_{\ell}}$ is included independently at random with probability $p$.
\end{definition}

\begin{definition} [Semirandom Hypergraph Model]
\label{def:semirandom-hypergraph}
Given $n,\ell \in \mathbb{N}$ and $p\in [0,1]$, the distribution $\cH_1(n,\ell,p)$ defines a semirandom $\ell$-uniform hypergraph where an instance from the model is constructed as follows:
\begin{enumerate}
    \item Sample a random $\ell$-uniform hypergraph $H=(V,E)$ where $H \sim \cH_0\paren{n,\ell,p}$.
    \item A monotone adversary may inspect the hypergraph instance and insert additional edges.
\end{enumerate}
\end{definition}

\begin{definition}[Refutation Algorithm for Hypergraph Independence Number]
    A refutation algorithm/certificate for independence number of an $\ell$-uniform hypergraph, takes a hypergraph instance $H=(V,E)$ and outputs a number $k \in [0,n]$ such that,
    \begin{itemize}
        \item \textbf{Correctness.} The output of the certificate is always a valid bound on true independence number of hypergraph i.e., $k \geq \alpha(H)$ with probability $1$.
        \item \textbf{Utility.} If $H \sim \cH_1(n,\ell,p)$, the output is a \say{small}value of $k$ w.h.p. (over the draw of $H$).
    \end{itemize}
\end{definition}
Our main result shows that for hypergraphs drawn from the semirandom model $\cH_1\paren{n,\ell,p}$, there exists an efficient refutation certificate, that w.h.p. (over draw of $H$) outputs a bound $k$ satisfying $k=\bigO_{\ell}\paren{\sqrt{n}/p^{1/\ell}}$, nearly matching the conjectured optimal polynomial-time scaling (in both $n$ and $p$), across a broad range of hyperedge probability parameter $p$.

\begin{figure}[htbp]
	\centering
	\begin{minipage}[b]{0.49\textwidth}
		\centering
		\resizebox{\linewidth}{!}{%
			\begin{tikzpicture}[x=1.1cm, y=0.8cm]
				\draw[->] (0,0) -- (11,0) node[right] {$p$};
				\draw[->] (0,0) -- (0,9) node[left] {$k$};
				\draw[thick] (10,0) -- (10,9) node[below] at (10,0) {$1/2$};
				\draw[dashed] (0,8) -- (10.5,8) node[left] at (0,8) {$n$};
				
				\draw (0.1, 1.5) -- (-0.1, 1.5) node[left] {$\sqrt{n}$};
				\draw (0.1, 2.5) -- (-0.1, 2.5) node[left] {$\sqrt{n}\log n$};
				\draw (0.1, 5.5) -- (-0.1, 5.5) node[left] {$n^{3/4}\log n$};
				\draw (2.0, 0.1) -- (2.0, -0.1) node[below] {\footnotesize $\frac{\log n}{n^3}$};
				\draw (5.0, 0.1) -- (5.0, -0.1) node[below] {\footnotesize $\frac{1}{n^{3/2}}$};
				\draw (6.5, 0.1) -- (6.5, -0.1) node[below] {\footnotesize $\frac{1}{n}$};
				\draw (8.0, 0.1) -- (8.0, -0.1) node[below] {\footnotesize $\frac{1}{\sqrt{n}}$};
				
				\draw[very thick, black] (2.0, 8) -- (10, 1.5);
				\draw[very thick, green!50!black] (5.0, 8) -- (10, 2.5);
				\draw[very thick, BrickRed] (6.5, 8) -- (10, 1.5);
				\draw[very thick, blue] (8.0, 8) -- (10, 5.5);
				
				\node[draw=black!20, fill=white, anchor=south west, inner sep=3pt, rounded corners=2pt] at (0.2, 0.2) {
					\scriptsize
					\begin{tabular}{@{}l@{\hspace{4pt}}l@{}}
						\tikz[baseline=-0.6ex]\draw[very thick, black] (0,0) -- (0.4,0); & \textbf{This work} \\
						\tikz[baseline=-0.6ex]\draw[very thick, green!50!black] (0,0) -- (0.4,0); & [GKM22]\\
						\tikz[baseline=-0.6ex]\draw[very thick, BrickRed] (0,0) -- (0.4,0); & Folklore\\
						\tikz[baseline=-0.6ex]\draw[very thick, blue] (0,0) -- (0.4,0); & [AOW15]\\
					\end{tabular}
				};
			\end{tikzpicture}%
		}
	\end{minipage}
	\hfill
	\begin{minipage}[b]{0.49\textwidth}
		\centering
		\resizebox{\linewidth}{!}{%
			\begin{tikzpicture}[x=1.1cm, y=0.8cm]
				\draw[->] (0,0) -- (11,0) node[right] {$p$};
				\draw[->] (0,0) -- (0,9) node[left] {$k$};
				\draw[thick] (10,0) -- (10,9) node[below] at (10,0) {$1/2$};
				\draw[dashed] (-0.5,8) -- (10.5,8) node[left] at (-0.5,8) {$n$};
				
				\draw (0.1, 1.5) -- (-0.1, 1.5) node[left] {$\sqrt{n}$};
				\draw (0.1, 2.5) -- (-0.1, 2.5) node[left] {$\sqrt{n}\log n$};
				\draw (0.1, 3.2) -- (-0.1, 3.2) node[left] {$\approx n^{4/7}$};
				\draw (0.1, 5.5) -- (-0.1, 5.5) node[left] {$n^{3/4}\log n$};
				
				\draw (1.5, 0.1) -- (1.5, -0.1) node[below] {\footnotesize $\frac{\log n}{n^{7/2}}$};
				\draw (4.0, 0.1) -- (4.0, -0.1) node[below] {\footnotesize $\frac{1}{n^2}$};
				\draw (5.5, 0.1) -- (5.5, -0.1) node[below] {\footnotesize $\frac{1}{n^{3/2}}$};
				\draw (6.5, 0.1) -- (6.5, -0.1) node[below] {\footnotesize $\frac{1}{n}$};
				\draw (8.0, 0.1) -- (8.0, -0.1) node[below] {\footnotesize $\frac{1}{\sqrt{n}}$};
				
				\draw[thick, blue] (10, 5.5) -- (8.0, 8);
				\draw[thick, green!50!black] (10, 2.5) -- (5.5, 8);
				\draw[thick, BrickRed] (10, 1.5) -- (6.5, 8);
				
				\draw[dashed, very thick, MidnightBlue] (10, 2.5) -- (8.0, 3.2);
				\node[MidnightBlue, anchor=south west, font=\scriptsize, align=left] at (8.1, 3.2) {Dense Case\\(Suboptimal)};
				\draw[dashed, very thick, olive] (8.0, 3.2) -- (4.0, 8);
				\node[olive, anchor=south west, font=\scriptsize, align=left] at (5.4, 5.8) {Sparse Case\\\,\,\,\,(Blowup)};
				
				\draw[very thick, black] (10, 1.5) -- (8.0, 3.2);
				\draw[very thick, black] (8.0, 3.2) -- (1.5, 8);
				\filldraw[black] (8.0, 3.2) circle (2.5pt);
				\draw[dashed, very thick, black] (10, 1.5) -- (1.5, 8);
				\node[black, below left] at (7.5, 3.5) {\footnotesize Conjectured};
				
				\node[draw=black!20, fill=white, anchor=south west, inner sep=3pt, rounded corners=2pt] at (0.2, 0.2) {
					\scriptsize
					\begin{tabular}{@{}l@{\hspace{4pt}}l@{}}
						\tikz[baseline=-0.6ex]\draw[very thick, black] (0,0) -- (0.4,0); & \textbf{This work} \\
						\tikz[baseline=-0.6ex]\draw[very thick, green!50!black] (0,0) -- (0.4,0); & [GKM22]\\
						\tikz[baseline=-0.6ex]\draw[very thick, BrickRed] (0,0) -- (0.4,0); & Folklore \\
						\tikz[baseline=-0.6ex]\draw[very thick, blue] (0,0) -- (0.4,0); & [AOW15] \\
					\end{tabular}
				};
			\end{tikzpicture}%
		}
	\end{minipage}
	\caption{Landscape of refutation certificates (log scale): Even case ($\ell=6$) and Odd case ($\ell=7$).}
	\label{fig:compare-results}
\end{figure}
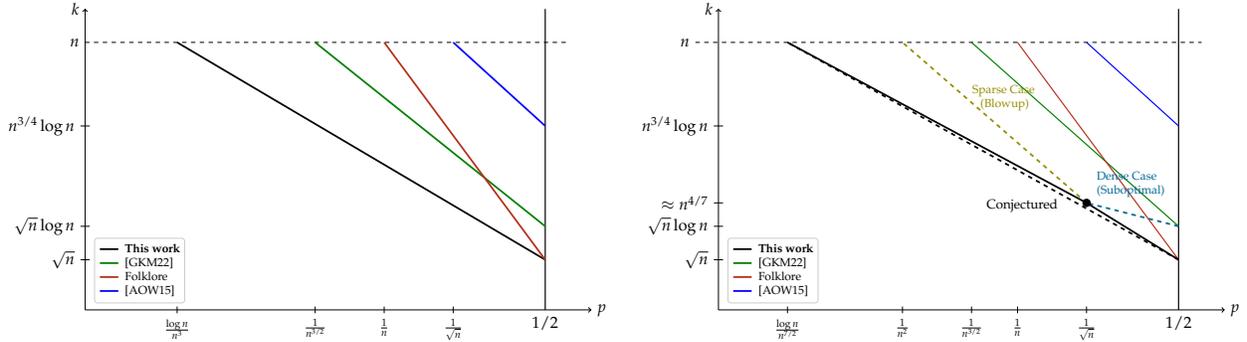

\begin{theorem}\label{thm:main-theorem-informal}
    For an instance of a random hypergraph $H=(V,E)$ generated from the semirandom hypergraph model $\cH_1\paren{n,\ell,p}$, there is an $n^{\bigO\paren{\ell}}$ running time refutation algorithm such that w.h.p.,
    \begin{itemize}
        \item \textbf{Even Arity Case}: For $\ell=2q$ the algorithm certifies,
        \begin{align*}
            k \leq \bigO_{\ell}\paren{\frac{\sqrt{n}}{p^{1/\ell}} + \frac{\paren{\log n}^{1/\ell}}{p^{2/\ell}}}
        \end{align*}
        \item \textbf{Odd Arity Case}: For $\ell=2q+1$ we define a threshold $p^{\star} \defeq \paren{\log n}^{5/2}/\sqrt{n}$ and certify,
        \begin{itemize}
            \item When $p \geq p^{\star}$ (moderate to dense regime),
                \begin{align*}
                    k \leq \bigO_{\ell}\paren{\frac{\sqrt{n}}{p^{1/\ell}}}
                \end{align*}
                \item When $p \leq p^{\star}$ (sparse regimes),
                    \begin{align*}
                        k \leq \bigO_{\ell}\paren{\frac{\sqrt{n}\paren{\log n}^{1/2\ell}}{p^{1/\ell}} + \frac{\paren{\log n}^{3/\ell}}{p^{2/\ell}}}.
                    \end{align*}
        \end{itemize}       
    \end{itemize}
\end{theorem}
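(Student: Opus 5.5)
We will certify the bound by solving a degree-$2\ell$ SoS relaxation of the polynomial system $\mathcal{A}_H=\set{x_i^2=x_i\ \forall i,\ \prod_{i\in e}x_i=0\ \forall e\in E,\ \sum_i x_i = k}$ and outputting the largest $k$ for which it is feasible. Correctness is immediate, since the indicator of any independent set is a feasible solution. Monotonicity is also immediate: adversarially added edges only add constraints $\prod_{i\in e}x_i=0$, so any pseudo-expectation for $\cH_1$ is also one for the underlying random $H\sim\cH_0$. It therefore suffices to show that, w.h.p. over $H\sim\cH_0(n,\ell,p)$, every degree-$2\ell$ pseudo-expectation $\pE$ satisfying the edge constraints (for any supergraph) has $\pE\sum_i x_i\le k^\star$.

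\textbf{Even arity $\ell=2q$.} Let $y\in\R^{\binom{n}{q}}$ be the vector of monomials $y_S=x^S$ with $|S|=q$, and let $M$ be the $\binom{n}{q}\times\binom{n}{q}$ matrix with $M[S,T]=\one[S\cup T\in E]-p$ for disjoint $S,T$ (and $0$ otherwise). The edge constraints give $\pE[y_Sy_T]=0$ whenever $S\cup T\in E$. Hence
\begin{align*}
p\sum_{S\cap T=\emptyset}\pE[x^{S\cup T}] = -\pE\,\langle y, M y\rangle \le \|M\|\,\pE\|y\|^2 .
\end{align*}
Writing $s=\pE\sum_i x_i$, we have $\pE\|y\|^2=\pE\binom{\sum x_i}{q}\lesssim s^q$ on the constraint $\sum x_i = k$. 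We will also use the SoS lower bound $\sum_{S\cap T=\emptyset}\pE x^{S\cup T}\gtrsim k^{2q}$, which follows from $\sum x_i=k$ and booleanity up to lower-order terms. Combining the two yields $p k^{2q}\lesssim \|M\|k^q$, i.e.
\[
k^\ell\lesssim \|M\|^2/p^2 .
\]
By matrix concentration for sparse random matrices (Bandeira--van Handel type bounds, \cite{BBvH23}), $\|M\|\lesssim \sqrt{n^{q}p}+\sqrt{\log n}$. This gives $k\lesssim \sqrt{n}/p^{1/\ell}+(\log n)^{1/\ell}/p^{2/\ell}$, as claimed.

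\textbf{Odd arity $\ell=2q+1$.} An unbalanced flattening loses too much, so we instead square. We take the multilinear form $\sum_{e}(\one_e-p)x^{e}$ and apply Cauchy--Schwarz in SoS, separating one coordinate $i$:
\[
\Big(\sum_i x_i \sum_{S,T}A_i[S,T]x^Sx^T\Big)^2\le \Big(\sum_i x_i^2\Big)\sum_i\Big(\sum_{S,T}A_i[S,T]x^Sx^T\Big)^2 .
\]
Here $A_i[S,T]=\one[\{i\}\cup S\cup T\in E]-p$ with $|S|=q$. Expanding the second factor gives the quadratic form of the chaos matrix $\mathcal{M}=\sum_i A_i\otimes A_i$, indexed by pairs $(S,S')$, $(T,T')$, in the degree-$2\ell$ monomials. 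We then subtract the diagonal/expectation part $\E\mathcal{M}$, which is controlled by $\pE(\sum x_i)^{2q}$ times $np$. This leads to
\[
p^2k^{2\ell}\lesssim k\big(\|\mathcal{M}-\E\mathcal{M}\|k^{2q}+np\,k^{2q}\big).
\]
The main obstacle is bounding $\|\mathcal{M}-\E\mathcal{M}\|\lesssim n^{q}p\sqrt{n}+\dots$ without log factors. For this we will use the intrinsic-freeness bounds of \cite{BBvH23,BLNvH25}: compute $\sigma(\mathcal{M})$, $\sigma_*$ and the maximal entry size $R$, after decoupling the degree-two chaos, while carefully handling the terms with repeated indices $S\cap T\ne\emptyset$ via a trace/combinatorial argument. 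When $p\ge p^\star$, the variance term dominates the $R\,\mathrm{polylog}$ term, and we get $k\lesssim\sqrt{n}/p^{1/\ell}$. Below $p^\star$, the $R(\log n)^{3/2}$-type correction survives and produces the stated $(\log n)^{1/2\ell}$ and $(\log n)^{3/\ell}/p^{2/\ell}$ terms.

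Finally, the SoS program has size $n^{O(\ell)}$, so the algorithm runs in time $n^{O(\ell)}$.
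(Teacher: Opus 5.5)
Your even-arity argument is correct and is essentially the paper's proof. Your $M$ equals $-\sqrt{p(1-p)}\,B^{(2q)}$, and fixing $\sum_i x_i=k$ simply replaces the pseudo-expectation H\"older step. The odd-arity skeleton (distinguished vertex, SoS Cauchy--Schwarz, $\sum_i A_i\otimes A_i$) also matches.

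The gap is in how you center $\mathcal M$ in the odd case. You subtract only $\E\mathcal M$ and then bound $\norm{\mathcal M-\E\mathcal M}$ in operator norm. But $\mathcal M-\E\mathcal M$ still contains the \emph{fluctuations} of the terms in which the same hyperedge occurs twice.
\begin{itemize}
\item On rows $(S,S)$ and columns $(T,T)$ its entries are $\sum_i\big[(\one_{\{i\}\cup S\cup T}-p)^2-p(1-p)\big]=(1-2p)\sum_i(\one_{\{i\}\cup S\cup T}-p)$. These are centered with variance $\asymp np$.
\item So the row norms of this $\binom nq\times\binom nq$ block already force $\norm{\mathcal M-\E\mathcal M}\gtrsim\sqrt{n^{q+1}p}$ w.h.p.
\item In your final inequality $k^\ell\lesssim(\norm{\mathcal M-\E\mathcal M}+np)/p^2$, this yields a term $n^{(q+1)/2}p^{-3/2}$.
\item For $p=n^{-a}$ with $q<a<q+1$, that term dominates both $n^{q+1/2}\sqrt{\log n}/p$ and $(\log n)^3/p^2$.
\end{itemize}
The claimed sparse bound is non-vacuous for all $a<q+\tfrac12$, so your route cannot give the stated sparse-regime bound on the window $q<a<q+\tfrac12$. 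For $p\ge p^\star$ the term is harmless, since there $\sqrt{n^{q+1}p}\le pn^{q+1/2}$.

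The fix is the paper's. Remove the \emph{entire random} swapped-diagonal part $D$ (all entries where one hyperedge appears twice), not just its mean. Then, exactly as you already do for $\E\mathcal M$, never bound it in operator norm; even $\norm{\E D}$ is of order $pn^{q+1}$ in your normalization. Instead use $w=z\otimes z$ and booleanity to get the SoS inequality $w^\top Dw\lesssim\sum_{J,K}d_{J,K}z_Jz_K\le d_{\max}\paren{\sum_J z_J}^2\lesssim d_{\max}X^{2q}$. Here $d_{\max}=\max_{J,K}\sum_iA_i[J,K]^2\lesssim np+\sqrt{np\log n}+\log n$ in your normalization, by scalar Bernstein and a union bound. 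Only the genuinely off-diagonal chaos $\widetilde M-D$, a product of two distinct independent characters, goes into a norm bound; this is also what makes decoupling legitimate.

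Separately, your sketch of the sparse regime is not supported by the chaos bound you cite. Its correction term is $\alpha(h)^2(\log n)^{5/2}\nu(\mathcal A)$ with $\nu(\mathcal A)=n^q$, which only gives $k\lesssim n^{q/\ell}(\log n)^{5/2\ell}/p^{2/\ell}$. This is why the paper switches, for $p\le p^\star$, to a matrix Bernstein bound (Koltchinskii's sub-exponential version) applied to $\sum_i M_i$. That step needs the ordered slicing, so that each hyperedge lies in exactly one slice and the summands $M_i$ are independent. With your unordered slices $A_i[S,T]=\one[\{i\}\cup S\cup T\in E]-p$, each edge lies in $2q+1$ slices and this independence is lost.
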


An ingenuous folklore approach (see \cite{ZX18} for instance) to certifying bounds on $\alpha(H)$ is to fix $\paren{\ell-2}$ vertices, $T \subset V$ and consider their induced link graph $G_T$. Now if $S$ is independent in $H$, then for every $T \subset S, \abs{T}=\ell-2$, the set $S \setminus T$ is independent in $G_T$ and one can use the graph independent set certificates. However, this reduction is extremely lossy as it only retains $1/n^{\ell-2}$ fraction of edges, yielding a \say{weak} certificate with the bound,
\begin{align*}
    k \leq \paren{\ell-2} + \max_T \alpha_{\text{cert}}\paren{G_T} \leq \bigO_{\ell}\paren{\sqrt{n/p} + \sqrt{\log n}/p}.
\end{align*}

\begin{remark}
    We say the bounds from folklore graph certificates are \say{weak} as they turn vacuous (output a value larger than the trivial value $n$) when $p \leq \sqrt{\log n}/n$. 
\end{remark}
The spectral certificates of \cite{GKM22} are significantly better as we see in their bound below,
\begin{align*}
    k \leq \bigO_{\ell}\paren{\sqrt{n}\cdot \paren{\frac{\log^{4/\ell}(n)}{p^{2/\ell}}} + \sqrt{\frac{\log n}{p^{4/\ell}}}} \text{ is non-vacuous for all }p \gg 1/n^{\ell/4}.
\end{align*}
\begin{remark}
     The bounds obtained in \cite{GKM22} certificate are not tight in factors of $n$, especially for larger values of $p$ (e.g., $p=\tcohm(1)$), and also not known to be robust to presence of monotone adversaries. Our certificates simultaneously improve upon all these aspects (see table below).
\end{remark}

\begin{table}[H]
\centering
\small
\setlength{\tabcolsep}{5pt}
\renewcommand{\arraystretch}{1.25}
\scalebox{1.1}{
\begin{tabular}{lcccc}
\toprule
& \textbf{\cite{AOW15}} 
& \textbf{\textcolor{teal}{Folklore}}
& \textbf{\cite{GKM22}}  
& \textbf{\textcolor{teal}{Our Results}} \\
\midrule

\textbf{Scaling with n}
& $n^{3/4}\cdot \polylog(n)$
& $\sqrt{n}$
& $\sqrt{n}\cdot \polylog(n)$
& $\sqrt{n}$ \\

\textbf{Utility Threshold}
& \emph{$p \gg 1/\sqrt{n}$}
& $p \gg 1/n$
& $p \gg 1/n^{\ell/4}$
& $p \gg 1/n^{\ell/2}$ \\

\textbf{Robustness}
& Yes
& Yes
& No
& Yes\\

\bottomrule
\end{tabular}}
\label{tab:comparison-certificates}
\end{table}

Our certificates are derived from a natural $\bigO(\ell)$-degree Sum-of-Squares relaxation of Maximum Independent Set program, using the \say{proofs-to-algorithm} approach  (see \cite{BS16,FKP19}). However, there are several technical challenges in the setting of hypergraphs, we outline how we address them in \prettyref{sec:proof_overview}. Along the way, we establish logarithmic-free operator norm bounds for a class of matrices arising out of  $\ell$-uniform hypergraphs, which may be of independent interest in other hypergraph problems. Existing bounds for such  matrices with polynomial entries (e.g., in the works \cite{AMP21,RT23}), when applied off-the-shelf to our operator, incur logarithmic losses in $n$; we avoid this by relating our hypergraph operator to recent bounds for \emph{combinatorial chaos} matrices studied in \cite{BLNvH25}.

\begin{remark}
    The odd-arity bound falls slightly off the conjectured curve (see \prettyref{fig:compare-results}), while still strictly improving over all previous certificates. This gap arises from a known bottleneck in existing SoTA concentration bounds for chaos matrices (see Remark A.8 in \cite{BLNvH25}); in sparse regimes, the variance term (olive curve in \prettyref{fig:compare-results}) blows up. By combining chaos bounds with bounds obtained from matrix Bernstein, we track the strictly better black curve, which results in the two-regime behavior of \prettyref{thm:main-theorem-informal}.
\end{remark}

Finally, we give some evidence for why we conjecture the computational threshold is $\sqrt{n}/p^{1/\ell}$ by proving low-degree polynomial lower bounds. Our proofs involve constructing a \say{quiet} planted distribution that is hard to distinguish from the null distribution $\cH_0\paren{n,\ell,p}$, and successfully resolves the open problem in \cite{JPR+21,Pot22}

\begin{theorem}[Informal Version of \prettyref{thm:quiet-planting-hypergraph-theorem}]
    There exists a planted distribution $\sfP$ that is supported on hypergraphs having independent set of size atleast $k$, while no degree-$D$ polynomial $f$ can distinguish $\sfP$ from the null distribution $\sfQ =\cH_0\paren{n,\ell,p}$ for the regimes of $n,k,\ell,D,p$ satisfying,
    \begin{align*}
        k \leq {\frac{n^{1/2-\gamma}}{p^{1/\ell}}}, \quad \text{for $\gamma=o(1)$}, \quad \text{and} \quad D=o\paren{\paren{\frac{\log n}{\log \log n}}^{\ell/(\ell-1)}}.
    \end{align*}
\end{theorem}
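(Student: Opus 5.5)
The plan is the standard low-degree framework: bound the norm of the low-degree likelihood ratio $\|L^{\le D}-1\|^2$, where $L=d\sfP/d\sfQ$, and show it is $o(1)$ in the stated regime. The main design choice is the planted distribution $\sfP$, which must be \emph{quiet}. The naive planting (choose $S$ with each vertex included with probability $k/n$, delete all hyperedges inside $S$, keep the rest i.i.d.\ $\Ber(p)$) changes the expected hyperedge count by about $p\binom{k}{\ell}$. That shift is detectable by edge counting (a degree-$1$ polynomial) unless $pk^\ell\ll \sqrt{p n^\ell}$, which only gives $k\ll\sqrt n/p^{1/2\ell}$. So I would build $\sfP$ to match low-order statistics of $\sfQ$.

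\textbf{Construction of $\sfP$.} Sample $S$ as above. Force every $\ell$-set inside $S$ to be a non-edge. For every other $\ell$-set $e$, include it with probability $p_{|e\cap S|}$, where $p_\ell=0$ and $p_0,\dots,p_{\ell-1}$ are chosen so that the hyperedge indicators still have marginal $p$ on average over $S$. Concretely, they should satisfy $\E_S[\,p_{|e\cap S|}\,]=p$, together with conditions that cancel the leading planted signal on small structures. The natural choice is to boost edges with $|e\cap S|=j$ by a small factor $(1+\delta_j)$, so that $\sum_j \Pr[|e\cap S|=j]\,p\,\delta_j = p\Pr[|e\cap S|=\ell]$. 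Each $\delta_j$ is then at most roughly $(k/n)^{\ell-j}$, which stays in $[0,1/p-1]$ in our regime. Supported on $S$ being independent with $|S|\ge k/2$ w.h.p., we then condition on $|S|\geq k/2$, rescaling $k$; this conditioning is a standard small perturbation.

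\textbf{Computing the low-degree norm.} Expand in the orthonormal $p$-biased Fourier basis $\chi_\alpha$ over edge sets $\alpha$ with $|\alpha|\le D$. Then $\|L^{\le D}-1\|^2=\sum_{0<|\alpha|\le D}\E_{\sfP}[\chi_\alpha]^2$. Conditioned on $S$, the edges are independent, so
\[
\E_\sfP[\chi_\alpha\mid S]=\prod_{e\in\alpha}\frac{p_{|e\cap S|}-p}{\sqrt{p(1-p)}}.
\]
Each factor equals $-\sqrt{p/(1-p)}$ if $e\subseteq S$ and $\approx\sqrt p\,\delta_{|e\cap S|}$ otherwise. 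We then average over $S$. The quiet choice makes $\E_S$ of a single-edge factor vanish, and more generally it kills contributions from $\alpha$ that have a vertex of degree $1$ outside a small core. Grouping by the vertex set $V(\alpha)$ of size $v$, the surviving terms should be bounded by roughly $(k/n)^{v}\,p^{|\alpha|/2}$, up to combinatorial factors. Summing over shapes gives
\[
\sum_\alpha n^{v}\,(k/n)^{2v}\,p^{|\alpha|}=\sum (k^2/n)^{v}\,p^{|\alpha|}.
\]
For dense-enough shapes the minimal number of vertices satisfies $v\approx \ell|\alpha|^{1/\ell}$ (a clique-like $\ell$-graph), and then $(k^2p^{2/\ell}/n)^{v/2}$-type bounds are $\le n^{-\gamma v}$ when $k\le n^{1/2-\gamma}/p^{1/\ell}$. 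Counting the number of shapes with $v$ vertices and $|\alpha|\le D$ edges, roughly $v^{\ell D}$, against $n^{-\gamma v}$ forces $D\log v\lesssim \gamma v\log n$. With $v\gtrsim D^{(\ell-1)/\ell}$ this yields the constraint $D=o\bigl((\log n/\log\log n)^{\ell/(\ell-1)}\bigr)$.

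\textbf{Main obstacle.} The hardest step is the cancellation analysis. I must show that after averaging over $S$, every $\alpha$ whose naive weight is large, namely sparse tree-like shapes and single edges where the non-quiet planting fails, has $\E_S\E[\chi_\alpha\mid S]$ suppressed by the correction $\delta_j$. I would attack this with a cumulant/inclusion–exclusion expansion in the indicators $\one[i\in S]$. Since vertices enter $S$ independently, the expectation factors over vertices, so terms with a pendant vertex cancel by the choice of $\delta_j$. The remaining "2-core" shapes then obey the counting above. I would first verify the bound for $\ell=2$ in the sparse graph case, which is the open case, before generalizing.
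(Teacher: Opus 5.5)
Your outline (a quiet planting that kills every shape with a degree-one vertex, followed by counting the surviving shapes) is the right one. However, the planted distribution you actually specify is not quiet, so the step you flag as the main obstacle fails as you have set it up. Taking all $\delta_j\ge 0$, constrained only by $\sum_j\Pr[\abs{e\cap S}=j]\,\delta_j=\Pr[e\subseteq S]$, matches the expected edge count and nothing more.

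Already for $\ell=2$ this breaks. Write $m_{ab}=\mathbb{E}[Y_{uv}\mid \chi_u=a,\chi_v=b]$. Your condition with $\delta_0\ge 0$ forces $(1-\rho)\delta_1\le \rho/2$, hence $\rho m_{11}+(1-\rho)m_{01}\le -\tfrac{\rho}{2}\sqrt{p/(1-p)}$. Now take the path with edges $\set{a,b},\set{b,c}$. Independence lets you integrate out the pendant vertices $a,c$, but what remains is $\mathbb{E}_{\chi_b}\bigl[(\rho m_{1\chi_b}+(1-\rho)m_{0\chi_b})^2\bigr]\ge p\rho^3/4$. This is a variance over $\chi_b$, and no averaged condition removes it. Summing its square over the $\Theta(n^3)$ paths gives $\gtrsim p^2(k^2/n)^3$. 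At $k=n^{1/2-\gamma}/\sqrt{p}$ this is of order $n^{-6\gamma}/p\to\infty$ whenever $p$ is polynomially small, which is exactly the sparse regime the statement is about.

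The missing idea is to require cancellation for \emph{every} configuration of the remaining vertices of the edge, not merely on average. Integrating out a degree-one vertex $u\in e$ leaves a function of $s=\abs{(e\setminus\set{u})\cap S}\in\set{0,\dots,\ell-1}$. So you need $(1-\rho)m_s+\rho m_{s+1}=0$ for each $s$, where $m_t=(q_t-p)/\sqrt{p(1-p)}$ and $m_\ell=-\sqrt{p/(1-p)}$ is forced by $q_\ell=0$.

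These $\ell$ equations have the unique solution $m_t=m_\ell\bigl(-\rho/(1-\rho)\bigr)^{\ell-t}$, i.e.\ $q_t=p\bigl(1-(-\rho/(1-\rho))^{\ell-t}\bigr)$. This alternately boosts and suppresses: in particular $q_{\ell-2}<p$, so your constraint $\delta_j\in[0,1/p-1]$ must be dropped. Equivalently, $\mathbb{E}[Y_e\mid\chi]=\lambda\prod_{i\in e}(\chi_i-\rho)$ with $\lambda=-\sqrt{p/(1-p)}\,(1-\rho)^{-\ell}$.

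It is this product form, not the independence of the $\chi_i$ alone, that makes $\mathbb{E}_{\sfP}[h_\alpha]=\lambda^{\abs{\alpha}}\prod_{u\in V(\alpha)}\mathbb{E}[(\zeta-\rho)^{d_\alpha(u)}]$ factor over vertices. That factorization gives exact vanishing whenever some vertex has degree one, and $\abs{\mathbb{E}_{\sfP}[h_\alpha]}\le\abs{\lambda}^{\abs{\alpha}}\rho^{\abs{V(\alpha)}}$ otherwise, which is the survivor bound you assumed without justification.

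Your count then goes through with three small fixes:
\begin{enumerate}
\item The $p$-exponent must be controlled for all survivors, not just dense ones. Use the handshake bound $\abs{\alpha}\ge 2\abs{V(\alpha)}/\ell$ for minimum degree $2$, which gives $p^{\abs{\alpha}}(k^2/n)^{v}\le(p^{2/\ell}k^2/n)^{v}$.
\item The binding case of the shape count is $v\approx D^{1/\ell}$, not $v\gtrsim D^{(\ell-1)/\ell}$. The paper handles it via $\min\set{D,v^\ell}\le D^{1-1/\ell}v$.
\item Conditioning on $\abs{S}\ge k$ needs the lower bound $k\gtrsim D\log(n^\ell/p)$ to absorb $\abs{h_\alpha}\le p^{-\abs{\alpha}/2}$.
\end{enumerate}
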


\begin{remark}
    The degree $D$ above is close to the best one could hope for. The true independence number of $\cH_0\paren{n,\ell,p}$ is of order $\paren{\frac{\log n}{p}}^{1/(\ell-1)}$; indeed there is a brute force algorithm at this scale that checks candidate sets of this size, where checking a candidate set is an independent set requires checking all $\ell$-tuples inside it. This can be encoded as a polynomial of degree $\Theta\paren{\paren{\frac{\log n}{p}}^{\ell/(\ell-1)}}$.
\end{remark}

As an application, we show how we can use the Sum-of-Squares program and a natural threshold rounding algorithms along with the refutation certificates to recover a large fraction of vertices of planted independent sets, and more generally $r$-colorable planted hypergraphs. The algorithms continue to work in powerful semirandom models, such as those considered in \cite{LPR25}. 

\begin{corollary}\label{cor:planted-coloring-recover}
    Given an instance of an $\ell$-uniform semirandom hypergraph that contains a planted $r$-colorable arbitrary hypergraph on a set $S$ of $k$ vertices, and an adversary that can arbitrarily modify edges across $S \times \paren{V\setminus S}$ (along lines of \cite{LPR25}), there exists an $n^{\bigO\paren{\ell}}$ running-time algorithm which for regimes of $k=\tcohm_{p,l,\varepsilon}\paren{r\sqrt{n}}$ outputs a set $\widehat{S}$ such that with high probability (over draw of $H$) for any $\varepsilon>0$ satisfies,
    \begin{multicols}{2}
        \begin{itemize}
            \item $\abs{\widehat{S}} \leq \paren{1+ \varepsilon}k$
            \item $\abs{\widehat{S} \cap S} \geq \paren{1-\varepsilon}k$.
        \end{itemize}
    \end{multicols}
\end{corollary}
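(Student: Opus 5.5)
We will follow the recovery template of \cite{LPR25}: solve a degree-$\bigO(\ell)$ SoS relaxation of an $r$-colouring system, use the refutation certificate of \prettyref{thm:main-theorem-informal} to show the pseudo-distribution puts almost all its colour-class mass on $S$, and round by thresholding. The certificate must be applied only to hyperedges the adversary cannot touch, namely those inside $V\setminus S$; those are random.

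\textbf{The relaxation.} Introduce indicator variables $x_{i,c}$ for $i\in V$ and $c\in[r]$, with the following constraints:
\begin{itemize}
\item Booleanity: $x_{i,c}^2=x_{i,c}$.
\item At most one colour per vertex: $x_{i,c}x_{i,c'}=0$ for $c\neq c'$.
\item Each class independent: $\prod_{i\in e}x_{i,c}=0$ for every $e\in E$ and $c\in[r]$.
\item Size: $\sum_{i,c}x_{i,c}=k$.
\end{itemize}
The planted colouring of $S$ is a feasible integral solution. Hence a degree-$2\ell$ pseudo-expectation $\pE$ exists and can be found in time $n^{\bigO(\ell)}$.

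\textbf{Mass outside $S$ is small.} Fix a colour $c$ and write $y_i=x_{i,c}$ for $i\in V\setminus S$. The class-$c$ constraints restricted to hyperedges inside $V\setminus S$ form exactly the independent-set system on $H[V\setminus S]$. The subhypergraph $H[V\setminus S]$ is distributed as $\cH_0(n-k,\ell,p)$ plus possibly extra edges, so it is an instance of $\cH_1$. This is robust because the adversary modifies only edges crossing $S\times(V\setminus S)$ and is monotone elsewhere.

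One subtlety is that $S$ is unknown. Two ways to handle it:
\begin{itemize}
\item Apply the certificate to the whole random hypergraph $H_0\sim\cH_0(n,\ell,p)$ before planting. The planted step only removes edges inside $S$ and modifies edges crossing $S$. So the constraint $\prod_{i\in e}y_i=0$ holds for every $e\in E(H_0)$ with $e\subseteq V\setminus S$.
\item Alternatively, take a union bound over $S$. The certificate bound holds with failure probability $\exp(-\Omega(n))$ by Talagrand-type concentration of the norm certificates, which covers the $\binom nk$ choices.
\end{itemize}

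I plan to use the first option, extracting from the proof of \prettyref{thm:main-theorem-informal} the SoS statement
\[
\SoSp{y}{2\ell}\ \paren{\sum_i y_i}^{\ell}\le K^{\ell}+\text{(error)}\quad\text{for }K=\bigO_\ell\paren{\sqrt n/p^{1/\ell}},
\]
valid for any sub-collection of the random edges together with the vertex set on which they are restricted. Concretely, the certificate is a spectral bound on a random matrix $M$ indexed by tuples. Restricting to $V\setminus S$ amounts to taking a principal submatrix, whose norm is no larger. By Hölder for pseudo-expectations this gives
\[
\pE\sum_{i\notin S}x_{i,c}\le K.
\]
Summing over colours gives mass at most $rK$ outside $S$. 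Since $k=\tcohm_{p,\ell,\varepsilon}(r\sqrt n)$, this is at most $\varepsilon^2 k$.

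\textbf{Rounding.} Let $z_i=\pE\sum_c x_{i,c}\in[0,1]$; then $\sum_i z_i=k$ and $\sum_{i\notin S}z_i\le\varepsilon^2k$. Output
\[
\widehat S=\set{i: z_i\ge 1-\varepsilon}.
\]
This gives the two guarantees:
\begin{itemize}
\item Since $z_i\le1$ and $\abs{S}=k$, we get $\sum_{i\in S}(1-z_i)\le\varepsilon^2 k$. Markov then gives $\abs{S\setminus\widehat S}\le\varepsilon k$.
\item Vertices outside $S$ in $\widehat S$ number at most $\varepsilon^2k/(1-\varepsilon)$, so $\abs{\widehat S}\le(1+\varepsilon)k$ after rescaling $\varepsilon$.
\end{itemize}

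\textbf{Main obstacle.} The delicate step is the middle one: ensuring the certificate's SoS proof is monotone under deleting edges and vertices, and holds simultaneously for the unknown $S$. The worry is that the even/odd-arity analysis (tensor flattening and chaos-matrix bounds) uses the centered matrix $A-p\mathbb{1}$, whose restriction to an adversarially chosen vertex subset is still a principal submatrix but whose mean part depends on $S$. I would first try to write the certificate as the sum of a PSD square and a term involving only the norm of the full centered operator. Restriction then only helps, with the mean contribution $p\paren{\sum y}^\ell$ bounded against the independence constraints as in the unrestricted proof.
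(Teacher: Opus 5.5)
Your proof is correct, and its skeleton is the paper's proof of Theorem~\ref{thm:main_result_formal}: relax the $r$-colouring system, restrict $\pE$ colour by colour to $V\setminus S$ to get a feasible independent-set pseudo-expectation for $H[V\setminus S]$, bound the outside mass by $r$ times the refutation bound, and threshold-round with Markov. There are two small differences. The paper maximizes $\sum_{v,i}x_v^{(i)}$ instead of imposing $\sum x=k$, so it gets total mass at least $k$ without the algorithm knowing $k$. Also, for odd $\ell$ your bound $O_\ell(\sqrt{n}/p^{1/\ell})$ needs $p\ge(\log n)^{5/2}/\sqrt{n}$, which is why the formal version assumes it.

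You depart from the paper only at the step you call the main obstacle, and in this model it is not an obstacle. $S$ is fixed in step~1 of Definition~\ref{def:semirandom_colorable_hypergraph}, before the random hyperedges inside $V\setminus S$ are drawn. So $H[V\setminus S]$ is literally $\cH_0(n-k,\ell,p)$ plus monotone additions, and Theorem~\ref{thm:main-theorem-formal} with Lemma~\ref{lem:monotone-adversary} applies directly. That is the sentence you wrote just before raising the subtlety, and it is all the paper uses.

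Your option~1 (prove the norm bounds once for a full $H_0$ on $V$, then restrict) does buy something extra: a certificate valid simultaneously for every vertex subset, hence robust to an $S$ chosen after seeing the randomness. As stated, though, it is slightly off for odd $\ell$. The operator for the restricted instance, $\sum_{i\in V\setminus S}A_i'\otimes A_i'$, is not a principal submatrix of $\widetilde M=\sum_{i\in V}A_i\otimes A_i$, because the slices with $i\in S$ drop out. The fix is to add the squares $\sum_{i\in S}(z^\top A_i z)^2$ to the right-hand side after the Cauchy--Schwarz step, with $z$ supported on $q$-tuples inside $V\setminus S$. This is SoS-valid and turns the second factor into $w^\top\widetilde M w$ for the full $\widetilde M$, with $w=z\otimes z$. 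The $D$-part is then bounded exactly as in Lemma~\ref{lem:all-distinct-term-intermediate-1-bound}, using only $d_{J,K}\le d_{\max}$ and booleanity, and the $M$-part by $\norm{M}_2\norm{w}_2^2$. The even case needs no fix, since $z^\top B^{(2q)}z\le\norm{B^{(2q)}}_2\norm{z}_2^2$ for any $z$.

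Your worry about a mean part also does not arise. With $p$-biased characters, the identity $\sum_{\text{distinct}}x_{v_1}\cdots x_{v_\ell}=\sqrt{(1-p)/p}\sum B_{v_1\dots v_\ell}x_{v_1}\cdots x_{v_\ell}$ is exact modulo the edge constraints of $H_0$ inside $V\setminus S$, so no leftover $p(\sum y)^\ell$ term appears. Option~2 should simply be dropped: the paper's norm bounds fail with probability only $1/\mathrm{poly}(n)$, far too large to union bound over $\binom{n}{k}$ sets.
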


\begin{remark}
    The folklore reduction to the graph case can also be adapted to recover planted independent sets. However, this black-box approach fundamentally breaks down for planted $r$-colorable hypergraphs when $r\geq 2$ (see \prettyref{app:graph_reduction} for a detailed discussion).
\end{remark}

\subsection{Motivation and Related Works}
\paragraph{Independent Sets in Hypergraphs.}
The problem of finding independent sets in a hypergraphs is central in discrete mathematics and theoretical computer science with connections to multiple areas, including forbidden structures in extremal combinatorics \cite{ES46,Sze75,AKP+82,KMV14,BMS15}, coding theory \cite{ST20,HCTR21}, and constraint satisfaction problems \cite{GK01,GL03,CGL07,BDL+23}. The problem also arises in a diverse range of practical applications \cite{Lub93,BG98,GEC+07,KHT09,Yan14,ZSHZ18,GZCP21}.

From a computational perspective, the problem of finding the largest independent set, known as the Maximum Independent Set (MIS) problem is notoriously difficult. It is NP-hard \cite{Kar72}, even for the simpler case of graph ($\ell=2$). The problem is in fact hard to approximate within a factor of $n^{1-\varepsilon}$ for any $\varepsilon>0$ \cite{Has99,Zuc07}, and the best-known algorithm of \cite{Hal00} achieves only an $\bigO\paren{n/\log n}$ approximation. For hypergraphs, the situation is no better: even in bounded-degree hypergraphs, the best-known approximation guarantees scale linearly with $\Delta$, the maximum degree parameter \cite{HL09,AKS11,BK22}.

\paragraph{Certifying Independent Set in Random Hypergraphs.}
Given the severe worst-case intractability, it is natural to study the problem in random models, where typical instances exhibit more structure than worst-case ones.
In this work, we focus on average-case analogues of well-studied hypergraph problems, mirroring the recent trend of generalizing well-studied matrix problems to tensors, see \cite{KB09,ZX18,CLPC19,LZ20,LZ22,HLWZ22}.

In a random hypergraph, the true size of the largest independent set (the independence number) is tightly characterized in the work \cite{KS98} showing that,
\begin{align*}
    \alpha(H) =\paren{1 \pm o(1)}\paren{\ell \cdot \log n/p}^{1/(\ell-1)},
\end{align*}
generalizing the classical result for graphs by \cite{Mat76,Fri90}. From an algorithmic perspective, a recent work \cite{DW25} shows that low-degree polynomial algorithms can compute (with high probability) an independent set of size $\paren{1-\varepsilon}\paren{\log n/p}^{1/(\ell-1)}$, along with a low-degree hardness result, thus extending similar results obtained for graphs \cite{GM75,GS14,RV17,Wei22}.

However the task of refuting, i.e. efficiently certifying the non-existence of a large independent set is significantly more challenging.
For the case of graphs ($\ell=2$), efficient certificates can only certify a bound of $k=\bigO\paren{\sqrt{n/p}}$ \cite{Lov79,Juh82,AKS98,FK00}, which leads to the infamous \emph{statistical-computational gap}. The task of designing refutation algorithms that refute existence of a large planted clique/independent set in a random hypergraph, has seen slow gradual progress. The initial threshold is $\varepsilon n$ by \cite{CGL07} who studied this first, and was improved to $\widetilde{\bigO}\paren{n^{3/4}}$ in \cite{AOW15} using Feige's XOR trick (\cite{Fei02}). More recently this was improved to $\bigO\paren{\sqrt{n}\cdot \mathsf{polylog}(n)/p^{2/\ell}}$ using a direct spectral certificate by \cite{GKM22}.

\paragraph{Recovery for Planted Problems.}
The certification task is also closely related to the \emph{planted recovery problem}, where the task is to recover a specific structure hidden within a random graph or hypergraph. The central problem is the \emph{planted clique problem} that has been extensively studied in the graph setting \cite{AKS98,FK00,FK01,FGR+13,BHK+16,MMT20,BKS23,BBKS24,GW25}. There is also a growing body of work of other planted recovery problems such as planted bisection and Stochastic Block Models \cite{FK01,ABBS14,ABH16,CX16}, planted coloring \cite{AK97,DF16,CO04}, graph partitioning problems \cite{MMV12,MMV14,LV18,LV19}, planted dense subgraphs \cite{BCC+10,HWX16a,HWX16b,HWX16c}, planted bipartite subgraphs in random graphs \cite{KLP22,RHS24,LP26}, and planted $r$-colorable subgraphs in random graphs \cite{LPR25}.

In a sharp contrast, recovery problems on planted hypergraphs remain far less explored. The existing literature is comparatively scarce, with only a handful of works, including hypergraph independent set/clique \cite{KLP21}, hypergraph partitioning \cite{GD17}, hypergraph Stochastic Block Models \cite{KBG18,CZ20,ZT23,SZ25}, and the hypergraph densest $k$-subgraph problem \cite{CPSB22}. A part of the reason for this disparity could be attributed to the fact that technically hypergraph offers a challenging terrain. The familiar tools such as the linear algebraic toolkit of graphs have not been fully developed for hypergraphs.

\paragraph{Semirandom Models}
A central motivation for studying average-case models is to develop frameworks that capture the structure of typical real-world instances of the problem, and bridge the gap between average and worst-case inputs. Semirandom models capture this idea by allowing an adversary that interacts with the random instance, \say{dialing up} the difficulty to produce a spectrum of inputs that interpolate between the two extremes. This algorithm design perspective is broadly known as \say{Beyond Worst-case Analysis}, and we refer to the book, \cite{Rou21} for a comprehensive survey of numerous such frameworks.

For average-case graph problems, the work \cite{BS95} introduced the \emph{monotone adversary} model, where an adversary can add edges to the graph, except within the planted independent set, but cannot remove edges in the graph. We next discuss various semirandom models for the independent set problem in graphs, each representing a significant step towards worst-case robustness.
\begin{itemize}
    \item \textbf{Monotone Adversary.} The seemingly helpful adversary is powerful enough to break the classical spectral algorithm of \cite{AKS98}. The work \cite{FK00} demonstrated that one can still certify $k=\bigO\paren{\sqrt{n/p}}$, but having to resort to the power of the more robust techniques such as semidefinite programming (SDP). This illustrates another key benefit of semirandom models, they not only capture more realistic inputs but also foster the development of more resilient algorithm design techniques.

    \item \textbf{Feige-Kilian Model.} A particularly influential model for the planted independent set problem is the \emph{Feige-Kilian model}, due to \cite{FK01} which grants the adversary more power, in addition to adding edges, it can also delete edges in the graph outside the planted set $V\setminus S$. Recently breakthroughs in the works of \cite{CSV17,MMT20,BKS23} make substantial progress in this highly challenging setting. Notably, the works of \cite{BBKS24,GW25} study the problem in a closely related model due to \cite{CSV17}, and match the computational threshold conjectured by \cite{Ste18} in model, surprisingly with a greedy algorithm. These advances underscore how each semirandom model reveals new facets of the problem and its computational landscape. 
    
    \item \textbf{\cite{LPR25} Model.} A related semirandom model (complementary in strength to the Feige-Kilian model) was considered in \cite{LPR25} where an \emph{arbitrary adversary} may inspect the graph and freely add and remove edges across $S \times \paren{V\setminus S}$, while a \emph{monotone adversary} may further add edges within $V\setminus S$. They show approximate recovery algorithms for $k=r\cdot\tcohm(\sqrt{n/p})$ for independent set, and more generally $r$-colorable planted subgraphs in this model.
\end{itemize}

In this work, for the task of designing refutation certificates for independent sets in hypergraphs we consider the monotone adversary model. The existing spectral certificates of \cite{GKM22} are not expected to be robust in this setting. On the other hand, our Sum-of-Squares (SoS) based certificates are inherently robust. This phenomenon in hypergraphs is the direct analogue of the robustness observed for SDP-based certificates in graphs in the work \cite{FK00}.
For the problem of recovering planted independent sets (and planted $r$-colorable hypergraphs), we show approximate recovery in the \cite{LPR25} model.

\paragraph{Sum-of-Squares Hierarchy} The Sum-of-Squares/Lasserre hierarchy (also called higher-order SDP) has emerged as a powerful algorithmic workhorse, particularly in robust statistics and average-case algorithm design. Notable applications include robust method of moments \cite{KSS18}, robust linear regression \cite{KKM18,BP21}, robust clustering \cite{KSS18,BK20}, robustly learning mixture of Gaussians \cite{BDJK+22,LM23}, list-decodable learning \cite{KSS18,KKK19,RY20,BK21}, and many more. The underpinning principle in Sum-of-Squares framework is a deep connection between optimization and proof systems. It translates algorithm design into the task of certifying bounds via low degree polynomials, an approach dubbed \emph{proofs-to-algorithms} paradigm (see \cite{BS16,FKP19} for details); this framework has been instrumental in many of the success stories listed above. 

For an $\ell$-uniform hypergraph, a degree $\bigO(\ell)$ Sum-of-Squares hierarchy provides us a natural analogue of the basic SDP relaxation used for the corresponding graph problem. Classical works on hypergraph Maximum Independent Set problem, such as the works \cite{Chl07,CS08} have already utilized higher-order SDPs in designing improved approximation algorithms. More recently, the work \cite{KLP21} extends the planted clique algorithms developed for the Feige-Kilain semirandom model in the work \cite{MMT20} from graphs to hypergraphs, using higher-order SDPs.

In addition to algorithmic upper bounds, there has been significant interest in proving Sum-of-Squares (SoS) lower bounds. These include lower bounds for the planted clique problem studied in the works \cite{BHK+16,JPR+21,KPX24}, the densest subgraph problem \cite{JPRX23}, and the graph coloring problem \cite{PX25}. For random hypergraphs, the work \cite{LZ22} shows failure of low-degree polynomial method\footnote{They show this for polynomials up to degree $\bigO\paren{\log n}$.} in detecting planted cliques in random hypergraphs for regimes of $k=o\paren{\sqrt{n}}$, although a Sum-of-Squares lower bound in this regime remains open (see \cite{LZ20} for a discussion).

\paragraph{Sharper Matrix Norm Bounds.}
The asymptotic behavior of random matrices with i.i.d. entries is well-understood in classical random matrix theory literature, see \cite{Tao12}. This is complemented by a large family of non-asymptotic concentration inequalities, such as the widely popular Matrix Chernoff, Bernstein, and Khintchine inequalities \cite{Tro15}. These results apply very broadly to any general linear functions of independent random variables such as those studied in the setting of non-commutative Khintchine inequalities \cite{LPP91,Pis03}, and to matrices with polynomial entries \cite{HSS15, MP16, AMP21,RT23,TW25}. However, the resulting bounds are often not sharp and typically incur mild logarithmic factor dependencies on dimension. 

A recent line of work \cite{BBvH23, BCSvH24}, however, has successfully "shaved off" these extraneous logarithmic factors by leveraging powerful tools from free probability theory, yielding stronger, optimal matrix concentration inequalities. This approach was recently extended by \cite{BLNvH25} to a class of models termed random matrix chaos, where matrix entries are polynomials of independent random variables. They specifically analyze a \emph{chaos of combinatorial type}, which is  relevant for problems in theoretical computer science; they provide "plug-and-play" bounds for parameters that arise in such computations. We utilize this new framework, demonstrating that a natural class of matrices arising in the analysis of hypergraphs fits this model. This allows us to derive logarithmic-free bounds in our results, leading to our improved certificates.

\section{Technical Overview}
\label{sec:proof_overview}
We now explain the main ideas behind the proof of \prettyref{thm:main-theorem-informal}, we refer to \prettyref{sec:sos-main-results} for a detailed and formal treatment. The starting point is a standard polynomial system formulation of the independence number ($\alpha(H)$), the size of Maximum Independent Set (MIS) in $\ell$-uniform hypergraph. Let $\set{x_v}_{v\in V}$ be boolean variables indicating membership in the independent set. The MIS problem can then be encoded as a natural NP-hard polynomial system:
\newcommand{\pmis}{\mathcal{P}_{\text{MIS}}}
\begin{equation}
	\label{eq:pmis-overview}
	\pmis:\qquad
	\max \; \sum_{v\in V} x_v,
	\quad\text{s.t.}\quad
	x_v^2=x_v, \;\;\forall v\in V;
	\qquad
	\prod_{u\in e} x_u = 0, \;\;\forall e\in E.
\end{equation}
Any $0$-$1$ solution of $\pmis$ corresponds to an independent set, and conversely. A degree-$2\ell$ Sum-of-Squares (SoS) relaxation of $\pmis$ returns a \emph{pseudo-expectation} $\pE$  that acts like expectations on polynomials of degree at most $2\ell$, and satisfies all degree-$2\ell$ consequences of the constraints of $\pmis$. The resulting SoS optimum, formally defined  as,
\begin{align*}
    \alpha_{\text{SoS}}(H) \defeq  \max_{\substack{\pE \text{ feasible for degree-}2\ell\\ \text{SoS relaxation of }\pmis}}\pE\Brac{\sum_{v\in V}x_v},
\end{align*}
always upper bounds the true independence number.
Therefore, by SoS duality, an \emph{SoS certificate} that $\alpha_{\text{SoS}}(H) \leq k$ is equivalent to proving that for every degree-$2\ell$ pseudo-expectation $\pE$ feasible for the polynomial system \prettyref{eq:pmis-overview},  must satisfy,
\begin{align*}
	\pE\Brac{X} \leq k, \qquad \text{where} \qquad X \defeq \sum_{v\in V}x_v.
\end{align*}

\paragraph{Analysis via High-Degree Moments.}
Now how does one upper bound $\pE\Brac{X}$? The proof revolves around analyzing the $\ell^{th}$ power of the size of potential independent set given by $X$. Intuitively, if the independent set is large, the corresponding higher-degree moments would be extremely large which one can show does not happen (with high probability) in a random hypergraph. Formally, we will use the following SoS-friendly argument about moments (see  pseudo-expectation \Holder\, inequality, \prettyref{fact:pe_holder}),
\begin{align}\label{eq:pe-holder-overview}
	\pE\Brac{X} \leq \paren{\pE\Brac{X^{\ell}}}^{1/\ell}.
\end{align}
So the real work is bounding $\pE\Brac{X^{\ell}}$. One can expand $X^{\ell}$ as a sum over $\ell$-tuples of vertices and split the sum by grouping according to how many distinct vertices appear in a tuple. Because of booleanity $x_v^2=x_v$, the monomials that repeat at least one variable, effectively behave like monomials of smaller degree. Therefore one can show with a simple counting argument (formalizable within low-degree SoS) that the total contribution of all such non-all-distinct monomials is bounded,
\begin{align}\label{eq:lower-order-overview}
	\sum_{\substack{(v_1,\dots,v_\ell)\in V^\ell\\ \abs{\{v_1,\dots,v_\ell\}}<\ell}}x_{v_1}\cdots x_{v_\ell} \;\;\le\;\; C_\ell \cdot X^{\ell-1}, \qquad \paren{\text{see \prettyref{prop:lower_term_bounds}}},
\end{align}
where $C_{\ell}$ is a constant (depending only on $\ell$).
Thus, the only genuinely hard part is the all-distinct term (denoted $Y_{\ell}$).
Controlling this term is where the structure of constraints of $\pmis$, and randomness of input hypergraph plays a crucial role.

\subsection{Analysis for Even Arity Case}
We discuss the even-arity case first, and to keep the discussion simple we focus on the simplest non-trivial case of $\ell=4$. As discussed above, using $x_v^2=x_v$, the lower-order terms are bounded by $O(X^3)$ and expanding $X^4$ and separating terms, we have,
\begin{align*}
	X^4 = \sum_{\paren{a,b,c,d}\in V^4}x_a x_b x_c x_d = O(X^3) + \sum_{\substack{a,b,c,d \in V\\\abs{a,b,c,d}=4}}x_a x_b x_c x_d = O(X^3) + Y_4.
\end{align*}
\paragraph{From all-distinct monomials to quadratic forms.}
The key step in bounding the order-$\ell$ all distinct term (the quartic term here) is that one can convert what at first glance looks like a messy combinatorial sum into a single quadratic form of an appropriate matrix, thus reducing the problem to bounding the quadratic form of some matrix. For even $\ell$, there is a natural way to pair up the $\ell$ variables in all-distinct monomial $(x_ax_bx_cx_d)$, by writing as product $(x_ax_b)(x_cx_d)$. This suggests a natural $p$-biased \emph{flattening matrix}, by \say{flattening} the $4$-uniform structure to matrix indexed by pairs $\cI=\set{a,b:a<b}$, where we define the matrix $B \in \bbR^{\abs{\cI}\times \abs{\cI}}$ as, 
\begin{align*}
	B_{ab,cd} = \begin{cases}
		-\sqrt{\frac{1-p}{p}} \quad &\text{ if $a,b,c,d$ all distinct and }\set{a,b,c,d}\in E\\
		+\sqrt{\frac{p}{1-p}} \quad &\text{ if $a,b,c,d$ all distinct and }\set{a,b,c,d}\notin E\\
		\qquad 0 \quad &\text{ otherwise}
	\end{cases}
\end{align*}
Next, we define the vector $z \in \bbR^{\cI}$ where $z_{ab}=x_ax_b$. Using the hyperedge constraint, which forces $x_ax_bx_cx_d=0$ whenever $\set{a,b,c,d}\in E$, we can rewrite the all-distinct quartic term $Y_4$ as below,
\begin{align*}
	\sum_{a<b<c<d}x_ax_bx_cx_d &= \sum_{\substack{a<b<c<d\\\set{a,b,c,d}\notin E }}x_ax_bx_cx_d
	=\paren{\sqrt{\frac{1-p}{p}}\,}\sum_{a,b,c,d}B_{ab,cd} \cdot x_ax_bx_cx_d\\
	&= \frac{1}{3}\paren{\sqrt{\frac{1-p}{p}}\,}\sum_{\substack{a<b\\c<d}}z_{ab} \cdot B_{ab,cd} \cdot z_{cd} = \frac{1}{3}\Biggl(\sqrt{\frac{1-p}{p}}\,\Biggr)z^{\top} B z\numberthis\label{eq:y-4}.
\end{align*}

\paragraph{SoS analysis reduces problem to spectral norm bounds.}
Once the all-distinct term is rewritten as a quadratic form, it is much nicer to work with and we can bound it using operator norm bounds (derivable in SoS proofs, see \prettyref{fact:operator_sos_bound}),
\begin{align*}
	\sum_{a,b,c,d} z_{ab} \cdot B_{ab,cd} \cdot z_{cd} = z^{\top}B z \leq \norm{B}_2\norm{z}_2^2
\end{align*}
The next observation is that the vector $z$, defined by $z_{a,b}=x_ax_b$ has its norm tightly controlled by the size of candidate independent set. Using booleanity constraint of $\pmis$ we have,
\begin{align}\label{eq:main-difference-aow}
	\norm{z}_2^2 = \sum_{a<b}z_{ab}^2 \leq \sum_{a,b}z_{ab}^2 = \sum_{a,b}x_a^2x_b^2 = \sum_{a,b}x_ax_b  = \paren{\sum_{v}x_v}^2 = X^2
\end{align}
Importantly this step only uses the axioms of $\pmis$, and therefore the proof above is derivable in SoS. Putting together eqn.~\prettyref{eq:y-4}, eqn.~\prettyref{eq:main-difference-aow} and using $1-p \leq 1$ we say,
\begin{align*}
	\pmis \SoSp{x}{4} \set{Y_4 = \sum_{a<b<c<d}x_ax_bx_cx_d \leq \paren{\frac{\norm{B}_2}{\sqrt{p}}} \cdot X^2}.
\end{align*}
Now by SoS duality it follows that our pseudo-expectation $\pE$ also satisfies the analogous inequality,
\begin{align*}
	\pE\left[\sum_{a<b<c<d}x_ax_bx_cx_d\right] \leq \paren{\frac{\norm{B}_2}{\sqrt{p}}}\cdot \pE[X^2]
\end{align*}
Note that $\norm{B}_2$ depends only on the underlying hypergraph instance and not on the formal variables $x_v$ and it can be treated as a fixed quantity and pulled outside the pseudo-expectation. Combining this bound with earlier control on lower-order terms yields an inequality relating the quartic moment of $X$ to its lower-order moments, for a constant $\kappa_1$,
\begin{align}\label{eq:even_case_final_eqn}
	\pE[X^4] \leq \frac{\norm{B}_2}{\sqrt{p}} \cdot \pE[X^2] + \kappa_1 \cdot {\pE[X^3]}.
\end{align}
Let $t=\paren{\pE[X^4]}^{1/4}$ and using the pseudo-expectation \Holder\, inequality (see \prettyref{fact:pe_holder}),
\begin{align*}
	\pE[X] \leq \paren{\pE[X^4]}^{1/4} = t, \quad  \pE[X^2] \leq \sqrt{\pE[X^4]}=t^2,\quad  \pE[X^3] \leq \paren{\pE[X^4]}^{3/4}=t^3,
\end{align*}
Substituting these into eqn.~\prettyref{eq:even_case_final_eqn} solving the inequality for $t>0$,
\begin{align*}
	t^4 \leq \frac{\norm{B}_2}{\sqrt{p}} \cdot t^2 + \kappa_1 \cdot t^3, \text{ dividing by $t^2$ we get, } t^2\leq \frac{\norm{B}_2}{\sqrt{p}} + \kappa_1 \cdot t
\end{align*}
which is a simple quadratic inequality in $t$ that implies that $\pE[X]=t \leq \kappa_2\cdot \sqrt{\norm{B}_2}/p^{1/4}$. 
\paragraph{Bounding the spectral norm of Flattening Matrix}
Up to this point, the analysis did not rely on the randomness in the hypergraph; the bound on $\pE[X]$ holds deterministically for any instance. However, for a random $4$-uniform hypergraph, we can go a step further and obtain a high-probability bound on the operator norm $\norm{B}_2$. The matrix $B$ resembles a Wigner-type random matrix: it has zero-mean, bounded entries (bounded by roughly $1/\sqrt{p}$) with variance $1$. This suggests that non-asymptotic results from random matrix theory should apply. 
A technical subtlety arises because the entries of $B$ are not fully independent (due to symmetries such as $B[ab,cd] = B[ad,bc]$ that introduce mild dependencies). To address this, we impose a global ordering $\prec$ on the vertex set, and use that to decompose into matrices with independent blocks. The details can be found in \prettyref{prop:flattening-random-matrix-bounds} where for $4$-uniform hypergraph one obtains the bound below,
\begin{align*}
	\norm{B}_2 \lesssim_{\ell} n + \sqrt{\frac{\log n}{p}},  \quad \text{ and therefore } \quad  \pE\Brac{X}_{\ell} \lesssim \frac{\sqrt{n}}{p^{1/4}} + \frac{\paren{\log n}^{1/4}}{p^{1/2}},
\end{align*}
which matches our bounds for $\ell=4$ in \prettyref{thm:main-theorem-informal}. The extension to general even-arity case of $\ell=2q$ follows similarly, and we refer to \prettyref{sec:sos-even-analysis} for the details.

\paragraph{Comparison to \cite{AOW15}} 
We note that the work \cite{AOW15} also constructs SoS certificates for upper bounds on the size of independent set in a random hypergraph. However, their approach employs the Feige's popular XOR trick \cite{Fei02} of reducing to the problem of refuting random $k$-XOR, this was also employed earlier in \cite{CGL07}. The work \cite{AOW15} takes this one step further where they show certifiable bounds on polynomials with random coefficients, (we reproduce their proof in \prettyref{app:aow-proof-reconstruct} for completeness).

The main difference in our work from their work is that \cite{AOW15} proves worst-case bounds over all vectors with $\norm{x}_{\infty} \leq 1$ and hence their analysis can only upper bound vector norms that arise in their analysis by the ambient dimension $n$. On the other hand, in our MIS setting, the variables satisfy the booleanity constraint $x_v^2=x_v$, and hence the vector norm can scale with the size of the candidate set instead of $n$ itself; see  eqn.~\prettyref{eq:main-difference-aow} for how the analysis explicitly tracks the support size of candidate independent sets, allowing the analysis to scale with $X$ rather than defaulting to ambient dimension $n$. This viewpoint is also what underlies the improvement of \cite{GKM22}, and consequently allows the spectral certificates in \cite{GKM22}, and our SoS certificates to avoid the extra $n^{1/4}$ loss characteristic of the XOR approach.

\subsection{Analysis for Odd Arity Hypergraphs}
We now turn to the odd-arity hypergraphs and focus on the simplest, $\ell=3$ case here. As before, we begin by expanding $X^3$ and separating out the terms by number of distinct vertices.
\begin{align*}
	X^3 = \sum_{(a,b,c)\in V^3} x_a x_b x_c \;=\; O(X^2) + 6\,Y_3, \quad \text{ where } \quad Y_3 := \sum_{a<b<c} x_a x_b x_c.
\end{align*}
Again the lower-order terms are easy, and the action is in controlling $Y_3$. The key to the even-arity case was that the all-distinct term naturally became a quadratic form in degree-($\ell/2$) monomials. For $\ell=3$, there is no such evenly balanced factorization. Any naive matricization of an odd-uniform hypergraph produces a rectangular $n \times n^2$ object, and the spectral norm of such matrices typically scales with the larger dimension, resulting in operator norm bounds of $n$ rather than $\sqrt{n}$, and therefore this leads to suboptimal estimates. Indeed, if one mimics the even-arity analysis in a straightforward way, 
the resulting bounds also carry the inflated dimension as,
\begin{align*}
	\pE[X^3] \leq \kappa_2  \paren{\paren{\frac{\norm{B}_2}{\sqrt{p}}}\cdot \pE[X^2] + \pE[X^2]} \implies \pE[X] \leq  \frac{\norm{B}_2}{\sqrt{p}}.
\end{align*}
As a consequence, the naive approach would only certify a bound on the order of $k = O\!\paren{n/\!\sqrt{p}}$, which is far from the conjectured threshold. This motivates a more refined tensor-based analysis where one works directly with the inherent $\paren{2q+1}$-order tensor.

\paragraph{Tensor Based Analysis: Slices and Chaos}
Instead of trying to represent $Y_3$ itself as a single quadratic form, we can express it as a sum of quadratic forms indexed by a \say{distinguished} vertex. To do this, first fix a global ordering $\prec$ on $V$ to break symmetries, and for each vertex $i$ define a symmetric \emph{tensor slice matrix} $A_i$ of size $n \times n$. The entries of the matrix $A_i[j,k]$ is a $p$-biased indicator of the hyperedge $\set{i,j,k}$ where $i \prec j \prec k$. Using this representation, the odd all-distinct term can be written schematically, for a vector $z\in \bbR^n$ as,
\begin{align*}
	Y_3 = \paren{\sqrt{\frac{1-p}{p}}\,} \sum_{i\in V}x_i \cdot z^{\top}A_iz.
\end{align*}
Now we can square the expression above, and applying Cauchy-Schwarz (derivable in SoS),
\begin{align*}
	Y_3^2 \leq \paren{\frac{1-p}{p}}\cdot \left(\sum_{i} x_i^2\right)\left(\sum_i (z^\top A_i z)^2\right) \;=\; \paren{\frac{1-p}{p}}\cdot X\cdot \sum_i (z^\top A_i z)^2.
\end{align*}
Notice how the key idea above to keep the object as an odd $(2q+1)$-tensor and squaring converts the odd-degree polynomial into an even-degree one. But the term $\paren{z^{\top}A_iz}^2$ now lives in a tensor-square space. So to write it as a quadratic form we let $w=x\otimes x$ (index by pairs) and we have,
\begin{align*}
	\sum_i \paren{z^\top A_i z}^2 = w^{\top}\paren{\sum_i A_i \otimes A_i}w  \defeq w^{\top} \widetilde{M}w.
\end{align*}
Crucially, this sum of Kroneckered product of slice matrices $\widetilde{M}$ is a \emph{matrix polynomial of degree $2$} (a \emph{matrix chaos} of order~$2$) in the independent centered variables $\set{h_e}$ having $p$-biased distribution.

This analytic style of squaring the multilinear form for odd-arity CSPs and  applying a Cauchy-Schwarz trick dates back to early works \cite{GK01,FG01} where they used this to refute random $3$-SAT instances. This has since then been a standard step in subsequent papers \cite{Fei02,CGL07,AOW15,BM16,RRS17,OT23} studying random $\ell$-XOR (for odd $\ell$) and related problems.

\paragraph{Bounding spectral norm of sum of Kroneckered product of slices}
At this point the proof reduces to bounding the spectral norm of the large matrix $\widetilde{M}$ obtained by taking sum of Kronecker product of slices $A_i$, where $A_i$ are the slices derived from the tensor $B$. The first technical snag is that $\widetilde{M}$ can contain contributions of the form $h_e^2$ (as large as $\Theta(1/p)$) from an hyperedge appearing twice. If we directly try to bound $\widetilde{M}$, these entries  may pollute the estimate, and lead to a looser bound. To address this we follow the recipe from prior works \cite{HSS15,RRS17} which collect such terms in a matrix $D$ (\say{swapped diagonal} matrix which keeps coordinates forced by equality of unordered pairs). Now the matrix $\widetilde{M}=M+D$ and we can separate the contributions as,
\begin{align*}
    w^{\top}\widetilde{M}w = w^{\top}Dw +  w^{\top}Mw 
\end{align*}
The contribution of $D$ can be controlled directly with high probability (see \prettyref{prop:high-prob-bound-diagonal}) as,
\begin{align*}
    d_{\max} \defeq \norm{D}_2 = \max_{j,k}\sum_{i}\paren{A_i\Brac{j,k}}^2 \lesssim n + \sqrt{\frac{n\log n}{p}} + \frac{\log n}{p}.
\end{align*}
We note that for our vector $w$ and using booleanity axioms of $\pmis$ (derivable in SoS),
\begin{align*}
    \norm{w}_2^2 = \sum_{j,k}\paren{x_jx_k}^2 = \paren{\sum_j x_j}^2 = X^2
\end{align*}
Putting it together, using operator norm bounds, derivable in SoS (see \prettyref{fact:sos_cauchy_schwarz}) and using $(1-p) \leq 1$
\begin{align}\label{eq:odd-case-sos-derive-overview}
    \pmis \SoSp{x}{3} \set{ Y_3^2 \leq \frac{1}{p} \cdot X \cdot \paren{w^{\top}\widetilde{M}w} \leq  \frac{1}{p} \cdot X \cdot \paren{d_{\max}+\norm{M}_2}\norm{w}_2^2 = \frac{1}{p}\cdot \paren{d_{\max} + \norm{M}_2}\cdot X^3}.
\end{align}
So we focus on obtaining a high-probability bound on spectral norm of $M$.

Spectral bounds for matrices of this type, whose entries are polynomials in independent random variables, have been studied in the works on tensor decomposition \cite{HSS15,MP16,AMP21,RT23,TW25}. However, existing results typically assume independent matrix entries and incur logarithmic (in dimension) losses in the bounds. Even when independence assumptions does not fully hold, such as in our case, one can anyways apply the standard matrix concentration inequalities such as Matrix Bernstein, and for our $\ell=3$ case, the bound we obtain is,
\begin{align*}
	\norm{M}_2 \leq n^{3/2}\sqrt{\log n} + \frac{\paren{\log n}^2}{p}.
\end{align*}
Using bounds on lower-order terms and SoS derivable bounds eqn.~\prettyref{eq:odd-case-sos-derive-overview} for constants $\kappa_3,\kappa_4$,
\begin{align*}
    \pmis \SoSp{x}{3} \set{X^3 \leq \kappa_3\paren{\sqrt{\frac{\norm{M}_2}{p}}} + \kappa_4 \cdot X^2}
\end{align*}
By SoS duality and using pseudo-expectation \Holder\,inequality (\prettyref{fact:pe_holder}) we obtain below,
\begin{align*}
	\pE[X^3] \lesssim  \paren{\sqrt{\frac{\norm{M}_2}{p}}}\sqrt{\pE[X^3]} \implies \pE[X] \lesssim \frac{\sqrt{n}(\log n)^{1/6}}{p^{1/3}} + \frac{\log n}{p^{2/3}}
\end{align*}
The first term carries a $\sqrt{\log n}$ factor, an extraneous dimensional factor as an artifact of our concentration inequalities, but the second term  correctly captures the heavy-tail dependence on $1/p$. Therefore this matches our sparse regime certificates for the odd-arity case in \prettyref{thm:main-theorem-informal}. However in the dense case and moderate regimes, the $\sqrt{\log n}$ factor is truly extraneous. This is evidently confirmed even in the \say{weak} graph reduction folklore certificates, though their dependence on $p$ is highly sub-optimal (scaling as $1/\sqrt{p}$ instead of desired $p^{1/\ell}$).

\paragraph{Logarithmic Free Bounds on Spectral Norm of Sum of Kronecker Product of Slices.}
To eliminate the extra logarithmic factor, we leverage recent advances in high-dimensional probability.
There have been recent attempts such as the works of \cite{BBvH23,BvH24} in an effort to get rid of these logarithmic factors in matrix concentration inequalities. A very recent work \cite{BLNvH25} gives a powerful generic framework towards the same. They show that for random matrix with entries as degree-$r$ polynomials (called chaos of order $r$) of some underlying random variables, logarithmic factors for large regimes of parameters can be avoided. Their analysis uses iterative applications of strong inequalities such as the non-commutative Matrix Khintchtine and Matrix Rosenthal inequality (this is effective even in sparse regimes). There is a catch however, their variance parameter is sub-optimal (see remark A.8 in \cite{BLNvH25}), and obtaining sharp bounds in the sparse setting still remains open.

Crucially for us, they identify a subclass called \emph{combinatorial chaos matrices}, which frequently arise in theoretical computer science applications, for which they show an easy way to bound the various parameters that appear in analysis of such matrices. We can show that our matrix $M$ that has degree-$2$ polynomials of underlying random variables, is also a combinatorial chaos of order-$2$, and hence we can readily use their framework to get the desired logarithmic free dependence. Their proof shows these bounds hold in expectation, but using their ideas one can also show that they hold in high-probability, see proof of \prettyref{thm:moment-chaos-theorem} in \prettyref{sec:matrix-chaos-overview}. Putting it together,
\begin{align*}
    \norm{M}_2 \leq n^{3/2} + \frac{n\paren{\log n}^{5/2}}{p} \implies  \pE[X] \lesssim \frac{\sqrt{n}}{p^{1/3}} + \frac{n^{1/3}\paren{\log n}^{5/6}}{p^{2/3}}.
\end{align*}
These bounds are superior when $p$ is not too small, and equating the variance term in chaos bound with leading term in Bernstein bound gives the crossover threshold $p^{\star}$ in \prettyref{thm:main-theorem-informal} as,
\begin{align*}
    \frac{n\paren{\log n}^{5/2}}{p} = n^{3/2}\sqrt{\log n} \qquad \iff \qquad   p^{\star} = \frac{\paren{\log n}^2}{\sqrt{n}}.
\end{align*}

\subsection{Semirandom Robustness and Planted Recovery}
This robustness is essentially \say{for free} once we work with SoS:
adding hyperedges only adds constraints of the form $\prod_{u\in e}x_u=0$ to \prettyref{eq:pmis-overview}. Therefore the feasible region of the SoS relaxation can only shrink, and the SoS optimum $\alpha_{\text{SoS}}(H)$ can only decrease.
Consequently, any high-probability upper bound proved for $\cH_0(n,\ell,p)$ immediately extends to $\cH_1(n,\ell,p)$.

Finally, we briefly explain why strong refutation certificates imply recovery guarantees like \prettyref{cor:planted-coloring-recover}.
At a high level, refutation says: \emph{after removing the planted structure, there is no other large independent set (or $r$-colorable set) hiding in the instance}. This rules out \say{decoy} solutions that would otherwise confuse rounding algorithms. Concretely, given an SoS pseudo-distribution $\pE$, threshold rounding uses the marginals $\pE\Brac{x_v}$ as soft evidence of membership in the planted set. The refutation bound guarantees that vertices outside the planted set cannot collectively carry too much pseudo-mass
without contradicting the SoS-certified upper bound on the independence number of the residual instance.
A standard pruning step then converts this bias into a set $\widehat{S}$ of size $\approx k$ that captures a $(1-o(1))$ fraction of the planted vertices.

We defer the full rounding analysis (including the extensions to the stronger semirandom corruption model across $S\times (V\setminus S)$) to \prettyref{sec:coloring-recover}. The analysis there for recovery closely follows the analysis for $r$-colorable graphs due to \cite{LPR25}.

\subsection{Quiet Planted Distributions for Hypergraph Independent Set}
\paragraph{Low-Degree Polynomials Framework for Refutation Problems.}
We formally describe the low-degree polynomial framework for refutation problems in \prettyref{sec:ldp-refutation-framework}. For the purpose of this discussion, we give an informal overview and refer to the work \cite{KVWX23} and surveys \cite{KWB19,Wei25} for detailed exposition. The low-degree polynomial computational model considers algorithms that compute polynomial functions of the input variables. It is known that polynomials of degree $D=\Theta\paren{\log n}$ captures fairly interesting class of algorithms, including subgraph counting algorithms, spectral methods, approximate message-passing (AMP), local algorithms on sparse graphs, etc. So showing failure of degree $O(\log n)$ polynomials in \say{hard} regimes, can be seen as an evidence of hardness by ruling out concrete class of algorithms \cite{HKK+26}.

For a refutation problem, the framework in \cite{KVWX23} requires one to construct a planted distribution $\sfP$ on $\ell$-uniform hypergraphs that contains an independent set of size $k$, while at the same time no low-degree polynomial separates it from the null distribution $\sfQ$ which is random hypergraphs $\cH_0\paren{n,\ell,p}$ as per \prettyref{def:random-hypergraph}. A polynomial $f:\bbR^N \rightarrow \bbR$ is said to separate $\sfP$ and $\sfQ$ if,
\begin{align*}\numberthis\label{eq:strong-seperation-ldp}
    \sqrt{\max \set{\mathop{\Var}_{X \sim \sfP}\Brac{f},\mathop{\Var}_{X\sim \sfQ}\Brac{f}}} = o\paren{\abs{\mathop{\bbE}_{X\sim \sfP}\Brac{f} - \mathop{\bbE}_{X \sim \sfQ}\Brac{f}}},
\end{align*}
where $X$ denotes the input hypergraph $H=(V,E)$ encoded as $\set{0,1}^N$ for $N=\binom{n}{\ell}$. By dropping the variance under $\sfP$ and centering our input to obtain $Y$ so that $\bbE_{\sfQ}\Brac{f(Y)}=0$, we can relax eqn.~\prettyref{eq:strong-seperation-ldp} to obtain a more tractable quantity called \emph{low-degree likelihood ratio} or \emph{advantage},
\begin{align*}
    \Adv_{\leq D} \defeq \sup_{f\in \bbR\Brac{Y}:\deg(f) \leq D} \frac{\bbE_{\sfP}\Brac{f}}{\sqrt{\bbE_{\sfQ}\Brac{f^2}}}.
\end{align*}
To rule out separation, it suffices to show $\Adv_{\leq D}\!=\!1+o(1)$, and $\Adv_{\leq D}$ can be explicitly computed,
\begin{align*}
    \Adv^2_{\leq D} = \sum_{\abs{\alpha}=0}^D \paren{\bbE_{Y \sim \sfP}\Brac{h_{\alpha}(Y)}}^2,
\end{align*}
where $h_0,\dots,h_D$ forms an orthonormal basis for $\bbR\Brac{Y}_{\leq D}$ with respect to the null distribution $\sfQ$.

\paragraph{Failure of Canonical Planted Distributions.}
A natural candidate for the planted distribution is a generalization of the planted distribution considered for independent sets in graph in \cite{BHK+16}.
\begin{construction}\label{cons:canonical-planted-distribution-hypergraph}
For an $\ell$-uniform hypergraph the distribution $\sfP'$ can be constructed as below:
\begin{enumerate}
    \item Sample a random hypergraph $\cH_0=(V,E)$ as per \prettyref{def:random-hypergraph}.
    \item Choose a subset $S \subset V$ at random by picking each vertex with probability $\rho = 2k/n$.
    \item Remove every hyperedge $e \in E$ if $e \subseteq S$, i.e. set $X_{e}=0$.
\end{enumerate}
\end{construction}

One can show that this does give a low-degree polynomial lower bound but for regimes of $k=o\paren{\sqrt{n}/p^{1/2\ell}}$. We show this formally in \prettyref{app:analyzing-canonical-planted-distribution-hypergraphs}.

The bound is essentially tight for the canonical planted distribution, since low-degree statistics such as total hyperedge count already distinguish from the null distribution once $k = \omega\paren{\sqrt{n}/p^{1/2\ell}}$. The degree-one part of likelihood ratio corresponds to the statistic $\sum_{e\in \cE_{\ell}}Y_e$ that detects the deficit of hyperedges fully contained in the planted set $S$.

A natural idea is to remove this degree-one signal by modifying the hyperedge probability so that the planted and null models have same expected edge counts. However, this does not make the distribution \say{quiet} enough as other small subgraph statistics, such as path counts like $P_3$ can still distinguish the two distributions.

We note that this is a well-document roadblock, even for the sparse graph setting. We refer to Section 4.2 in \cite{JPR+21} which talks about failure of pseudocalibration, and lack of a \say{quiet} planted distribution. The work \cite{JPR+21} studies SoS lower bounds, and the modern approach to constructing SoS lower bounds employs the pseudocalibration framework of \cite{BHK+16}. The first step of this framework involves constructing a planted distribution that is indistinguishable from a null distribution, and then one can mechanically compute from it, a pseudomoment matrix that satisfies all problem-specific constraints. The final part is then showing that this matrix is p.s.d. However, at the scale of $k=\sqrt{n/p}$, the canonical planted distribution is distinguishable, and previous works \cite{JPR+21,KPX24} have to then work extra hard to construct SoS lower bounds. Also, to the best of our knowledge, we do not have low-degree polynomial lower bounds better then the scale of $\sqrt{n}/p^{1/4}$ in the sparse graph setting. This is also mentioned as an open problem in \cite{JPR+21,Pot22}.

\paragraph{Constructing a Quiet Planted Distribution.}
Informally, a planted distribution is called \textit{quiet} at degree $D$ if it is
supported on instances that have the planted property but it remains indistinguishable from the null distribution by degree-$D$ polynomials. In the low-degree framework this amounts to showing $\Adv^2_{\leq D}\paren{\sfP,\sfQ}=1+o(1)$. As we argued above, for our desired regimes, the canonical distribution is not quiet.
Constructing a computationally quiet planting is a problem specific task, and has been successful in other average-case problems \cite{BKW20,BBK+21,KVWX23}.

However, one may be worried that even if the true refutation threshold is $\sqrt{n}/p^{1/\ell}$, why should we expect a quiet distribution at this scale to exist? Fortunately, this is not a concern since the work \cite{KVWX23} also shows that the quiet planting approach is complete. In other words, its absence implies an existence of a low-degree refutation algorithm. So as a first step towards our goal, we should try to figure out what is the quiet planted distribution for graph case. Here, atleast we have strong evidence in form of SoS lower bounds \cite{JPR+21,KPX24} that the correct threshold is $\sqrt{n}/p^{1/2}$. We will infact explain all our main ideas next in the graph setting itself.

We note that for a candidate quiet distribution, our key computation involves showing that $\Adv_{\leq D}\paren{\sfP,\sfQ}=1+o(1)$. Let $\cE$ denote all candidate edges in a graph and fix some $\alpha \subseteq \cE$. Let $F_{\alpha}$ be the corresponding graph with these set of edges and let $V(\alpha)$ denote the vertex set. Now, for the canonical planted distribution, this is worked out in detail in \prettyref{app:analyzing-canonical-planted-distribution-hypergraphs}, but for our discussion next we recall this computation (interpreting  eqn.~\prettyref{eq:important-equation} for graphs) as,
\begin{align*}
    \Adv^2_{\leq D}\paren{\sfP,\sfQ}-1 &\lesssim C_{2,D}\cdot \sum_{\abs{\alpha}=1}^D\sum_{\abs{V(\alpha)}=2}^{2\abs{\alpha}}n^{\abs{V(\alpha)}}p^{\abs{\alpha}}\paren{\frac{2k}{n}}^{2\abs{V(\alpha)}}\\
    &\lesssim C_{2,D}\cdot \sum_{\abs{\alpha}=1}^D\sum_{\abs{V(\alpha)}=2}^{2\abs{\alpha}}\paren{\frac{k^2p}{n}}^{\abs{V(\alpha)}}\cdot p^{\abs{\alpha}-\abs{V(\alpha)}}
\end{align*}
Naively this expression above is $o(1)$ only when $k=o\paren{\sqrt{n}/p^{1/4}}$. However, as a thought experiment, if we could ignore the last term which involves factors of $p$, heuristically we would have that each term in the summation is $o_n(1)$ for $k=o(\sqrt{n/p})$, the desired scale. Since $p \leq 1$, we could genuinely ignore this term if $\abs{\alpha}-\abs{V(\alpha)} \geq 0$ holds. By a simple double counting argument on the number of edges, a sufficient condition for this to hold is that every subgraph $F$ in the summation above has $\deg(F)\geq 2$. Now this becomes our guiding principle to construct quiet distribution: If a subgraph $F_{\alpha}=\paren{V(\alpha),\alpha}$ has a leaf vertex, we want its coefficient in the summation, which in turn is the square of low-degree moment $\bbE\Brac{h_{\alpha}}^2$, and has to be set to $0$. In the graph setting, the null distribution $\sfQ = \cH_0\paren{n,2,p}$ is the \Erdos-\Renyi graph $\cG\paren{n,p}$. We let the orthonormal basis $\set{h_{\alpha}}_{\alpha \subseteq \cE}$ for $L^2(\sfQ)$ where $h_{\alpha}=\prod_{e\in \alpha}Y_e$. We will next see how to set these coefficients.

For an edge $\set{i,j}$ we consider a general form for quiet planted distribution with edge probability $q_{ab}= \Prob{X_{ij}=1|\chi_i=a,\chi_j=b}$ for $a,b \in \set{0,1}$ and with $q_{10}=q_{01}$. Recall that $\chi_i=1$ means a vertex is in the planted set $S$ and $0$ means it isn't.
For convenience we let $m_{ab}=\bbE\Brac{Y_{ij}\mid \chi_i=a,\chi_j=b}$. We can relate the two quantities as,
\begin{align*}\numberthis\label{eq:distribution-moment-relate}
    m_{ab} =\frac{q_{ab}-p}{\sqrt{p(1-p)}} \qquad \equiv \qquad q_{ab}=p+ \sqrt{p(1-p)} \cdot m_{ab}  
\end{align*}
First note that we are forced to set $q_{11}=0$ by the requirement that $S$ must be an independent set which fixes $m_{11}=-\sqrt{p/(1-p)}$. Now for a subgraph $F$, its coefficient $\nu_F$ can be computed by conditioning on $\chi$ and splitting $\alpha$ into its component edges.
\begin{align*}
    \nu_F &= \bbE_{\sfP}\Brac{\prod_{\set{i,j}\in \alpha}Y_{ij}} = \bbE_{\chi}\Brac{\bbE\Brac{\prod_{\set{i,j}\in \alpha} Y_{ij}} \middle| \chi} = \bbE_{\chi}\Brac{\prod_{\set{i,j}\in \alpha}\bbE\Brac{Y_{ij}\mid \chi}}\\
    &=\bbE_{\chi}\Brac{\prod_{\set{i,j}\in \alpha}\bbE\Brac{Y_{ij}\mid \chi_i,\chi_j}} = \bbE_{\chi}\Brac{\prod_{\set{i,j}\in \alpha}m_{\chi_i,\chi_j}}
\end{align*}
Now suppose subgraph $F$ has a leaf vertex $u$ which is connected only to $v$, then every factor in above expression other than $m_{\chi_u\chi_v}$ is independent of $\chi_u$. So we can pull this term out of the expression, and then we notice that there is a simple way that suffices to get its coefficient to zero, which is to impose that for each fixed $b\in \set{0,1}$,
\begin{align*}
    \bbE_{\chi_u}\Brac{m_{\chi_u,b}}=0.
\end{align*}

Recall that $\chi_u \sim \Ber(\rho)$ and the above gives two equations. First for $b=0$ we have,
\begin{align*}
    0 = \bbE_{\chi_u}\Brac{m_{\chi_u,0}} = \paren{1-\rho}m_{00} + \rho \cdot m_{10}
\end{align*}
Next, for the case where $b=1$, we can obtain a similar equation which gives,
\begin{align*}
    0 = \bbE_{\chi_u}\Brac{m_{\chi_u,1}}  = \paren{1-\rho}m_{01} + \rho \cdot m_{11}
\end{align*}
Now using $m_{01}=m_{10}$ and the value of $m_{11}=-\sqrt{p/1-p}$, one can solve for all three $m_{00},m_{01},m_{11}$ and by eqn.~\prettyref{eq:distribution-moment-relate}, the corresponding $q_{ab}$ values. This explicitly gives us our quiet planted distribution in the graph case, which we state next.

\begin{definition}[Quiet Planted Distribution]\label{def:graph-quiet-planted}
    For a graph $G=(V,E)$, and for a fixed $k \leq n/4$ we define the quiet planted distribution $\widetilde{\sfP}$ as follows:
    \begin{enumerate}
        \item Sample i.i.d. random variables $\chi_1,\dots,\chi_n \sim \Ber(\rho)$ for $\rho= 2k/n$. Let $S \defeq \set{i:\chi_i=1}$.
        \item Conditioned on $\chi=\paren{\chi_1,\dots,\chi_n}$, we sample edges independently as below,
        \begin{itemize}
            \item If $\chi_i=1,\chi_j=1$ no edge between $i$ and $j$.
            \item If exactly one of $\chi_i,\chi_j$ takes value $1$ we let $\pr{}{X_{ij}=1} = p/(1-\rho)$.
            \item If both $\chi_i,\chi_j=0$ we let $\pr{}{X_{ij}=1}=p \cdot \frac{1-2\rho}{(1-\rho)^2}$
        \end{itemize}
        Let ${\sfP}'$ be the resulting planted distribution generated from this process.\label{step:unconditional-quiet}
        \item Let $\cG_k$ denote the good event that $\abs{S} \geq k$ and define $\widetilde{\sfP} \defeq {\sfP'}\paren{\cdot \mid \cG_k}$.
    \end{enumerate}
\end{definition}

\paragraph{Hypergraph Quiet Planted Distribution.} One can naturally extend the quiet distribution for the graphs above to $\ell$-uniform hypergraphs. We proceed similarly by randomly sampling $S$, and then by defining a probability parameter $q_t$ for each $t \in \set{0,1,\dots,\ell}$ as,
\begin{align*}
    q_t = p\paren{1-\paren{-\frac{\rho}{1-\rho}}^{\ell-t}}.
\end{align*}
Let $\cE_{\ell}$ denote the set of all candidate edges in an $\ell$-uniform hypergraph. Then the edges are sampled as $X_e \sim \Ber\paren{q_{t_e}},\forall e \in \cE_{\ell}$ where $t_e = \abs{e \cap S}$, captures how many vertices of an edge lie in $S$. We formally describe the construction and prove its analysis in \prettyref{sec:ldp-hypergraph-quiet-planted}. Now, if we do the computation of $\Adv^2_{\leq D}\paren{\widetilde{\sfP},\sfQ}-1$ with the quiet planted distribution $\widetilde{\sfP}$, we obtain that it is indeed vanishing in the regimes of $k=o\paren{\sqrt{n}/p^{1/\ell}}$.

\section{Technical Preliminaries}
\subsection{Sum-of-Squares Preliminaries}
\label{sec:sos-prelims}
In this section, we review some basic ideas from the Sum-of-Squares (SoS) framework that will be essential for our proofs in the rest of the paper. The Sum-of-Squares hierarchy (also called Lasserre hierarchy) was developed independently in the works of \cite{Sho87,Nes00,Par00,Las01}. Our overview follows the exposition in the lecture notes \cite{BS16}, and the monograph \cite{FKP19}, to which we refer for more details.

\paragraph{Notation}
Let $\mathbf{x}=\paren{x_1,\dots,x_n}$ denote a tuple of $n$ variables (also called indeterminates), and let $\bbR[\mathbf{x}]=\bbR[x_1,x_2,\dots,x_n]$ denote the ring of polynomials in indeterminates $x_1,x_2,\dots,x_n$ with real coefficients. We extend our notation to 
use $\mathbb{R}[\mathbf{x}]_{\leq d}$ to denote the set\footnote{We note that this set of polynomials may no longer form a ring as it is not closed under multiplication.} of polynomials of degree at most $d$. Since, we will be working over the boolean hypercube, denoted $\set{0,1}^n$, our polynomials are multilinear.  We use the notation $\bbR[\mathbf{x}]\slash \set{\paren{x_i^2-x_i}}_{i\in [n]}$, to denote the quotient ring of $\bbR[\mathbf{x}]$ modulo the ideal generated by polynomials $\set{\paren{x_i^2-x_i}}_{i\in [n]}$. We will  use $\Delta$ to denote the probability simplex. 

\begin{definition}[Polynomial System]
	A polynomial system $\mathcal{P}$ is a system of of $m$ polynomial inequality constraints (also called axioms of the system) given by, 
	\begin{align*}
		\mathcal{P} = \set{p_1 \geq 0,p_2 \geq 0, \dots ,p_m \geq 0}.
	\end{align*}
\end{definition}

\begin{definition}[Sum-of-Squares Polynomial]
	A polynomial $p \in \bbR[\mathbf{x}]$ is said to be sum-of-squares polynomial if there exists polynomials $q_1,q_2,\dots,q_r \in \bbR[\mathbf{x}]$ such that we can write,
	\begin{align*}
		p= q_1^2 + q_2^2 + \dots + q_r^2.
	\end{align*}
\end{definition}

\subsubsection{Sum-of-Squares Proofs}
\begin{definition}[Degree-$d$ SoS Proof]
	Given a polynomial system $\mathcal{P}=\set{p_1\geq 0, \dots,p_m \geq 0}$, a degree-$d$ SoS proof of non-negativity of a polynomial $f$ (denoted $\mathcal{P}\SoSp{x}{d} \set{f\geq 0}$) is a list of sum-of-squares polynomials $\set{q_S}_{S\subseteq [m]}$ such that we can write,
	\begin{align*}
		f=\sum_{S\subseteq [m]}q_S\prod_{i\in S}p_i \text{ where for every $S\subseteq [m]$}, \deg\paren{q_S\prod_{i\in S}p_i} \leq d.
	\end{align*}
	We let the convex set (more precisely a convex cone) $\sos_{d}\paren{\mathcal{P}}=\set{f \, \vert \, \mathcal{P}\SoSp{x}{d} \set{f\geq 0}}$.
\end{definition}
\begin{remark}
	If $\mathcal{P} = \emptyset$ we omit it altogether and write $\SoSp{x}{d} \set{f \geq 0}$
\end{remark}

\paragraph{Inference Rules} SoS proofs follow natural rules of inference one expects from a proof system as,
\begin{itemize}
	\item \textbf{Addition Rule.} Given a polynomial system $\mathcal{P}$ and polynomials $f,g \in \bbR[\mathbf{x}]$,
	\begin{align*}\numberthis\label{eq:additiona-inference-sos}
		\frac{\mathcal{P} \SoSp{x}{d} \{ f\ge 0 , g \ge 0\}}
		{\mathcal{P} \SoSp{x}{d} \{ f+g\ge 0\}}\mper
	\end{align*}
	\item \textbf{Multiplication Rule.} Given a polynomial system $\mathcal{P}$ and polynomials $f,g \in \bbR[\mathbf{x}]$,
	\begin{align*}
		\frac{\mathcal{P} \SoSp{x}{d} \set{ f\ge 0},\mathcal{P} \SoSp{x}{d'} \set{ g \ge 0}}
		{\mathcal{P} \SoSp{x}{d+d'} \set{ fg\ge 0}}\mper
	\end{align*}
	\item \textbf{Transitivity Rule.} Given polynomial systems of constraints $\mathcal{P},\mathcal{Q},\mathcal{R}$ ,
	\begin{align*}
		\frac{\mathcal{P} \SoSp{x}{d} \mathcal{P}',~ \mathcal{P}'\SoSp{x}{d'} \mathcal{P}''}
		{\mathcal{P} \SoSp{x}{dd'} \mathcal{P}''}\mper
	\end{align*}
	\item \textbf{Substitution Rule.} Given functions $F:\bbR^n \rightarrow \bbR,G:\bbR^m \rightarrow \bbR,H: \bbR^p \rightarrow \bbR$,
	\begin{align*}
		\frac{\{F\ge 0\} \SoSp{x}{d} \{ G \ge 0\}}
		{\{F(H))\ge 0\} \SoSp{x}{d\cdot \deg(H)} \{ G(H)\ge 0\}}\mper
	\end{align*}
\end{itemize}
\subsection*{Basic Sum-of-Squares Proofs}
We recall some well-known sum-of-squares proofs that will be useful for our proofs.
\begin{fact}[Operator Norm Bounds, Fact 3.16 in \cite{BK20}]\label{fact:operator_sos_bound}
	Given a symmetric matrix $M \in \bbR^{n\times n}$ and a vector of indeterminates $\mathbf{x}\in \bbR^{n}$ it holds that,
	\begin{align*}
		\SoSp{x}{2} \set{\mathbf{x}^TM\mathbf{x} \leq \norm{M}_2\norm{\mathbf{x}}_2^2}.
	\end{align*}
\end{fact}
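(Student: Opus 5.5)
The plan is to exhibit an explicit sum-of-squares identity. Write $\lambda \defeq \norm{M}_2$. Since $M$ is a real symmetric matrix, the spectral theorem gives $M = \sum_{i=1}^n \lambda_i u_i u_i^{\top}$, where $\set{u_i}$ is an orthonormal eigenbasis and every eigenvalue satisfies $\abs{\lambda_i}\leq \lambda$. This implies that the matrix $\lambda I - M = \sum_i (\lambda-\lambda_i)u_iu_i^{\top}$ is positive semidefinite. The key step is therefore to write the polynomial $\lambda\norm{\mathbf{x}}_2^2 - \mathbf{x}^{\top}M\mathbf{x} = \mathbf{x}^{\top}(\lambda I - M)\mathbf{x}$ as a quadratic form in a PSD matrix.

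Next, I will expand this quadratic form in the eigenbasis. Using $I=\sum_i u_iu_i^{\top}$, we get
\begin{align*}
\lambda\norm{\mathbf{x}}_2^2 - \mathbf{x}^{\top}M\mathbf{x} = \sum_{i=1}^n (\lambda-\lambda_i)\,\paren{\langle u_i,\mathbf{x}\rangle}^2 = \sum_{i=1}^n \paren{\sqrt{\lambda-\lambda_i}\,\langle u_i,\mathbf{x}\rangle}^2 .
\end{align*}
Each $\sqrt{\lambda-\lambda_i}\,\langle u_i,\mathbf{x}\rangle$ is a real linear polynomial in the indeterminates $\mathbf{x}$, because $\lambda-\lambda_i\geq 0$. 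The right-hand side is therefore a sum of squares of degree-$1$ polynomials, so it has degree $2$. By the definition of a degree-$2$ SoS proof (with $\mathcal{P}=\emptyset$, using only the $q_\emptyset$ term), this is exactly the certificate $\SoSp{x}{2}\set{\mathbf{x}^{\top}M\mathbf{x}\leq \norm{M}_2\norm{\mathbf{x}}_2^2}$.

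An equivalent route avoids the explicit eigendecomposition: take a Cholesky or Gram factorization $\lambda I - M = L L^{\top}$, which gives $\mathbf{x}^{\top}(\lambda I-M)\mathbf{x} = \norm{L^{\top}\mathbf{x}}_2^2 = \sum_j (L^{\top}\mathbf{x})_j^2$. There is no real obstacle in this argument. The only point needing care is the justification that $\lambda I - M\succeq 0$, which holds because the spectral norm of a symmetric matrix equals $\max_i\abs{\lambda_i}$. The identity is a formal polynomial identity in $\mathbf{x}$, which is what an SoS proof requires.
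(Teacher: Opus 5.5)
Your proof is correct and matches the standard argument. The paper imports this fact from BK20 without proving it, and the usual justification is exactly yours: $\norm{M}_2 I - M \succeq 0$, so $\norm{M}_2\norm{\mathbf{x}}_2^2 - \mathbf{x}^{\top}M\mathbf{x}$ is a sum of squares of linear forms in $\mathbf{x}$, which is a degree-$2$ SoS certificate.
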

\begin{fact}[SoS Cauchy-Schwarz, Fact 3.17 in \cite{BK20}]
	\label{fact:sos_cauchy_schwarz}
	Let $x_i,y_i$ for $1 \leq i \leq n$ be indeterminates. Then,
	\begin{align*}
		\SoSp{x,y}{4} \set{\paren{\sum_{i=1}^nx_iy_i}^2 \leq \paren{\sum_{i=1}^nx_i^2}\paren{\sum_{i=1}^ny_i^2}}.
	\end{align*}
\end{fact}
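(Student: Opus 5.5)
The statement to prove is the SoS Cauchy--Schwarz inequality (\prettyref{fact:sos_cauchy_schwarz}). I will exhibit the difference of the two sides explicitly as a sum of squares of degree-$2$ polynomials, via the Lagrange identity. Concretely, I will show
\begin{align*}
\paren{\sum_{i=1}^n x_i^2}\paren{\sum_{j=1}^n y_j^2} - \paren{\sum_{i=1}^n x_iy_i}^2 = \frac{1}{2}\sum_{i=1}^n\sum_{j=1}^n \paren{x_iy_j - x_jy_i}^2 .
\end{align*}
Each summand on the right is the square of a polynomial of degree $2$, so the right-hand side is a sum of squares of degree $4$. By the definition of a degree-$4$ SoS proof with empty axiom set, this gives $\SoSp{x,y}{4}$ of the claimed inequality.

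The only step is to verify the identity by expanding both sides as formal polynomials. First, the left product expands to $\sum_{i,j} x_i^2y_j^2$. Second, the subtracted square expands to $\sum_{i,j} x_iy_ix_jy_j$. Third, the right-hand side gives
\begin{align*}
\frac{1}{2}\sum_{i,j}\paren{x_i^2y_j^2 + x_j^2y_i^2 - 2x_iy_ix_jy_j} = \sum_{i,j}x_i^2y_j^2 - \sum_{i,j}x_iy_ix_jy_j ,
\end{align*}
using the symmetry $i\leftrightarrow j$ to combine the first two terms. This matches the left-hand side term by term.

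I do not expect a real obstacle. The one point to check is that the identity holds in $\bbR[\mathbf{x},\mathbf{y}]$ and not merely pointwise, and that the degree bookkeeping gives exactly $4$; both follow from the expansion above. If one prefers to avoid the factor $\tfrac12$, one can instead sum only over $i<j$, since the diagonal terms $i=j$ vanish.
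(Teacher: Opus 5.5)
Your proof is correct: the Lagrange identity writes the gap as $\frac{1}{2}\sum_{i,j}(x_iy_j-x_jy_i)^2$, a sum of squares of degree-$2$ polynomials, which is exactly a degree-$4$ SoS certificate with empty axiom set. The paper does not prove this fact and simply cites \cite{BK20}; your argument is the standard proof of this inequality.
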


\begin{fact}[Cancellation Within SoS, Lemma 9.3 in \cite{BK20}]
	Let $x,C \in \bbR$ be indeterminates. Then,
	\begin{align*}
		\set{x \geq 0,x^d \leq Cx^{d-1}} \SoSp{x,C}{2d}\set{x^{2d} \leq C^{2d}}
	\end{align*}
\end{fact}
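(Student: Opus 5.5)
The plan is to bootstrap the hypothesis up to degree $2d$ and then close the argument with an SoS form of the weighted AM--GM inequality, so that $x^{2d}$ can be absorbed into the left-hand side.

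First I will multiply the axiom $C x^{d-1} - x^d \geq 0$ by $x^d$. To do this inside the proof system I need $x^d \geq 0$ to be derivable from $\set{x\geq 0}$ at degree $d$. For even $d$ this holds because $x^d = (x^{d/2})^2$ is a square. For odd $d$ it holds because $x^d = x\cdot (x^{(d-1)/2})^2$, which is the axiom $x \ge 0$ times a square. The multiplication rule then yields
\begin{align*}
\set{x \geq 0,\, x^d \leq Cx^{d-1}} \SoSp{x,C}{2d} \set{x^{2d} \leq C x^{2d-1}}.
\end{align*}
The term $C x^{2d-1}$ has total degree exactly $2d$, so this step stays within the degree budget.

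Second, I will show that the homogeneous binary form
\begin{align*}
g(C,x) \defeq C^{2d} - 2d\, C x^{2d-1} + (2d-1)\,x^{2d}
\end{align*}
is a sum of squares of degree $2d$ with no axioms. The form $g$ is nonnegative on all of $\bbR^2$. This follows from $C x^{2d-1} \leq \abs{C}\abs{x}^{2d-1}$ together with the weighted AM--GM inequality $ab^{2d-1} \leq \frac{1}{2d}a^{2d} + \frac{2d-1}{2d}b^{2d}$ for $a,b\geq 0$. A globally nonnegative binary form is a sum of squares by Hilbert's classical result, since one dehomogenizes to a nonnegative univariate polynomial, which is a sum of two squares.

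Alternatively, I can give an explicit certificate that avoids Hilbert's theorem. Factor $g = (C-x)^2\, r(C,x)$, using the fact that $g$ has a double root at $C=x$. Then check that $r$ is again a nonnegative binary form; in fact $r = \sum_{j} (j+1) C^{2d-2-j} x^{j}$. I expect this verification, or simply citing Hilbert's theorem, to be the only non-mechanical point, and it is the step I would be most careful with.

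Finally, I will add the two facts using the addition rule:
\begin{align*}
x^{2d} \leq C x^{2d-1} \leq \frac{1}{2d}C^{2d} + \frac{2d-1}{2d}x^{2d}.
\end{align*}
Rearranging gives $\frac{1}{2d}x^{2d} \leq \frac{1}{2d}C^{2d}$, that is, $x^{2d}\leq C^{2d}$. Every polynomial used in the proof has degree at most $2d$, which establishes the claim.
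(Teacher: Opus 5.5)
Your proof is correct. The paper gives no proof of this fact; it imports it from BK20. Your route is the standard proof of this cancellation lemma:
\begin{itemize}
\item multiply the axiom $Cx^{d-1}-x^d\ge 0$ by $x^d$, which is a square for even $d$ and the axiom $x\ge 0$ times a square for odd $d$, to reach $x^{2d}\le Cx^{2d-1}$ at degree $2d$;
\item then absorb the cross term using the SoS AM--GM bound $Cx^{2d-1}\le \frac{1}{2d}C^{2d}+\frac{2d-1}{2d}x^{2d}$.
\end{itemize}
Your steps combine into the single identity
\[
C^{2d}-x^{2d} \;=\; 2d\, x^{d}\bigl(Cx^{d-1}-x^{d}\bigr) \;+\; g(C,x),
\]
which is a degree-$2d$ certificate once $g$ is written as a sum of squares of degree-$d$ forms. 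Citing Hilbert's theorem on nonnegative binary forms, as in your main line, does this.

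One small correction concerns your optional ``explicit'' route. The factorization $g=(C-x)^2 r$ with $r=\sum_{j=0}^{2d-2}(j+1)C^{2d-2-j}x^{j}$ is right. However, checking that $r$ is nonnegative does not by itself give an SoS decomposition of $r$. You would still need the univariate/binary-form SoS theorem, or Hurwitz's explicit SoS certificate for AM--GM, to finish. So that route does not actually avoid Hilbert's theorem, and your main line is the cleaner one.
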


\begin{fact}[Taking Roots, Lemma A.3 in \cite{KS17}]
	Let $x \in \bbR$ be an indeterminate and let $C \geq 0$ be a constant. Then,
	\begin{align*}
		\set{x^{2d} \leq C^{2d}} \SoSp{x}{2d} \set{x \leq C}.
	\end{align*}
\end{fact}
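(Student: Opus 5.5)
The plan is to produce an explicit degree-$2d$ certificate: write $C - x$ as a nonnegative multiple of the axiom polynomial $C^{2d} - x^{2d}$ plus a univariate sum of squares of degree at most $2d$. The natural multiplier comes from the tangent line of the convex function $t \mapsto t^{2d}$ at $t = C$.

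\textbf{Main case $C > 0$.} Define
\begin{align*}
	s(x) \defeq \paren{C - x} - \frac{1}{2d\,C^{2d-1}}\paren{C^{2d} - x^{2d}} = \frac{x^{2d} - 2d\,C^{2d-1}x + (2d-1)\,C^{2d}}{2d\,C^{2d-1}},
\end{align*}
so that
\begin{align*}
	C - x = \frac{1}{2d\,C^{2d-1}}\paren{C^{2d} - x^{2d}} + s(x).
\end{align*}
The first term is a nonnegative constant times an axiom, which has degree $2d$. It remains to check that $s$ is a sum of squares of degree at most $2d$.
\begin{itemize}
	\item The numerator of $s$ is $x^{2d} - \bigl(C^{2d} + 2d\,C^{2d-1}(x - C)\bigr)$, the gap between $x^{2d}$ and its tangent line at $C$.
	\item Since $x^{2d}$ is convex on $\R$, this gap is nonnegative for every real $x$. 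Equivalently, this is weighted AM–GM: $\tfrac{1}{2d}x^{2d} + \tfrac{2d-1}{2d}C^{2d} \geq \abs{x}\,C^{2d-1} \geq x\,C^{2d-1}$.
	\item The numerator has a double root at $x = C$, so $s(x) = (x - C)^2\, r(x)$, where $r$ has degree $2d-2$ and is itself nonnegative on $\R$.
	\item By Hilbert's theorem, a nonnegative univariate polynomial is a sum of squares of polynomials of half its degree. Hence $s$ is a sum of squares of degree $2d$.
\end{itemize}
If one wants a fully explicit certificate, $r(x) = \sum_{j=0}^{2d-2} (j+1)\, C^{2d-2-j} x^{j}$, obtained from the factorization of $x^{2d} - C^{2d}$. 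One can then write it as a sum of squares directly by pairing monomials, or simply appeal to Hilbert's theorem. Combining the two terms gives $\set{x^{2d} \leq C^{2d}} \SoSp{x}{2d} \set{x \leq C}$ by the addition rule \prettyref{eq:additiona-inference-sos}.

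\textbf{Degenerate case $C = 0$.} The coefficient $1/(2d\,C^{2d-1})$ blows up, so I would proceed differently. Apply the same identity with $C$ replaced by an arbitrary $\varepsilon > 0$, and use $\varepsilon^{2d} - x^{2d} = \varepsilon^{2d} + (0 - x^{2d})$. This gives a degree-$2d$ derivation of $x \leq \varepsilon + \tfrac{\varepsilon}{2d}$ for every $\varepsilon > 0$. This is how such statements are used in practice: the pseudo-expectation conclusion $\pE[x] \leq 0$ follows by letting $\varepsilon \to 0$. Alternatively, one assumes $C > 0$, as in the application, where $C$ is a strictly positive bound.

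\textbf{Expected obstacle.} The only step needing any care is confirming that $s$ is a sum of squares within the degree budget $2d$, rather than merely nonnegative. The univariate Hilbert theorem settles this. The rest is the one-line algebraic identity above.
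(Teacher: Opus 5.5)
Your argument is correct. The paper gives no proof of this Fact and only cites \cite{KS17}. Your tangent-line/AM--GM identity $C-x=\frac{1}{2d\,C^{2d-1}}\paren{C^{2d}-x^{2d}}+s(x)$ is the standard proof: $s$ is a nonnegative univariate polynomial of degree $2d$, hence a sum of squares, so the identity is already a degree-$2d$ certificate for $C>0$.

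Two details should be corrected.

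First, your explicit quotient is wrong for $d\geq 2$. The numerator is monic, so $r$ must also be monic. The correct formula is $r(x)=\sum_{j=0}^{2d-2}(2d-1-j)\,C^{2d-2-j}x^{j}$. For example, when $d=2$ we have $x^{4}-4C^{3}x+3C^{4}=(x-C)^{2}(x^{2}+2Cx+3C^{2})$, whereas your formula gives $3x^{2}+2Cx+C^{2}$. Since you ultimately rely on the univariate Hilbert theorem, this does not affect the proof.

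Second, at $C=0$ the statement is not merely degenerate; it is false as written. In a degree-$2d$ certificate, the multiplier of the axiom $-x^{2d}\geq 0$ must be a constant $\lambda\geq 0$. The certificate would therefore read $-x=q(x)-\lambda x^{2d}$ with $q$ a sum of squares. This is impossible, because $q(x)=\lambda x^{2d}-x<0$ for small $x>0$. The same argument rules out certificates of any degree, since $q_1(x)x^{2d}=o(x)$ as $x\to 0^{+}$.

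So your $\varepsilon$-approximate derivation is the right resolution, and the Fact should be read with $C>0$. Incidentally, substituting $C=\varepsilon$ into your identity gives $x\leq\varepsilon\paren{1-\tfrac{1}{2d}}$. Your stated bound $\varepsilon\paren{1+\tfrac{1}{2d}}$ is still valid but looser, and appears to come from a sign slip.
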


\subsubsection{Pseudo-distributions and Pseudo-expectations}
A finitely supported distribution over $\bbR^n$ can be represented by a probability mass function $\mu:\bbR^n \rightarrow \bbR_{\geq 0}$ such that $\mu(\mathbf{x})$ is the probability of a point $\mathbf{x} \in \bbR^n$ under the distribution $\mu$. Similarly for a finitely supported $\mu$, the expectation for a function $f:\bbR^n \rightarrow \bbR$ is simply,
\begin{align*}
	\E_{\mu}\Brac{f} = \sum_{\mathbf{x} \in \mathsf{supp}\paren{\mu}}f(\mathbf{x})\mu(\mathbf{x})
\end{align*}
Pseudo-distributions are generalizations of finitely supported probability distributions where one relaxes the condition that $\mu\paren{\mathbf{x}}\geq 0$ but still require it to satisfy a certain kind of low-degree non-negativity tests.
\begin{definition}[Degree-$d$ Pseudo-distribution]
	A finitely supported function $\mu:\bbR^n\rightarrow \bbR$ is a degree-$d$ pseudo-distribution if it satisfies the following,
	\begin{itemize}
		\item \textbf{Normalization}: $\sum\limits_{\mathbf{x}}\mu(\mathbf{x})=1$.
		\item \textbf{Non-Negativity of Squares}: $\sum\limits_{\mathbf{x}}\mu(\mathbf{x})f(\mathbf{x})^2 \geq 0$ for every polynomial $f$ where $\deg(f) \leq d/2$.
	\end{itemize}
\end{definition}
It is easy to see that any true distribution is a pseudo-distribution but the other direction is not true.
\begin{definition}[Pseudo-expectation]
	We define the degree-$d$ pseudo-expectation of a polynomial $f$ (with degree at most $d$) with respect to a degree-$d$ pseudo-distribution $\mu$ as the formal expectation,
	\begin{align*}
		\mathop{\pE}\limits_{\mu}[f] = \sum_{\mathbf{x}}\mu(\mathbf{x})f\paren{\mathbf{x}}
	\end{align*}
\end{definition}
Whenever clear from context we will drop the pseudo-distribution and work directly with the pseudo-expectation operator. The degree-$d$ moment tensor (captures pseudo-expectation of all monomials with degree upto $d$) of a pseudo-distribution $\mu$ is given by $\pE_{\mu}\paren{1,\mathbf{x}}^{\otimes d}$. Similar to the degree-$d$ moment tensors of a true distribution, the degree-$d$ moment tensors of a degree-$d$ pseudo-distribution also form a convex set. It can in fact be shown that,
\begin{fact}
	A finitely supported $\mu:\bbR^n \rightarrow \bbR$ with $\sum_{\mathbf{x}}\mu(\mathbf{x})=1$ is a degree-$d$ pseudo-distribution (for an even $d$) iff the formal degree-$d$ moment matrix (called pseudomoment matrix) is positive semidefinite, 
	\begin{align*}
		\mathop{\pE}\limits_{\mu}\left[\paren{\paren{1,\mathbf{x}}^{\otimes d/2}}\paren{\paren{1,\mathbf{x}}^{\otimes d/2}}^T\right] \succeq 0.
	\end{align*}
\end{fact}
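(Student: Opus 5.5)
The plan is to identify the non-negativity-of-squares condition with the quadratic form of the pseudomoment matrix, entry by entry. Let $m=d/2$ and write $v(\mathbf{x}) \defeq (1,\mathbf{x})^{\otimes m} \in \bbR^{(n+1)^m}$. Its coordinates are indexed by tuples $(i_1,\dots,i_m)\in\set{0,1,\dots,n}^m$, with the convention $x_0 \defeq 1$. The coordinate at $(i_1,\dots,i_m)$ is the monomial $x_{i_1}\cdots x_{i_m}$. The pseudomoment matrix is
\begin{align*}
M \defeq \mathop{\pE}\limits_{\mu}\Brac{v(\mathbf{x})v(\mathbf{x})^T} = \sum_{\mathbf{x}}\mu(\mathbf{x})\,v(\mathbf{x})v(\mathbf{x})^T .
\end{align*}
This sum is finite because $\mu$ is finitely supported.

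The key step is that the linear map $c \mapsto f_c(\mathbf{x}) \defeq \langle c, v(\mathbf{x})\rangle$, from $\bbR^{(n+1)^m}$ to $\bbR[\mathbf{x}]_{\leq m}$, is surjective. Any monomial $x_{j_1}\cdots x_{j_r}$ with $r\leq m$ equals the coordinate of $v(\mathbf{x})$ indexed by $(j_1,\dots,j_r,0,\dots,0)$, since we pad with copies of $x_0=1$. Hence every polynomial of degree at most $m$ is a linear combination of coordinates of $v(\mathbf{x})$. The representing vector $c$ need not be unique, but this causes no problem. For every $c$ we have the identity
\begin{align*}
\sum_{\mathbf{x}}\mu(\mathbf{x})f_c(\mathbf{x})^2 = \sum_{\mathbf{x}}\mu(\mathbf{x})\,c^Tv(\mathbf{x})v(\mathbf{x})^Tc = c^T M c ,
\end{align*}
which follows by linearity of the finite sum.

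Both directions now follow from this identity.
\begin{itemize}
\item ($\Leftarrow$) Suppose $M\succeq 0$, and let $f$ have degree at most $d/2$. Pick $c$ with $f=f_c$ by surjectivity. Then $\sum_{\mathbf{x}}\mu(\mathbf{x}) f(\mathbf{x})^2 = c^TMc\geq 0$. Together with the assumed normalization, this shows $\mu$ is a degree-$d$ pseudo-distribution.
\item ($\Rightarrow$) Suppose $\mu$ is a degree-$d$ pseudo-distribution, and take any $c$. The polynomial $f_c$ has degree at most $m=d/2$, so the non-negativity-of-squares condition gives $c^TMc=\sum_{\mathbf{x}}\mu(\mathbf{x})f_c(\mathbf{x})^2\geq 0$. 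Hence $M\succeq 0$.
\end{itemize}

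The only point that needs care is the surjectivity and padding argument. It is exactly why the moment vector is built from $(1,\mathbf{x})$ rather than from $\mathbf{x}$ alone: with $\mathbf{x}$ alone, polynomials of mixed degree below $m$ would not be captured. The evenness of $d$ is used only so that $d/2$ is an integer; everything else is bookkeeping.
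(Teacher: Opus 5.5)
Your proof is correct and complete. The paper states this as a standard fact without proof, so there is no argument of its own to compare against; your route, the identity $\sum_{\mathbf{x}}\mu(\mathbf{x})f_c(\mathbf{x})^2=c^{T}Mc$ combined with surjectivity of $c\mapsto f_c$ onto polynomials of degree at most $d/2$ by padding with $x_0=1$, is the usual one (and the symmetry of $M$ needed for $M\succeq 0$ is immediate from $M=\sum_{\mathbf{x}}\mu(\mathbf{x})v(\mathbf{x})v(\mathbf{x})^{T}$).
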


\begin{definition}[Constrained Pseudo-distributions]
	Given a degree-$d$ pseudo-distribution $\mu$ and a polynomial system $\mathcal{P}=\set{p_1\geq 0,\dots,p_m\geq 0}$ we say that $\mu$ satisfies $\mathcal{P}$ at degree $d'$ (denoted $\mu \models_{d'} \mathcal{P}$) if for every sum-of-squares polynomial $q$ it holds that,
	\begin{align*}
		\mathop{\pE}\limits_\mu \left[q\prod_{i\in S}p_i\right] \geq 0, \forall S \subseteq [m] \text{ satisfying }\paren{\deg(q)+\sum_{i\in S}\max\set{\deg(p_i),d'}} \leq d
	\end{align*}
\end{definition}
We say $\mu$ satisfies $\mathcal{P}$ (denoted by $\mu \models \mathcal{P}$) if $\mu$ satisfies $\mathcal{P}$ at degree $0$. Note that for a true distribution it holds that $\mu \models \mathcal{P}$ iff $\mu$ is supported on a subset of the set of solutions to the constraints given by $\mathcal{P}$. Now if for a polynomial $f$ we have $\mu$ satisfying $\mu \models_d\set{f\geq 0}$ then for any degree $d' \geq d$ pseudo-distribution, $\pE_{\mu}[f]\geq 0$. If instead we have an equality $f=0$ then a degree $d$ pseudo-distribution $\mu$ satisfies $f=0$ is equivalent to $\pE_{\mu}\left[f.p\right]=0$ whenever $\deg(f) + \deg(p) \leq d$. 

\begin{remark}\label{rem:equality_satisfying}
	In particular, if we take the constant polynomial $p=1$ we have that for any degree at most $d$ polynomial equality $f=0$ a degree-$d$ pseudo-distribution satisfies $\pE_{\mu}[f]=0$. 
\end{remark}

\begin{remark}
	We say that a degree-$d$ pseudo-distribution $\mu$ satisfies a given polynomial system $\varepsilon$-approximately if for every SoS  polynomial $q$ it holds that,
	\begin{align*}
		\pE_{\mu}\left[q\prod_{i\in S}p_i\right] \geq -\varepsilon\norm{q}_2\prod_{i\in S}\norm{p_i}_2\quad  \forall S \subseteq [m] \text{ satisfying }{\deg(q)+\sum_{i\in S}{\deg(p_i)}} \leq d
	\end{align*}
	where for a polynomial $p$ the notation $\norm{p}_2$ denotes the Euclidean norm of the coefficient vector.
\end{remark}

\subsubsection*{Basic Facts about Pseudo-distributions and Pseudo-expectations} We recall well-known pseudo-distribution/pseudo-expectaton facts that will be useful for our proofs.

\begin{fact}[Pseudo-expectation H\"older Inequality, Fact 3.11 in \cite{BK20}]\label{fact:pe_holder}
	Let \, $f,g$ be polynomials of degree at most $d$. Fix a $t \in \mathbb{N}$, then for any degree-$dt$ pseudo-expectation the following holds,
	\begin{align*}
		\pE\left[f^{t-1}g\right] \leq \paren{\pE\left[f^{t}\right]}^{(t-1)/t}\paren{\pE[g^t]}^{1/t}.
	\end{align*}
\end{fact}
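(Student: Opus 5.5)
The plan is to reduce the inequality to a single polynomial inequality with an SoS certificate of degree $dt$, and then rescale. The key ingredient is the weighted AM--GM inequality
\begin{align*}
a^{t-1}b \;\leq\; \frac{t-1}{t}\,a^{t} + \frac{1}{t}\,b^{t}.
\end{align*}
I would show it has a degree-$t$ SoS proof in the indeterminates $a,b$. This holds for even $t$ by Hurwitz's classical theorem that the AM--GM form $\frac{1}{t}\sum_i a_i^t - \prod_i a_i$ is a sum of squares; we take $t-1$ copies of $a$ and one copy of $b$. For odd $t$ the inequality is false for unrestricted reals, so the statement must be read with $f,g$ carrying nonnegativity constraints, e.g.\ $f=X$ with $x_v^2=x_v$, or with $t$ even as in all uses in the paper. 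Proving this certificate, uniformly and by an explicit SoS rather than a citation, is the main obstacle. For $t$ a power of two it follows by induction from $2uv\le u^2+v^2$. For general even $t$ I would invoke Hurwitz's explicit symmetrization identity.

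Next I use the substitution rule. Substituting $a=\lambda f$ and $b=\lambda^{-(t-1)}g$ for a scalar $\lambda>0$ gives a degree-$dt$ SoS proof of
\begin{align*}
f^{t-1}g \;\leq\; \frac{t-1}{t}\,\lambda^{t} f^{t} + \frac{1}{t}\,\lambda^{-t(t-1)} g^{t}.
\end{align*}
Here the left side is unchanged because $\lambda^{t-1}\lambda^{-(t-1)}=1$. Since $\pE$ has degree $dt$ and is nonnegative on SoS polynomials of degree at most $dt$, applying $\pE$ and using linearity yields
\begin{align*}
\pE[f^{t-1}g] \;\leq\; \frac{t-1}{t}\,\lambda^{t} A + \frac{1}{t}\,\lambda^{-t(t-1)} B,
\qquad A=\pE[f^t],\; B=\pE[g^t].
\end{align*}
Both $A$ and $B$ are nonnegative, since for even $t$ they are pseudo-expectations of squares.

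Finally I optimize over $\lambda$. If $A,B>0$, choose $\lambda^{t^2}=B/A$, i.e.\ $\lambda=(B/A)^{1/t^2}$. Then
\begin{align*}
\lambda^{t}A=A^{(t-1)/t}B^{1/t}=\lambda^{-t(t-1)}B,
\end{align*}
so the right-hand side equals $A^{(t-1)/t}B^{1/t}$, which is the claim. If $A=0$ or $B=0$, send $\lambda\to\infty$ or $\lambda\to 0$ respectively. The right-hand side then tends to $0$, giving $\pE[f^{t-1}g]\le 0$, which again matches the claimed bound.
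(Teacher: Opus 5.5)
The paper gives no proof of this fact; it imports it from \cite{BK20}. Your argument (an SoS certificate for the two-variable AM--GM form, then the substitution $a=\lambda f$, $b=\lambda^{-(t-1)}g$, then optimizing $\lambda$ with limits for $A=0$ or $B=0$) is the standard proof. It is correct for even $t$, and the degree bookkeeping works: each square $s_i(\lambda f,\lambda^{-(t-1)}g)^2$ comes from a form $s_i$ of degree $t/2$, so it has degree at most $dt$. For the certificate you do not need Hurwitz, since a nonnegative binary form is a sum of squares (Hilbert). A useful explicit identity, valid for every $t$, is
\[
(t-1)a^t - t\,a^{t-1}b + b^t = (a-b)^2\sum_{k=0}^{t-2}(t-1-k)\,a^{t-2-k}b^k .
\]

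You are right that the statement as written is false for odd $t$. For example, take $x$ uniform on $\{\pm 1\}$, $f=x$, $g=1$, $t=3$: the left side is $1$ and the right side is $0$.

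However, your aside that the paper only uses even $t$ is wrong. In the proof of \prettyref{thm:odd-case-bounds} the fact is invoked with $t=2q+1$, with $f=X$, $g=1$ (and effectively $f=1$, $g=X$). So the constrained version you mention is genuinely needed, and your write-up leaves it unproved, since Hurwitz gives nothing for odd $t$.

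That version follows from the same scaling argument using the identity above. Take one of $a,b$ to be a positive constant $c$ and the other a positive multiple of $X$. Then every summand is a positive multiple of $(\lambda X-c)^2X^{j}$ with $j\le t-2$. Since $X^{j}\equiv\sum_{W\in V^{j}}x_W^2$ modulo $\{x_v^2=x_v\}$, this is congruent to $\sum_{W}\bigl(x_W(\lambda X-c)\bigr)^2$, a sum of squares of degree at most $2t-2$. The same observation gives $\pE[X^t]\ge 0$. This requires a booleanity-satisfying pseudo-expectation of degree about $2t$ rather than $dt=t$. The paper does have that, namely degree $4q+2$.
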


\begin{fact}[Comparisons of Norms, Fact 3.12 in \cite{BK20}]\label{fact:pe_norm} For $t,t' \in \mathbb{N}$, and a degree-$t^2$ pseudo-expectation $\pE$ over a scalar indeterminate $x$. Then it holds that,
	\begin{align*}
		\pE\left[x^{t'}\right]^{1/{t'}} \leq \pE\left[x^t\right]^{1/t}, \forall t \geq t'.
	\end{align*}
\end{fact}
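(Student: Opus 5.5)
The plan is to derive the inequality from a univariate weighted AM--GM inequality that has a low-degree sum-of-squares certificate, and then normalize. I will take $t$ even, or else assume that the pseudo-distribution satisfies $x \geq 0$; this holds in every application in the paper, where $x = X = \sum_v x_v$ with boolean $x_v$. Without one of these, odd moments of a pseudo-distribution need not be comparable.

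\emph{Step 1 (the polynomial inequality).} Fix $t \geq t'$ and set $\theta = t'/t \in (0,1]$. Consider the univariate polynomial
\begin{align*}
  r(y) \defeq \theta\, y^{t} + (1-\theta) - y^{t'} .
\end{align*}
By weighted AM--GM, $r(y) \geq 0$ for all $y \geq 0$, with equality at $y = 1$. If $t$ is even, then $r(y) \geq 0$ on all of $\bbR$ as well: for $y < 0$ we have $y^t = |y|^t$ and $y^{t'} \leq |y|^{t'}$, so the case $y \geq 0$ applies to $|y|$.

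\emph{Step 2 (turn it into an SoS statement).} A univariate polynomial that is nonnegative on $\bbR$ is a sum of two squares (Hilbert). Hence for even $t$ we get $\SoSp{y}{t}\set{y^{t'} \leq \theta y^t + (1-\theta)}$. In the case $x \geq 0$ I would instead use the P\'olya--Szeg\H{o} form: a polynomial nonnegative on $[0,\infty)$ can be written as $\sigma_0(y) + y\,\sigma_1(y)$ with $\sigma_0,\sigma_1$ SoS. This gives $\set{y \geq 0}\SoSp{y}{t}\set{y^{t'} \leq \theta y^t + 1-\theta}$.

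The available degree $t^2 \geq t$ is more than enough. I expect the only delicate point of the whole proof to be confirming that this certificate fits within the degree and constraint conventions of the pseudo-distribution.

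\emph{Step 3 (normalize and apply $\pE$).} Let $\lambda \defeq \pE[x^t]^{1/t}$, which is nonnegative because $\pE[x^t] \geq 0$: for even $t$ it is a square, and otherwise it follows from $x \geq 0$. Assume first that $\lambda > 0$. Substitute $y = x/\lambda$, which is a degree-one substitution, so the substitution rule preserves the degree. This yields an SoS-derivable inequality
\begin{align*}
  x^{t'} \leq \lambda^{t'}\paren{\theta\, x^t/\lambda^{t} + (1-\theta)} .
\end{align*}
Applying $\pE$ and using linearity together with $\pE[1] = 1$ gives
\begin{align*}
  \pE[x^{t'}] \leq \lambda^{t'}(\theta + 1 - \theta) = \lambda^{t'} ,
\end{align*}
which is the claim after taking $t'$-th roots. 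When $\pE[x^{t'}] < 0$ there is nothing to prove.

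Finally, if $\lambda = 0$, I would apply the same argument with $\lambda$ replaced by an arbitrary $\varepsilon > 0$. This gives $\pE[x^{t'}] \leq \varepsilon^{t'}\paren{\theta \cdot 0/\varepsilon^t + 1 - \theta}$, and letting $\varepsilon \to 0$ yields $\pE[x^{t'}] \leq 0$, as required.
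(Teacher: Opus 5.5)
Your proof is correct. The paper gives no argument of its own for this fact (it is quoted from \cite{BK20}), so there is no in-paper proof to compare against, and your write-up stands on its own.

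Your restriction to even $t$, or to pseudo-distributions satisfying $\{x \geq 0\}$, is a necessary correction rather than a weakening. As transcribed, the statement is false: take the point mass at $x=-1$ (a genuine distribution, hence a pseudo-distribution of every degree) with $t=3$, $t'=2$, and it asserts $1 \leq -1$.

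Under either hypothesis every step checks out, including the point you flagged as delicate:
\begin{itemize}
\item The Hilbert certificate (even $t$), respectively the P\'olya--Szeg\H{o} certificate ($x \geq 0$), has degree $t \leq t^2$.
\item In the second case, $\mu \models \{x \geq 0\}$ at degree $t^2$ gives $\pE[x\,\sigma_1(x)] \geq 0$ for every SoS $\sigma_1$ of degree at most $t-1$, which is exactly what is needed.
\item The rescaling $y = x/\lambda$ is a linear substitution by a constant, so it preserves the certificate.
\item Your handling of $\lambda = 0$ and of negative odd moments is correct.
\end{itemize}

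The natural alternative within the paper's toolkit is to chain the pseudo-expectation H\"older inequality (\prettyref{fact:pe_holder}) with $g = 1$. This is how the paper actually performs such comparisons, e.g.\ $\pE[X^{2q}] \leq \pE[X^{2q+1}]^{2q/(2q+1)}$. Compared with that, your route gives a single explicit certificate for arbitrary $t' \leq t$, and it shows that degree $t$ suffices rather than $t^2$.

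Those odd-exponent comparisons in the paper fall under your $x \geq 0$ branch, because $X = \sum_v x_v = \sum_v x_v^2$ modulo the booleanity axioms. So your caveat costs nothing in the paper's applications.
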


\subsubsection{Duality of SoS Proofs and Pseudo-distributions/Pseudo-expectations}
Pseudo-distributions and Sum-of-Squares proofs are dual objects to each other. In our proofs we will reason about these in a composable fashion by using the soundness and completeness of sum-of-squares proof system. Low degree SoS proofs are sound and complete proof system if we take low-degree pseudo-distributions as models. Therefore, SoS proofs allow us to deduce properties of pseudo-distributions that satisfy some constraints.

\begin{lemma}[Soundness]
	Let $\mu$ be a degree-$d$ pseudo-distribution satisfying a polynomial system $\mathcal{P}$ at degree $d''$ and $f$ be a polynomial such that $\mathcal{P} \SoSp{x}{d'} \set{f \geq 0}$ then we have that $\mu \models_{d'\cdot d''+d'} \set{f \geq 0}$.
\end{lemma}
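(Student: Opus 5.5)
We prove the Soundness lemma by unfolding the definition of an SoS proof and pushing the pseudo-expectation through it term by term. By hypothesis $\mathcal{P}\SoSp{x}{d'}\set{f\geq 0}$, so there are sum-of-squares polynomials $\set{q_S}_{S\subseteq[m]}$ with
\begin{align*}
f=\sum_{S\subseteq[m]} q_S\prod_{i\in S}p_i, \qquad \deg\Bigl(q_S\prod_{i\in S}p_i\Bigr)\leq d' \text{ for every } S.
\end{align*}
To show $\mu\models_{d'd''+d'}\set{f\geq 0}$ we must check, for every sum-of-squares polynomial $r$ in the allowed degree range, that $\pE_\mu[r f]\geq 0$. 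Multiplying the identity by $r$ and using linearity of $\pE_\mu$ gives
\begin{align*}
\pE_\mu[rf]=\sum_{S\subseteq[m]}\pE_\mu\Bigl[(r\,q_S)\prod_{i\in S}p_i\Bigr].
\end{align*}
It then suffices to show that each summand is nonnegative.

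The key observation is that $r\,q_S$ is itself a sum of squares, because the product of two sums of squares expands as $\sum_{a,b}(g_a h_b)^2$. Each summand therefore has exactly the shape $\pE_\mu[q\prod_{i\in S}p_i]$ with $q=r\,q_S$ an SoS polynomial. The definition of ``$\mu$ satisfies $\mathcal{P}$ at degree $d''$'' asserts precisely that such quantities are nonnegative, provided the degree condition $\deg(q)+\sum_{i\in S}\max\set{\deg(p_i),d''}\leq d$ holds. Given this, summing over $S$ gives $\pE_\mu[rf]\geq 0$. The trivial case $r=1$ is included, and it yields $\pE_\mu[f]\geq0$.

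The only real obstacle is degree bookkeeping, namely matching the degree budget $d'd''+d'$ in the conclusion with the requirement in the satisfaction definition. Since $\deg(q_S\prod_{i\in S}p_i)\leq d'$, each $p_i$ with $i\in S$ has degree at most $d'$. Each $\max\set{\deg p_i,d''}$ contributes at most $\max\set{d',d''}$, and there are at most $d'$ nontrivial constraint factors of positive degree. Hence the total padded degree is at most $\deg(r)+d'+d'\cdot d''$.

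I would argue the bound as follows: the $p_i$-degrees together with $\deg q_S$ sum to at most $d'$, and the padding by $d''$ adds at most $d''$ per factor. This gives $\deg(r)+d'+|S|\,d''\leq \deg(r)+d'd''+d'$. It is exactly the degree allowance encoded in $\mu\models_{d'd''+d'}$, so the hypothesis applies to every summand.

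If constant-degree factors ($\deg p_i=0$) spoil the bound $|S|\leq d'$, I would first discard them. A constant constraint $p_i\geq0$ is a nonnegative scalar that can be absorbed into $q_S$, so this is without loss of generality.
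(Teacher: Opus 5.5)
Your proof is correct and is the standard argument for this fact. The paper states the lemma without proof and defers to the SoS literature; that argument is to expand $f$ via its certificate, multiply by the SoS test polynomial $r$, note that $r\,q_S$ is SoS, and check $\deg r+\deg q_S+\sum_{i\in S}\max\{\deg p_i,d''\}\le \deg r+d'+|S|\,d''\le \deg r+d'+d'd''$. Two small points. First, it is this $\max\{a,b\}\le a+b$ bookkeeping that fits the degree budget, not the cruder ``at most $\max\{d',d''\}$ per factor'' estimate you state first. Second, absorbing constant constraints into $q_S$ tacitly requires those constants to be nonnegative: this follows from $\mu\models_{d''}\mathcal{P}$ by testing with $q=1$ when $d''\le d$, and when $d''>d$ and $d'\ge 1$ the conclusion is vacuous.
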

\begin{corollary}\label{cor:sos_soundness}
	For a polynomial system $\mathcal{P}$ where $\mathcal{P} \SoSp{x}{d'} \set{f\geq 0}$ and a degree-$d$ pseudo-distribution $\mu$ where $d\geq d'$ such that $\mu \models \mathcal{P}$ then we have that $\pE_{\mu}\Brac{f} \geq 0$.
\end{corollary}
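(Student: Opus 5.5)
The plan is to derive this directly from the preceding Soundness lemma by specializing its parameters, and to cross-check it with a short direct argument from the definitions.

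\textbf{Step 1: apply Soundness.} The hypothesis $\mu \models \mathcal{P}$ means, by convention, that $\mu$ satisfies $\mathcal{P}$ at degree $d''=0$. Soundness with $d''=0$ then gives $\mu \models_{d'} \set{f\geq 0}$, since $d'\cdot d''+d' = d'$.

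\textbf{Step 2: test against $q=1$.} Unpacking $\mu \models_{d'}\set{f\geq 0}$ via the definition of constrained pseudo-distributions, we get $\pE_\mu[q f]\geq 0$ for every sum-of-squares $q$ with $\deg(q)+\max\set{\deg f, d'}\leq d$. Take $q=1$, which is trivially a sum of squares. The degree condition becomes $\max\set{\deg f,d'}\leq d$. This holds because $\deg f\leq d'$, as $f$ is itself one of the polynomials produced by the degree-$d'$ proof, and $d'\leq d$ by assumption. Hence $\pE_\mu[f]\geq 0$.

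\textbf{Direct check.} Write the certificate as $f=\sum_{S\subseteq[m]} q_S\prod_{i\in S}p_i$, where each $q_S$ is sum-of-squares and each term has degree at most $d'\leq d$. By linearity of $\pE_\mu$, it suffices to show $\pE_\mu[q_S\prod_{i\in S}p_i]\geq 0$ for each $S$. The definition of $\mu\models_0\mathcal{P}$ gives exactly this whenever
\[
\deg(q_S)+\sum_{i\in S}\deg(p_i)\leq d,
\]
which holds since the left-hand side is the degree of the term, at most $d'\leq d$.

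\textbf{Main obstacle.} There is no genuine obstacle; the proof is bookkeeping. The only point needing care is the degree accounting. The sum $\deg(q_S)+\sum_{i\in S}\deg(p_i)$ equals the degree of the product term, since degrees add under multiplication. With $d''=0$, the $\max\set{\deg(p_i),d'}$ in the definition reduces to $\deg(p_i)$, so the condition is met for every term.
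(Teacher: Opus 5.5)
Your proposal is correct and is essentially what the paper intends: the paper gives this corollary no separate proof, presenting it as the Soundness lemma specialized to $d''=0$ and then tested against the constant square $q=1$, which is legitimate because $\deg f\leq d'\leq d$. Your ``direct check'' is the same argument with the Soundness lemma unwound from the definitions in this degree-$0$ case, and it is also sound.
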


The soundness is akin to weak duality and we may also hope for strong duality, also called completeness which says that every property of low-degree pseudo-distribution can be derived by low-degree sum-of-squares proof. Strong duality holds under additional assumption that the polynomial system\footnote{Also, additional axioms such as $x_i^2=x_i$ need to be Archimedean, which is indeed true over a hypercube.} is explicitly Archimedean (has constraint of the form $\sum_i x_i^2 \leq R$). Intuitively, this completeness of the SoS proof system is the reason that a true polynomial inequality derivable from $\mathcal{P}(\mathbf{x})$ has a SoS derivation (though with the caveat that it may not be possible to efficiently find it \cite{OD17,RW17}).

\begin{lemma}[Completeness]
	For an Archimedean polynomial system $\mathcal{P}$ having polynomials with degree at most $d'$, and for any degree-$d$ pseudo-distribution that satisfies $\mu \models_{d'} \mathcal{P}$, and for a polynomial constraint $\set{f \geq 0}$ where $\mu \models_{d''} \set{f \geq 0}$, then there is a degree-$d$ sum-of-squares proof showing that $\mathcal{P} \SoSp{x}{d} \set{f \geq -\varepsilon}$ for $d \geq d' \geq d''$.
\end{lemma}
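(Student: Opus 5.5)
We read the statement in its intended duality form: if every degree-$d$ pseudo-distribution $\mu$ with $\mu \models_{d'} \mathcal{P}$ also satisfies $\pE_\mu[f]\geq 0$, then $f+\varepsilon \in \sos_d(\mathcal{P})$ for every $\varepsilon>0$. The plan is a separating-hyperplane argument in the finite-dimensional space $\bbR[\mathbf{x}]_{\leq d}$. The Archimedean hypothesis is used only to show that the constant polynomial $1$ is an interior point of the cone.

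\textbf{Step 1 (the cone).} Let $K \defeq \sos_d(\mathcal{P}) \subseteq \bbR[\mathbf{x}]_{\leq d}$. By the addition rule \prettyref{eq:additiona-inference-sos}, together with closure under nonnegative scaling, $K$ is a convex cone in a finite-dimensional real vector space. Suppose, for contradiction, that $f+\varepsilon\notin K$.

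\textbf{Step 2 ($1$ is an interior point of $K$).} This is the step I expect to be the main obstacle. It is exactly where the Archimedean assumption enters, and it replaces any delicate closedness argument for $K$. It suffices to show that for every monomial $m$ of degree $\leq d$ there is a constant $C_m$ with $C_m \pm m \in K$. Then for any $g$ of degree $\leq d$, writing $g=\sum_m g_m m$, we get $1 + \delta g \in K$ once $\delta$ is small enough.

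The bound $C_m \pm m \in K$ follows from the explicit constraint $R-\sum_i x_i^2 \geq 0$ by the standard inductive argument:
\begin{itemize}
\item for each $i$, $R - x_i^2 \in K$;
\item from this, $\sqrt{R}\pm x_i\in K$, via $\tfrac{1}{2\sqrt R}\bigl((\sqrt R\pm x_i)^2 + (R - x_i^2)\bigr)$;
\item products are then handled inductively, using identities such as $C^2 - ab = \tfrac12\bigl((C-a)(C+b)+(C+a)(C-b)\bigr)$ and the multiplication rule.
\end{itemize}
One has to keep the degree budget at most $d$ throughout. In our hypercube setting this is easier: the axiom $x_i^2=x_i$ gives $x_i - x_i^2 = 0$, and each multilinear monomial $m$ satisfies $m=m^2$ modulo the axioms, so $1 - m = (1-m)^2 + (m - m^2)$ and $1 + m = (\tfrac12 + m)^2 + \tfrac34 + (m - m^2)$ lie in $K$ at degree $2\deg m$.

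\textbf{Step 3 (separation).} Since $K$ is a convex cone with nonempty interior and $f+\varepsilon\notin K$, the Hahn--Banach separation theorem in finite dimension gives a nonzero linear functional $L$ on $\bbR[\mathbf{x}]_{\leq d}$ such that
\begin{align*}
L(q)\geq 0 \ \ \forall q\in K, \qquad L(f+\varepsilon)\leq 0.
\end{align*}
Because $1$ is interior, $L(1)>0$: otherwise $L(1)=0$, and $L\geq 0$ on a neighbourhood of $1$ inside $K$ would force $L\equiv 0$. Rescale so that $L(1)=1$ and set $\pE \defeq L$.

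\textbf{Step 4 (contradiction).} Now $\pE$ is normalized and nonnegative on squares of degree $\leq d$, since squares lie in $K$. It is also nonnegative on every $q\prod_{i\in S}p_i$ of degree $\leq d$, since these lie in $K$ as well. So $\pE$ is a degree-$d$ pseudo-expectation with $\pE \models_{d'} \mathcal{P}$. By hypothesis $\pE[f]\geq 0$, hence $L(f+\varepsilon)=\pE[f]+\varepsilon\geq \varepsilon>0$. This contradicts Step 3, so $f+\varepsilon\in K$, which is exactly $\mathcal{P}\SoSp{x}{d}\set{f\geq -\varepsilon}$.
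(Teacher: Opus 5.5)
\paragraph{Where the argument breaks.} The paper states this lemma without proof, as standard SoS background, so your argument has to stand on its own. Its architecture is the standard one: the cone $K=\sos_d(\mathcal P)$, an interior point supplied by the Archimedean axiom, separation, then normalization. Steps 1, 3 and 4 are fine. In particular, proving $f+\varepsilon\in K$ rather than $f\in K$ correctly avoids any closedness issue once $1$ is interior.

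The gap is in Step 2, the step you flag yourself. To get $L(1)>0$ you need $C_m\pm m\in K$ for \emph{every} monomial with $\deg m\le d$. Both of your arguments, however, produce certificates of degree $2\deg m$:
\begin{itemize}
\item The Archimedean induction starts from $\sqrt R\pm x_i$ at degree $2$, and the multiplication rule adds degrees at every product.
\item The hypercube certificates $(1-m)^2+(m-m^2)$ also have degree $2\deg m$.
\end{itemize}
So you only control monomials of degree at most $d/2$. The directions $\pm m$ with $d/2<\deg m\le d$ are left uncovered, and the remark that ``one has to keep the degree budget'' does not supply the missing argument. (Separately, $(\tfrac12+m)^2+\tfrac34+(m-m^2)$ equals $1+2m$, not $1+m$. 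This is harmless, since $1+m=1+m^2+(m-m^2)$.)

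\paragraph{The fix for even $d$.} Add a splitting step. Write $m=m_1m_2$ with $\deg m_1,\deg m_2\le d/2$, and use
\[
C\pm m_1m_2=\tfrac12(m_1\pm m_2)^2+\tfrac12\bigl((C-m_1^2)+(C-m_2^2)\bigr).
\]
It then suffices to certify $C-m_j^2$ at degree $2\deg m_j\le d$.
\begin{itemize}
\item On the hypercube, $1-m_j^2=(1-m_j)^2+2(m_j-m_j^2)$.
\item Under an explicit ball constraint,
\[
R^t-x_{i_1}^2\cdots x_{i_t}^2=\sum_{s=0}^{t-1}R^{t-s-1}\,x_{i_1}^2\cdots x_{i_s}^2\,(R-x_{i_{s+1}}^2),
\]
with $R-x_i^2=(R-\sum_j x_j^2)+\sum_{j\neq i}x_j^2$. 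Every term here has degree at most $2t$.
\end{itemize}
With this, Step 2, and hence your whole proof, goes through.

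\paragraph{Odd $d$.} Here $1$ genuinely need not be interior. For the hypercube system at $d=3$, even with a ball constraint added, every term of a degree-$3$ certificate has degree at most $2$: SoS multipliers have even degree, and products of two axioms have degree $4$. Hence $1-\delta x_1x_2x_3\notin K$ for every $\delta>0$. The lemma should therefore be read for even $d$, as the paper's Duality lemma is stated. All of the paper's uses ($2\ell$, $4q$, $4q+2$) are even.
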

We can simplify above by letting $d',d''=0$ and package this into a concise duality statement.
\begin{lemma}[Duality]
	For an Archimedean polynomial system $\mathcal{P}$, and for any degree-$d$ pseudo-distribution (for an even $d$) and 
	polynomial $f$ exactly one of the following holds,
	\begin{enumerate}
		\item For every $\varepsilon >0$, there exists a degree-$d$ SoS proof such that $\mathcal{P} \SoSp{x}{d} \set{f \geq -\varepsilon}$.
		\item There exists a degree-$d$ pseudo-distribution $\mu$ such that $\mu \models \mathcal{P}$ and $\pE_{\mu}[f]<0$.
	\end{enumerate}
\end{lemma}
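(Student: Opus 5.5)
The plan is to prove the two halves separately: mutual exclusivity follows from soundness (\prettyref{cor:sos_soundness}), and the existence of one of the alternatives follows from a finite-dimensional separating-hyperplane argument. The Archimedean hypothesis enters only to make the constant polynomial $1$ an interior point of the cone $\sos_d(\mathcal{P})$.

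\emph{Exclusivity.} Suppose both alternatives hold. Let $\mu \models \mathcal{P}$ be a degree-$d$ pseudo-distribution with $\pE_\mu[f] = -\delta < 0$, and pick $\varepsilon = \delta/2$. Alternative 1 gives $\mathcal{P}\SoSp{x}{d}\set{f+\varepsilon \geq 0}$. By \prettyref{cor:sos_soundness} and linearity, together with $\pE_\mu[1]=1$, this yields $\pE_\mu[f] \geq -\delta/2$, which is a contradiction.

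\emph{One alternative holds.} Suppose alternative 1 fails, so there is an $\varepsilon>0$ with $f+\varepsilon \notin \sos_d(\mathcal{P})$. We work in the finite-dimensional space $\bbR[\mathbf{x}]_{\leq d}$; in the Boolean setting one may instead quotient by the ideal generated by $x_i^2-x_i$, where everything is finite-dimensional.
\begin{enumerate}
\item First I would show that $1$ is an interior point of the cone $\sos_d(\mathcal{P})$. Concretely, for every $g \in \bbR[\mathbf{x}]_{\leq d}$ there is a $C$ with $g + C \in \sos_d(\mathcal{P})$. This is where the Archimedean axiom $R-\sum_i x_i^2 \geq 0$ is used. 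It suffices to treat monomials. We bound $\pm x^\alpha$ via $\pm 2x^\alpha \leq (x^\beta)^2 + (x^\gamma)^2$ with $\beta+\gamma=\alpha$ (the identity $(x^\beta \mp x^\gamma)^2 \geq 0$), and then bound each $(x^\beta)^2$ by a constant using $R - \sum_i x_i^2 \geq 0$ repeatedly. In the Boolean case this is immediate from $x_i^2 = x_i$ and $x_i(1-x_i) \geq 0$.
\item Next I would avoid any closedness issue by shifting by half the margin. I claim $f + \varepsilon/2$ is not in the closure of the cone. Otherwise $f+\varepsilon = (f+\varepsilon/2) + (\varepsilon/2)\cdot 1$ would be the sum of a closure point and an interior point of a convex cone. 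Such a sum lies in the interior, hence in the cone, contradicting the choice of $\varepsilon$.
\item Hahn--Banach separation in finite dimension then gives a linear functional $L$ on $\bbR[\mathbf{x}]_{\leq d}$ with $L \geq 0$ on $\sos_d(\mathcal{P})$ and $L(f+\varepsilon/2) < 0$.
\item I would then show $L(1) > 0$. If $L(1)=0$, step 1 gives $L(\pm g) \geq -C\,L(1) = 0$ for every $g$, hence $L \equiv 0$, which contradicts $L(f+\varepsilon/2)<0$. So we may normalize to $L(1)=1$, and then $L(f) < -\varepsilon/2 < 0$.
\item Finally I would realize $L$ as a pseudo-expectation. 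Nonnegativity of $L$ on the cone gives $L(q^2) \geq 0$ for $\deg q \leq d/2$ and $L(q\prod_{i\in S}p_i) \geq 0$, so $\mu \models \mathcal{P}$. A finitely supported signed $\mu$ with $\sum_{\mathbf{x}}\mu(\mathbf{x})\,g(\mathbf{x}) = L(g)$ for all $g$ of degree at most $d$ exists by polynomial interpolation: evaluations at finitely many generic points span the dual of $\bbR[\mathbf{x}]_{\leq d}$. This $\mu$ witnesses alternative 2.
\end{enumerate}

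The main obstacle is step 1, making the interior-point property precise from the Archimedean axiom within degree $d$ (that is, verifying the degree bookkeeping when bounding high monomials). The $\varepsilon/2$ trick of step 2 is what lets the argument avoid proving that $\sos_d(\mathcal{P})$ is closed, which is the usual technical difficulty.
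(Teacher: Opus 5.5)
Your proof is correct, but it does not follow the paper's route, because the paper gives no proof of this lemma.

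\textbf{What the paper does.} It obtains Duality by specializing the Soundness and Completeness lemmas (setting $d'=d''=0$). Both lemmas are quoted from the SoS literature without proof.

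\textbf{Exclusivity.} Your first half is exactly the soundness step: \prettyref{cor:sos_soundness} together with $\pE_\mu[1]=1$.

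\textbf{Existence.} Your second half is a self-contained replacement for the cited Completeness lemma.
\begin{itemize}
\item In the literature this direction is usually justified through SDP strong duality, or through closedness of the truncated cone when a ball constraint is present (cf.\ Josz--Henrion). You use neither.
\item You use the explicit axiom $R-\sum_i x_i^2\ge 0$ only to show that $1$ is an order unit, hence an interior point, of $\sos_d(\mathcal{P})$ at degree exactly $d$.
\item Your telescoping bound $x^{2\beta}\le R\,x^{2\beta'}$ uses the axiom once per term, so it is compatible with the paper's subset-product definition of a degree-$d$ proof.
\item The $\varepsilon/2$ shift turns ``$f+\varepsilon\notin\sos_d(\mathcal{P})$'' into ``$f+\varepsilon/2\notin\overline{\sos_d(\mathcal{P})}$''. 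So you never need to know that the cone is closed.
\item The same interior point forces $L(1)>0$, which is what lets you normalize $L$ to a pseudo-distribution.
\end{itemize}

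\textbf{What each route buys.} A closedness theorem would give the sharper conclusion with $\varepsilon=0$, i.e.\ an exact degree-$d$ certificate of $f\ge 0$, but it needs more machinery. Your argument is elementary and proves exactly the ``for every $\varepsilon>0$'' form the paper states. It also shows that the Archimedean bound must be available at degree $d$, either as the explicit ball axiom or through the Boolean axioms as in your remark. Finally, your interpolation step produces a finitely supported signed $\mu$, which matches the paper's definition of a pseudo-distribution; the cited route leaves this implicit.

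\textbf{Minor points.} You implicitly need $\deg f\le d$ so that $\pE_\mu[f]$ is defined. The aside about quotienting by $x_i^2-x_i$ is unnecessary, since $\bbR[\mathbf{x}]_{\le d}$ is already finite-dimensional, and quotienting would change the proof system relative to the paper's definitions.
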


\subsubsection{Optimizing over Pseudo-distributions/Pseudo-expectations}
Next, we consider the task of optimizing a function $f(\mathbf{x})$ over a system of polynomial inequalities $\mathcal{P}(\mathbf{x})$ in the boolean hypercube $\set{0,1}^n$ (expressible as a polynomial $x_i^2=x_i,\forall i \in [n]$. Let $\mathcal{K}$ be the solution set of $\mathcal{P}(\mathbf{x})\cup \set{x_i^2=x_i}_{i\in [n]}$, , and $\mathcal{E}\paren{\mathcal{K}}$ be the set of expectation functions defined on the convex hull of $\mathcal{K}$ i.e., $\mathcal{E}(\mathcal{K}) = \set{\E_{\nu}|\nu \in {\Delta(\mathcal{K})}}$. Similarly, we let the set of all degree-$d$ pseudo-expectation operators over $\mathcal{K}$ as $\mathcal{E}_{d}\paren{\mathcal{K}}$.
The polynomial optimization problem given by \prettyref{mp:poly_opt} can then be relaxed in the following fashion,
\begin{align*}
	\min_{\mathbf{x}\in \mathcal{K}}f(\mathbf{x}) =
	\min_{\E \in \mathcal{E}(\mathcal{K})}\E\Brac{f(\mathbf{x})}
	\geq \min_{\pE \in \mathcal{E}_{d}\paren{\mathcal{K}}} \pE\Brac{f(\mathbf{x})}
\end{align*}
by first optimizing for true expectations over the convex hull of feasible solutions\footnote{This doesn't change the objective value but is still intractable as specifying a distribution takes exponential space.} and then relaxing it to a pseudo-expectation as defined earlier. Since, for $d\geq 1$, any true expectations is also degree-$d$ pseudo-expectations, the program \prettyref{mp:relax} is a relaxation and commonly referred to as degree-$d$ Lasserre/Parrilo SoS Relaxation.

\begin{definition}[Lasserre/Parrilo SoS Relaxation]\label{def:sos_relaxation}
	Given a polynomial optimization problem of the form \prettyref{mp:poly_opt}, we consider a degree-$d$ SoS relaxation \prettyref{mp:relax} as, 
	\begin{tcolorbox}[left=2pt, right=2pt, top=4pt, bottom=-15pt]
		\small
		\begin{minipage}[t]{0.49\textwidth}
			\begin{MP}[Polynomial Optimization Problem]
				\label{mp:poly_opt}
				\begin{align*}
					\min \quad & f(\mathbf{x}) \\
					\text{subject to} \quad & \\
					p_i(\mathbf{x}) \geq & \, 0 \quad \,\,\forall i \in [m] \\
					x_i^2 = & \, x_i \quad \forall i \in [n]
				\end{align*}
			\end{MP}
		\end{minipage}
		\hfill
		\begin{minipage}[t]{0.49\textwidth}
			\begin{MP}[Lasserre/Parrilo SoS Relaxation]
				\label{mp:relax}
				\begin{align*}
					\min \quad & \pE[f(\mathbf{x})] \\
					\text{subject to} \quad & \\
					\pE \in & \, \mathcal{E}_{d}(\mathcal{K}) \\
					& \\ %
				\end{align*}
			\end{MP}
		\end{minipage}
	\end{tcolorbox}  
\end{definition}

\noindent
This relaxation in \prettyref{mp:relax} is tractable because the set $\mathcal{E}_{d}(\mathcal{K})$ admits a weak separation oracle (in \cite{GLS81} sense) and we have the following Sum-of-Squares (SoS) algorithm guarantees attributed to \cite{Sho87,Nes00,Par00,Las01}.

\begin{theorem}[Theorem 3.3 in \cite{MST16}]
	\label{thm:sos_tractable_solve}
	Given an optimization problem in the form given by \prettyref{mp:relax}, we can compute a degree-$d$ pseudo-expectation solution\footnote{The solution satisfies inequalities upto a small additive error.} for $\mathcal{P}(\mathbf{x})$ over the hypercube $\set{0,1}^n$ (up to arbitrary accuracy) in time $(m+n)^{O(d)}$.
\end{theorem}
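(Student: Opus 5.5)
We would prove the statement by writing the degree-$d$ relaxation \prettyref{mp:relax} explicitly as a semidefinite program of size $(m+n)^{O(d)}$ and then solving it with the ellipsoid method in the sense of \cite{GLS81}. The first step is to use the booleanity axioms $x_i^2=x_i$ to work in the quotient ring $\bbR[\mathbf{x}]/\set{(x_i^2-x_i)}_{i\in[n]}$. There every polynomial of degree at most $d$ is a multilinear combination of monomials $x_S=\prod_{i\in S}x_i$ with $\abs{S}\le d$. A degree-$d$ pseudo-expectation is therefore the finite vector of numbers $y_S\defeq\pE[x_S]$, $\abs{S}\le d$, with $y_\emptyset=1$, which has $N\le\sum_{j\le d}\binom{n}{j}=n^{O(d)}$ coordinates. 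The objective $\pE[f]=\sum_S \hat f(S)\,y_S$ is linear in $y$.

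The second step is to express membership $\pE\in\mathcal{E}_d(\mathcal{K})$ as positive semidefiniteness constraints on matrices that are linear in $y$. Non-negativity of squares is equivalent to the moment matrix $\mathcal{M}(y)[S,T]=y_{S\cup T}$, indexed by $\abs{S},\abs{T}\le d/2$, being $\succeq 0$ (the fact stated earlier, reduced modulo booleanity). For each constraint $p_i\ge 0$, and for each product $\prod_{i\in S'}p_i$ allowed by the degree budget, we add a localizing matrix $\mathcal{M}_{p}(y)[S,T]=\sum_U \hat p(U)\,y_{S\cup T\cup U}$, indexed by monomials of degree at most $(d-\deg p)/2$, and require it to be $\succeq 0$. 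We would restrict to $\abs{S'}\le d$ or to the constraints individually, as in the standard statement. Equality constraints become linear equations in $y$. Altogether this gives at most $m^{O(d)}$ (in fact polynomially many, when only single constraints are localized) PSD blocks, each of dimension $n^{O(d)}$. So the SDP has $(m+n)^{O(d)}$ variables and constraints.

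The third step is to check the hypotheses of the ellipsoid method. We need a weak separation oracle and an a priori bounded feasible region. Separation is straightforward: given a candidate $y$, compute the minimum eigenvalue of each block. If it is negative, the eigenvector $v$ gives the violated linear inequality $v^\top\mathcal{M}(y)v\ge 0$; this takes time polynomial in the matrix size. Boundedness comes from the hypercube. Since $x_S^2=x_S$, the principal $2\times 2$ minors of $\mathcal{M}(y)$ on $\set{\emptyset,S}$ force $0\le y_S\le 1$ for all $\abs{S}\le d/2$. Using the minor on $\set{S,T}$ gives $\abs{y_{S\cup T}}\le 1$ for all remaining entries. So the feasible set lies in the cube $[-1,1]^N$ (the explicit Archimedean bound). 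Running the ellipsoid method to accuracy $\varepsilon$ then costs $\mathrm{poly}(N,\log(1/\varepsilon))\cdot(m+n)^{O(d)}=(m+n)^{O(d)}$. It outputs $y$ whose objective is within $\varepsilon$ of optimal and which satisfies every constraint up to additive error $\varepsilon$, which is exactly the approximate guarantee in the footnote.

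The main obstacle is the step where the ellipsoid method needs an inner ball, or more generally polynomially bounded bit complexity. The feasible region may have empty interior (for instance because of equality constraints such as $\prod_{u\in e}x_u=0$), and in general SoS solutions can require exponential bit length \cite{OD17,RW17}. We would handle this the standard way. First, eliminate the equality constraints by substitution, setting $y_S=0$ for every $S$ containing an edge, and work in the resulting affine subspace. Second, solve the $\varepsilon$-relaxed (weak) feasibility problem, in which every PSD constraint is loosened to $\succeq -\varepsilon I$. This relaxed region contains a ball of radius $\mathrm{poly}(\varepsilon)/N$ around any feasible point and still sits inside the bounded cube. This yields the "up to arbitrary accuracy, satisfies inequalities up to small additive error" conclusion rather than exact feasibility, which is all the statement claims.
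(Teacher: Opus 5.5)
Your proposal is correct and takes essentially the same route as the paper. The paper does not reprove this result; it cites Theorem 3.3 of \cite{MST16} and justifies tractability by noting that $\mathcal{E}_d(\mathcal{K})$ admits a weak separation oracle in the sense of \cite{GLS81}, and your argument (moment and localizing matrices over multilinear monomials, eigenvector separation, boundedness from the $2\times 2$ minors, and an ellipsoid run on the $\succeq -\varepsilon I$ relaxation to obtain the footnote's additive-error guarantee) is a faithful expansion of that justification.
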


\subsection{Tensor Analysis and Kronecker Product Properties}
We collect several standard definitions and properties of tensors and Kronecker products that are useful for the purpose of our analysis later.

\begin{definition}[Kronecker Product, Definition 4.2.1 in \cite{HJ91}] \label{def:kronecker-product-definition}
	Given matrices $A \in \bbR^{m \times n}$ and $B \in \bbR^{p \times q}$, the Kronecker product, denoted $A \otimes B$ is an $mp \times nq$ matrix represented in block form as,
	\begin{align*}
		A \otimes B = \begin{bmatrix}
			A_{11}B &\dots  &A_{1n}B\\
			\vdots &\ddots  &\vdots\\
			A_{m1}B &\dots &A_{mn}B
		\end{bmatrix}
	\end{align*}
	If $A \in \bbR^{n_1 \times n_1}$ and $B \in \bbR^{n_2\times n_2}$ then $A \otimes B$ is an $\paren{n_1n_2} \times \paren{n_1n_2}$ matrix where an entry of it can be indexed by a pair of pairs as,
	\begin{align*}
		A\otimes B [\paren{i,j},\paren{k,\ell}] = A[i,k] \cdot B[j,\ell]
	\end{align*}
\end{definition}

\begin{fact}[Kronecker Product Norm, Theorem 4.2.12 in \cite{HJ91}]\label{fact:kronecker-product-norm}
	Given matrices $A,B$, the spectral norm of the Kronecker product is the product of spectral norms,
	\begin{align*}
		\norm{A \otimes B}_2 = \norm{A}_2 \cdot \norm{B}_2 \text{ and hence } \norm{A \otimes A}_2 = \norm{A}_2^2
	\end{align*}
\end{fact}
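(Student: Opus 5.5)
The plan is to reduce the statement to the singular value decompositions of $A$ and $B$, using the mixed-product property of the Kronecker product, $(A\otimes B)(C\otimes D) = (AC)\otimes(BD)$ whenever the products are defined. This property follows directly from the entrywise description in \prettyref{def:kronecker-product-definition}: expanding $\sum_{(k,\ell)} A[i,k]B[j,\ell]\,C[k,r]D[\ell,s]$ factors as $(AC)[i,r]\cdot(BD)[j,s]$. Two immediate consequences will also be used: $(A\otimes B)^\top = A^\top\otimes B^\top$, and the Kronecker product of two orthogonal matrices is orthogonal, since $(U\otimes P)^\top(U\otimes P) = (U^\top U)\otimes(P^\top P) = I\otimes I = I$.

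With these in hand, write $A = U\Sigma V^\top$ and $B = P\Lambda Q^\top$, where $U,V,P,Q$ are orthogonal and $\Sigma,\Lambda$ are rectangular diagonal with nonnegative entries $\sigma_i$ and $\lambda_j$. Applying the mixed-product property twice gives
\begin{align*}
A\otimes B = (U\otimes P)(\Sigma\otimes\Lambda)(V\otimes Q)^\top .
\end{align*}
The matrix $\Sigma\otimes\Lambda$ is again rectangular diagonal, and its diagonal entries are exactly the products $\sigma_i\lambda_j$. Because the two outer factors are orthogonal, this expression is an SVD of $A\otimes B$. The operator norm of $A\otimes B$ is therefore its largest singular value, $\max_{i,j}\sigma_i\lambda_j$.

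Since all $\sigma_i$ and $\lambda_j$ are nonnegative, this maximum equals $\sigma_{\max}(A)\,\lambda_{\max}(B) = \norm{A}_2\norm{B}_2$. Setting $B=A$ gives $\norm{A\otimes A}_2 = \norm{A}_2^2$. An equivalent route avoids writing the SVD: compute $(A\otimes B)^\top(A\otimes B) = (A^\top A)\otimes(B^\top B)$, diagonalize the two PSD factors, and observe that the eigenvalues of the Kronecker product are the pairwise products of the factors' eigenvalues.

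The only step that requires real care is the index bookkeeping in the mixed-product identity when the matrices are rectangular of different shapes. Once that identity is established, the remaining steps are formal. The nonnegativity of singular values is what guarantees that the largest product is the product of the largest values; for signed eigenvalues this could fail in principle, but the identity $\norm{M}_2 = \max_i \abs{\mu_i}$ resolves it in the symmetric case as well.
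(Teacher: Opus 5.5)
Your proof is correct. It is the standard argument behind this fact: the SVD of $A\otimes B$ factors as $(U\otimes P)(\Sigma\otimes\Lambda)(V\otimes Q)^\top$, or equivalently one reads off the eigenvalues of $(A^\top A)\otimes(B^\top B)$. The paper gives no proof of its own and simply cites Horn--Johnson.

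One small precision: when $B$ is not square, $\Sigma\otimes\Lambda$ is not literally rectangular diagonal. It has at most one nonzero entry per row and column, so it becomes diagonal only after permuting rows and columns. This is harmless, since permutation matrices are orthogonal, and your $(A^\top A)\otimes(B^\top B)$ route avoids the issue entirely.
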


\begin{fact}[Kronecker Product and Transpose,  Fact 4.2.4 in \cite{HJ91}]\label{fact:kronecker-product-transpose} For matrices $A,B$ we have that, $\paren{A \otimes B}^{\top} = A^{\top} \otimes B^{\top}$.
\end{fact}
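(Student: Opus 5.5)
The plan is to verify the identity entrywise, using the pair-of-pairs indexing of \prettyref{def:kronecker-product-definition}, extended to rectangular matrices. Take $A \in \bbR^{m\times n}$ and $B \in \bbR^{p\times q}$. The rows of $A\otimes B$ are indexed by pairs $(i,j)$ with $i\in[m]$, $j\in[p]$, and its columns by pairs $(k,l)$ with $k\in[n]$, $l\in[q]$. In this indexing,
\begin{align*}
	(A\otimes B)[(i,j),(k,l)] = A[i,k]\cdot B[j,l].
\end{align*}
This is exactly the block form in the definition: block $(i,k)$ equals $A_{ik}B$, and inside that block the $(j,l)$ entry is $B[j,l]$. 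The ordering of pairs is lexicographic, with the first coordinate selecting the block and the second selecting the position within it.

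First I would check that the dimensions agree. $(A\otimes B)^{\top}$ is $nq\times mp$. Since $A^{\top}$ is $n\times m$ and $B^{\top}$ is $q\times p$, the product $A^{\top}\otimes B^{\top}$ is also $nq\times mp$. In both matrices the rows are indexed by pairs $(k,l)\in[n]\times[q]$ and the columns by pairs $(i,j)\in[m]\times[p]$, with the same lexicographic ordering.

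Next I would compare entries. For any row index $(k,l)$ and column index $(i,j)$,
\begin{align*}
	(A\otimes B)^{\top}[(k,l),(i,j)] = (A\otimes B)[(i,j),(k,l)] = A[i,k]\,B[j,l] = A^{\top}[k,i]\,B^{\top}[l,j] = (A^{\top}\otimes B^{\top})[(k,l),(i,j)].
\end{align*}
The last equality is the defining formula for the Kronecker product, applied to $A^{\top}$ and $B^{\top}$. All entries agree, so the two matrices are equal.

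There is no real obstacle here. The only point that needs care is confirming that the pair indexing matches the block layout, so that the transpose acts on the pair indices as a whole and not on each factor separately. Once that is settled, the identity reduces to the entrywise computation above.
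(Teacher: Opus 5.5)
Your entrywise verification is correct. It includes the check that the lexicographic ordering of the pair indices is preserved, and it covers the rectangular case the paper actually needs, namely $(z\otimes z)^{\top} = z^{\top}\otimes z^{\top}$ in the Kronecker quadratic form claim. The paper gives no proof of this fact and only cites Horn and Johnson, whose standard argument is the block-level version of yours: block $(k,i)$ of $(A\otimes B)^{\top}$ is $(A_{ik}B)^{\top} = (A^{\top})_{ki}B^{\top}$.
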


\begin{fact}[Mixed Product Property, Lemma 4.2.10 in \cite{HJ91}]\label{fact:mixed-product-property}
	For our given matrices $A \in \bbR^{m \times n}, B \in \bbR^{p \times q},C \in \bbR^{n \times k}, D \in \bbR^{q \times r}$, then we have $\paren{A \otimes B}\paren{C \otimes D} = AC \otimes BD$
\end{fact}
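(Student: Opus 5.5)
The identity follows from the definitions by a direct entrywise computation. I would write both sides in block form, using \prettyref{def:kronecker-product-definition}, and compare them block by block.

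\textbf{Block structure of the left side.} By the definition, $A\otimes B$ is an $m\times n$ array of blocks of size $p\times q$, with $(i,j)$-th block $A_{ij}B$. Likewise, $C\otimes D$ is an $n\times k$ array of blocks of size $q\times r$, with $(j,s)$-th block $C_{js}D$. The block columns of the first matrix each have width $q$, and the block rows of the second each have height $q$. These inner block dimensions agree, so the product can be computed by block multiplication.

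\textbf{Computing a single block.} The $(i,s)$-th block of $(A\otimes B)(C\otimes D)$ is
\begin{align*}
\sum_{j=1}^{n} (A_{ij}B)(C_{js}D) = \Big(\sum_{j=1}^n A_{ij}C_{js}\Big) BD = (AC)_{is}\, BD .
\end{align*}
The scalars $A_{ij}$ and $C_{js}$ commute with the matrices and can be pulled out of each term. This leaves the scalar $(AC)_{is}$ multiplying the common matrix $BD$.

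\textbf{Matching with the right side.} The right-hand side $AC\otimes BD$ is, by the same definition, the $m\times k$ block matrix whose $(i,s)$-th block is $(AC)_{is}BD$. This coincides with the block just computed. All dimensions also agree: both sides have size $mp\times kr$.

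The only point requiring care is confirming that block multiplication is legitimate. As noted above, this holds because the inner block sizes (width $q$ for the first factor, height $q$ for the second) match. An equivalent check is to work entrywise with pair indexing: $\sum_{(j,t)} A[i,j]B[u,t]\,C[j,s]D[t,v] = (AC)[i,s]\,(BD)[u,v]$, where the double sum over the pair $(j,t)$ factorizes.
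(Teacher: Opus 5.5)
Your proof is correct: the block computation $\sum_{j}(A_{ij}B)(C_{js}D) = (AC)_{is}\,BD$, together with the check that the inner block sizes both equal $q$, is the standard partitioned-multiplication argument. The paper gives no proof of its own and simply cites Lemma 4.2.10 of Horn and Johnson, whose proof is this same block computation.
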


\begin{claim}[Kronecker Quadratic Form]\label{claim:kronecker-quadratic-form}
	For a vector $z \in \bbR^n$ and matrix $M \in \bbR^{n\times n}$ we have,
	\begin{align*}
		\paren{z^{\top}M z}^2 = \paren{z \otimes z}^{\top}\paren{M \otimes M}\paren{z \otimes z}.
	\end{align*}
\end{claim}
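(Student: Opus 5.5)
The identity is a direct consequence of the mixed product property (\prettyref{fact:mixed-product-property}), and I would derive it by viewing the scalar $z^{\top}Mz$ as a $1\times 1$ matrix. The plan has two steps. First, note that for $1\times 1$ matrices the Kronecker product coincides with ordinary multiplication. Hence, for the scalar $c = z^{\top} M z$,
\begin{align*}
	\paren{z^{\top}Mz}^2 = \paren{z^{\top}Mz}\otimes\paren{z^{\top}Mz}.
\end{align*}
This follows immediately from \prettyref{def:kronecker-product-definition} with $m=n=p=q=1$.

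Second, apply the mixed product property twice, checking dimensions at each step. Write $z^{\top}Mz = (z^{\top}M)\,z$, where $z^{\top}M\in\bbR^{1\times n}$ and $z\in\bbR^{n\times 1}$. Then
\begin{align*}
	\paren{(z^{\top}M)z}\otimes\paren{(z^{\top}M)z} = \paren{z^{\top}M\otimes z^{\top}M}\paren{z\otimes z}.
\end{align*}
Applying the property once more to $z^{\top}M \otimes z^{\top}M$ gives $\paren{z^{\top}\otimes z^{\top}}\paren{M\otimes M}$. By \prettyref{fact:kronecker-product-transpose}, $z^{\top}\otimes z^{\top}=\paren{z\otimes z}^{\top}$. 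Chaining these equalities yields the claim.

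As a sanity check, one can also verify the identity entrywise using the pair-of-pairs indexing in \prettyref{def:kronecker-product-definition}. With $(M\otimes M)[(i,j),(k,l)] = M_{ik}M_{jl}$ and $(z\otimes z)_{(i,j)}=z_iz_j$, the right-hand side equals
\begin{align*}
	\sum_{i,j,k,l} z_iz_j M_{ik}M_{jl} z_kz_l = \paren{\sum_{i,k} z_iM_{ik}z_k}\paren{\sum_{j,l}z_jM_{jl}z_l}.
\end{align*}
This is exactly the left-hand side.

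There is no real obstacle here. The only point needing care is keeping the dimension conventions consistent, so that each use of the mixed product property is legitimate: the products $AC$ and $BD$ must be defined. Either derivation suffices, and the entrywise one is the most transparent.
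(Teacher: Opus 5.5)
Your proof is correct and is essentially the paper's argument: both use the mixed product property (\prettyref{fact:mixed-product-property}) twice together with \prettyref{fact:kronecker-product-transpose}. The only differences are cosmetic: the paper starts from the right-hand side and groups $(M\otimes M)(z\otimes z)=(Mz)\otimes(Mz)$ first, whereas you start from the scalar $(z^{\top}Mz)\otimes(z^{\top}Mz)$ and group $z^{\top}M$ first, and your entrywise verification is an extra self-contained check that the paper does not include.
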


\begin{proof}
	We start with the right hand side of the inequality and simplify by using  \prettyref{fact:mixed-product-property} with $A,B=M$ and $C,D=z$ and we obtain that,
	\begin{align*}
		\paren{z \otimes z}^{\top} \paren{M \otimes M}\paren{z \otimes z} = \paren{z \otimes z}^{\top} \left[ \paren{M \otimes M}\paren{z \otimes z}\right] = \paren{z \otimes z}^{\top}\left[\paren{Mz} \otimes \paren{Mz}\right]
	\end{align*}
	and using \prettyref{fact:kronecker-product-transpose} and a vector form of \prettyref{fact:mixed-product-property} with $a=b=z^{\top}$ and $c=d=Mz$ we have,
	\begin{align*}
		\paren{z \otimes z}^{\top}\left[\paren{Mz} \otimes \paren{Mz}\right] = \paren{z^{\top} \otimes z^{\top}} \left[\paren{Mz} \otimes \paren{Mz}\right] = \paren{z^{\top} M z} \cdot \paren{z^{\top} Mz} = \paren{z^{\top}Mz}^2.
	\end{align*}
\end{proof}
\vspace{-0.1ex}
\begin{fact}[Norm of Powers]\label{fact:norm-of-powers}
	For any symmetric matrix $M$ we have $\norm{M^2}_2 = \norm{M^{\top}M} = \norm{M}_2^2$ and thus inductively for any $k \in \mathbb{N}$ we have, $\norm{M^k}_2 = \norm{M}_2^k$.
\end{fact}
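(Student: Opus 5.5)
We would prove this directly from the spectral theorem, with no need for the earlier Kronecker facts. Since $M$ is real symmetric, we write $M = Q\Lambda Q^{\top}$ with $Q$ orthogonal and $\Lambda = \mathrm{diag}(\lambda_1,\dots,\lambda_n)$ real. The basic observation is that for a symmetric matrix the spectral norm is the spectral radius. Indeed, $\norm{Mx}_2 = \norm{\Lambda Q^{\top}x}_2$ and $\norm{Q^{\top}x}_2 = \norm{x}_2$, so
\begin{align*}
\norm{M}_2 = \sup_{\norm{y}_2=1}\norm{\Lambda y}_2 = \max_i \abs{\lambda_i}.
\end{align*}

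For the first identity we would argue in two ways. First, $M^{\top}M = M^2$ by symmetry, which gives $\norm{M^2}_2 = \norm{M^{\top}M}_2$ trivially. Second, $M^{\top}M$ is positive semidefinite, so its norm is its largest Rayleigh quotient. Hence
\begin{align*}
\norm{M^{\top}M}_2 = \sup_{\norm{x}_2=1} x^{\top}M^{\top}Mx = \sup_{\norm{x}_2=1}\norm{Mx}_2^2 = \norm{M}_2^2.
\end{align*}
This last identity holds for arbitrary (not necessarily symmetric) $M$ and is the usual $C^*$-identity.

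For general $k\in\mathbb{N}$ we would avoid a literal induction on norms, since submultiplicativity only gives $\le$. Instead we use the diagonalization directly. Because $Q^{\top}Q=I$, the middle factors telescope and $M^k = Q\Lambda^kQ^{\top}$. The matrix $M^k$ is again symmetric with eigenvalues $\lambda_i^k$, so by the spectral-radius observation
\begin{align*}
\norm{M^k}_2 = \max_i\abs{\lambda_i}^k = \paren{\max_i\abs{\lambda_i}}^k = \norm{M}_2^k.
\end{align*}
The step "inductively" in the statement can be read this way. Equivalently, one can induct by noting that $M^{k}$ is symmetric and $\norm{M^{k+1}}_2 = \max_i\abs{\lambda_i}^{k+1}$.

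There is no real obstacle. The only point needing care is that symmetry is essential for the power statement: for a non-normal $M$ (e.g.\ a nilpotent Jordan block), $\norm{M^k}_2$ can be strictly smaller than $\norm{M}_2^k$. That is why we pass through the eigenvalues rather than through submultiplicativity, which only gives the $\le$ direction.
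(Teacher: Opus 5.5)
Your proof is correct. The paper states this as a standard fact without proof. It only gestures at an induction from the $k=2$ case via $\norm{M^{\top}M}_2=\norm{M}_2^2$. Your first paragraph (the Rayleigh quotient of the PSD matrix $M^{\top}M$) is exactly that identity, but for general $k$ you go through the eigendecomposition $M^k=Q\Lambda^kQ^{\top}$ instead.

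The two routes genuinely differ. Applied to the symmetric matrix $M^j$, the hinted induction directly yields only $\norm{M^{2j}}_2=\norm{M^j}_2^2$. That gives powers of two for free, but odd exponents need an extra step. One can use $\norm{M}_2^{k+1}=\norm{M^{k+1}}_2\le\norm{M^k}_2\norm{M}_2$, with $k+1$ even and handled by strong induction. This gives $\norm{M^k}_2\ge\norm{M}_2^k$ when $M\neq 0$, and submultiplicativity supplies the reverse inequality.

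That route avoids diagonalization entirely and uses only the $C^*$-identity and submultiplicativity. Your spectral-radius argument instead handles every $k$ in one line and shows exactly where symmetry (really, normality) enters, as your Jordan-block remark illustrates.
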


\subsection{\texorpdfstring{$p$}{p}-Biased Fourier Characters for Random Hypergraphs}
\label{sec:p-biased-character}
Let $V$ be the vertex set and $\cE_{\ell} \defeq \binom{V}{\ell}$ be the set of all possible $\ell$-uniform hyperedges. We can encode a hypergraph $H=(V,E)$ using edge-indicator variables as,
\begin{align*}
	Y = \paren{Y_e}_{e \in \mathcal{E}_{\ell}} \in \set{0,1}^{\abs{\mathcal{E}_{\ell}}},  \text{ where } Y_e = {\one}_{e\in E}.
\end{align*}
Under our random hypergraph model $\cH_0\paren{n,\ell,p}$, the family $\set{Y_e}_{e\in \cE_{\ell}}$ is i.i.d. $\mathsf{Ber}(p)$. For each $e \in \cE_{\ell}$ we define the scalar $p$-biased Fourier character (see Section 8.4 in \cite{OD14}) as,
\begin{align*}
	h_e = \frac{p-Y_e}{\sqrt{p(1-p)}}, \quad \text{ or equivalently } \quad h_e = \begin{cases}
		-\sqrt{\frac{1-p}{p}}, \quad & \text{if } e\in E\\
		\,\,\sqrt{\frac{p}{1-p}}, \quad & \text{if } e\notin E
	\end{cases}
\end{align*}
\begin{fact}\label{fact:p-biased-properties}
	For $h_e$ as defined above it holds that, $\E\Brac{h_e}=0$, $\E\Brac{h_e^2}=1$ and $\abs{h_e} \leq 1/\sqrt{p\paren{1-p}}$.
\end{fact}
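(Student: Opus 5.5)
The statement to prove is \prettyref{fact:p-biased-properties}, which needs only a direct computation with the two-point distribution of $Y_e\sim\Ber(p)$. We write $h_e=(p-Y_e)/\sqrt{p(1-p)}$ and use linearity of expectation for the first two moments, then read off the two possible values for the sup-norm bound.

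\textbf{Mean.} Since $\E[Y_e]=p$, we get $\E[h_e]=(p-p)/\sqrt{p(1-p)}=0$.

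\textbf{Second moment.} The numerator $p-Y_e$ is the centered Bernoulli variable, so $\E[(p-Y_e)^2]=\Var(Y_e)=p(1-p)$. Dividing by the normalization gives $\E[h_e^2]=p(1-p)/(p(1-p))=1$. As a sanity check against the case form, $p\cdot\frac{1-p}{p}+(1-p)\cdot\frac{p}{1-p}=(1-p)+p=1$.

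\textbf{Sup-norm bound.} The variable $h_e$ takes only two values in absolute value, $\sqrt{(1-p)/p}$ and $\sqrt{p/(1-p)}$. These can be rewritten as $(1-p)/\sqrt{p(1-p)}$ and $p/\sqrt{p(1-p)}$ respectively. Both numerators lie in $[0,1]$, so $|h_e|\le 1/\sqrt{p(1-p)}$.

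There is no real obstacle here. The only implicit requirement is $p\in(0,1)$, so that the normalization $\sqrt{p(1-p)}$ is nonzero.
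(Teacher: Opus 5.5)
Your proof is correct and is essentially the paper's route. The paper states this as a standard fact, citing O'Donnell's book. It then carries out the same two-point computation when analyzing $B^{(2q)}$ in the even-arity section: it gets variance $(1-p)\cdot\frac{p}{1-p}+p\cdot\frac{1-p}{p}=1$ and the entry bound $\max\{\sqrt{p/(1-p)},\sqrt{(1-p)/p}\}\le 1/\sqrt{p(1-p)}$.
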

For a set of hyperedges $\alpha \subset \cE_{\ell}$ we define,
\begin{align*}
	h_{\alpha} = \prod_{e\in \alpha}h_e \qquad \paren{\text{ where } h_{\emptyset} \defeq 1}.
\end{align*}
Then $\set{h_{\alpha}}_{\alpha \subset \cE_{\ell}}$ forms an orthonormal basis for the set of functions on $\set{0,1}^{\abs{\cE_{\ell}}}$ with respect to the product measure $\mu_p = \mathsf{Ber}(p)^{\otimes \abs{\cE_{\ell}}}$.

\subsection{Overview of Matrix Chaos Inequalities and \texorpdfstring{\cite{BLNvH25}}{BLNvH25}}
\label{sec:matrix-chaos-overview}
\begin{definition}[Matrix Chaos of order-$r$]\label{def:matrix-chaos-order-r}
	Fix integers $r,m,d \geq 1$, let $h=\set{h_u}_{u\in [m]}$ be independent, zero-mean, unit variance random variables. A matrix chaos of order $r$ is defined as,
	\begin{align*}
		Y = \sum_{u_1,\dots,u_r\in [m]}h_{u_1}\dots h_{u_r} \cdot A_{u_1,\dots,u_r} \in \bbR^{d\times d},
	\end{align*}
	where $A_{u_1,\dots,u_r} \in \bbR^{d\times d}$ are deterministic coefficient matrices.
\end{definition}
It is also convenient to view the matrix coefficients $A_{u_1,\dots ,u_r,u_{r+1},u_{r+2}}$ as an order-$\paren{r+2}$ tensor $\cA$,
\begin{align*}
    \cA_{u_1,\dots,u_r,u_{r+1},u_{r+2}} = \paren{A_{u_1,\dots,u_r}}\Brac{u_{r+1},u_{r+2}}
\end{align*}
The first $r$ indices here range from $1$ to $m$ are called \emph{chaos coordinates} and the last two indices range from $1$ to $d$ and are called \emph{matrix coordinates}.

A particularly structured and useful subclass of chaos matrices studied in \cite{BLNvH25} is \emph{chaos of combinatorial type}, where every coefficinet matrix is a tensor products of canonical basis vectors. 
\begin{definition}[Matrix Chaos of Combinatorial type, Definition 3.3 in \cite{BLNvH25}]\label{def:combinatorial_chaos}
	Fix integers $r,f \geq 1$ and positive integers $T_1,\dots,T_f$. Let $\set{h_u}_{u\in [m]}$ be independent copies of zero-mean, unit-variance scalar random variable $h$. For each $g \in [r+2]$ let $I_g$ be an index map from $[T_1] \times \dots \times [T_f] \rightarrow [m]$, and $t=(t_1,\dots,t_f)$.
	Then a \emph{matrix chaos of combinatorial type}  of order $r$ is defined as,
	\begin{align*}
		Y=\sum_{t\in [T_1] \times \dots \times [T_f]}h_{I_1(t)}\dots h_{I_r(t)} \paren{e_{I_{r+1}(t)} \otimes e_{I_{r+2}(t)}^{\top}}
	\end{align*}
	where chaos coordinates $I_1(t),\dots,I_r(t)$ are ordered subsets of $[f]$, and matrix coordinates given by $I_{r+1}(t),I_{r+2}(t)$ are also ordered subsets of $[f]$, and $e_a$ denotes the $a^{th}$ standard basis vector.
\end{definition}

\subsubsection{Matrix Flattening Parameter Computations}
For the tensor $\cA$ where $\cA_{u_1,\dots,u_r,u_{r+1},u_{r+2}}=A_{u_1,\dots,u_r}[u_{r+1},u_{r+2}]$, they consider various reshaping of tensor into flattened matrices. They consider varous flattenings, and the bounds in \cite{BLNvH25} are stated in terms of norms of certain flattenings. Here, we record a few definitions we need.

\paragraph{$\sigma$-flattenings and parameter $\sigma(\cA)$.}
The simplest flattening, $\sigma$-flattening which reshapes the coefficient tensor into matrix while preserving the original row and column structure so that,
\begin{align*}
	\cA_{[R|C]} = \sum_{\substack{u_1,\dots,u_r\in [m]\\u_{r+1},u_{r+2}\in[d]}} \paren{\mathop{\otimes}\limits_{g\in R}e_{u_g}} \otimes \paren{\mathop{\otimes}\limits_{g\in C}e_{u_g}}\cA_{u_1,\dots,u_{r+2}},
\end{align*}
for any $R,C \subseteq [r+2]$, where $m$ denotes the number of base random variables $\set{h_u}_{u\in [m]}$ and $d$ denotes the matrix dimension.
For our flattenings of interest we will always have $R=[r+2]\setminus C$. Now in a $\sigma$-flattening we always have $r+1\in R$ and $r+2\in C$ (preserve original matrix coordinates) but chaos coordinates can be arbitrary partitioned and hence leads to many possible $\sigma$-flattenings. They define a parameter $\sigma\paren{\cA}$ as the largest spectral norm of all $\sigma$-flattenings as,
\begin{align*}
	\sigma\paren{\cA} = \max_{\substack{R=[r+2]\setminus C\\r+1\in R,r+2\in C}}\norm{\cA_{[R|C]}}_2.
\end{align*}
Intuitively, the parameter $\sigma\paren{\cA}$ captures the expected leading-order behavior of chaos matrix. An important contribution of this work is that for a combinatorial chaos and fixed $R,C$ they give closed form relations for computing these spectral norms,
\begin{proposition}[Proposition 3.4 in \cite{BLNvH25}]\label{prop:bound_chaos_parameters}
	Let $Y$ be a combinatorial chaos of order-$r$ as in \prettyref{def:combinatorial_chaos} and let $R,C\subseteq [r+2]$ and $\cR=\cup_{g\in R}I_g$ and $\cC=\cup_{g\in C}I_g$ then,
	\begin{align*}
		\norm{\cA_{R|C}}_2^2 = \paren{\prod_{b\in \overline{\cR}}T_b}\paren{\prod_{b\in \overline{\cC}}T_b} .
	\end{align*}
	where $\overline{\cR}=[f]\setminus \cR$ and $T_b$ as defined in \prettyref{def:combinatorial_chaos}.
\end{proposition}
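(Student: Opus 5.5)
Since $\cA_{[R|C]}$ is a combinatorial object, the plan is to compute $\cA_{[R|C]}^{\top}\cA_{[R|C]}$ (or the reverse product) exactly and observe that it is a tensor product of all-ones and identity matrices. The starting point is to write the flattening directly from \prettyref{def:combinatorial_chaos}:
\begin{align*}
\cA_{[R|C]}=\sum_{t\in [T_1]\times\cdots\times[T_f]} \paren{\mathop{\otimes}\limits_{g\in R} e_{I_g(t)}}\paren{\mathop{\otimes}\limits_{g\in C} e_{I_g(t)}}^{\top}.
\end{align*}
Here each $I_g(t)$ depends only on the coordinates $t_b$ with $b$ in the ordered subset $I_g\subseteq[f]$. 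I treat the row index as a function of $t_{\cR}=(t_b)_{b\in\cR}$ and the column index as a function of $t_{\cC}$. This identification is injective provided every coordinate $b\in\cR$ actually appears in some $I_g$ with $g\in R$, which holds by the definition of $\cR$.

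Next I would split the coordinates $[f]$ into four disjoint classes:
\begin{itemize}
\item $\cR\cap\cC$ (shared),
\item $\cR\setminus\cC$ (row-only),
\item $\cC\setminus\cR$ (column-only),
\item $\overline{\cR}\cap\overline{\cC}$ (free).
\end{itemize}
Summing over $t$ factorizes along these classes. Up to a permutation of rows and columns, $\cA_{[R|C]}$ is then the following tensor product:
\begin{itemize}
\item a $T_b\times T_b$ identity for each shared coordinate $b$ (since $t_b$ must agree in the row and column index);
\item an all-ones column vector $\mathbf 1_{T_b}$ for each row-only coordinate;
\item an all-ones row vector $\mathbf 1_{T_b}^{\top}$ for each column-only coordinate;
\item a scalar factor $T_b$ for each free coordinate, since $t_b$ appears nowhere and is simply summed out.
\end{itemize}

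By \prettyref{fact:kronecker-product-norm}, the spectral norm is multiplicative over tensor factors. The factors contribute as follows: identities have norm $1$, $\mathbf 1_{T_b}$ and $\mathbf 1_{T_b}^{\top}$ have norm $\sqrt{T_b}$, and free coordinates contribute $T_b$. Squaring gives
\begin{align*}
\norm{\cA_{[R|C]}}_2^2=\prod_{b\in \cR\setminus\cC}T_b\prod_{b\in\cC\setminus\cR}T_b\prod_{b\in\overline{\cR}\cap\overline{\cC}}T_b^2 .
\end{align*}
This matches the target, because $\overline{\cC}\supseteq \cR\setminus\cC$ and $\overline{\cR}\supseteq\cC\setminus\cR$, with the free coordinates counted once in each of $\overline{\cR}$ and $\overline{\cC}$:
\begin{align*}
\prod_{\overline{\cR}}T_b\prod_{\overline{\cC}}T_b=\prod_{\cC\setminus\cR}T_b\prod_{\cR\setminus\cC}T_b\prod_{\overline{\cR}\cap\overline{\cC}}T_b^2 .
\end{align*}

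The main obstacle is justifying the factorization rigorously. Two issues need care. First, within an ordered subset $I_g$ the same $b$ could in principle appear in two different index maps of $R$; this is harmless, because the row index is still an injective function of $t_{\cR}$. Second, if the map $t_{\cR}\mapsto$ row index were not injective, multiplicities would arise and change the norm. I would state explicitly that in the combinatorial setting the index maps are coordinate projections, so injectivity holds, and then carry out the Kronecker factorization.
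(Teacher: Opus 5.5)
Your proof is correct and takes essentially the same approach as the source. The paper gives no proof of its own but imports the statement from [BLNvH25], and your computation is the direct argument behind it: after re-indexing rows by $t_{\cR}$ and columns by $t_{\cC}$, the flattening is $\prod_{b\notin\cR\cup\cC}T_b$ times a Kronecker product of identities on $\cR\cap\cC$ and all-ones column/row vectors on $\cR\setminus\cC$ and $\cC\setminus\cR$. Since spectral norms multiply over Kronecker factors, the formula follows.

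Two small precisions:
\begin{itemize}
\item ``Up to a permutation of rows and columns'' should read ``up to a permutation and deletion of zero rows/columns.'' When the $I_g$ with $g\in R$ overlap, the ambient row space is strictly larger than the image of $t_{\cR}$.
\item Your reading of the index maps as coordinate projections followed by injective encodings is exactly the hypothesis needed. The looser phrasing of \prettyref{def:combinatorial_chaos} (arbitrary maps into $[m]$) does not by itself guarantee it.
\end{itemize}
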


\paragraph{$\nu$-flattenings and parameter $\nu(\cA)$.}
They introduce another class of flattening called $\nu$-flattening which reshapes the tensor by moving all original matrix coordinates to the columns while chaos coordinates can still be arbitrarily partitioned and they define a parameter $\nu(\cA)$ as,
\begin{align*}
	\nu\paren{\cA} = \max_{\substack{R=[r+2]\setminus C\\r+1,r+2\in C\\R\neq \emptyset}}\norm{\cA_{[R|C]}}_2.
\end{align*}
\paragraph{Moment parameter.} They define another parameter $\alpha(h)$ for a zero-mean random variable $h$ as,
\begin{align*}
	\alpha(h) = \norm{h}_{L^{\log (d+m)}} = \paren{\E\left[\abs{h}^{\log\paren{d+m}}\right]}^{1/\paren{\log\paren{d+m}}} .  
\end{align*}

\subsubsection{Improved Bounds on Matrix Norms.}
We now state the norm bounds for matrix chaos in terms of the parameters above.
\begin{theorem}[Iterated strong Matrix Rosenthal, Theorem 2.7 in \cite{BLNvH25}]\label{thm:chaos_thm}
	For a matrix chaos $Y$ as in \prettyref{def:combinatorial_chaos} where $h$ has unit variance there is a constant $C_{\mathsf{MR}}$ (may depend on $r$) such that,
	\begin{align*}
		\E\left[\norm{Y}_2\right] \leq C_{\mathsf{MR}} \paren{\sigma\paren{\cA} +\alpha(h)^r\log \paren{d+m}^{\paren{r+3}/2}\nu\paren{\cA}}.
	\end{align*}
\end{theorem}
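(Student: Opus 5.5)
The plan is to prove the bound by induction on the chaos order $r$. The engine at each step is the strong (free-probability based) matrix Rosenthal inequality of \cite{BBvH23} for sums of independent random matrices. Because the induction feeds bounds for $\norm{Y}_2$ back into itself, I will prove the stronger $L^p$ version, bounding $(\E\norm{Y}_2^p)^{1/p}$ with $p\asymp \log(d+m)$. The stated bound on $\E\norm{Y}_2$ then follows by Jensen, and the high-probability version used later in the paper comes from Markov.

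\textbf{Step 1 (decoupling).} Write the combinatorial chaos $Y$ of \prettyref{def:combinatorial_chaos} as a sum over index tuples. I split it into the part where the chaos coordinates $I_1(t),\dots,I_r(t)$ take distinct values and the parts where some coincide. On the distinct part, a de la Pe\~na--Montgomery-Smith type decoupling inequality replaces $h_{u_1}\cdots h_{u_r}$ by $h^{(1)}_{u_1}\cdots h^{(r)}_{u_r}$, with independent copies $h^{(1)},\dots,h^{(r)}$, at a constant cost depending on $r$. When indices coincide, the products $h_u^2$ are not centered. I write $h_u^2 = 1+(h_u^2-1)$. The constant part is deterministic, and its norm is a flattening norm dominated by $\sigma(\cA)$ or $\nu(\cA)$. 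The centered part is a chaos of lower order in the variables $h_u^2-1$, whose moments are controlled by $\alpha(h)^2$. Both are handled by the induction hypothesis.

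\textbf{Step 2 (one layer of Rosenthal).} Condition on $h^{(2)},\dots,h^{(r)}$ and write $Y=\sum_u h^{(1)}_u B_u$, where the random matrices $B_u$ are themselves chaoses of order $r-1$. The strong matrix Rosenthal inequality at moment $p$ gives
\begin{align*}
\E_{h^{(1)}}\norm{Y}^p_2 {}^{1/p} \lesssim \Big\|\sum_u B_uB_u^{\top}\Big\|^{1/2}+\Big\|\sum_u B_u^{\top}B_u\Big\|^{1/2} + p^{3/2}\, v(B)^{\text{(weak)}} + p\,\alpha(h)\max_u\norm{B_u}.
\end{align*}
The key observation is that $\|\sum_u B_uB_u^\top\|^{1/2}$ equals the norm of the horizontally stacked matrix $[B_1\ \cdots\ B_m]$, and similarly the other variance term equals a vertical stacking. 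Each stacked matrix is an order-$(r-1)$ chaos of dimension at most $dm$. Its coefficient tensor is exactly a reshaping of $\cA$ in which the chaos coordinate $u_1$ has been moved to the column side (resp.\ row side). The same holds for the weak-variance and maximum terms: they are chaoses whose coefficient tensors are flattenings placing matrix coordinates together on one side, that is, $\nu$-type flattenings.

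\textbf{Step 3 (iterate and collect).} I apply the induction hypothesis to each of these order-$(r-1)$ chaoses, taking $L^p$ norms in the remaining variables; the log factor stays of order $\log(d+m)$, since the dimension only grows polynomially. Unrolling $r$ levels, the terms built only from the leading variance pieces yield the norms of all flattenings in which the row matrix coordinate stays in $R$ and the column coordinate stays in $C$, with the chaos coordinates arbitrarily distributed. The maximum of these is exactly $\sigma(\cA)$. Every other branch picks up at least one factor $p^{1/2}$ to $p^{3/2}$ together with moments of $h$, and lands on a flattening with both matrix coordinates on the same side, bounded by $\nu(\cA)$.

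\textbf{Main obstacle.} The delicate part is the bookkeeping of logarithmic powers. A naive iteration gives $\log^{3r/2}$, whereas the claimed bound has $\log(d+m)^{(r+3)/2}$ and $\alpha(h)^r$. My plan is to incur the expensive $p^{3/2}$ Rosenthal factor only once, at the first level. At all deeper levels, the non-leading terms are already inside a $\nu$-flattening, so there I will use the cruder noncommutative Khintchine bound, paying only $p^{1/2}$ and one factor of $\alpha(h)$ per level. This gives $p^{3/2}\cdot p^{(r-1)/2}\cdot p^{1/2} = p^{(r+3)/2}$, consistent with the claimed exponent. Verifying that the flattenings produced along every branch are genuinely dominated by $\nu(\cA)$, rather than by some intermediate mixed flattening, is the step I expect to require the most care.
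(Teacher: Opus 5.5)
Your Steps 2--3 follow the route the paper relies on: the argument of Appendix A.5 of \cite{BLNvH25}, which the paper sketches in its proof of \prettyref{thm:moment-chaos-theorem}. The ingredients are the same. You induct on $r$ and apply one layer of strong matrix Rosenthal. The two variance terms are identified with stacked order-$(r-1)$ chaoses $Y_{(1)},Y_{(2)}$ and handled by induction. The remaining term $Y_{(3)}$, with rows indexed by the frozen chaos coordinate and columns by both matrix coordinates, is handled by iterated noncommutative Khintchine instead.

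That choice also settles the point you flag as the main obstacle. Iterated NCK applied to $Y_{(3)}$ only produces $\sigma$-flattenings of $Y_{(3)}$. These are flattenings of $\cA$ with both matrix coordinates in $C$ and $R\neq\emptyset$, i.e.\ $\nu$-flattenings. For $Y_{(1)},Y_{(2)}$ one uses $\sigma(\cA_{(k)})\le\sigma(\cA)$ and $\nu(\cA_{(k)})\le\nu(\cA)$ (Lemma A.1 there).

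One bookkeeping correction. The strong Rosenthal inequality has a cross term $p^{2/3}R^{1/3}\sigma^{2/3}$, with $R\approx\alpha(h)\max_u\norm{B_u}$. Absorbing it into $\sigma$ by Young's inequality costs $p^{2}R$, not $pR$. It is this $p^2$, times the $p^{(r-1)/2}$ from iterated NCK, that produces the exponent $(r+3)/2$. Your displayed inequality and your count $p^{3/2}\cdot p^{(r-1)/2}\cdot p^{1/2}$ do not actually match each other.

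The genuine error is in Step 1. The deterministic part coming from coinciding chaos coordinates is a partial trace of a flattening of $\cA$, not a flattening. It is not controlled by $\sigma(\cA)$ or $\nu(\cA)$. For a counterexample, take $r=2$, $f=2$, $T_1=m$, $T_2=d=1$, $I_1=I_2=\set{1}$, $I_3=I_4=\set{2}$, and Rademacher $h$. Then $Y=\sum_u h_u^2=m$ deterministically. A direct computation gives $\sigma(\cA)=\nu(\cA)=\sqrt m$ and $\alpha(h)=1$, so the claimed bound $O(\sqrt m\,\log^{5/2}(m+1))$ fails for large $m$.

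Your centered part also violates the hypotheses. The variable $h_u^2-1$ has variance $\E[h^4]-1$, which is of order $1/p$ for the $p$-biased characters, not $1$. So the induction hypothesis does not apply to it as stated.

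Hence the statement cannot hold for combinatorial chaoses with repeated chaos coordinates. It is a statement about decoupled chaoses, as in \cite{BLNvH25} and in \prettyref{thm:moment-chaos-theorem}. It also covers chaoses whose coefficients vanish when chaos coordinates coincide, via de la Pe\~na--Montgomery-Smith decoupling; that is the legitimate half of your Step 1. Drop the diagonal part and state this restriction explicitly. In the paper's application, this restriction is exactly what is achieved by splitting off the swapped diagonal $D$ from $\widetilde M$ before bounding $\norm{M}_2$.
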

\prettyref{thm:chaos_thm} is stated as an expectation bound in \cite{BLNvH25}. For our applications, we also want a high-probability bound. A typical approach is  where one first proves a moment bound and applies Markov's inequality. We will use the following moment version of \prettyref{thm:chaos_thm} which is obtained by mimicking the proof of \prettyref{thm:chaos_thm}, specifically Appendix A.5 in \cite{BLNvH25} where we keep the order of moment explicit.
Towards this we define a parameter $\alpha_t(h)$ defined as,
\begin{align*}
    \alpha_t(h) = \norm{h}_{L^t} = \paren{\E\left[\abs{h}^t\right]}^{1/t}.
\end{align*}
\begin{theorem}\label{thm:moment-chaos-theorem}
	For an order-$r$ matrix chaos $Y$ (decoupled version of \prettyref{def:matrix-chaos-order-r}) given by,
	\begin{align*}
		Y = \sum_{i_1,\dots,i_r}h^{(1)}_{i_1}\dots h^{(r)}_{i_r} \cdot A_{i_1,\dots,i_r}
	\end{align*}
	where $h$ has unit variance there is a constant $C_{\mathsf{ISMR}}$(may depend on $r$) such that for every integer $t \geq 2$,
	\begin{align*}
		\paren{\E\left[\norm{Y}_2^t\right]}^{1/t} \leq C_{\mathsf{ISMR}}\paren{\sigma(\cA) + \alpha_{t+c\log(d+m)}(h)^r\paren{t+\log(d+m)}^{(r+3)/2}\cdot \nu(\cA)},
	\end{align*}
	where $h_{i_k}$ are i.i.d. copies of centered, unit-variance scalar variable $h$ and coefficient matrices $A_{i_1,\dots,i_r}\in \bbR^{d\times d}$.
\end{theorem}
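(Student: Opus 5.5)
We will follow the proof of \prettyref{thm:chaos_thm} in Appendix A.5 of \cite{BLNvH25}, but keep the moment order $t$ explicit instead of fixing it at $t \approx \log(d+m)$. The argument proceeds by induction on the order $r$. We condition on $h^{(1)},\dots,h^{(r-1)}$ and view $Y$ as a linear matrix series $\sum_{i_r} h^{(r)}_{i_r} B_{i_r}$ in the last variable, with random coefficients $B_{i_r} = \sum_{i_1,\dots,i_{r-1}} h^{(1)}_{i_1}\cdots h^{(r-1)}_{i_{r-1}} A_{i_1,\dots,i_r}$. For such a series we apply the strong matrix Rosenthal inequality of \cite{BvH24,BBvH23} in its $L^t$-moment form. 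It bounds $\paren{\E\norm{Y}_2^t}^{1/t}$ by three terms:
\begin{itemize}
    \item the free/Gaussian term $\norm{\sum_{i} B_i B_i^{\top}}^{1/2} + \norm{\sum_i B_i^{\top}B_i}^{1/2}$ (the $\sigma$-type part);
    \item an intermediate term carrying a factor $(t+\log d)^{c}$ times a $\nu$-type flattening;
    \item a maximal term $\alpha_{t'}(h)\,(t+\log d)\,\big(\E\max_i\norm{B_i}^{t'}\big)^{1/t'}$.
\end{itemize}
The key point is that these moment inequalities already hold for every $t$ with dimension factor $(t+\log d)$. In \cite{BLNvH25} one only specializes to $t=\log(d+m)$ at the end, so keeping $t$ free costs nothing structurally.

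The inductive step then expresses each of these conditional quantities as the norm of a lower-order chaos. Specifically, $\sum_i B_iB_i^\top$ is a (Hermitian) chaos of order $2(r-1)$ built from the tensor $\cA$ with rows and columns reshaped, and $\max_i\norm{B_i}$ is controlled by a block-diagonal chaos of order $r-1$. We take $L^{t}$ (respectively $L^{t/2}$) norms of these over the remaining randomness and apply the induction hypothesis with the same $t$.

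The flattening identities of \cite{BLNvH25} show that the $\sigma$-parameter of the reshaped tensors is bounded by $\sigma(\cA)^2$ and their $\nu$-parameters by products of $\sigma(\cA)$ and $\nu(\cA)$. After taking square roots, AM--GM lets us absorb the cross terms into $\sigma(\cA)+(\cdots)\nu(\cA)$. Each level contributes at most a factor $(t+\log(d+m))^{c_j}$ and one factor $\alpha_{t''}(h)$, where $t''=t+c\log(d+m)$ arises from a H\"older step that trades $\max_i$ over $m$ terms for an $L^{t+\log m}$ norm. Summing the exponents $c_j$ over the $r$ levels reproduces the total exponent $(r+3)/2$ from \cite{BLNvH25}. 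Monotonicity of $L^s$ norms lets us replace all the $\alpha$-factors by the single $\alpha_{t+c\log(d+m)}(h)^r$. The base case $r=1$ is exactly the moment form of matrix Rosenthal.

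\textbf{Main obstacle.} The hard part is bookkeeping at the maximal-term step. There, $\E\max_i\norm{B_i}^{t}$ over $m$ indices must be handled by bounding the max by a sum over $i$ and passing to the $L^{t+\log m}$ norm, so that the union factor $m^{1/(t+\log m)}$ is $O(1)$. We must then verify that the moment order entering the induction grows only additively, to $t+O(r\log(d+m))$, and not multiplicatively. If needed we will absorb the extra additive $\log$ terms by enlarging the constant $c$, which is allowed to depend on $r$.

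Decoupling is assumed in the statement, so no decoupling step is required here. A high-probability bound then follows by Markov's inequality with $t=\Theta(\log n)$.
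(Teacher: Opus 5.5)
Your outer structure is the same as the paper's: a moment form of strong matrix Rosenthal for the conditional linear series $\sum_{i_r}h^{(r)}_{i_r}B_{i_r}$, followed by induction on $r$. However, the inductive step as you describe it does not close, for two reasons.

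\emph{Squaring breaks the induction.} You treat $\sum_i B_iB_i^{\top}$ as a Hermitian chaos of order $2(r-1)$ and apply the induction hypothesis to it. For $r\ge 2$ this order is at least $r$, so it is not a lower-order object; for $r=2$ the argument is outright circular. It is also not decoupled: each family $h^{(k)}$ appears twice, which produces uncentered terms $(h^{(k)}_i)^2$. Such an object is not covered by the statement at any order. No squaring is needed. The identities
\[
\norm{\textstyle\sum_i B_iB_i^{\top}}^{1/2}=\norm{\textstyle\sum_i B_i\otimes e_i^{\top}}, \qquad \norm{\textstyle\sum_i B_i^{\top}B_i}^{1/2}=\norm{\textstyle\sum_i e_i\otimes B_i}
\]
express both quantities as norms of \emph{decoupled order-$(r-1)$} chaoses in $h^{(1)},\dots,h^{(r-1)}$. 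Their coefficient tensors are intermediate flattenings of $\cA$, namely the paper's $Y_{(1)},Y_{(2)}$. Lemma A.1 of \cite{BLNvH25} gives $\sigma(\cA_{(k)})\le\sigma(\cA)$ and $\nu(\cA_{(k)})\le\nu(\cA)$, so the induction hypothesis applies directly, with no AM--GM on products of $\sigma$ and $\nu$.

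\emph{The exponent bookkeeping is wrong.} The exponent $(r+3)/2$ does not arise from summing comparable contributions over the $r$ levels. The top-level Rosenthal step already puts a factor $(t+\log(d+m))^{2}$ in front of the covariance/maximal term. Suppose you bound that term by the full induction hypothesis, whether via your block-diagonal chaos for $\max_i\norm{B_i}$ or via the conditional $\nu$-type flattening. You then pick up a further $(t+\log(d+m))^{(r+2)/2}$, which overshoots $(r+3)/2$. Iterating, the exponent grows by at least one per level instead of one half, so your scheme yields a valid but strictly weaker bound.

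The missing idea is to dominate both quantities by a single flattening and then use NCK rather than the induction hypothesis:
\begin{itemize}
\item The conditional covariance parameter equals $\norm{Y_{(3)}}$, where $Y_{(3)}$ is the flattening with $i_r$ on the rows and both matrix coordinates on the columns.
\item The maximal term satisfies $\max_i\norm{B_i}\le\max_i\norm{B_i}_{\mathrm{HS}}\le\norm{Y_{(3)}}$.
\item $\norm{Y_{(3)}}_{L_t}$ is controlled by an \emph{iterated noncommutative Khintchine} inequality. This depends only on $\sigma(\cA_{(3)})\le\nu(\cA)$ and costs $\alpha_{t+c\log(d+m)}(h)^{r-1}(t+\log(d+m))^{(r-1)/2}$.
\end{itemize}
Only with this step do you get $2+(r-1)/2=(r+3)/2$, together with the single power $\alpha^r$.
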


\begin{proof}
	The proof mimics the argument in proof of \prettyref{thm:chaos_thm} (Appendix A.5 in \cite{BLNvH25}) but we keep the moment parameter ($t$ in our theorem statement above) as a free parameter instead of specializing to $\floor{c\log(d+m)}/2$ early as they do in their proof. 
    
    Instead we let $2s=t + \ceil{c\log (d+m)}$, where $c >0$ is a sufficiently large absolute constant (as in \cite{BLNvH25}). Next we recall the Schatten-$2s$ norm which is defined as,
    \begin{align*}
         \norm{Y}_{S_{2s}} \defeq  \paren{\Tr\paren{\abs{Y}^{2s}}}^{1/2s}, \quad \text{where} \quad \abs{Y} = \paren{Y^*Y}^{1/2}.
    \end{align*}
    Using the fact that the spectral norm $\norm{Y}_{\textrm{op}}$ is bounded by the Schatten-$2s$ we have,
	\begin{align*}
		\norm{Y}_{\textrm{op}} \leq \norm{Y}_{S_{2s}}.
	\end{align*}
    Because $t \leq 2s$ and expectation $L_k$ norms are monotone in $k$ for non-negative random variables,
	\begin{align*}
		\paren{\mathbb{E}\left[\norm{Y}_{S_{2s}}^t\right]}^{1/t} \leq \paren{\mathbb{E}[\norm{Y}_{S_{2s}}^{2s}]}^{1/2s} = \paren{\mathbb{E}[\mathsf{Tr}(\abs{Y}^{2s})]}^{1/2s}.
	\end{align*}
    Thus, for our proof it suffices to control the $2s$-trace moment as in \cite{BLNvH25}. The proof proceeds by induction on $r$, along the lines of Proof of Theorem 2.7 in \cite{BLNvH25}.
    
    \emph{Base Case} ($r=1$).
    Here $Y$ is a linear matrix sum with the form $X=\sum_{i\in [m]}h_iA_i$. the analysis in proof of Theorem A.7 in \cite{BLNvH25}, for a constant $C'_r$ yields a moment bound of the form,
	\begin{align*}\numberthis\label{eq:moment-bound-chaos-proof}
		\paren{\E\left[\norm{X}^t\right]}^{1/t} \leq C'_r\paren{\sigma_R(X) + \sigma_C(X) + \alpha_{t+c(\log(d+m))}(h)\paren{t+\log(d+m)}^2\nu(X)},
	\end{align*}
	where $\sigma_R(X) = \norm{\sum_i A_i^{\top}A_i}^{1/2}$ and $\sigma_C(X) = \norm{\sum_i A_iA_i^{\top}}^{1/2}$. We will now argue that the moment comparision machinery in the proof of Theorem A.7 works for any $2s=t + c\paren{\log(d+m)}$.

	Their proof in \cite{BLNvH25} uses Theorem 2.9 in \cite{BvH24} which compares the moments in above expressions with the Gaussian proxy $G=\sum_ig_iA_i$. Now since their proof works for any integer $t$, and we get the same expression along with the additive error terms denoted by $R_{2s}^{1/3}\sigma_{2s}^{2/3}s^{2/3}$ and $R_{2s}\cdot s^2$ terms. Next, they use Theorem 2.7 in  \cite{BBvH23} to bound the $2t$-th trace moment (again for an integer $t$) by a free model with an additive error which for us is $s^{3/4}\paren{\nu\sigma}^{1/2}$. Further the Lemma 2.5 in \cite{BBvH23} bounds the free model's operator norm by $\sigma_R+\sigma_C$. Now similarly one can bound the $R_{2s}(X)$ term by $\alpha_{2s}(h)r(X)$ and absorb the cross term $s^{3/4}\sqrt{(\sigma_R \lor \sigma_C)\nu}$ with Young's inequality by the same algebra. Putting it all together proves the base case (absorbing constants into $C_{\mathsf{ISMR}})$.
	
    \emph{Inductive step}. Condition on $h^{(1)}_{i_1},\dots,h^{(r-1)}_{i_{r-1}}$ and write as a linear chaos in $h^{(r)}_{i_r}$ as,
    \begin{align*}
        Y = \sum_{i_r \in [m]}h^{(r)}_{i_r} \cdot B_{i_r} \quad \text{where} \quad B_{i_r} \defeq \sum_{i_1,\dots,i_{r-1}}h^{(1)}_{i_1} \dots h^{(r-1)}_{i_{r-1}} \cdot A_{i_1,\dots,i_{r-1},i_r}.
    \end{align*}
    Now towards using induction, the iteration approach, Section 2.4.2 from \cite{BLNvH25} says that quantities in linear chaos (such as those in eqn.~\prettyref{eq:moment-bound-chaos-proof}) can be identified with the operator norms of \emph{intermediate flattenings} of the original coefficient tensor. In our setting, we need the following three intermediate random matrices,
    \begin{align*}
        Y_{(1)} &\defeq Y \Brac{1:\paren{r-1}\vert \set{r,r+1}\vert \set{r+2}},\\
        Y_{(2)} &\defeq Y \Brac{1:\paren{r-1}\vert \set{r+1}\vert \set{r,r+2}},\\
        Y_{(3)} &\defeq Y \Brac{1:\paren{r-1}\vert \set{r}\vert \set{r+1,r+2}},
    \end{align*}
    where the remaining chaos coordinates are $1,\dots,r-1$, the matrix row coordinate is indexed by the set after the first bar and the matrix column coordinate by the set after the second bar. Now applying eqn.~\prettyref{eq:moment-bound-chaos-proof} to the conditional linear chaos $Y=\sum_{i_r}h^{(r)}_{i_r}B_{i_r}$ for every $t \geq 2$,
    \begin{align*}
        \norm{Y}_{L_t^{h^{(r)}}} \leq C'_r \paren{\norm{Y_{(1)}} + \norm{Y_{(2)}} + \alpha_{t + c\log(d+m)}(h)\paren{t+ \log(d+m)}^2\norm{Y_{(3)}}},
    \end{align*}
    where $\norm{\cdot}_{L_t^{h_r}}$ means we take the $L_t$ norm over $h_{i_r}$, holding $h^{(1)}_1,\dots,h^{(r-1)}_{r-1}$ fixed. Now taking $L_t$ norm over $h_{h^{(1)}_{i_1}},\dots,h^{(r-1)}_{i_{r-1}}$ and repeatedly applying triangle inequality,
    \begin{align*}\numberthis\label{eq:intermediate-bound-needed}
        \norm{Y}_{L_t} \leq C'_r \paren{\norm{Y_{(1)}}_{L_t} + \norm{Y_{(2)}}_{L_t} + \alpha_{t + c\log(d+m)}(h)\paren{t+ \log(d+m)}^2\norm{Y_{(3)}}_{L_t}}.
    \end{align*}
    Both $Y_{(1)},Y_{(2)}$ are matrix chaoses of order $(r-1)$ involving $h^{(1)}_{i_1},\dots,h^{(r-1)}_{i_{r-1}}$, and with deterministic coefficient tensors given by intermediate flattenings of $\cA$, hence by induction hypothesis,
    \begin{align*}
        \norm{Y_{(k)}}_{L_t} \leq C'_{r-1}\paren{\sigma\paren{\cA_{(k)}} + \alpha_{t + c\log(d'+m)}(h)^{r-1} \paren{t + \log(d'+m)}^{(r+2)/2}\nu\paren{\cA_{(k)}}},
    \end{align*}
    where $\cA_{(k)}$ is the coefficient tensor of $Y_{(k)}$ and $d'$ is the dimension of intermediate flattenings (atmost $md$).
    By Lemma A.1 of \cite{BLNvH25}, the parameters are controlled by those for $\cA$ and we have,
    \begin{align*}
        \sigma\paren{\cA_{(1)}} \leq \sigma \paren{\cA}, \nu\paren{\cA_{(1)}} \leq \nu \paren{\cA}, \sigma\paren{\cA_{(2)}} \leq \sigma \paren{\cA}, \nu\paren{\cA_{(2)}} \leq \nu \paren{\cA},
    \end{align*}
    and $\log(d'+m) \lesssim_r \log(d+m)$, and $\alpha_{t + c\log(d'+m)}(h) \leq \alpha_{t+c'\log(d+m)}(h)$ for larger constant $c'$. Hence,
    \begin{align*}\numberthis\label{eq:intermediate-intermediate-bound}
        \norm{Y_{(1)}}_{L_t} + \norm{Y_{(2)}}_{L_t} \leq C'_r\paren{\sigma(\cA) + \alpha_{t+c\log(d+m)}(h)^{r-1}\paren{t + \log(d+m)}^{(r+2)/2}\nu\paren{\cA}}.
    \end{align*}
    Now the term $Y_{(3)}$ is also an order-$(r-1)$ chaos, but using the full induction hypothesis  will introduce a $\nu(\cdot)$ term which once multiplied by $\paren{t + \log(d+m)}^2$ factor would be too lossy. We instead proceed as in Appendix A.5 of \cite{BLNvH25}, and bound it by an iterated NCK inequality that depends only on $\sigma(\cdot)$. By a moment form of iterated NCK, similar to Appendix A.2 in \cite{BLNvH25},
    \begin{align*}\numberthis\label{eq:intermediate-bound-final}
        \norm{Y_{(3)}}_{L_t} \leq C'_r\cdot \alpha_{t + c\log(d+m)}(h)^{r-1}\paren{t+\log(d+m)}^{(r-1)/2} \nu\paren{\cA}.
    \end{align*}
    Finally by Lemma A.1 in \cite{BLNvH25}, we have $\sigma\paren{\cA_{(3)}}\leq \nu\paren{\cA}$ (since in $Y_{(3)}$, the original matrix coordinates $r+1,r+2$ are placed on same side and become $\nu$-flattenings of $\cA$).
    Now, putting it all together, plugging eqn.~\prettyref{eq:intermediate-intermediate-bound}, and eqn.~\prettyref{eq:intermediate-bound-final} in eqn.~\prettyref{eq:intermediate-bound-needed} one obtains,
    \begin{align*}
        &\qquad\qquad \norm{Y}_{L_t} \leq C'_r\paren{\sigma\paren{\cA} + \alpha_{t+c\log(d+m)}(h)^{r-1}\paren{t+\log(d+m)}^{(r+2)/2}\nu\paren{\cA}}\\
        &+C'_r\alpha_{t+c\log(d+m)}(h)\paren{t+\log(d+m)}^2\cdot \alpha_{t+c\log(d+m)}(h)^{r-1}\paren{t + \log(d+m)}^{(r-1)/2}\nu\paren{\cA}.
    \end{align*}
    The last term simplifies to $C'_r\alpha_{t+c\log(d+m)}(h)^r\paren{t+\log(d+m)}^{(r+3)/2}\nu\paren{\cA}$, and the preceding term $\alpha^{r-1}\paren{t+\log(d+m)}^{(r+2)/2}\nu\paren{\cA}$ is dominated by final term, since $\alpha_{t+c\log(d+m)}(h) \geq \alpha_2(h)$ (unit variance). Absorbing constants into $C_{\mathsf{ISMR}}$ completes the proof.
\end{proof}

\subsection{Low-Degree Polynomial Framework for Refutation Problems}
\label{sec:ldp-refutation-framework}
The low-degree polynomial framework for refutation problems was formalized in \cite{KVWX23}. We start by recalling some useful definitions and results from their framework.

\begin{definition}[Separation of Distribution and Property, Definition 2.11 in \cite{KVWX23}]
    Let $N=N_n$, and let $\sfQ_n$ be a probability distribution on $\bbR^N$, and let $\cR=\cR_n \subseteq \bbR^N$ be a property. A polynomial $f_n: \bbR^N \rightarrow \bbR$ is said to strongly separate $\sfQ_n$ from $\cR$ if,
    \begin{align*}
        f_n(X) \geq 1, \forall X \in \cR, \quad \text{and} \quad \mathop{\bbE}_{X\sim \sfQ_n}\Brac{f_n(X)}=0, \quad \mathop{\bbE}\limits_{X\sim \sfQ_n}\Brac{f_n(X)} = o_n(1).
    \end{align*}
\end{definition}

\begin{definition}[Refutation Algorithm]\label{def:refutation-algorithm}
    For an input $X \in \bbR^N$, and a property $\cR$, a refutation algorithm is a function $f:\bbR^N \rightarrow \bbR$ which outputs either $\mathsf{NO}$ or $\mathsf{MAYBE}$ such that,
    \begin{multicols}{2}
        \begin{itemize}
            \item if $X\in \cR$, outputs $\maybe$
            \item if $X \sim \sfQ$, outputs $\mathsf{NO}$ w.h.p.
        \end{itemize}
    \end{multicols}
\end{definition}

\begin{definition}[Threshold Refutation Algorithm]\label{def:threshold-refutation-algorithm}
    Let $f_n:\bbR^N \rightarrow \bbR$ be a polynomial. The threshold refutation algorithm associated with $f_n$ denoted $\sfA_f: \bbR^N \rightarrow \set{\maybe,\no}$ is defined as,
    \begin{align*}
        \sfA_f(X) = \begin{cases}
            \no &\quad f_n(X) <1\\
            \maybe &\quad f_n(X)\geq 1.
        \end{cases}
    \end{align*}
\end{definition}

\begin{proposition}[Proposition 2.13 in \cite{KVWX23}]
    Suppose $f_n$ strongly separates $\sfQ$ and $\cR$, then $\sfA_f$ as defined in \prettyref{def:threshold-refutation-algorithm} is a valid refutation algorithm.
\end{proposition}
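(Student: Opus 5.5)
The statement is Proposition 2.13 of \cite{KVWX23}: if $f_n$ strongly separates $\sfQ$ from $\cR$, then the threshold algorithm $\sfA_f$ of \prettyref{def:threshold-refutation-algorithm} is a valid refutation algorithm in the sense of \prettyref{def:refutation-algorithm}. My plan is to check the two defining requirements of a refutation algorithm in turn. The first is deterministic and immediate. The second is a Chebyshev-type concentration argument.

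For correctness on the property, I take any $X \in \cR$. Strong separation gives $f_n(X) \geq 1$ pointwise on $\cR$. By the definition of $\sfA_f$, the output is therefore $\maybe$. This holds with probability one and needs no randomness, so this part is done.

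For utility under the null, I use the intended reading of the separation conditions: $\bbE_{X\sim\sfQ_n}\Brac{f_n(X)} = 0$ together with a vanishing second moment, $\bbE_{X\sim \sfQ_n}\Brac{f_n(X)^2} = o_n(1)$. The printed definition repeats the first-moment condition, and I take the second one to be a typo for the variance condition. Then, for $X \sim \sfQ_n$,
\begin{align*}
\Pr{\sfA_f(X) = \maybe} = \Pr{f_n(X) \geq 1} \leq \Pr{\abs{f_n(X)} \geq 1} \leq \bbE_{X\sim\sfQ_n}\Brac{f_n(X)^2} = o_n(1),
\end{align*}
by Markov's inequality applied to $f_n(X)^2$, which is Chebyshev's inequality since the mean is zero. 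Hence $\sfA_f$ outputs $\no$ with probability $1-o_n(1)$, i.e.\ with high probability.

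The only delicate point is interpreting the strong-separation definition correctly, since it is literally stated with the expectation written twice. If only a variance bound $\Var_{\sfQ}\Brac{f_n} = o(1)$ with zero mean were intended, the same Chebyshev step applies verbatim. If instead the normalization were $\bbE_{\sfQ}\Brac{f_n}=0$ with $\sqrt{\Var_{\sfQ}\Brac{f_n}}$ small relative to the threshold $1$, the argument is again identical. So I expect no genuine obstacle beyond fixing this convention.
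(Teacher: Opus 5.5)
Your proof is correct and matches the paper's argument. Correctness on $\cR$ is immediate from $f_n \geq 1$ there, and utility follows from Markov's inequality applied to $f_n^2$, giving $\Pr{f_n(X)\geq 1} \leq \bbE_{\sfQ_n}\Brac{f_n^2} = o_n(1)$; your reading of the duplicated first-moment condition as a second-moment bound is the one the paper's proof implicitly uses.
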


\begin{proof}
    It is easy to see that $f(X) \geq 1$ for all $X \in \cR$ and it always outputs $\maybe$. The other guarantee follows using Markov's inequality as we notice it does not output $\no$ with probability,
    \begin{align*}
        \pr{X \sim \sfQ_n}{f_n(X)\geq 1} \leq \mathop{\bbE}_{X\sim \sfQ_n}\Brac{f_n^2(X)} = o_n(1).
    \end{align*}
\end{proof}

The strategy in \cite{KVWX23} to show computational hardness of refuting a property $\cR$ for a particular null distribution $\sfQ$ is to note that it suffices to construct a \emph{computationally quiet planted distribution} having that property $\cR$. Formally the task is to construct a planted distribution $\widetilde{\sfP}$ where,
\begin{enumerate}
    \item $\widetilde{\sfP}$ is supported on input instances with property $\cR$.
    \item It is computationally hard to distinguish $\widetilde{\sfP}$ from $\sfQ$.
\end{enumerate}

\begin{definition}[Advantage of a Test]
    Given two distributions $\sfP$ and $\sfQ$, a polynomial $f_n: \bbR^N \rightarrow \bbR$, and a degree parameter $D=D_n$ the advantage of a degree-$D$ test denoted $\Adv_{\leq D}$ is defined as,
    \begin{align*}
        \Adv_{\leq D}(\sfP,\sfQ) = \sup_{f:\deg(f) \leq D} \frac{\E_{\sfP}\Brac{f}}{\sqrt{\E_{\sfQ}\Brac{f^2}}}.
    \end{align*}
\end{definition}

\begin{proposition}[Proposition 2.14 in \cite{KVWX23}]
    Suppose for an infinite subsequence of values of $n$ there is a distribution $\sfP=\sfP_n$ that is supported on $\cR$ and for some $D=D_n$ we have $\Adv_{\leq D}(\sfP,\sfQ)=1+o(1)$, then no degree-$D$ polynomial separates $\sfQ$ from $\cR$.
\end{proposition}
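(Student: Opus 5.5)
The plan is to argue by contradiction, combining the existence of a separating polynomial with the advantage bound. Suppose some degree-$D$ polynomial $f=f_n$ strongly separates $\sfQ$ from $\cR$. By definition this means three things:
\begin{itemize}
\item $f(X)\geq 1$ for every $X\in\cR$;
\item $\E_{\sfQ}[f]=0$;
\item the second moment $\E_{\sfQ}[f^2]=o_n(1)$. The definition as typeset repeats the first moment, but the second moment is the intended condition, as the proof of the preceding proposition uses.
\end{itemize}
Since $\sfP=\sfP_n$ is supported on $\cR$, the first property gives $\E_{\sfP}[f]\geq 1$ pointwise in expectation along the given subsequence of $n$.

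The main step is to test the advantage against a shifted polynomial $g=1+\lambda f$, with $\lambda>0$ a scalar to be chosen. This $g$ still has degree at most $D$, since constants have degree $0$. We compute its two relevant moments:
\begin{align*}
\E_{\sfP}[g]&\geq 1+\lambda,\\
\E_{\sfQ}[g^2]&=1+2\lambda\,\E_{\sfQ}[f]+\lambda^2\E_{\sfQ}[f^2]=1+\lambda^2\epsilon_n,
\end{align*}
where $\epsilon_n=\E_{\sfQ}[f^2]=o(1)$. Hence
\begin{align*}
\Adv_{\leq D}(\sfP,\sfQ)\geq \frac{1+\lambda}{\sqrt{1+\lambda^2\epsilon_n}}.
\end{align*}

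Now choose $\lambda=\epsilon_n^{-1/2}$, or $\lambda$ large and fixed if $\epsilon_n=0$. Then the right side is at least $(1+\epsilon_n^{-1/2})/\sqrt2\to\infty$. Along the subsequence this contradicts $\Adv_{\leq D}=1+o(1)$, so no degree-$D$ polynomial can strongly separate $\sfQ$ from $\cR$.

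The only delicate point is the normalization issue in the definition noted above; the rest is routine. If instead only a weaker variance-type separation is meant, the same shift argument applies. In that case one uses $\Var_{\sfQ}[f]=o\bigl(\E_{\sfP}[f]-\E_{\sfQ}[f]\bigr)^2$ after centering and rescaling $f$ so that $\E_{\sfQ}[f]=0$.
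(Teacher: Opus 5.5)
Your argument is correct. The paper gives no proof of this proposition; it is imported from \cite{KVWX23}. So there is no in-paper argument to compare with, and yours is the natural one. You are also right that the definition of strong separation as typeset has a typo. The intended condition is $\E_{\sfQ}[f^2]=o_n(1)$, which is what the proof of the preceding proposition uses.

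One remark on what your shift $g=1+\lambda f$ actually buys. For strong separation it is essentially unnecessary. Testing $f$ itself already gives $\Adv_{\leq D}(\sfP,\sfQ)\geq \E_{\sfP}[f]/\sqrt{\E_{\sfQ}[f^2]}\geq \epsilon_n^{-1/2}\to\infty$; the shift only tidies up the degenerate case $\epsilon_n=0$. For that reason this case would only need the hypothesis $\Adv_{\leq D}=O(1)$.

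The shift becomes essential for the weaker notion of separation: $f\geq 1$ on $\cR$, $\E_{\sfQ}[f]=0$, and $\E_{\sfQ}[f^2]\leq C$ for a constant $C$. There, choosing $\lambda=1/C$ gives $\Adv_{\leq D}^2\geq (1+1/C)^2/(1+1/C)=1+1/C$, which is bounded away from $1$. This uses the mean-zero condition to kill the cross term. It is presumably why the hypothesis is stated as $1+o(1)$ rather than $O(1)$, and why the definition includes $\E_{\sfQ}[f]=0$.

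Relatedly, the variance-type criterion in your last paragraph is not actually weaker. After centering and rescaling it becomes $\E_{\sfQ}[\tilde f^{2}]=o(1)$ with $\E_{\sfP}[\tilde f]=1$, which is again the strong case.
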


Let $\cE_{\ell} = \binom{V}{\ell}$ and $N=\abs{\cE_{\ell}} = \binom{n}{\ell}$. We encode the hypergraph by $X=\paren{X_e}_{e\in \cE_{\ell}}\in \set{0,1}^N$, where $X_e=1$ indicates that the candidate hyperedge $e$ is present. Let $\cR(k)$ be the property that there is an independent set of size at least $k$,
\begin{align*}
    \cR(k) \defeq \set{X \in \set{0,1}^N: \mathsf{MIS}(X) \geq k}.
\end{align*}
Fix a parameter $k$. Towards our quiet planted distribution $\widetilde{\sfP}$, we need a distribution that is supported on $\cR(k)$ i.e. has independent set of size $k$ and at the same time is low-degree indistinguishable from the null distribution $\sfQ=\cH_0\paren{n,\ell,p}$ (as defined in \prettyref{def:random-hypergraph}) upto degree $D$. We start by normalizing the input,
\begin{align*}
    Y_e \defeq \frac{X_e-p}{\sqrt{p(1-p)}}.
\end{align*}
Under the null distribution $\sfQ=\cH_0\paren{n,\ell,p}$, the family $\set{Y_e}_{e\in \cE_{\ell}}$ is independent with $\bbE_{\sfQ}\Brac{Y_e}=0$ and $\bbE_{\sfQ}\Brac{Y_e^2}=1$. For $\alpha \subseteq \cE_{\ell}$ define the monomial $h_{\alpha}=\prod_{e\in \alpha}Y_e$. Then, note that the collection $\set{h_{\alpha}}_{\alpha \subseteq \cE_{\ell}}$ is an orthonormal basis for $L^2\paren{\sfQ}$. One can verify that indeed for $\alpha,\beta \subseteq \cE_{\ell}$,
\begin{align*}
    \bbE_{\sfQ}\Brac{h_{\alpha}\cdot h_{\beta}} = \bbE_{\sfQ}\Brac{\prod_{e\in \alpha}Y_e\cdot \prod_{e\in \beta}Y_e}.
\end{align*}
If $\alpha \neq \beta$, then some $e \in \alpha \triangle \beta$ appears once and by independence $\bbE_{\sfQ}\Brac{Y_e}=0$. If $\alpha = \beta$ we have,
\begin{align*}
    \bbE_{\sfQ}\Brac{h^2_{\alpha}} = \prod_{e\in \alpha}\bbE_{\sfQ}\Brac{Y_e^2}=1.
\end{align*}
Hence, $\bbE_{\sfQ}\Brac{h_{\alpha}h_{\beta}}=\one_{\alpha=\beta}$ and hence they form an orthonormal basis over $L^2\paren{\sfQ}$.
\begin{proposition}[Proposition 8.3 in \cite{Wei25}]\label{prop:advantage-computation}
    Given an orthonormal basis w.r.t the null distribution $\sfQ$ and a planted distribution ${\sfP}$ on hypergraphs, for every $D \geq 0$ we have,
    \begin{align*}
        \Adv_{\leq D}\paren{\sfP,\sfQ}^2 = 1 + \sum_{1 \leq \abs{\alpha} \leq D}\paren{\bbE_{\sfP}\Brac{h_{\alpha}}}^2.
    \end{align*}
\end{proposition}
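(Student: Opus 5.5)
The statement is the standard formula for the low-degree advantage. The approach is to expand any degree-$\le D$ test in the orthonormal basis $\set{h_{\alpha}}$ and then apply Cauchy--Schwarz.

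First I will check that $\set{h_{\alpha}}_{\abs{\alpha}\le D}$ is an orthonormal basis of the space of polynomials of degree at most $D$, viewed as functions on $\set{0,1}^N$. Since $X_e\in\set{0,1}$, every polynomial in $X$ reduces on the cube to a multilinear polynomial of no larger degree. Because each $Y_e$ is an affine function of $X_e$, the multilinear monomials of degree $\le D$ in $X$ span the same space as $\set{h_{\alpha}}_{\abs{\alpha}\le D}$. Orthonormality under $\sfQ$ was verified just above the statement, so the family is a basis of this space. Consequently every admissible test can be written as $f=\sum_{\abs{\alpha}\le D}\hat f_{\alpha}h_{\alpha}$ with $\bbE_{\sfQ}[f^2]=\sum_{\alpha}\hat f_{\alpha}^2$. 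Both $\bbE_{\sfP}[f]$ and $\bbE_{\sfQ}[f^2]$ depend only on the values of $f$ on the cube, so no generality is lost.

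Second, I will use linearity to write $\bbE_{\sfP}[f]=\sum_{\alpha}\hat f_{\alpha}c_{\alpha}$, where $c_{\alpha}\defeq\bbE_{\sfP}[h_{\alpha}]$. Cauchy--Schwarz then gives
\begin{align*}
\frac{\bbE_{\sfP}[f]}{\sqrt{\bbE_{\sfQ}[f^2]}}\le \paren{\sum_{\abs{\alpha}\le D}c_{\alpha}^2}^{1/2}.
\end{align*}
Equality holds for $\hat f_{\alpha}=c_{\alpha}$, that is, for the projection of the likelihood ratio onto degree $\le D$. The supremum therefore equals this quantity. For $\alpha=\emptyset$ we have $h_{\emptyset}=1$ and $c_{\emptyset}=1$, which produces the leading $1$ in the formula.

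No step here is a genuine obstacle. The only care needed is the reduction to multilinear polynomials on the cube and the degenerate case $f=0$, which is excluded from the supremum.
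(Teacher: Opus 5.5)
Your argument is correct and is the standard proof of this fact. The paper gives no proof of its own and simply cites Wein's notes, where the argument is the same: expand $f$ in the orthonormal family $\set{h_{\alpha}}_{\abs{\alpha}\le D}$, apply Cauchy--Schwarz to $\sum_{\alpha}\hat f_{\alpha}\bbE_{\sfP}[h_{\alpha}]$, and attain equality at $\hat f_{\alpha}=\bbE_{\sfP}[h_{\alpha}]$.

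One small correction: the degenerate tests to exclude are all $f$ that vanish on $\set{0,1}^N$ (e.g.\ $X_e^2-X_e$), not only $f=0$, since these also give $\bbE_{\sfQ}[f^2]=0$ when $0<p<1$.
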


\begin{remark}
    The quantity $\Adv_{\leq D}$ is also denoted low-degree likelihood ratio $\norm{L^{\leq D}}$ is the projection of likelihood ratio onto subspace of degree-$D$ polynomials \cite{Hop18}; where the likelihood ratio is  $L=d\sfP/d\sfQ$ having $\norm{L}=\sqrt{\bbE_{\sfQ}\Brac{L(Y)^2}}$ and it holds that $\norm{L^{\leq D}-1}^2 = \Adv^2_{\leq D}-1$.
\end{remark}

Note that the monomial $h_{\alpha}=\prod\limits_{e\in \alpha}Y_e$ has degree $\abs{\alpha}$ in the hyperedge variables. Hence, a degree-$D$ projection of the likelihood ratio only contains terms with $\abs{\alpha} \leq D$.

\section{Sum-of-Squares Certificates in Semirandom Hypergraphs}
\label{sec:sos-main-results}
In this section, we show low-degree Sum-of-Squares (SoS) proofs certifying useful bounds on the size of an independence number of $\ell$-uniform hypergraphs.

We start by proving upper bounds on independence number, the size of Maximum Independent Set (MIS) for a random hypergraph model $\cH_0\paren{n,\ell,p}$ (see \prettyref{def:random-hypergraph}) and later argue that the certificates are inherently robust, and naturally extend to semirandom hypergraphs, see \prettyref{def:semirandom-hypergraph}. 

Concretely, given an $\ell$-uniform hypergraph $H=(V,E)$ on $n$ vertices we introduce decision variables $\set{x_v: v \in V}$ (also called indeterminates) to indicate if $v$ belongs to an independent set. We formulate the Maximum Independent Set (MIS) problem as the polynomial system using the constraints (axioms of the system) as below.

\newcommand{\hypermissdp}{\textup{\hyperref[sdp:hypergraph_mis_poly]{$\mathcal{P}_{\textsf{MIS}}$}}}
\begin{namedSDP}{$\mathcal{P}_{\textsf{MIS}}$ (Polynomial System for $\ell$-Uniform Hypergraph MIS)}
\label{sdp:hypergraph_mis_poly}
\begin{align*}
    \max \,&\sum_{v\in V}\,{x_v}\\
    \qquad\text{subject to } \qquad\qquad\qquad &\\
    \prod_{v\in e}x_v&=0 && \forall e \in E\numberthis \label{eq:edge_constraint_mis}\\
    x_{v}^2 &= x_v && \forall v \in  V \numberthis \label{eq:indicator_constraint_mis}
    \end{align*}
\end{namedSDP}

We consider a degree-$2\ell$ SoS relaxation of \hypermissdp, and obtain a pseudo-expectation operator $\pE:\bbR[x]_{\leq 2\ell} \rightarrow \bbR$ satisfying all degree-$2\ell$ consequences of the axioms of \hypermissdp. Since this is a relaxation, the true independence number $\alpha(H)$ can only be smaller and it always produces correct certificates,
\begin{align*}
    k \defeq \sum_{v\in V}\pE\Brac{x_v}.
\end{align*}

Our goal will be to formally prove \prettyref{thm:main-theorem-informal}, first by showing that with high probability over a random hypergraph, $H \sim \cH_0(n,\ell,p)$ the result holds.

\begin{theorem}\label{thm:main-theorem-formal}
    Given an instance of a random hypergraph $H=(V,E)$ generated from the random hypergraph model $\cH_0\paren{n,\ell,p}$, and a $2\ell$-degree pseudo-expectation $\pE$ for the Sum-of-Squares relaxation of \hypermissdp, that there exists a constant $c_{\ell}$ (depending only on $\ell$) such that with high probability (over the randomness of the input), it follows that for,
    \begin{itemize}
        \item \textbf{Even Arity Case.} For the case of $\ell=2q$ we have,
            \begin{align*}
                \sum_{v\in V}\pE[x_v] \leq c_{\ell} \cdot \paren{\frac{\sqrt{n}}{p^{1/\ell}} + \frac{(\log n)^{1/\ell}}{p^{2/\ell}}}.
        \end{align*}

        \item \textbf{Odd Arity Case.} For the case of $\ell=2q+1$ we have,
            \begin{itemize}
                \item \textbf{Regime I}: If $p \geq p^{\star} \defeq \paren{\log n}^2/\sqrt{n}$ for $\ell=2q+1$ we have that,
                    \begin{align*}
                        \sum_{v\in V}\pE[x_v] \leq c_{\ell} \cdot \paren{\frac{\sqrt{n}}{p^{1/\ell}} + \frac{n^{q/(2q+1)}\cdot (\log n)^{5/2\ell}}{p^{2/\ell}}}.
                    \end{align*}

                \item \textbf{Regime II}: Where $p \leq p^{\star}=\paren{\log n}^2/\sqrt{n}$ we have that,
                    \begin{align*}
                        \sum_{v\in V}\pE[x_v] \leq c_{\ell} \cdot \paren{\frac{\sqrt{n}\paren{\log n}^{1/2\ell}}{p^{1/\ell}} + \frac{(\log n)^{3/\ell}}{p^{2/\ell}}}.
                    \end{align*}
            \end{itemize}
    \end{itemize}
\end{theorem}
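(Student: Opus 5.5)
The plan follows the moment method of \prettyref{sec:proof_overview}. Write $X=\sum_v x_v$. Since $\pE[X]\le(\pE[X^\ell])^{1/\ell}$ by \prettyref{fact:pe_holder}, it suffices to bound $\pE[X^\ell]$. First expand $X^\ell$ by the number of distinct indices. Using only $x_v^2=x_v$, each monomial with a repeated index collapses to a monomial of degree at most $\ell-1$. A counting argument, multiplying each collapsed degree-$j$ all-distinct sum by the constant number of index patterns, then gives $\pmis\SoSp{x}{\ell}\{X^\ell\le C_\ell X^{\ell-1}+\ell!\,Y_\ell\}$, where $Y_\ell=\sum_{i_1<\dots<i_\ell}x_{i_1}\cdots x_{i_\ell}$. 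Everything therefore reduces to an SoS bound on $Y_\ell$ in terms of a random matrix norm, followed by a high-probability bound on that norm.

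\textbf{Even $\ell=2q$.} Use the edge axioms to rewrite
\[
Y_\ell=\sqrt{\tfrac{1-p}{p}}\,c_\ell\, z^\top B z .
\]
Here $z_I=\prod_{v\in I}x_v$ is indexed by $q$-sets $I$, and $B_{I,J}=h_{I\cup J}$ whenever $I\cap J=\emptyset$. This works because $1-p\cdot(\text{stuff})$ vanishes on edges by the edge constraints, so non-edges contribute exactly $\sqrt{p/(1-p)}$. \prettyref{fact:operator_sos_bound} and booleanity give $\|z\|^2\le X^q$, hence $\pmis\SoSp{x}{2\ell}\{Y_\ell\le \|B\|X^q/\sqrt p\}$. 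Taking $\pE$ and applying Hölder, with $t=\pE[X^\ell]^{1/\ell}$, yields $t^\ell\lesssim \|B\|t^q/\sqrt p+t^{\ell-1}$, i.e.\ $t\lesssim (\|B\|/\sqrt p)^{1/q}+1$.

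For $\|B\|$, impose the vertex ordering to split $B$ into $O_\ell(1)$ pieces with independent entries. Each piece is a $\binom nq$-dimensional matrix with centered, unit-variance, $1/\sqrt p$-bounded entries. A sharp nonasymptotic bound (Bandeira–van Handel, or \cite{BBvH23}) gives $\|B\|\lesssim n^{q/2}+\sqrt{\log n/p}$, and hence the stated $\sqrt n/p^{1/\ell}+(\log n)^{1/\ell}/p^{2/\ell}$.

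\textbf{Odd $\ell=2q+1$.} Write
\[
Y_\ell=\sqrt{\tfrac{1-p}p}\sum_i x_i\, z^\top A_i z ,
\]
where the slices $A_i$ are indexed by $q$-sets, all larger than $i$ in the ordering. Apply \prettyref{fact:sos_cauchy_schwarz} and \prettyref{claim:kronecker-quadratic-form}:
\[
Y_\ell^2\le p^{-1} X\, w^\top\Big(\sum_i A_i\otimes A_i\Big)w,\qquad w=z\otimes z,\quad \|w\|^2\le X^{2q}.
\]
Split $\sum_iA_i\otimes A_i=M+D$, where $D$ collects the $h_e^2$ terms, i.e.\ pairs of index pairs forming the same hyperedge. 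Bound $\|D\|$ by a row-sum/Chernoff argument, giving $\lesssim n+\sqrt{n\log n/p}+\log n/p$ at the relevant scale $n^{q}$ analogues. This produces $\pmis\SoSp{x}{2\ell}\{Y_\ell^2\le p^{-1}(\|D\|+\|M\|)X^{2q+1}\}$. Since $Y_\ell\ge0$ is SoS-derivable (it equals $\tfrac1{\ell!}(X^\ell-\text{lower terms})$ together with the lower-term inequality), combine via Hölder on $\pE[X^\ell]\le C\pE[X^{\ell-1}]+\ell!\,\pE[Y_\ell]$ with $\pE[Y_\ell]\le \sqrt{\pE[Y_\ell^2]}$. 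With $t$ as before this gives $t^{\ell}\lesssim t^{\ell-1}+\sqrt{(\|D\|+\|M\|)/p}\,t^{\ell/2}$, so $t\lesssim ((\|D\|+\|M\|)/p)^{1/\ell}$.

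The main obstacle is $\|M\|$. For Regime I, I would express $M$ (after decoupling, which costs constants) as an order-$2$ combinatorial chaos in the sense of \prettyref{def:combinatorial_chaos}. The index set $t$ ranges over $(i,J,K,J',K')$, with chaos coordinates $I_1=\{i\}\cup J\cup K$ and $I_2=\{i\}\cup J'\cup K'$, and matrix coordinates $(J,J')$ and $(K,K')$. Then \prettyref{prop:bound_chaos_parameters} computes each flattening norm as a product of $T_b$'s. I expect $\sigma\lesssim n^{\ell/2}$ and $\nu\lesssim n^{q}$. Then apply \prettyref{thm:moment-chaos-theorem} with $t=\Theta(\log n)$ and $\alpha_t(h)\lesssim \log n/(\sqrt p\log(1/p))\le\log n/\sqrt p$, followed by Markov. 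This gives $\|M\|\lesssim n^{\ell/2}+n^q(\log n)^{5/2}/p$, and plugging in gives Regime I.

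For Regime II, apply matrix Bernstein to $M=\sum_i(\ldots)$ as a sum over the distinguished vertex. This uses conditioning and a decomposition into a sum of independent zero-mean matrices, with variance $\lesssim n^{\ell}$ and summand norm $\lesssim n^{q}\log n/p$ w.h.p., and yields $\|M\|\lesssim n^{\ell/2}\sqrt{\log n}+(\log n)^{O(1)}/p^2$-type terms. Matching these to the stated exponents is the delicate bookkeeping step. I would check it by carefully verifying which flattening achieves $\nu$ and how the truncation of heavy rows feeds into the Bernstein range parameter. Finally, choose the better of the two bounds at $p^\star$.
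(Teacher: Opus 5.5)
Your architecture matches the paper's: booleanity collapse of repeated indices, the flattening $B$ for even $\ell$, and for odd $\ell$ the slices $A_i$, Cauchy--Schwarz, $w=z\otimes z$, the split $\sum_i A_i\otimes A_i=M+D$, the chaos bound for Regime~I and matrix Bernstein for Regime~II. However, two steps in the odd case fail as you state them.

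\textbf{The $D$ term cannot be controlled through $\|D\|_2$.} On the indices $\{(J,J)\}$, $D$ is the entrywise nonnegative matrix $(d_{J,K})_{J,K}$ with $d_{J,K}=\sum_i A_i[J,K]^2$. It has $\Theta(n^{2q})$ entries of order $n$. Testing it against the normalized all-ones vector on these indices gives $\|D\|_2\gtrsim \sum_e h_e^2/\binom nq=\Omega_q(n^{q+1})$ w.h.p.; a row-sum bound gives the same $n^{q+1}$. So $w^\top Dw\le\|D\|_2\|w\|^2$ only yields $\pE[X]\lesssim (n^{q+1}/p)^{1/\ell}$, which loses a factor $n^{1/(2\ell)}$.

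The paper instead evaluates the form directly on $w=z\otimes z$. The nonzero entries of $D$ sit at $((J,J),(K,K))$ and $((J,K),(K,J))$, each equal to $d_{J,K}$. Hence $w^\top Dw$ is, up to a factor $2$, $\sum_{J,K}d_{J,K}z_J^2z_K^2$. Since $z_Jz_K=(z_Jz_K)^2$ and $d_{\max}-d_{J,K}\ge 0$, SoS gives $w^\top Dw\lesssim d_{\max}(\sum_J z_J)^2\le d_{\max}X^{2q}/(q!)^2$ (Lemma~\ref{lem:all-distinct-term-intermediate-1-bound}). Here $d_{\max}=\max_{J,K}d_{J,K}$ is a scalar maximum, handled by scalar Bernstein plus a union bound (Proposition~\ref{prop:high-prob-bound-diagonal}). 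The value $n+\sqrt{n\log n/p}+\log n/p$ you quote is this $d_{\max}$, not an operator norm. Booleanity is what lets it stand in for $\|D\|_2$.

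\textbf{Your Regime~II parameters do not give the stated bound, and this is not bookkeeping.} With a range parameter $L\sim n^q\log n/p$, Bernstein's range term is of order $n^q(\log n)^2/p$. That contributes $n^{q/\ell}(\log n)^{2/\ell}/p^{2/\ell}$ to $\pE[X]$, which up to logarithms is the chaos bound's second term, so Regime~II would gain nothing over Regime~I. Your alternative, a $(\log n)^{O(1)}/p^2$ term in $\|M\|_2$, would give the wrong power $p^{-3/\ell}$.

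The missing ingredient is this. Each summand $M_i=A_i\otimes A_i-D_i$ has $\|A_i\otimes A_i\|_2=\|A_i\|_2^2$, and $A_i$ is the flattening of the $2q$-uniform link of $i$. The even-arity bound (Proposition~\ref{prop:flattening-random-matrix-bounds}) therefore gives $\|A_i\|_2^2\lesssim n^q+\log n/p$, a sum rather than a product. Because this holds only w.h.p., the paper bounds the $\psi_1$-norm of $\|M_i\|_2$ and uses Koltchinskii's Bernstein inequality for summands with sub-exponential norms. Together with $\nu\le n^{2q+1}$ this gives
$\|M\|_2\lesssim n^{q+1/2}\sqrt{\log n}+n^q(\log n)^2+(\log n)^3/p$.
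The middle term is absorbed into the first, and this is exactly what produces the Regime~II bound.

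A smaller slip in Regime~I: using $\alpha_t(h)\le\log n/\sqrt p$ would give $(\log n)^{9/2}$. Use $\alpha_t(h)\le\|h\|_\infty\le\sqrt{2/p}$ instead to get $(\log n)^{5/2}$.
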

We will prove this theorem separately for even and odd values of $\ell$, a common theme in hypergraph and Constraint satisfaction problems. In \prettyref{sec:sos-even-analysis} we prove the even arity case and \prettyref{sec:sos-odd-analysis} we prove the odd arity case. 

Finally to prove the result in the semirandom model $\cH_1\paren{n,\ell,p}$ we simply note that the monotone adversary can only add edges in the hypergraph instance. 
\begin{lemma}\label{lem:monotone-adversary}
    Let $H=(V,E)$ be a hypergraph instance and $H'=(V,E')$ be another hypergraph instance where $E \subseteq E'$. Let $\alpha_{\text{SoS}}(H)$ be the degree-$2\ell$ SoS optimum of $\hypermissdp(H)$. Then,
    \begin{align*}
        \alpha_{\text{SoS}}(H') \leq \alpha_{\text{SoS}}(H).
    \end{align*}
\end{lemma}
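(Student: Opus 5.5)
The proof is a direct feasibility-inclusion argument: every pseudo-expectation feasible for the degree-$2\ell$ relaxation of $\hypermissdp(H')$ is also feasible for the degree-$2\ell$ relaxation of $\hypermissdp(H)$. Taking maxima of the same objective $\pE\Brac{\sum_{v\in V}x_v}$ over a smaller feasible set then gives $\alpha_{\text{SoS}}(H') \leq \alpha_{\text{SoS}}(H)$.

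To carry this out, fix any degree-$2\ell$ pseudo-expectation $\pE$ feasible for $\hypermissdp(H')$. The axioms of $\hypermissdp(H')$ are the booleanity constraints $x_v^2=x_v$ for all $v\in V$, together with the edge constraints $\prod_{u\in e}x_u=0$ for all $e\in E'$. The axioms of $\hypermissdp(H)$ are the same booleanity constraints and the edge constraints indexed only by $e\in E\subseteq E'$. So the axiom set of $\hypermissdp(H)$ is a subset of that of $\hypermissdp(H')$.

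By the definition of a constrained pseudo-distribution, $\pE$ satisfying a system means that $\pE\Brac{q\prod_{i\in S}p_i}\geq 0$, or for equalities $\pE\Brac{f\cdot r}=0$ for all $r$ of appropriate degree, for every admissible choice of $S$ and $q$ drawn from the axioms. Every such condition for $\hypermissdp(H)$ is literally one of the conditions already imposed for $\hypermissdp(H')$. The normalization $\pE[1]=1$ and the positivity $\pE[f^2]\geq 0$ for $\deg f\leq \ell$ do not depend on the hypergraph. Hence $\pE$ is feasible for the degree-$2\ell$ relaxation of $\hypermissdp(H)$, with the same objective value.

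It follows that the feasible set of the relaxation for $H'$ is contained in the feasible set for $H$, and so the supremum over the former is at most the supremum over the latter. There is no real obstacle here. The only care needed is to note that the additive-error caveat of \prettyref{thm:sos_tractable_solve} is also monotone: $\varepsilon$-approximate satisfaction of a superset of constraints implies $\varepsilon$-approximate satisfaction of the subset. Combined with \prettyref{thm:main-theorem-formal} applied to the random base hypergraph $H\sim\cH_0(n,\ell,p)$, this immediately transfers the high-probability bounds to $\cH_1(n,\ell,p)$, since the adversary's output $H'$ always satisfies $E\subseteq E'$.
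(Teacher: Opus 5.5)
Your proof is correct and follows the same route as the paper: the axioms for $H$ are a subset of the axioms for $H'$, so every degree-$2\ell$ pseudo-expectation feasible for $H'$ is feasible for $H$, and the maximum of the same objective over the smaller feasible set cannot be larger. Your remark that approximate (additive-error) feasibility is also monotone under adding constraints is a harmless refinement that the paper leaves implicit.
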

\begin{proof}
    Feasible pseudo-expectations for $H'$ must satisfy all constraints for the hypergraph $H$, and additionally more constraints from $H'$. Therefore, we obtain above the desired conclusion as the feasible set of solution can only shrink.
\end{proof}
Since a hypergraph $H'\sim \cH_1\paren{n,\ell,p}$ is obtained by adding edges to a random $H \sim \cH_0\paren{n,\ell,p}$, using \prettyref{lem:monotone-adversary} it follows that the bound proven for a random $H$ continues to hold. Therefore the proof of \prettyref{thm:main-theorem-formal} also holds in semirandom hypergraph model $\cH_1\paren{n,\ell,p}$.

\section{Independence Number Certificates in Even Arity Hypergraphs}
\label{sec:sos-even-analysis}
We state the main result we show in this section, for $2q$-uniform random hypergraphs.

\begin{theorem}\label{thm:even-case-bounds}
	Given an instance of a random hypergraph $H=(V,E)$ generated from $\cH_0\paren{n,2q,p}$, and a degree-$4q$ pseudo-expectation $\pE$ for the SoS relaxation of \hypermissdp, it follows that with high probability (over the randomness of the input) for a constant $c_{2q}$ we have,
	\begin{align*}
		\sum_{v\in V}\pE\Brac{x_v} \leq c_{2q} \cdot \paren{\frac{\sqrt{n}}{p^{1/2q}} + \frac{\log^{1/2q}(n)}{p^{1/q}}}.
	\end{align*}
\end{theorem}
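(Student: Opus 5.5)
Following the technical overview for $\ell=4$, we bound $\pE[X^{2q}]$ with $X=\sum_v x_v$ and then extract $\pE[X]$ via \prettyref{fact:pe_holder}. The first step expands $X^{2q}=\sum_{(v_1,\dots,v_{2q})\in V^{2q}}x_{v_1}\cdots x_{v_{2q}}$ and splits it into tuples with a repeated vertex and all-distinct tuples. By booleanity, a tuple with $j<2q$ distinct vertices collapses to a degree-$j$ monomial, so the repeated part is at most $C_{2q}\sum_{j<2q}X^{j}$. Under booleanity this is a nonnegative combination of products of $x$'s, hence SoS-certifiably at most $C'_{2q}X^{2q-1}$; this is \prettyref{prop:lower_term_bounds}. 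The all-distinct part is $(2q)!\,Y_{2q}$ with $Y_{2q}=\sum_{|\{v_1,\dots,v_{2q}\}|=2q,\ v_1<\dots<v_{2q}}\prod x_{v_i}$.

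Next, index by $\cI=\binom{V}{q}$ and define the flattening $B\in\bbR^{\cI\times\cI}$ by $B_{S,T}=h_{S\cup T}$ when $S\cap T=\emptyset$, and $0$ otherwise, where $h_e$ is the $p$-biased character of \prettyref{sec:p-biased-character}. Set $z_S=\prod_{v\in S}x_v$. The edge constraints make $\prod_{v\in e}x_v=0$ for $e\in E$, and for $e\notin E$ we have $h_e=\sqrt{p/(1-p)}$. Each $2q$-set arises $\binom{2q}{q}$ times as an ordered pair $(S,T)$. Together these give the SoS identity
\[
Y_{2q}=\binom{2q}{q}^{-1}\sqrt{\tfrac{1-p}{p}}\;z^\top Bz .
\]
Applying \prettyref{fact:operator_sos_bound} gives $z^\top Bz\le\|B\|_2\|z\|_2^2$. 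Booleanity gives $\|z\|_2^2=\sum_S z_S^2\le X^q$, and $1-p\le1$. Hence $\hypermissdp\vdash X^{2q}\le c\,(\|B\|_2/\sqrt p)\,X^{q}+C'X^{2q-1}$ at degree $4q$. Take pseudo-expectations and set $t=\pE[X^{2q}]^{1/2q}$. \prettyref{fact:pe_holder} gives $\pE[X^j]\le t^j$, so $t^{2q}\le c(\|B\|_2/\sqrt p)t^q+C't^{2q-1}$. Solving this yields $\pE[X]\le t\lesssim (\|B\|_2/\sqrt p)^{1/q}+C'$.

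The main obstacle is the high-probability estimate $\|B\|_2\lesssim_q n^{q/2}+\sqrt{\log n/p}$; this is \prettyref{prop:flattening-random-matrix-bounds}. Substituting it gives $(n^{q/2}/\sqrt p)^{1/q}=\sqrt n/p^{1/2q}$ and $(\sqrt{\log n}/p)^{1/q}=(\log n)^{1/2q}/p^{1/q}$, which is the claimed bound. The difficulty is that the entries of $B$ are not independent: $h_e$ appears in $\binom{2q}{q}$ positions, once for each split of $e$. I would fix a global order $\prec$ on $V$ and decompose $B=\sum_{\pi}B_\pi$ over the $\binom{2q}{q}$ interleaving patterns of $S$ and $T$ in the sorted order of $S\cup T$. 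Within a fixed pattern each $e$ occupies one unordered position, so $B_\pi+B_\pi^\top$ is a symmetric matrix with independent centered entries (up to symmetry) of variance at most $1$ and magnitude at most $1/\sqrt{p(1-p)}$. Each row has at most $n^q$ nonzero entries, so the row variance is $\sigma^2\le n^q$. A log-free sharp bound such as Bandeira–van Handel or the $r=1$ case of \prettyref{thm:moment-chaos-theorem} then gives $\|B_\pi\|\lesssim\sigma+\max|h|\sqrt{\log N}\lesssim n^{q/2}+\sqrt{\log n/p}$ w.h.p. The triangle inequality over the $O_q(1)$ patterns finishes the proof. If that bound is awkward to import, I would instead use the moment version of \prettyref{thm:moment-chaos-theorem} with $t\asymp\log n$ plus Markov, checking that $\sigma(\cA)\lesssim n^{q/2}$ and $\nu(\cA)\le1/\sqrt p$ via \prettyref{prop:bound_chaos_parameters}.
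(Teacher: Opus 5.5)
Your proposal is correct and follows the paper's proof essentially step for step. It uses the same lower-order bound (\prettyref{prop:lower_term_bounds}), the same flattening $B^{(2q)}$ with the edge-constraint identity and \prettyref{fact:operator_sos_bound}, the same H\"older-based solution of $t^{2q}\lesssim (\norm{B^{(2q)}}_2/\sqrt{p})\,t^{q}+C\,t^{2q-1}$, and the same independent-block decomposition plus Bandeira--van Handel bound (\prettyref{thm:random-matrix-wigner-bounds}) with a union bound and triangle inequality for \prettyref{prop:flattening-random-matrix-bounds}.

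One caveat: your alternative via the $r=1$ case of \prettyref{thm:moment-chaos-theorem} does not give $\sigma+\max\abs{h}\sqrt{\log N}$. There $\nu(\cA)=O_q(1)$, and the factor $p^{-1/2}$ enters through $\alpha(h)$ multiplied by $(\log n)^{2}$. That route would therefore only yield a weaker $(\log n)^{2/q}/p^{1/q}$ second term, so stick with the BvH16 bound as the paper does.
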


\subsection{Sum-of-Squares Analysis for Random Hypergraphs}
We let $X\defeq \sum_{v\in V}x_v$. The proof proceeds by considering $X^{2q}$ and expanding it to a sum over ordered $2q$-tuples, followed by splitting into the \say{lower-order terms} (tuples with less than $2q$ distinct vertices) and the \say{all distinct term}. Then we proceed to prove SoS certifiable bounds of degree at most $2q$ on each of these, and use weak duality of SoS proofs to obtain a similar bounds for a degree-$4q$ pseudo-expectation moments. Finally, we  use well-known pseudo-expectation version of the usual inequalities from analysis to obtain the final bound.

We start with a multinomial expansion of $X^{2q}$ as sum over the ordered $2q$-tuples,
\begin{align*}
	X^{2q} = \paren{\sum_{v\in V}x_v}^{2q} = \sum_{\paren{v_1,v_2,\dots,v_{2q}}\in V^{2q}}x_{v_1}x_{v_2}\dots x_{v_{2q}}
\end{align*}
We partition these ordered tuples according to the number of distinct vertices in $\paren{v_1,v_2,\dots,v_{2q}}$ and we have the formal identity,
\begin{align} \label{eq:poly_expand}
	X^{2q} = \underbrace{\sum_{\substack{\paren{v_1,v_2,\dots,v_{2q}}\in V^{2q}\\\abs{\set{v_1,v_2,\dots,v_{2q}}}=2q}}x_{v_1}x_{v_2}\dots x_{v_{2q}}}_{\text{all-distinct term}} + \underbrace{\sum_{\substack{\paren{v_1,v_2,\dots,v_{2q}}\in V^{2q}\\\abs{\set{v_1,v_2,\dots,v_{2q}}}<2q}}x_{v_1}x_{v_2}\dots x_{v_{2q}}}_{\text{lower-order terms}}
\end{align}
Note that the above is just an exact identity in the polynomial ring and hence SoS at any-degree immediately proves it. Now, at first we will bound the lower-order terms in eqn.~\prettyref{eq:poly_expand}.

\begin{proposition}\label{prop:lower_term_bounds}
	Let $H=(V,E)$ be a $f$-uniform hypergraph, and consider the polynomial system \hypermissdp, then there exists a constant $C_f$ such that,
	\begin{align*}
		\hypermissdp \SoSp{x}{2f} \set{\sum_{\substack{\paren{v_1,v_2,\dots,v_{f}}\in V^{f}\\\abs{\set{v_1,v_2,\dots,v_{f}}}<f}}x_{v_1}x_{v_2}\dots x_{v_{f}} \leq C_f\paren{\sum_{v\in V}x_v}^{f-1}}.
	\end{align*}
\end{proposition}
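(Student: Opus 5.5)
We group the ordered $f$-tuples with repetitions according to the set partition of the index positions $[f]$ they induce. For a tuple $(v_1,\dots,v_f)$ with fewer than $f$ distinct entries, let $\pi$ be the partition of $[f]$ into classes of equal entries; then $\pi$ has $s\le f-1$ blocks. The lower-order sum splits as
\begin{align*}
\sum_{\pi:\,|\pi|\le f-1}\;\sum_{\substack{(u_1,\dots,u_s)\in V^s\\ \text{distinct}}} \prod_{b=1}^{s} x_{u_b}^{m_b},
\end{align*}
where $m_b$ is the size of block $b$. The number of such partitions is at most the Bell number $B_f$, and $B_f$ depends only on $f$.

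\textbf{Step 1: reduce each block to a single variable.} Using the booleanity axiom $x_v^2=x_v$ from \hypermissdp, we have $x_u^{m}=x_u$ as a polynomial identity modulo the axioms. The degree here is at most $f$, so the identity is derivable within degree $2f$. Each inner sum therefore equals
\begin{align*}
Z_s \defeq \sum_{(u_1,\dots,u_s)\ \text{distinct}} x_{u_1}\cdots x_{u_s}.
\end{align*}

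\textbf{Step 2: bound $Z_s$ by $X^{f-1}$.} The key observation is that $X^s-Z_s$ is a sum of monomials in the $x$'s with nonnegative coefficients. Each such monomial, after booleanity, is a product of distinct $x$'s and hence equals its own square, $x_S=x_S^2$. So SoS proves $Z_s\le X^s$ at degree about $2s\le 2f$.

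It remains to show $X^s\le X^{f-1}$ for $s\le f-1$. Write $X^{f-1}-X^s=X^s(X^{f-1-s}-1)$. This fails when $X=0$ is replaced by fractional values, so I would avoid it. Instead, I would bound each partition's contribution directly by $X^{f-1}$ via monomial dominance. Expanding $X^{f-1}$ in multilinearized form gives $\sum_{S} c_S x_S$ with $c_S\ge 1$ for every set $S$ with $|S|\le f-1$, since some ordered $(f-1)$-tuple covers $S$ exactly. Meanwhile $Z_s=s!\sum_{|S|=s}x_S$. Hence
\begin{align*}
X^{f-1}-\tfrac{1}{s!}Z_s
\end{align*}
is a nonnegative combination of multilinear monomials $x_S$, each of which is a square modulo booleanity. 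This is an SoS certificate of degree at most $2(f-1)\le 2f$.

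\textbf{Step 3: sum up.} Summing over partitions gives the claim with $C_f \le B_f\cdot (f-1)!$. The only subtle point, which I expect to be the main obstacle, is making precise that ``nonnegative combinations of multilinear monomials'' are SoS modulo the ideal at degree $\le 2f$. This follows from $x_S=\prod_{v\in S}x_v^2=\big(\prod_{v\in S}x_v\big)^2$, which has degree $2|S|\le 2f$, together with linear use of the booleanity equalities. No use of the edge constraints is needed.
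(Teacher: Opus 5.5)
Your argument is correct, but it takes a different route from the paper at the key step. Its first half is the paper's decomposition into the terms $T_d$: you group tuples by the partition of positions they induce and collapse each block via $x_v^2=x_v$, so the lower-order sum becomes $\sum_{s<f}\St(f,s)\,s!\sum_{|S|=s}x_S$.

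The two proofs differ in how $\sum_{|S|=s}x_S$ is compared with $X^{f-1}$. The paper passes through $X^s$. It shows $s!\sum_{|S|=s}x_S\le X^s$, hence $T_s\le \St(f,s)X^s$. It then proves the monotonicity $X^s\le X^{f-1}$ by telescoping, $X^{t+1}-X^t=\sum_{W\in V^t}x_W(X-1)$. Each $x_W(X-1)$ is, modulo booleanity, a nonnegative combination of multilinear monomials, because $x_WX$ contains $x_W$ itself at least once.

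So the step you set aside is SoS-derivable after all. Your concern about fractional $X$ applies only to arguments that treat $X$ as a scalar, and booleanity is exactly what rescues it. (In the paper's telescoping, the number of $v$ with $x_Wx_v=x_W$ is really the number of distinct entries of $W$, not $t$. This is harmless, since it is at least $1$.)

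Your coefficient comparison does both steps at once and avoids that per-tuple counting. Every nonempty $S$ with $|S|\le f-1$ occurs in the multilinearization of $X^{f-1}$ with coefficient $c_S=|S|!\,\St(f-1,|S|)\ge 1$. The cost is a larger constant, $\sum_{s<f}s!\,\St(f,s)$ rather than the paper's $\sum_{s<f}\St(f,s)$. Using $c_S\ge |S|!$ instead of $c_S\ge 1$ recovers the paper's constant.

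Both proofs rest on the fact you isolate at the end: a nonnegative combination of multilinear monomials is a sum of squares modulo $x_v^2=x_v$ within degree $2f$.
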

\begin{proof}
	We start by simplifying the lower-order terms. For each $1 \leq d \leq f-1$ define,
	\begin{align*}
		T_d=\sum_{\substack{\paren{v_1,v_2,\dots,v_{f}}\in V^{f}\\\abs{\set{v_1,v_2,\dots,v_{f}}}=d}}x_{v_1}x_{v_2}\dots x_{v_{f}}
	\end{align*}
	and our new goal is to establish that $T_d \leq C_{f,d} \cdot X^{f-1}$ for some constant $C_{f,d}>0$. Towards this we consider $X^d$ and expanding again into all-distinct terms and terms with repetition,
	\begin{align*}
		X^d = \paren{\sum_{v\in V}x_v}^d= \sum_{\substack{\paren{v_1,v_2,\dots,v_{d}}\in V^{d}\\\abs{\set{v_1,v_2,\dots,v_{d}}}=d}}x_{v_1}x_{v_2}\dots x_{v_{d}} + \sum_{\substack{\paren{v_1,v_2,\dots,v_{d}}\in V^{d}\\\abs{\set{v_1,v_2,\dots,v_{d}}}<d}}x_{v_1}x_{v_2}\dots x_{v_{d}}
	\end{align*}
	Rearranging the terms above and using the constraint (\ref{eq:indicator_constraint_mis}) we obtain that,
	\begin{align*}
		X^d - \sum_{\substack{\paren{v_1,v_2,\dots,v_{d}}\in V^{d}\\\abs{\set{v_1,v_2,\dots,v_{d}}}=d}}x_{v_1}x_{v_2}\dots x_{v_{d}} &= \sum_{\substack{\paren{v_1,v_2,\dots,v_{d}}\in V^{d}\\\abs{\set{v_1,v_2,\dots,v_{d}}}<d}}x_{v_1}x_{v_2}\dots x_{v_{d}}\\
		&= \sum_{\substack{\paren{v_1,v_2,\dots,v_{d}}\in V^{d}\\\abs{\set{v_1,v_2,\dots,v_{d}}}<d}}\paren{x_{v_1}x_{v_2}\dots x_{v_{d}}}^2
	\end{align*}
	and since the polynomial ${x_{v_1}x_{v_2}\dots x_{v_d}}$ has degree $d$ it follows that,
	\begin{align*}
		\hypermissdp \SoSp{x}{2d} \set{X^d \geq \sum_{\substack{\paren{v_1,v_2,\dots,v_{d}}\in V^{d}\\\abs{\set{v_1,v_2,\dots,v_{d}}}=d}}x_{v_1}x_{v_2}\dots x_{v_{d}}}
	\end{align*}
	Now note that if we count unordered $d$-tuples we have the equality,
	\begin{align*}
		\sum_{\substack{\paren{v_1,v_2,\dots,v_{d}}\in V^{d}\\\abs{\set{v_1,v_2,\dots,v_{d}}}=d}}x_{v_1}x_{v_2}\dots x_{v_{d}} = d! \sum_{\substack{W \subset V\\\abs{W}=d}}\prod_{v\in W}x_v        
	\end{align*}
	Since it is an equality, if holds in SoS of any degree and it follows that,
	\begin{align}\label{eq:bound_t_d_term_first_half}
		\hypermissdp \SoSp{x}{2d} \set{X^d \geq d! \sum_{\substack{W \subset V\\\abs{W}=d}}\prod_{v\in W}x_v}
	\end{align}
	On the other hand, for $T_d$ we count ordered $f$-tuples with exactly $d$ distinct entries. It is well-known that for a fixed unordered set $W \subset V$ of size $\abs{W}=d$, the number of ordered $f$-tuples whose set of distinct entries equal $W$ is classically $d!\St(f,d)$ where $\St$ denotes the Stirling number of second kind. Hence the following polynomial identity holds for $T_d$,
	\begin{align*}
		\sum_{\substack{\paren{v_1,v_2,\dots,v_{f}}\in V^{f}\\\abs{\set{v_1,v_2,\dots,v_{f}}}=d}}x_{v_1}x_{v_2}\dots x_{v_{f}} = d!\,\St(f,d) \sum_{\substack{W\subset V\\\abs{W}=d}}\prod_{v\in W}x_v.
	\end{align*}
	Since it is an equality SoS also proves it, and using eqn.~\prettyref{eq:bound_t_d_term_first_half} it follows that for any $d\in [f-1]$,
	\begin{align*}
		\hypermissdp \SoSp{x}{2f} \set{T_d \leq \St(f,d) \cdot X^d}
	\end{align*}
	\begin{claim}\label{claim:monotonicity_bounds}
		For each integer $1 \leq d \leq f-1$, for the polynomial system \hypermissdp, it holds that,
		\begin{align*}
			\hypermissdp \SoSp{x}{2f} \set{{X}^d \leq {X}^{f-1}}.
		\end{align*}
	\end{claim} 
	\begin{proof}
		The proof starts by noting that for any $t\in [d,f-2]$ we can write a telescoping sum,
		\begin{align*}
			X^{f-1}-X^d  = \sum_{t=d}^{f-2}\paren{X^{t+1}-X^t}
		\end{align*}
		By the addition inference rule of SoS proofs (see \prettyref{eq:additiona-inference-sos}) it suffices to show that for every integer value of $t\in [d,f-2]$ the following holds,
		\begin{align*}
			\hypermissdp \SoSp{x}{2f} \set{X^{t+1} - X^t \geq 0}
		\end{align*}
		We focus on a fixed $t \in [d,f-2]$ and expand the polynomial $X^{t+1}-X^t$ and we have,
		\begin{align*}
			X^{t+1}-X^t = \sum_{\paren{w_1,w_2,\dots,w_{t+1}}\in V^{t+1}}x_{w_1}x_{w_2}\dots x_{w_{t+1}} - \sum_{\paren{w_1,w_2,\dots,w_t}\in V^t} x_{w_1}x_{w_2}\dots x_{w_t}
		\end{align*}
		We let an ordered $t$-tuple $\paren{w_1,w_2,\dots,w_t}$ denoted $W$ and let $x_W=\prod_{w\in W}x_w$ and we have,
		\begin{align*}
			X^{t+1}-X^t &= \sum_{\paren{w_1,w_2,\dots,w_t}\in V^t} \sum_{v \in V}x_{w_1}x_{w_2}\dots x_{w_t}x_v - \sum_{\paren{w_1,w_2,\dots,w_t}\in V^t}x_{w_1}x_{w_2}\dots x_{w_t}\\
			&=\sum_{\paren{w_1,w_2,\dots,w_t}\in V^t}x_{w_1}x_{w_2}\dots x_{w_t}\paren{\sum_{v\in V}x_v-1}= \sum_{W\in V^t}x_W \paren{X-1}
		\end{align*}
		It now remains to show that for each fixed ordered $t$-tuple $W$ the polynomial $x_W\paren{X-1}$ is SoS under \hypermissdp. Now in the expansion of $x_W\paren{X-1}$, first consider the terms where we have $v \in \set{w_1,w_2,\dots,w_t}$ and here we have $x_Wx_v=x_W$ using $x_v^2=x_v$, and we have exactly $t$ such $v$'s. Now if $v\notin \set{w_1,w_2,\dots,w_t}$ it is a distinct monomial of degree $t+1$ and hence,
		\begin{align*}
			x_W\paren{\sum_{v\in V}x_v} = t \cdot x_W  + \sum_{\substack{v\in V\\v\notin \set{w_1,w_2,\dots,w_t}}}x_W \cdot x_v
		\end{align*}
		Now using this we have that the term $x_W\paren{X-1}$ can be written as,
		\begin{align*}
			x_W\paren{X-1} &= \paren{t-1}x_W + \sum_{\substack{v\in V\\v\notin \set{w_1,w_2,\dots,w_t}}}x_Wx_v\\
			&= \paren{t-1}\paren{\prod_{w\in W}x_w^2} + \sum_{\substack{v\in V\\v\notin \set{w_1,w_2,\dots,w_t}}}\paren{x_v\prod_{w\in W}x_w}^2
		\end{align*}
		and we are done noting that $t\geq 1$ and the degree of polynomial is at most $t+1\leq f-1$.
	\end{proof}
	Using \prettyref{claim:monotonicity_bounds} and the addition inference rule of SoS proofs it follows,
	\begin{align*}
		\hypermissdp \SoSp{x}{2f} \set{\sum_{d=1}^{f-1}T_d \leq \paren{\sum_{d=1}^{f-1}\St(f,d)}X^{f-1}}.
	\end{align*}
	Now letting $C_f\defeq \sum_{d<f}\St(f,d)$ we obtain the desired result.
\end{proof}

Next we show how to bound the higher-order all-distinct term in eqn.~\prettyref{eq:poly_expand}. Towards this, we make a definition which matricizes a $2q$-uniform hypergraph.

\begin{definition}[$p$-Biased Hypergraph Flattening Matrix]\label{def:signed_flattening_matrix}
	Given a $2q$-uniform hypergraph $H=(V,E)$, we define the $p$-biased Fourier hypergraph flattening matrix (using the $p$-biased Fourier character $h_e$ defined in \prettyref{sec:p-biased-character}) denoted $B^{(2q)}:\abs{\cI_q} \times \abs{\cI_q} \rightarrow\mathbb{R}$ where $\cI_q \defeq \binom{V}{q}$ as,
	\begin{align*}
		B_{I,J}^{(2q)}=
		\begin{cases}
			h_{I \cup J} \quad &\text{if }I,J \in \cI, \abs{I \cup J}=2q\\
			0 \quad &\text{otherwise}
		\end{cases}
	\end{align*}
\end{definition}
We next show useful bounds on the spectral norm of matrix $B^{(2q)}$ for a random hypergraph.

\begin{proposition}\label{prop:flattening-random-matrix-bounds}
	For a random $2q$-uniform hypergraph $H=(V,E)$ where $H\sim \cH_0\paren{n,2q,p}$ it holds with high probability (over the randomness of input) for a constant $\kappa_{2q}$ such that,
	\begin{align*}
		\norm{B^{(2q)}}_2 \leq \kappa_{2q} \cdot \paren{n^{q/2} + \sqrt{\frac{\log n}{p}}}.
	\end{align*}
\end{proposition}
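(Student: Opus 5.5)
The plan is to write $B^{(2q)}$ as a sum of a bounded number of matrices with independent entries, bound each one by a sharp log-free inequality for random matrices with independent entries, and finish with the triangle inequality.

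\emph{Step 1: the dependency structure.} Each entry $B_{I,J}$ with $|I\cup J|=2q$ equals $h_{I\cup J}$. A fixed hyperedge $e$ therefore appears in every entry $(I,J)$ with $I\sqcup J=e$, that is, in $\binom{2q}{q}$ positions. These positions form $\frac12\binom{2q}{q}$ symmetric pairs $\{(I,J),(J,I)\}$, and this repetition is the only dependence.

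\emph{Step 2: splitting into independent pieces.} Fix the global ordering $\prec$ on $V$. For a subset $P\subseteq[2q]$ of size $q$ with $1\in P$, define $B_P$ to keep only those entries $(I,J)$, and their transposes $(J,I)$, for which $I$ occupies exactly the positions $P$ in the $\prec$-sorted order of $e=I\cup J$. Every $e$ has a unique sorted order, and $P\ni 1$ fixes which side contains the minimum. Hence each $e$ contributes exactly one symmetric pair of entries to exactly one $B_P$. We get $B^{(2q)}=\sum_P B_P$, a sum of $\frac12\binom{2q}{q}=O_q(1)$ terms. Each $B_P$ is a symmetric matrix whose upper-triangular entries are independent: they are distinct $h_e$'s or zeros.

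\emph{Step 3: bounding each piece.} Apply to each $B_P$ the sharp bound for symmetric matrices with independent centered entries (the Bandeira--van Handel bound, or equivalently the $r=1$ case of \prettyref{thm:moment-chaos-theorem} via \cite{BBvH23}):
\begin{align*}
\E\norm{B_P}_2 \lesssim \sigma + \sigma_*\sqrt{\log N}.
\end{align*}
Here $N=\binom{n}{q}$ is the dimension, so $\log N\lesssim_q \log n$. The parameters are as follows.
\begin{itemize}
\item $\sigma^2$ is the maximum row variance. By \prettyref{fact:p-biased-properties} each nonzero entry has variance $\E h_e^2=1$, and a row has at most $\binom{n}{q}\lesssim n^q$ nonzero entries, so $\sigma\lesssim n^{q/2}$.
\item $\sigma_*$ is the maximum entry size. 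We have $\sigma_*\le\max|h_e|\le 1/\sqrt{p(1-p)}\lesssim 1/\sqrt p$ for $p\le 1/2$.
\end{itemize}
This yields $\E\norm{B_P}_2\lesssim n^{q/2}+\sqrt{\log n/p}$.

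\emph{Step 4: high probability.} Upgrade the expectation bound with Talagrand's concentration inequality for convex Lipschitz functions of bounded independent variables. Since $\norm{\cdot}_2$ is $1$-Lipschitz in Frobenius norm, the deviation is $O(\sigma_*\sqrt{\log n})=O(\sqrt{\log n/p})$ with probability $1-n^{-c}$. Alternatively, one can use the moment form in \prettyref{thm:moment-chaos-theorem} with $t\asymp\log n$ and apply Markov's inequality. A union bound over the $O_q(1)$ pieces and the triangle inequality then give $\norm{B^{(2q)}}_2\le\kappa_{2q}\bigl(n^{q/2}+\sqrt{\log n/p}\bigr)$.

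\emph{Main obstacle.} Getting the $\sigma_*$ term right is the delicate part. The variant based on $\alpha(h)$ in \prettyref{thm:moment-chaos-theorem} gives $\alpha_{\log n}(h)\approx (1/p)^{1/2-1/\log n}$ times extra logarithmic factors, whereas we need exactly $\sqrt{\log n}\,\sigma_*$ with no extra logs. So I would invoke the Bandeira--van Handel inequality directly rather than the chaos theorem, and check that its $\sigma_*\sqrt{\log N}$ form applies to bounded, non-symmetric-distributed entries; this holds after a standard symmetrization step.
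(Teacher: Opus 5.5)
Your proposal is correct and is essentially the paper's proof. Both arguments split $B^{(2q)}$ via the global vertex ordering into $O_q(1)$ matrices with independent entries, bound each with the Bandeira--van Handel inequality using $\sigma\lesssim n^{q/2}$ and $\sigma_*\le 1/\sqrt{p(1-p)}\lesssim 1/\sqrt p$, and finish with a union bound and the triangle inequality; this needs $p\le 1/2$, which the paper also tacitly requires.

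The differences are cosmetic. Your $\tfrac12\binom{2q}{q}$ pieces are symmetric, so the symmetric form of the inequality applies directly, whereas the paper's ordered-pair pieces rely on the non-symmetric extension. Your expectation-plus-Talagrand step is exactly how the tail bound the paper quotes (Corollary 3.12 of Bandeira--van Handel) is derived.
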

\noindent
We refer to \prettyref{sec:deviation_matrix_bounds} for a formal proof of \prettyref{prop:flattening-random-matrix-bounds}.

\begin{lemma}\label{lem:even_higher_order_term_bound}
	For a $2q$-uniform random hypergraph $H=(V,E)$ where $H \sim \cH_0(n,2q,p)$, and the polynomial system \hypermissdp, it follows that with high probability (over the randomness of the input),
	\begin{align*}
		\hypermissdp \SoSp{x}{2q} \set{\sum_{\substack{\paren{v_1,v_2,\dots,v_{2q}}\in V^{2q}\\\abs{\set{v_1,v_2,\dots,v_{2q}}}=2q}}x_{v_1}x_{v_2}\dots x_{v_{2q}} \leq \gamma_{2q}\cdot \paren{\frac{n^{q/2}}{\sqrt{p}} + \frac{\sqrt{\log n}}{p}} \paren{\sum_{v\in V}x_v}^q},
	\end{align*}
	for a constant $\gamma_{2q}\defeq \kappa_{2q}\cdot q!$ where $\kappa_{2q}$ as defined in \prettyref{prop:flattening-random-matrix-bounds}.
\end{lemma}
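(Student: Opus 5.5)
The plan follows the $\ell=4$ computation in the overview. We write the all-distinct term as a quadratic form in the flattening matrix $B^{(2q)}$ of \prettyref{def:signed_flattening_matrix}, bound it with the SoS operator-norm fact \prettyref{fact:operator_sos_bound}, and then insert the high-probability bound of \prettyref{prop:flattening-random-matrix-bounds}.

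\textbf{Step 1: pass to unordered sets.} First,
\begin{align*}
\sum_{\substack{(v_1,\dots,v_{2q})\in V^{2q}\\ \text{all distinct}}}x_{v_1}\cdots x_{v_{2q}} = (2q)!\sum_{W\in\binom{V}{2q}}x_W ,
\end{align*}
where $x_W=\prod_{v\in W}x_v$. The edge constraint \prettyref{eq:edge_constraint_mis} gives $x_W=0$ whenever $W\in E$, so only non-edges survive. For a non-edge, $h_W=\sqrt{p/(1-p)}$, so $x_W=\sqrt{(1-p)/p}\,h_W x_W$ holds as a polynomial identity modulo the axioms. For an edge both sides vanish. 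Hence the identity holds for every $W$, and
\begin{align*}
\sum_W x_W=\sqrt{\tfrac{1-p}{p}}\sum_W h_W x_W .
\end{align*}

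\textbf{Step 2: rewrite as a quadratic form.} Set $z\in\bbR^{\cI_q}$ with $z_I=x_I$. Each $2q$-set $W$ arises as $I\cup J$ with $I,J\in\cI_q$ disjoint in exactly $\binom{2q}{q}$ ordered ways. Since $z_Iz_J=x_{I\cup J}$ when $I\cap J=\emptyset$, we get
\begin{align*}
z^\top B^{(2q)}z=\binom{2q}{q}\sum_W h_W x_W .
\end{align*}
This is an exact polynomial identity of degree $2q$.

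\textbf{Step 3: operator-norm bound and norm of $z$.} By \prettyref{fact:operator_sos_bound} there is a degree-$2q$ SoS proof of $z^\top Bz\le \norm{B}_2\norm{z}_2^2$. This is a degree-2 proof in the variables $z$; substituting $z_I=x_I$ raises it to degree $2q$. Next, $\norm{z}_2^2=\sum_I x_I^2=\sum_I x_I$ by booleanity. To compare this with $X^q$, argue as in \prettyref{eq:bound_t_d_term_first_half}: $X^q$ equals $q!\sum_I x_I$ plus a sum of repeated monomials, each of which is a square modulo booleanity. This gives $\hypermissdp\SoSp{x}{2q}\set{\norm{z}_2^2\le X^q/q!}$.

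\textbf{Step 4: combine.} Multiply the inequality in Step 3 by the nonnegative constant $\norm{B}_2$ and chain the steps, using $1-p\le1$. We obtain
\begin{align*}
\text{all-distinct term}\le \frac{(2q)!}{\binom{2q}{q}\,q!}\cdot\frac{\norm{B}_2}{\sqrt p}\,X^q = \frac{\norm{B}_2}{\sqrt p}\,X^q ,
\end{align*}
since $(2q)!/\binom{2q}{q}=(q!)^2$. Finally, plug in $\norm{B}_2\le\kappa_{2q}(n^{q/2}+\sqrt{\log n/p})$ with high probability from \prettyref{prop:flattening-random-matrix-bounds}. This gives the claimed bound, and the constant $\gamma_{2q}=\kappa_{2q}q!$ is comfortably larger than what is needed. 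The matrix $B$ depends only on the instance, so after conditioning on the high-probability event it is a fixed coefficient.

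The main obstacle is bookkeeping rather than ideas. We must check that the sign/indicator substitution in Step 1 is a valid SoS identity; it is, because every term vanishes on edges. We must also keep the degree at $2q$ throughout, and the substitution in Step 3 uses only degree-$q$ polynomials $z_I$. The genuinely hard input, the spectral bound on $B^{(2q)}$, is deferred to \prettyref{prop:flattening-random-matrix-bounds}.
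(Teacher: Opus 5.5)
Your proof is correct and follows the paper's argument essentially step for step. You rewrite the all-distinct term via the edge constraints as $(q!)^2\sqrt{(1-p)/p}\,z^\top B^{(2q)}z$, apply the SoS operator-norm bound, use $\norm{z}_2^2=\sum_I x_I\le X^q/q!$, and plug in \prettyref{prop:flattening-random-matrix-bounds}.

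There is one arithmetic slip in Step 4: $\frac{(2q)!}{\binom{2q}{q}\,q!}=q!$, not $1$. So the constant you actually obtain is exactly $\gamma_{2q}=\kappa_{2q}\,q!$, as in the paper, with no slack. The stated bound still holds.
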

\begin{proof}
	Recall that $X=\sum_{v\in V}x_v$, and for each ordered $q$-tuple we denote $I=\paren{i_1,i_2,\dots,i_q}$, and we denote $x_I = \prod_{i\in I}x_i$ and  we have that,
	\begin{align*}
		\sum_{\substack{\paren{i_1,i_2,\dots,i_{2q}}\in V^{2q}\\\abs{\set{i_1,i_2,\dots,i_{2q}}}=2q}}x_{i_1}x_{i_2}\dots x_{i_{2q}} = \paren{q!}^2 \cdot \sum_{\substack{I,J\in \cI\\\abs{I\cup J}=2q}}x_Ix_J
	\end{align*}
	Now using the constraint (\ref{eq:edge_constraint_mis}) of \hypermissdp, if $e=I \cup J \in E$ we have $x_Ix_J=0$, hence we can write the higher-order term (weighted with matrix $B^{(2q)}$) as simply a sum over the non-edges,
	\begin{align*}
		\sum_{\substack{I,J\in \cI\\\abs{I\cup J}=2q}}B^{(2q)}_{I,J}\cdot x_Ix_J &= \sum_{\substack{I,J\in \cI\\\abs{I\cup J}=2q\\I\cup J \notin E}}B^{(2q)}_{I,J} \cdot x_Ix_J + \sum_{\substack{I,J\in \cI\\\abs{I\cup J}=2q\\I\cup J \in E}}B^{(2q)}_{I,J} \cdot x_Ix_J = \sum_{\substack{I,J\in \cI\\\abs{I\cup J}=2q\\I\cup J \notin E}}B^{(2q)}_{I,J} \cdot x_Ix_J\\
	\end{align*}
	Again note that the constraint (\ref{eq:edge_constraint_mis}) is an equality constraint and hence SoS at any degree proves it. We define the vector ${z}\in \bbR^{\abs{\cI}}$ as $z(I)=x_I,\forall I \in \cI$ and observe that the quadratic form,
	\begin{align*}
		{z}^{\top}B^{(2q)}{z} &= \sum_{I,J \in \cI}B^{(2q)}_{I,J} \cdot x_Ix_J= \sum_{\substack{I,J\in \cI\\\abs{I\cup J}=2q\\I\cup J \notin E}}B^{(2q)}_{I,J} \cdot x_Ix_J = \Bigg(\sqrt{\frac{p}{1-p}}\Bigg)\sum_{\substack{I,J\in \cI\\\abs{I\cup J}=2q\\I\cup J \notin E}}x_Ix_J 
	\end{align*}
	where again we used the constraint (\ref{eq:edge_constraint_mis}) and hence we have the SoS provable polynomial identity,
	\begin{align*}
		\sum_{\substack{I,J\in \cI\\\abs{I\cup J}=2q}}x_Ix_J &= \Bigg(\sqrt{\frac{1-p}{p}}\,\Bigg)\cdot {z}^{\top}B^{(2q)}{z}\\
		&\leq \Biggl(\sqrt{\frac{1-p}{p}}\,\Biggr)\norm{B^{(2q)}}_2\norm{{z}}_2^2 && \paren{\text{Using \prettyref{fact:operator_sos_bound}}}\\
		&\leq \kappa_{2q}\cdot \paren{\frac{n^{q/2}}{\sqrt{p}} + \frac{\sqrt{\log n}}{p}} \norm{{z}}_2^2 && \paren{\text{Using \prettyref{prop:flattening-random-matrix-bounds}}}.
	\end{align*}
	Putting everything together it follows that with high probability (over the randomness of input),
	\begin{align*}
		\hypermissdp \SoSp{x}{2q} \set{\sum_{\substack{\paren{v_1,v_2,\dots,v_{2q}}\in V^{2q}\\\abs{\set{v_1,v_2,\dots,v_{2q}}}=2q}}x_{v_1}x_{v_2}\dots x_{v_{2q}} \leq \paren{q!}^2 \cdot \kappa_{2q} \paren{\frac{n^{q/2}}{\sqrt{p}} + \frac{\sqrt{\log n}}{p}} \paren{\sum_{I\in \cI}x_I^2}}
	\end{align*}
	Now finally we use the SoS degree-$2q$ provable inequality in eqn.~\prettyref{eq:bound_t_d_term_first_half} and we have that,
	\begin{align*}
		\sum_{I \in \cI}x_I \leq \frac{1}{q!}\sum_{\paren{v_1,\dots,v_q} \in V^q}x_{v_1}\dots x_{v_q} = \frac{X^q}{q!}.
	\end{align*}
	Now using the axioms of \hypermissdp, it follows that $x_I=x_I^2$ and hence it follows that,
	\begin{align*}
		\hypermissdp \SoSp{x}{2q} \set{\sum_{\substack{\paren{v_1,v_2,\dots,v_{2q}}\in V^{2q}\\\abs{\set{v_1,v_2,\dots,v_{2q}}}=2q}}x_{v_1}x_{v_2}\dots x_{v_{2q}} \leq \paren{{\kappa_{2q}}\cdot {q!}} \paren{\frac{n^{q/2}}{\sqrt{p}} + \frac{\sqrt{\log n}}{p}} \paren{\sum_{v\in V}x_v}^q}.
	\end{align*}
\end{proof}

\noindent
Now we have all the ingredients to complete the proof of \prettyref{thm:even-case-bounds}.

\begin{proof}[Proof of \prettyref{thm:even-case-bounds}]
	Using the bounds from \prettyref{prop:lower_term_bounds} by setting $f=2q$ and using \prettyref{lem:even_higher_order_term_bound} we have that,
	\begin{align*}
		\hypermissdp \SoSp{x}{4q} \set{X^{2q} \leq \gamma_{2q}\cdot  X^q + C_q \cdot X^{2q-1}}
	\end{align*}
	By using \prettyref{cor:sos_soundness} it follows that for a degree-$4q$ pseudo-expectation $\pE$ it holds that,
	\begin{align}\label{eq:pe_solve_eqn}
		\pE[X^{2q}] \leq \gamma_{2q} \cdot \paren{\frac{n^{q/2}}{\sqrt{p}} + \sqrt{\frac{\log n}{p^2}}} \pE[X^q] + C_q \cdot \pE\left[X^{2q-1}\right].
	\end{align}
	Now let $M=\pE\left[X^{2q}\right]$ and let $u=M^{1/2q}$ so that $\pE[X^{2q}]=u^{2q}$.
	Now we use \prettyref{fact:pe_holder} by letting $f(x)=\sum_{v\in V}x_v,g(x)=1,t=2q$ (note the degree is $dt=2q$ as $f$ has degree $1$) and we obtain,
	\begin{align*}
		\pE[X^{2q-1}] \leq \paren{\pE[X^{2q}]}^{(2q-1)/2q} = u^{2q-1}
	\end{align*}
	Similarly we let $f(x)=1,g(x)=\paren{\sum_{v\in V}x_v}^q,t=2$ in \prettyref{fact:pe_holder} (note that the degree is $dt=2q$ as $g$ has degree $q$) and hence we have that,
	\begin{align*}
		\pE[X^q] \leq \paren{\pE\left[X^{2q}\right]}^{1/2} = u^q
	\end{align*}
	Now using the above bounds in eqn.~\prettyref{eq:pe_solve_eqn}, we obtain that,
	\begin{align*}
		u^{2q} \leq \gamma_{2q} \cdot \paren{\frac{n^{q/2}}{\sqrt{p}} + \sqrt{\frac{\log n}{p^2}}} \cdot u^q + C_q \cdot u^{2q-1}
	\end{align*}
	Now we consider two cases, where either $u \leq 2C_q$ and otherwise for $u \geq 2C_q$,
	\begin{align*}
		\frac{u^{2q}}{2} \leq \gamma_{2q} \cdot \paren{\frac{n^{q/2}}{\sqrt{p}} + \sqrt{\frac{\log n}{p^2}}} \cdot u^q \implies u \leq \paren{2\gamma_{2q} \cdot \paren{\frac{n^{q/2}}{\sqrt{p}} + \sqrt{\frac{\log n}{p^2}}}}^{1/q}
	\end{align*}
	Therefore using above, it follows that in either case we have,
	\begin{align*}
		\pE[X] \leq \max \set{2C_q, \paren{2\gamma_{2q}}^{1/q}\cdot \Biggl(\frac{\sqrt{n}}{p^{1/2q}} + \frac{\log^{1/2q}(n)}{p^{1/q}}\Biggr)} = c_{2q} \cdot \Biggl(\frac{\sqrt{n}}{p^{1/2q}} + \frac{\log^{1/2q}(n)}{p^{1/q}}\Biggr)
	\end{align*}
	for a large enough $n$ and by letting $c_{2q} = \paren{2\cdot C_q + \paren{2\cdot \gamma_{2q}}^{1/q}}$.
\end{proof}

\subsection{Bounding Spectral Norm for \texorpdfstring{$p$}{p}-Biased Hypergraph Flattened Matrix}
\label{sec:deviation_matrix_bounds}
For a $2q$-uniform hypergraph $H=(V,E)$, we recall the set $\cI=\set{I\subset V: \abs{I}=q}$ for the set of $q$-subsets of vertices, and we define $N=\abs{\cI}=\binom{n}{q}$ . We recall the $p$-biased flattening matrix from \prettyref{def:signed_flattening_matrix}, where for $I,J \in\cI$ the matrix is given as below, 
\begin{align*}
	B^{(2q)}_{I,J} = \begin{cases}
		+\sqrt{\frac{p}{1-p}} \quad\text{if $\abs{I \cup J}=2q$ and }I \cup J\notin E \vspace{1ex}\\
		-\sqrt{\frac{1-p}{p}}\quad\text{if $\abs{I \cup J}=2q$ and }I \cup J\in E \vspace{1ex}\\        \quad\,\,\,\,\,0\quad\quad\text{otherwise}
	\end{cases}
\end{align*}
It is easy to verify that $B^{(2q)}$ is centered as $\E\left[B^{(2q)}_{I,J}\right]=0$. One can also check that if $\abs{I\cup J}<2q$,then the corresponding matrix entry has $0$ variance and otherwise we can compute as,
\begin{align*}
	\Var\paren{B^{(2q)}_{I,J}} = \E\left[\paren{B^{(2q)}_{I,J}}^2\right] = \paren{1-p}\paren{\frac{p}{1-p}} + p \paren{\frac{1-p}{p}} = 1.
\end{align*}
We also note that the entries of the matrix $B^{(2q)}$ are bounded as,
\begin{align*}
	\abs{B^{(2q)}_{I,J}} \leq \max\set{\sqrt{\frac{p}{1-p}},\sqrt{\frac{1-p}{p}}} \leq \frac{1}{\sqrt{p(1-p)}}.
\end{align*}
We recall results in random matrix theory \cite{BvH16} that gives sharp non-asymptotic bounds for such matrices.
\begin{theorem}[Corollary 3.12, Remark 3.13 in \cite{BvH16}]\label{thm:random-matrix-wigner-bounds}
	Let $M$ be an $n \times n$ matrix whose entries $M_{ij}$ are independent random variables . Then for any $0 \leq \varepsilon \leq 1/2$ there exists a universal constant $c_{\varepsilon}$ such that,
	\begin{align*}
		\Pr{\norm{M} \geq \paren{1+\varepsilon}2\sqrt{2} \cdot \sigma + t} \leq n \cdot \exp\paren{-\frac{t^2}{c_{\varepsilon}\cdot \sigma^2_{\star}}},
	\end{align*}
	for any $t \geq 0$ where $\sigma = \max_{i}\sqrt{\sum_{j}\E\Brac{M^2_{ij}}}$ and $\sigma_{\star} = \max_{i,j}\norm{M_{ij}}_{\infty}$.
\end{theorem}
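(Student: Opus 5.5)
We do not intend to reprove the result from scratch; we will reduce it to the two standard ingredients behind it in \cite{BvH16}: a sharp bound on the expected norm, and a concentration inequality around that expectation. We assume, as in the intended application, that the entries are centered and almost surely bounded, so that $\sigma_\star<\infty$.

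\emph{Symmetrization.} The main theorem of \cite{BvH16} is stated for symmetric matrices with independent entries on and above the diagonal. We therefore first pass to a symmetric matrix. There are two natural choices.
\begin{itemize}
\item Use the Hermitian dilation $\widetilde M=\begin{bmatrix}0 & M\\ M^{\top} & 0\end{bmatrix}$. It is a $2n\times 2n$ symmetric matrix with independent entries above the diagonal, and it satisfies $\norm{\widetilde M}=\norm{M}$.
\item Split $M=\frac{1}{2}(M+M^{\top})+\frac{1}{2}(M-M^{\top})$. Each piece is (anti)symmetric, and we would apply the symmetric bound to each.
\end{itemize}
Under the dilation, the row-variance parameter of $\widetilde M$ is $\max\{\max_i\sqrt{\sum_j\E M_{ij}^2},\ \max_j\sqrt{\sum_i\E M_{ij}^2}\}$. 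The statement's $\sigma$ is the row maximum only. We expect the statement's factor $2\sqrt 2$, rather than $2$, to be exactly what accounts for this: either the two maxima are of the same order (as they are in our application, where the variance profile is symmetric), or the triangle-inequality split above costs a factor $\sqrt 2$. We will read $\sigma$ as the larger of the row and column maxima, which is harmless for our use.

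\emph{Expectation bound.} For a symmetric $X$ with independent centered entries, Theorem 1.1 of \cite{BvH16} gives
\begin{align*}
\E\norm{X}\le (1+\varepsilon)\,2\sigma + \frac{C}{\sqrt{\log(1+\varepsilon)}}\,\sigma_\star\sqrt{\log n}.
\end{align*}
We apply this to the dilation, or to each symmetric piece, and obtain $\E\norm{M}\le(1+\varepsilon)2\sqrt2\,\sigma + C_\varepsilon\sigma_\star\sqrt{\log n}$.

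\emph{Concentration.} The map $M\mapsto\norm{M}$ is convex and $1$-Lipschitz in Frobenius norm. The entries are independent and bounded, hence lie in intervals of length at most $2\sigma_\star$. Talagrand's concentration inequality for convex Lipschitz functions then gives
\begin{align*}
\Pr{\norm{M}\ge \E\norm{M}+t}\le C\exp\paren{-t^2/(c\,\sigma_\star^2)}.
\end{align*}
This step requires boundedness; for heavier-tailed entries we would instead truncate first.

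\emph{Combining.} The remaining step, and the main technical point, is absorbing the term $C_\varepsilon\sigma_\star\sqrt{\log n}$ into the tail. Substituting $t\mapsto t+C_\varepsilon\sigma_\star\sqrt{\log n}$ does not by itself produce the prefactor $n$: it yields a tail of the form $\exp(-(t+a\sqrt{\log n})^2/c)\le\exp(-t^2/c)$, which already implies the claim. The factor $n$ in the stated bound is what arises in Remark 3.13 of \cite{BvH16}, which bounds the norm via moments of order roughly $\log n$ and Markov's inequality, giving
\begin{align*}
\Pr{\norm{M}\ge(1+\varepsilon)2\sqrt2\,\sigma+t}\le n\,e^{-t^2/(c_\varepsilon\sigma_\star^2)}.
\end{align*}
We would follow that route: bound $\E\Tr(\widetilde M^{2p})$ by the comparison in \cite{BvH16}, take $p\asymp t^2/\sigma_\star^2$, and apply Markov. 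This is the most delicate step because of the constant bookkeeping in $c_\varepsilon$; every other step is a direct citation.
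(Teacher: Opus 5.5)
The paper gives no proof of this statement; it quotes it from \cite{BvH16}. Your reconstruction therefore has to stand on its own.

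Two of your choices are correct and in fact necessary: assuming the entries are centered, and reading $\sigma$ as the larger of the row and column maxima. To see that the second is needed, take $M=\xi e_1^{\top}$ with $\xi$ uniform on $\{\pm1\}^n$. The row maximum gives $\sigma=\sigma_\star=1$, yet $\norm{M}=\sqrt n$ surely. So the statement as printed is false, and no constant in front of $\sigma$ can repair it; in particular the $\sqrt2$ does not compensate for the missing column maximum. The dilation is the right reduction, and in the application the column sums of each $B^{(2q)}_r$ are also at most $\binom{n}{q}$, so nothing is lost there.

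\textbf{The gap: passing from Gaussian to bounded entries.} Theorem~1.1 of \cite{BvH16} concerns Gaussian matrices $X_{ij}=b_{ij}g_{ij}$ and says nothing directly about centered bounded entries. Its trace-moment comparison does not transfer term by term either: expanding $\E\Tr X^{2p}$ produces odd moments such as $\E X_{ij}^3$, which no Gaussian term dominates. The missing bridge is symmetrization, and it is exactly what produces the $2\sqrt2$:
\begin{enumerate}
\item Let $X'$ be an independent copy. Convexity of $Y\mapsto \Tr Y^{2p}$ gives $\E\Tr X^{2p}\le \E\Tr (X-X')^{2p}$.
\item Condition on $a_{ij}=\abs{X_{ij}-X'_{ij}}\le 2\sigma_\star$. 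Then $X-X'$ has entries $a_{ij}$ times independent random signs, so its trace moments are dominated by those of the Gaussian matrix with profile $(a_{ij})$. This gives $\E\Tr(X-X')^{2p}\le n\,\E\paren{2\sigma(a)+C\sigma_\star\sqrt p}^{2p}$, where $\sigma(a)=\max_i(\sum_j a_{ij}^2)^{1/2}$.
\item Since $\E a_{ij}^2=2\,\E X_{ij}^2$, Bernstein plus a union bound over rows gives $\norm{\sigma(a)}_{L^{2p}}\le \sqrt2\,\sigma+C\sigma_\star\sqrt{p+\log n}$. This doubling of the variance is the source of the $\sqrt2$.
\item Markov with $p\asymp t^2/\sigma_\star^2$ then gives the tail. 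The range $t\lesssim\sigma_\star\sqrt{\log n}$ is trivial because of the factor $n$.
\end{enumerate}
This also decides between your two reductions. The dilation costs nothing in the constant: it only replaces $\sigma$ by the row/column maximum and the dimension by $2n$. The split $\tfrac12(M+M^{\top})+\tfrac12(M-M^{\top})$ would stack a second $\sqrt2$ on top of the symmetrization loss, giving $4$ instead of $2\sqrt2$ when the row and column maxima agree.

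\textbf{The combining step has the shift backwards.} Suppose $\E\norm{M}\le(1+\varepsilon)2\sqrt2\sigma+a$ with $a=C_\varepsilon\sigma_\star\sqrt{\log n}$. Then
\[
\{\norm{M}\ge(1+\varepsilon)2\sqrt2\sigma+t\}\subseteq\{\norm{M}\ge\E\norm{M}+t-a\},
\]
so Talagrand gives $C\exp(-(t-a)_+^2/(c\sigma_\star^2))$, with no decay at all for $t\le a$. A bound of the form $\exp(-(t+a)^2/c)$ is false. For example, a diagonal $M$ with independent entries equal to $\pm1$ with probability $1/(2n)$ each has $\sigma=n^{-1/2}$, but $\norm{M}=1$ with probability about $1-1/e$. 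The prefactor $n$ is precisely what absorbs the shift:
\begin{itemize}
\item for $t\le 2a$, the right-hand side $n e^{-t^2/(c_\varepsilon\sigma_\star^2)}$ is at least $1$ once $c_\varepsilon\ge 4C_\varepsilon^2$;
\item for $t\ge 2a$, one has $t-a\ge t/2$.
\end{itemize}
So once the bounded-entry expectation bound is actually established via the symmetrization above, expectation plus Talagrand already proves the statement, and the separate moment computation is optional.
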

Next we would like to apply \prettyref{thm:random-matrix-wigner-bounds} to the $p$-biased matrix $B^{(2q)}$ from \prettyref{def:signed_flattening_matrix} by setting $M=B^{(2q)}$. However the entries of $B^{(2q)}$ are not independent, due to presence of symmetries. To address this we write the matrix $B^{(2q)}$ as sum of $R$ independent matrix (for a bounded $R$) as below.

\begin{construction}[Independent Block Decomposition]\label{cons:independent-block-matrix-decompose}
	For a given $2q$-uniform hypergraph $H=(V,E)$, fix an arbitrary global ordering $\mathfrak{S}$ of the vertices as,
	\begin{align*}
		v_1 \prec v_2 \prec \dots \prec v_n.
	\end{align*}
	For every hyperedge $e=\set{v_{i_1},v_{i_2},\dots,v_{i_{2q}}}\in E$ we list its $\binom{2q}{q}$ subsets of size $q$ in the order given by $\mathfrak{S}$ and form all ordered pairs of distinct subsets in the same order. Enumerating these pairs globally over all the edges we have that there are,
	\begin{align*}
		R \defeq \binom{2q}{q} \paren{\binom{2q}{q}-1} \leq 2^{2q}\cdot 2^{2q} = 2^{4q}
	\end{align*}
	many disjoint index sets given by $\Gamma_1,\Gamma_2,\dots,\Gamma_R \subset \cI \times \cI$. For each $r \in [R]$ we define the matrix $M^{(r)}=B_{r}^{(2q)}$ as below,
	\begin{align*}
		B^{(2q)}_r[I,J] = \begin{cases}
			B^{(2q)}[I,J] \quad &\text{if }(I,J) \in \Gamma_r\\
			\quad\, 0 \qquad\, &\text{otherwise}
		\end{cases}
	\end{align*}
\end{construction}

\begin{proof}[Proof of \prettyref{prop:flattening-random-matrix-bounds}]
	It is easy to note that from \prettyref{cons:independent-block-matrix-decompose} we have $B^{(2q)}=\sum_{r=1}^R B^{(2q)}_r$ where each $B^{(2q)}_r$ has independent entries and are centered $0$ mean random variables with variance at most $1$. Now fix any $r\in [R]$ and we apply \prettyref{thm:random-matrix-wigner-bounds} with $N_q=\binom{n}{q} \leq n^q/q!$ and we note that we can thus bound $\sigma^2_{\star}$ as below,
	\begin{align*}
		\sigma = \max_i\sqrt{\sum_j\E\Brac{M_{ij}^2}} \leq \sqrt{\binom{n}{q}} \leq \frac{n^{q/2}}{\sqrt{q!}}, \quad \text{and} \quad \sigma^2_{\star} \leq \frac{1}{p(1-p)}.
	\end{align*}
	We analyze this for choice of $t=\sqrt{(q+1)c_{\varepsilon}\log n/p(1-p)}$ and by \prettyref{thm:random-matrix-wigner-bounds}, for a fix $r \in [R]$,
	\begin{align*}
		\Pr{\norm{B^{(2q)}_r}_2 \geq \paren{1+\varepsilon}2\sqrt{2}\cdot \sigma + t } \leq N_q \cdot \exp\paren{-\frac{t^2p(1-p)}{c_{\varepsilon}}} \leq \paren{\frac{n^q}{q!}}n^{-(q+1)} \leq \frac{1}{n\cdot q!}
	\end{align*}
	and now by a union bound over all $R$ matrices we obtain that for any $r \in [R]$ we have,
	\begin{align}\label{eq:failure-prob-bound}
		\Pr{\norm{B^{(2q)}  }_2 \geq \paren{1+\varepsilon}2\sqrt{2}\cdot \sigma + t } \leq R \cdot \paren{\frac{1}{n\cdot q!}} \leq \frac{\paren{2^{4q}/q!}}{n} = \frac{C'_{q}}{n} 
	\end{align}
	where $C'_{q} \defeq 2^{4q}/q!$. Now we plug values of $t,\sigma$ and applying triangle inequality for norms,
	\begin{align}\label{eq:triangle-inequality-decompose-bound}
		\norm{B^{(2q)}}_2 &= \norm{\sum_{r=1}^RB^{(2q)}_r}_2 \leq \sum_{r=1}^R \norm{B^{(2q)}_r}_2 \leq R\paren{(1+\varepsilon)2\sqrt{2}\cdot \sigma + t}\\
		&\leq R\paren{(1+\varepsilon)2\sqrt{2}\cdot \paren{\frac{n^{q/2}}{\sqrt{q!}}} + \sqrt{\frac{(q+1)c_{\varepsilon}\log n}{p(1-p)}}}.
	\end{align}
	Now let $C_{\text{first}} = R(1+\varepsilon)2\sqrt{2}/\sqrt{q!}$ and $C_{\text{second}} = R\sqrt{\paren{q+1}c_{\varepsilon}}$ and using eqn.~\prettyref{eq:failure-prob-bound} and eqn.~\prettyref{eq:triangle-inequality-decompose-bound},
	\begin{align*}
		\Pr{\norm{B^{(2q}}_2 \geq C_{\text{first}}\cdot n^{q/2} + C_{\text{second}}\cdot \sqrt{\frac{\log n}{p(1-p)}}\,} \leq \frac{C'_q}{n}. 
	\end{align*}
	Now for $p \geq 1/4$ we have that $\sqrt{\log n/p(1-p)} \leq 2\sqrt{\log n/p} \leq n^{q/2}$, and for $p \leq 1/4$ it follows,
	\begin{align*}
		C_{\text{second}}\sqrt{\frac{\log n}{p(1-p)}} = \paren{\frac{C_{\text{second}}}{\sqrt{1-p}}}\sqrt{\frac{\log n}{p}} \leq \paren{\frac{2C_{\text{second}}}{\sqrt{3}}}\sqrt{\frac{\log n}{p(1-p)}} \leq 2\cdot C_{\text{second}}\sqrt{\frac{\log n}{p(1-p)}}
	\end{align*}
	Therefore it follows that with high probability (over the randomness of input),
	\begin{align*}
		\norm{B^{(2q)}}_2 \leq \paren{C_{\text{first}} + C_{\text{second}}}\cdot n^{q/2} + 2\cdot C_{\text{second}}\cdot \sqrt{\frac{\log n}{p}} \leq \kappa_{2q}\paren{n^{q/2} + \sqrt{\frac{\log n}{p}}}
	\end{align*}
	where $\kappa_{2q} = \max\set{C_{\text{first}} + C_{\text{second}},\, 2\cdot C_{\text{second}}}$.
\end{proof}

\section{Independence Number Certificates in Odd Arity Hypergraphs}
\label{sec:sos-odd-analysis}
Our main result in this section is to show a bound on pseudo-expectation mass for an odd-arity uniform random hypergraphs. Unlike the even case, matricizing an $\paren{2q+1}$-uniform hypergraph naively gives matrix of size $n^{q+1} \times n^{q+1}$ (thus we work in an inflated dimension of $2q+2$) and the operator norm bounds here yields sub-optimal results. Instead, we directly work with the order-$\paren{2q+1}$  tensor and consider its squared multilinear form, this simplifies to analyzing the matrix $M=\sum_{i\in V}A_i \otimes A_i$ where $A_i$ are slice matrices of size $n^q \times n^q$, one for each vertex. We will then proceed to prove sharper bounds on the spectral norm of $M$ using the \cite{BLNvH25} framework which allows us to bound the pseudo-expectation mass, matching the bounds for the even case in the dimension parameter $n$.

\begin{theorem}\label{thm:odd-case-bounds}
	Given an instance of a random hypergraph $H=(V,E)$ generated from random model $\cH_0\paren{n,2q+1,p}$, and a $\paren{4q+2}$-degree pseudo-expectation $\pE$ for the SoS relaxation of \hypermissdp, it follows that there exists a constant $c_{2q+1}$ such that with high probability (over the randomness of the input) the following holds,
	\begin{itemize}
		\item \textbf{Regime I}: If $p \geq p^{\star} \defeq \paren{\log n}^2/\sqrt{n}$ we have that,
		\begin{align*}
			\sum_{v\in V}\pE[x_v] \leq c_{2q+1} \cdot \paren{\frac{\sqrt{n}}{p^{1/(2q+1)}} + \frac{n^{q/(2q+1)}\cdot (\log n)^{5/(4q+2)}}{p^{2/(2q+1)}}}.
		\end{align*}
		
		\item \textbf{Regime II}: Where $p \leq p^{\star}=\paren{\log n}^2/\sqrt{n}$ we have that,
		\begin{align*}
			\sum_{v\in V}\pE[x_v] \leq c_{2q+1} \cdot \paren{\frac{\sqrt{n}\paren{\log n}^{1/(4q+2)}}{p^{1/(2q+1)}} + \frac{(\log n)^{2/(2q+1)}}{p^{3/(2q+1)}}}.
		\end{align*}
	\end{itemize}
\end{theorem}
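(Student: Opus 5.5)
We will follow the template of the even case, and use the slicing/squaring strategy sketched in the technical overview for $\ell=3$. Write $\ell=2q+1$, $X=\sum_v x_v$, and expand $X^{2q+1}$ into the all-distinct term plus lower-order terms. \prettyref{prop:lower_term_bounds} (with $f=2q+1$) bounds the lower-order terms by $C_{2q+1}X^{2q}$ at degree $4q+2$. For the all-distinct term, we fix a global order $\prec$ on $V$. For each vertex $i$ we define a slice matrix $A_i\in\bbR^{\cI_q\times\cI_q}$, where $\cI_q=\binom{V}{q}$ and $A_i[J,K]=h_{\{i\}\cup J\cup K}$ whenever $i\prec J$, $i\prec K$ and $|\{i\}\cup J\cup K|=2q+1$; all other entries are $0$. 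Using the edge constraints exactly as in \prettyref{lem:even_higher_order_term_bound}, the all-distinct sum equals $c_q\sqrt{(1-p)/p}\sum_i x_i\, z^\top A_i z$, where $z_J=x_J$ and $c_q$ is a combinatorial constant.

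Next we square this expression. By SoS Cauchy--Schwarz (\prettyref{fact:sos_cauchy_schwarz}) together with $x_i^2=x_i$, we get $Y^2\lesssim \frac1p X\sum_i (z^\top A_i z)^2$. \prettyref{claim:kronecker-quadratic-form} then rewrites the sum as $w^\top\widetilde M w$ with $w=z\otimes z$ and $\widetilde M=\sum_i A_i\otimes A_i$. We split $\widetilde M=M+D$, where $D$ collects the entries in which the same hyperedge variable appears twice (the ``swapped diagonal'' entries, with value $h_e^2$), and $M$ is a genuinely degree-two, off-diagonal chaos. \prettyref{fact:operator_sos_bound} then gives $Y^2\lesssim \frac1p(\|D\|+\|M\|)\,X\,\|w\|^2$. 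Booleanity gives $\|w\|^2=\|z\|^2\le X^{2q}/(q!)^2$ (as in the even case), so we obtain $Y^2\lesssim \frac{\|D\|+\|M\|}{p}X^{2q+1}$.

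To pass to pseudo-expectations, we note that $Y$ is bounded via $Y\le X^{2q+1}$-type terms. A cleaner route is to write the SoS-provable inequality $(X^{2q+1}-C X^{2q})^2\le \Lambda X^{2q+1}$, where $\Lambda=(\|D\|+\|M\|)/p$; alternatively, we apply pseudo-expectation Cauchy--Schwarz directly to $\pE[Y]$. Either way, pseudo-Hölder (\prettyref{fact:pe_holder}) with $u=\pE[X^{4q+2}]^{1/(4q+2)}$ yields $u^{2q+1}\lesssim \sqrt{\Lambda}\,u^{(2q+1)/2}+Cu^{2q}$. Solving this inequality gives $\pE[X]\lesssim \Lambda^{1/(2q+1)}$.

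The main obstacle is the high-probability bound on $\|M\|$, and on the diagonal part $\|D\|$, which is a max over pairs of sums of independent bounded variables and is handled by Bernstein plus a union bound, giving $\|D\|\lesssim n^{q}+\sqrt{n^q\log n/p}+\log n/p$. We handle $M$ separately in the two regimes.

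\textbf{Regime I.} We decouple $M$ (standard decoupling for off-diagonal degree-two chaos, or a sign/ordering decomposition into boundedly many pieces as in \prettyref{cons:independent-block-matrix-decompose}). We then recognize the result as a combinatorial chaos of order $2$ in the sense of \prettyref{def:combinatorial_chaos}, with free indices $i$ and the $q$-sets. We compute $\sigma(\cA)\approx n^{(2q+1)/2}$ and $\nu(\cA)\approx n^{q}$ via \prettyref{prop:bound_chaos_parameters}, and apply \prettyref{thm:moment-chaos-theorem} with $t\approx\log n$ and $\alpha_t(h)\lesssim p^{-1/2}$. Markov's inequality then gives $\|M\|\lesssim n^{q+1/2}+n^{q}(\log n)^{5/2}/p$. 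Substituting into $\Lambda$ and taking the $1/(2q+1)$ root produces the stated bound $\sqrt n/p^{1/\ell}+n^{q/\ell}(\log n)^{5/2\ell}/p^{2/\ell}$.

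\textbf{Regime II.} We instead apply matrix Bernstein to $M=\sum_i(A_i\otimes A_i-\text{diag part})$, viewed as a sum of independent terms after conditioning appropriately. This gives $\|M\|\lesssim n^{q+1/2}\sqrt{\log n}+(\log n)^{2}/p$, which yields the second bound. The crossover $p^\star$ comes from comparing the two estimates.

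The step I expect to be hardest is verifying the combinatorial-chaos structure and computing all flattenings, including the $\nu$ parameter, of $M$ carefully. I would try first to enumerate index maps $I_1,\dots,I_4$ over free coordinates $(i,J,K,J',K')$ and read off the norms from \prettyref{prop:bound_chaos_parameters}.
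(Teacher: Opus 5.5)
\textbf{The gap is in how you handle $D$.} Your architecture matches the paper's: lower-order terms via \prettyref{prop:lower_term_bounds}, minimum-vertex slices, Cauchy--Schwarz down to $X\cdot w^\top\widetilde M w$, the split $\widetilde M=M+D$, and then the order-$2$ combinatorial chaos bound ($\sigma(\cA)=n^{q+1/2}$, $\nu(\cA)=n^q$) in Regime I and matrix Bernstein in Regime II. The failure is that you put $D$ through \prettyref{fact:operator_sos_bound} and assert that $\norm{D}_2$ is a maximum over pairs of scalar sums. It is not, because $D$ is not (block) diagonal.

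By your own definition, for every admissible disjoint pair $(J,K)$, $D$ contains the entry $D[(J,J),(K,K)]=\sum_i A_i[J,K]^2=:d_{J,K}$. So its principal submatrix on the coordinates $\set{(J,J)}_{J\in\cI_q}$ is the dense matrix $(d_{J,K})_{J,K}$, with entries of typical size $\Theta_q(n)$. Testing against the unit vector that is uniform on these coordinates gives
$\norm{D}_2\ge \abs{\cI_q}^{-1}\sum_{J,K}d_{J,K}=\Omega_q(n^{q+1})$ with high probability.

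This exceeds the target scale $n^{q+1/2}$. Fed into $\Lambda$, it only yields $\pE[X]\lesssim n^{(q+1)/(2q+1)}/p^{1/(2q+1)}$, which loses a factor $n^{1/(4q+2)}$ against the claimed bound. Even as a scalar estimate your bound is off: each $d_{J,K}$ is a sum of at most $n$ terms $h_e^2$, so its leading term is $n$, not $n^q$.

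\textbf{The fix is to avoid $\norm{D}_2$ entirely.} $D$ is only ever evaluated on the tensor square $w=z\otimes z$, and on such vectors its quadratic form collapses to a nonnegative combination of squares. For $2q$-sets $S$, write $x_S=\prod_{v\in S}x_v$ and $d_S=\sum_{i\prec S}h^2_{\set{i}\cup S}$. Then $w^\top Dw=c_q\sum_S d_S\,x_S^2$. Since each $(d_{\max}-d_S)x_S^2$ is a square, SoS gives $w^\top Dw\le c'_q\,d_{\max}\norm{w}_2^2\le c'_q\,d_{\max}X^{2q}/(q!)^2$. This is the paper's \prettyref{lem:all-distinct-term-intermediate-1-bound}. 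Bernstein plus a union bound then gives $d_{\max}\lesssim n+\sqrt{n\log n/p}+\log n/p$ (\prettyref{prop:high-prob-bound-diagonal}), which is dominated by the $\norm{M}_2$ bound in both regimes. Only $M$ goes through the operator norm.

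\textbf{Smaller fixes.}
\begin{enumerate}
\item Normalize with $u=\pE[X^{2q+1}]^{1/(2q+1)}$. The inequality you actually get is $U\le\sqrt{\Lambda U}+C\,U^{2q/(2q+1)}$ for $U=\pE[X^{2q+1}]$, and $\pE[X^{4q+2}]$ only upper-bounds $U^2$.
\item The ``squared'' inequality $(X^{2q+1}-CX^{2q})^2\le\Lambda X^{2q+1}$ does not follow from the one-sided bound $X^{2q+1}-CX^{2q}\le Y$. Use pseudo-Cauchy--Schwarz on $\pE[Y]$ instead.
\item $\norm{w}_2^2=\norm{z}_2^4$, not $\norm{z}_2^2$.
\item In Regime II no conditioning is needed for independence, since each hyperedge lives only in the slice of its minimum vertex. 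However, $\norm{M_i}_2$ has no useful almost-sure bound, so you need a $\psi_1$-version of matrix Bernstein (or truncation with a bias estimate), as in \prettyref{prop:centered-random-tensor-bound-bernstein}.
\end{enumerate}
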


\subsection{Sum-of-Squares Analysis for Random Hypergraphs}
\label{sec:sos-odd-analysis-details}
The proof begins similar to the even case, as we let $X\defeq \sum_{v\in V}x_v$ and proceeds by expanding $X^{2q+1}$ into ordered $\paren{2q+1}$-tuples, and further splitting into the \say{lower-order terms} (tuples with less than $\paren{2q+1}$ distinct vertices) and the \say{all distinct term}. Then we prove SoS certifiable bounds of degree at most $\paren{2q+1}$ on each of these, and use weak duality of SoS proofs to obtain a similar bounds for a degree-$\paren{4q+2}$ pseudo-expectation moments.
We start with a multinomial expansion of $X^{2q+1}$ over the ordered $\paren{2q+1}$-tuples which gives,
\begin{align*}
	X^{2q+1} = \paren{\sum_{v\in V}x_v}^{2q+1} = \sum_{\paren{v_0,v_1,\dots,v_{2q}}\in V^{2q+1}}x_{v_0}x_{v_1}\dots x_{v_{2q}}
\end{align*}
We partition these ordered tuples according to the number of distinct vertices in $\paren{v_0,v_1,\dots,v_{2q}}$ and we have the formal identity,
\begin{align} \label{eq:poly_expand_odd}
	X^{2q+1} = \underbrace{\sum_{\substack{\paren{v_0,v_1,\dots,v_{2q}}\in V^{2q+1}\\\abs{\set{v_0,v_1,\dots,v_{2q}}}=2q+1}}x_{v_0}x_{v_1}\dots x_{v_{2q}}}_{\text{all-distinct term}} + \underbrace{\sum_{\substack{\paren{v_0,v_1,\dots,v_{2q}}\in V^{2q+1}\\\abs{\set{v_0,v_1,\dots,v_{2q}}}<2q+1}}x_{v_0}x_{v_1}\dots x_{v_{2q}}}_{\text{lower-order terms}}
\end{align}
Note that the above is just an exact equality in the polynomial ring and hence SoS at any degree immediately proves it. Now, for the lower-order terms in eqn.~\prettyref{eq:poly_expand_odd} using \prettyref{prop:lower_term_bounds} we have,
\begin{align}\label{eq:lower_order_odd_bounds}
	\hypermissdp \SoSp{x}{2q+1} \set{\sum_{\substack{\paren{v_0,v_1,\dots,v_{2q}}\in V^{2q+1}\\\abs{\set{v_0,v_1,\dots,v_{2q}}}<2q+1}}x_{v_0}x_{v_1}\dots x_{v_{2q}} \leq C_{2q+1}\paren{\sum_{v\in V}x_v}^{2q}}.
\end{align}
Next, we proceed to the technically more challenging higher-order all-distinct term for which we introduced a signed tensor of order-$\paren{2q+1}$ as below,
\begin{definition}[$p$-Biased Hypergraph Tensor]\label{def:scaled_signed_tensor}
	For a given $\paren{2q+1}$-uniform hypergraph $H=(V,E)$, we define the $p$-biased symmetric hypergraph tensor (using the $p$-biased Fourier character $h_e$ defined in \prettyref{sec:p-biased-character}) denoted $B^{(2q+1)}$ as,
	\begin{align*}
		B_{j_0,j_1,\dots,j_{2q}}^{(2q+1)}=
		\begin{cases}
			h_{j_0,j_1,\dots,j_{2q}} \quad &\text{if } \abs{j_0,j_1,\dots,j_{2q}}=2q+1\\
			0 \quad &\text{otherwise}.
		\end{cases}
	\end{align*}
\end{definition}
We recall the global ordering $\mathfrak S$ on the vertices of the hypergraph, using which an unordered hyperedge $e=\set{j_0,j_1,\dots,j_{2q}}$ can be written in increasing order as,
\begin{align*}
	j_0 \prec j_1 \prec \dots \prec j_{2q.}
\end{align*}
We use the global ordering $\mathfrak S$, and symmetric tensor $B^{(2q+1)}$ to define an oriented tensor $T$ as,
\begin{align*}
	T_{j_0,j_1,\dots,j_{2q}} = \begin{cases}
		h_{j_0,j_1,\dots,j_{2q}} \quad & \text{if }j_0 \prec j_1 \prec \dots \prec j_{2q}\\
		0 \quad & \text{otherwise}.
	\end{cases}
\end{align*}
We also recall we let $\cI_q$ be the set of ordered $q$-tuples written in increasing order as below,
\begin{align*}
	\cI_q = \set{J = \paren{j_1,j_2,\dots,j_q} \in V^q : j_1 \prec j_2 \prec \dots \prec j_q}.
\end{align*}

\begin{definition}[Matrix Slice of Hypergraph Tensor]\label{def:matrix-slice}
	For a global ordering $\mathfrak S$ on the vertices, and for a fixed vertex $i\in V$, the matrix slice $A_i$ of the tensor $T$ is defined by first defining an upper-triangular slice matrix $\widetilde{A}_i$, indexed by pair of $q$-tuples $J=\paren{j_1,\dots,j_q} \in \cI_q$ and $K=\paren{k_1,\dots,k_q} \in \cI_q$ as follows,
	\begin{align*}
		\widetilde{A}_i\Brac{J,K} = \begin{cases}
			T_{i,j_1,\dots,j_q,k_1,\dots,k_q} & \text{if }i \prec j_1 \prec \dots j_q \prec k_1 \prec \dots \prec k_q\\
			0 &\text{otherwise}.
		\end{cases}
	\end{align*}
	Thus we define the symmetric matrix slice matrix $A_i$ as below,
	\begin{align*}
		A_i  = \widetilde{A}_i + \paren{\widetilde{A}_i}^{\top}
	\end{align*}
\end{definition}
\begin{remark}
	Although the purpose of oriented tensor is to remove redundancies due to symmetries involved in the natural tensor, it will be convenient for us to define our slice matrix to be symmetric, where every hyperedge appears twice.
\end{remark}
We define a vector $z \in \bbR^{\cI_q}$ for $J=\paren{j_1,\dots,j_q} \in \cI_q$ as follows,
\begin{align*}
	z_J = \prod_{t=1}^qx_{j_t}, \qquad \text{and we have that} \qquad z^{\top}A_iz = 2 \cdot z^{\top}\widetilde{A}_iz
\end{align*}
\begin{claim}\label{claim:booleanity-higher-order-claim}
	For $\paren{2q+1}$-uniform random hypergraph $H=(V,E)$ where $H \sim \cH_0\paren{n,2q+1,p}$, and the polynomial system \hypermissdp, and for our vector $z$ and any $J \in \cI_q$,
	\begin{align*}
		\hypermissdp \SoSp{x}{2q} \set{z_J^2=z_J}.
	\end{align*}
\end{claim}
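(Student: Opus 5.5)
The plan is to show directly that $z_J^2 - z_J$ is an identity in the ideal generated by the booleanity axioms \prettyref{eq:indicator_constraint_mis}. Since equality constraints are used with polynomial multipliers, this is exactly what an SoS derivation of an equality requires. The randomness of $H$ and the edge constraints \prettyref{eq:edge_constraint_mis} play no role; the claim holds deterministically for every hypergraph.

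Fix $J=(j_1,\dots,j_q)\in\cI_q$. Its entries are strictly increasing under $\mathfrak S$, so they are pairwise distinct. Then
\begin{align*}
	z_J^2 = \prod_{t=1}^q x_{j_t}^2 .
\end{align*}
Next I replace the factors $x_{j_t}^2$ by $x_{j_t}$ one at a time via a telescoping sum:
\begin{align*}
	z_J^2 - z_J = \sum_{s=1}^{q}\Biggl(\prod_{t<s}x_{j_t}\Biggr)\paren{x_{j_s}^2-x_{j_s}}\Biggl(\prod_{t>s}x_{j_t}^2\Biggr).
\end{align*}
Each summand is an axiom $x_{j_s}^2-x_{j_s}=0$ multiplied by a monomial. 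That monomial has degree at most $(s-1)+2(q-s)\le 2q-2$, so each product has degree at most $2q$.

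Hence $z_J^2-z_J$ equals a combination of the axioms with polynomial multipliers, all of degree at most $2q$. This gives $\hypermissdp \SoSp{x}{2q} \set{z_J^2 = z_J}$; equivalently, both $z_J^2-z_J\ge 0$ and $z_J-z_J^2\ge 0$ are derivable, because an equality axiom may be multiplied by any polynomial, of either sign.

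The only point needing care is the degree bookkeeping. Multipliers of possibly negative sign are allowed only because the axioms are equalities, and this is the same convention already used in \prettyref{prop:lower_term_bounds} and \prettyref{lem:even_higher_order_term_bound}. I expect no real obstacle beyond stating that convention explicitly.
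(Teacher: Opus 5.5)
Your proof is correct and follows the paper's argument, which rewrites $z_J^2=\prod_t x_{j_t}^2=\prod_t x_{j_t}=z_J$ using the booleanity axioms. Your telescoping sum makes explicit the ideal-membership certificate and the degree bound $2q-s+1\le 2q$, both of which the paper leaves implicit. The distinctness of the $j_t$ is harmless but not actually needed.
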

\begin{proof}
	By our definition of $z_J$ we can expand it out and using our axiom \prettyref{eq:indicator_constraint_mis} we have,
	\begin{align*}
		z_J^2 = \paren{\prod_{t=1}^qx_{j_t}}^2= \prod_{t=1}^qx_{j_t}^2 = \prod_{t=1}^qx_{j_t} = z_J
	\end{align*}
	Since we only used equalities in the polynomial ring, SoS at any degree proves it.
\end{proof}

\begin{lemma}\label{lem:all-distinct-term-intermediate-bound}
	For $\paren{2q+1}$-uniform random hypergraph $H=(V,E)$ where $H \sim \cH_0\paren{n,2q+1,p}$, and the polynomial system \hypermissdp, we have
	\begin{align*}
		\hypermissdp \SoSp{x}{2q+1} \set{\sum_{\substack{\paren{v_0,\dots,v_{2q}}\in V^{2q+1}\\\abs{\set{v_0,\dots,v_{2q}}}=2q+1}}B^{(2q+1)}_{v_0,\dots,v_{2q}}\cdot x_{v_0}\dots x_{v_{2q}} =\frac{\paren{2q+1}!}{2}\cdot \sum_{i\in V}x_i\paren{z^{\top}A_iz}}.
	\end{align*}
\end{lemma}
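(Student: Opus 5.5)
This is a pure polynomial identity; in fact it holds in the polynomial ring itself, with no hypergraph constraints needed. The plan is to compare the coefficient of each multilinear monomial on the two sides.

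\emph{Left-hand side.} Fix a $(2q+1)$-set $e=\set{w_0,\dots,w_{2q}}$, written in $\mathfrak S$-increasing order $w_0\prec w_1\prec\dots\prec w_{2q}$. The left-hand side sums over ordered tuples with distinct entries. The tensor $B^{(2q+1)}$ is symmetric and depends only on the underlying set, so each of the $(2q+1)!$ orderings of $e$ contributes $h_e\prod_{v\in e}x_v$. The coefficient of $x_e\defeq\prod_{v\in e}x_v$ on the left is therefore $(2q+1)!\,h_e$. Tuples with repeated entries have $B^{(2q+1)}=0$ by definition, so they contribute nothing.

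\emph{Right-hand side.} Expand $z^{\top}A_iz=2\,z^{\top}\widetilde A_iz=2\sum_{J,K\in\cI_q}\widetilde A_i[J,K]\,z_Jz_K$. By \prettyref{def:matrix-slice}, $\widetilde A_i[J,K]$ is nonzero only if
\begin{align*}
i\prec j_1\prec\dots\prec j_q\prec k_1\prec\dots\prec k_q,
\end{align*}
in which case it equals $T_{i,j_1,\dots,k_q}=h_{\set{i}\cup J\cup K}$. All $2q+1$ indices are then distinct, so $x_iz_Jz_K$ is exactly the multilinear monomial $x_e$ with $e=\set{i}\cup J\cup K$; no booleanity reduction is needed here, although \prettyref{claim:booleanity-higher-order-claim} would absorb any repeats if they arose.

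The key counting step is to show that each set $e$ arises from exactly one triple $(i,J,K)$. The chain condition forces $i=w_0$, $J=(w_1,\dots,w_q)$ and $K=(w_{q+1},\dots,w_{2q})$. Hence the coefficient of $x_e$ in $\sum_i x_i\,(z^{\top}A_iz)$ is $2h_e$, and multiplying by $\frac{(2q+1)!}{2}$ gives $(2q+1)!\,h_e$. This matches the left-hand side.

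Since the two sides agree as polynomials, SoS proves the identity at any degree, in particular at degree $2q+1$. The only delicate point is the bookkeeping of the factor $2$ coming from symmetrizing $A_i=\widetilde A_i+\widetilde A_i^{\top}$. One must not also count the transposed entry $\widetilde A_i[K,J]$, which is zero because $K$ precedes $J$ is impossible under the strict chain.
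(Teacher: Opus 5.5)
Your proposal is correct and follows essentially the same argument as the paper. Both collapse the $(2q+1)!$ orderings of each set onto its $\mathfrak S$-sorted representative, split the sorted tuple uniquely as $(i,J,K)$ with $i$ the minimum, and use $z^{\top}A_iz = 2\,z^{\top}\widetilde A_iz$. The only difference is cosmetic: you verify the identity by comparing coefficients of each monomial $x_e$, whereas the paper rewrites the sum directly.
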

\begin{proof}
	Since every unordered edge in $B^{(2q+1)}$ has $\paren{2q+1}!$ possible permutations we obtain the identity below,
	\begin{align*}
		\sum_{\substack{\paren{v_0,v_1,\dots,v_{2q}} \in V^{2q+1}\\\text{all distinct}}} B^{(2q+1)}_{v_0,v_1,\dots,v_{2q}}\cdot x_{v_0}x_{v_1}\dots x_{v_{2q}} = \paren{2q+1}! \sum_{v_0 \prec v_1 \dots \prec v_{2q}}T_{v_0,v_1,\dots,v_{2q}} \cdot x_{v_0}x_{v_1}\dots x_{v_{2q}}
	\end{align*}
	Fix a sorted $\paren{2q+1}$-tuple as $v_0 \prec v_1 \prec \dots \prec v_{2q}$ and let us denote $i=v_0$, $J= \paren{v_1,\dots,v_q}$ and $K=\paren{v_{q+1},\dots,v_{2q}}$. Now using vector $z$ and matrix x we can write the all-distinct term as,
	\begin{align*}
		\sum_{v_0 \prec \dots \prec v_{2q}}T_{v_0,\dots,v_{2q}} \cdot x_{v_0} \dots x_{v_{2q}} &= \sum_{i\in V}\sum_{J,K \in \cI_q}T_{i,J,K}\cdot x_iz_Jz_K = \sum_{i\in V}x_i \sum_{J,K \in \cI_q}\widetilde{A}_i[J,K]z_Jz_K\\
		&=\sum_{i\in V}x_i \paren{z^{\top}\widetilde{A}_iz} = \sum_{i\in V}x_i\paren{\frac{z^{\top}A_iz}{2}}.
	\end{align*}
	Since we only used identities in our proof, we have SoS at any degree also proves it.
\end{proof}

\begin{definition}[Kroneckered Sum of Slices Matrix]\label{def:kroncekered-matrix-define}
	For a $\paren{2q+1}$-uniform hypergraph $H=(V,E)$ with matrix slices $\set{A_i}_{i\in V}$ as defined in \prettyref{def:matrix-slice}, we define a matrix $\widetilde{M}$ as the sum of Kroneckered product of slices as below,
	\begin{align*}
		\widetilde{M} = \sum_{i\in V} \paren{A_i \otimes A_i} \in \bbR^{\paren{\cI_q \times \cI_q} \times \paren{\cI_q\times \cI_q}},
	\end{align*}
\end{definition}
It will be convenient to also define a Kroneckered vector $w \defeq z\otimes z \in \bbR^{\cI_q \times \cI_q}$.

\begin{lemma}\label{lem:all-distinct-term-intermediate-2-bound}
	For $\paren{2q+1}$-uniform random hypergraph $H=(V,E)$ where $H \sim \cH_0\paren{n,2q+1,p}$, and the polynomial system \hypermissdp, we have that,
	\begin{align*}
		\hypermissdp \SoSp{x}{4q+2} \set{\paren{\sum_{\substack{\paren{v_0,\dots,v_{2q}}\in V^{2q+1}\\\abs{\set{v_0,\dots,v_{2q}}}=2q+1}}x_{v_0}\dots x_{v_{2q}}}^2 \leq \paren{\frac{\paren{2q+1}!}{2}}^2 \paren{\frac{1-p}{p}} \paren{\sum_{i\in V}x_i}\paren{w^{\top}\widetilde{M}w}}.
	\end{align*}
\end{lemma}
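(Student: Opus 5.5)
We prove the inequality in three moves. First, the edge constraints turn the unweighted all-distinct sum into a $B^{(2q+1)}$-weighted sum, to which \prettyref{lem:all-distinct-term-intermediate-bound} applies. Second, we square and apply SoS Cauchy--Schwarz. Third, \prettyref{claim:kronecker-quadratic-form} rewrites each squared slice form as a quadratic form in $w$.

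\textbf{Step 1: reweighting by $h_e$.} Every all-distinct tuple $(v_0,\dots,v_{2q})$ spans a candidate hyperedge $e=\set{v_0,\dots,v_{2q}}$.
\begin{itemize}
\item If $e\in E$, the axiom \prettyref{eq:edge_constraint_mis} gives $x_{v_0}\cdots x_{v_{2q}}=0$.
\item If $e\notin E$, then $B^{(2q+1)}_{v_0,\dots,v_{2q}}=\sqrt{p/(1-p)}$.
\end{itemize}
Hence, exactly as in the proof of \prettyref{lem:even_higher_order_term_bound}, we have the SoS-provable identity (degree $2q+1$)
\begin{align*}
\sum_{\text{all distinct}} x_{v_0}\cdots x_{v_{2q}} = \sqrt{\frac{1-p}{p}}\sum_{\text{all distinct}} B^{(2q+1)}_{v_0,\dots,v_{2q}}\, x_{v_0}\cdots x_{v_{2q}}.
\end{align*}
Combining this with \prettyref{lem:all-distinct-term-intermediate-bound}, the left-hand side equals $\sqrt{(1-p)/p}\cdot\frac{(2q+1)!}{2}\sum_{i\in V} x_i\,(z^\top A_i z)$.

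\textbf{Step 2: squaring and Cauchy--Schwarz.} Squaring the identity is allowed, since identities in the polynomial ring multiply. Apply \prettyref{fact:sos_cauchy_schwarz} with the indeterminate vectors $a_i=x_i$ and $b_i=z^\top A_i z$:
\begin{align*}
\Big(\sum_i x_i\, z^\top A_i z\Big)^2 \le \Big(\sum_i x_i^2\Big)\Big(\sum_i (z^\top A_i z)^2\Big).
\end{align*}
Strictly speaking, \prettyref{fact:sos_cauchy_schwarz} is stated with degree $4$ for linear indeterminates. Here $b_i$ has degree $2q$, so we use the substitution rule, or simply the explicit identity $(\sum_i a_i^2)(\sum_i b_i^2)-(\sum_i a_ib_i)^2=\frac12\sum_{i,j}(a_ib_j-a_jb_i)^2$. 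Each square there has degree $2(2q+1)=4q+2$, which matches the claimed degree bound. Finally, replace $x_i^2$ by $x_i$ using \prettyref{eq:indicator_constraint_mis}. This multiplies the equality constraint by the polynomial $\sum_i (z^\top A_i z)^2$, of degree $4q$, so the total degree is $4q+2$.

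\textbf{Step 3: Kronecker rewriting.} By \prettyref{claim:kronecker-quadratic-form}, applied as an exact polynomial identity in the entries of $z$,
\begin{align*}
(z^\top A_i z)^2 = (z\otimes z)^\top (A_i\otimes A_i)(z\otimes z).
\end{align*}
Summing over $i$ gives $\sum_i (z^\top A_i z)^2 = w^\top \widetilde{M} w$.

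Collecting the constants from Steps 1 and 2 gives the factor $\big(\frac{(2q+1)!}{2}\big)^2\frac{1-p}{p}$, which is exactly the statement.

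The only delicate point is degree bookkeeping in Step 2. The inequality must be derived at degree $4q+2$ while the vectors involved have entries of degree $2q$, so we should spell out the sum-of-squares identity rather than cite \prettyref{fact:sos_cauchy_schwarz} verbatim. All other steps are exact polynomial identities modulo the axioms of \hypermissdp.
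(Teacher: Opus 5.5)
Your proposal is correct and follows the paper's proof essentially step for step. Both arguments use the edge axioms to rewrite the all-distinct sum as $\sqrt{(1-p)/p}$ times the $B^{(2q+1)}$-weighted sum, invoke \prettyref{lem:all-distinct-term-intermediate-bound}, square, apply SoS Cauchy--Schwarz together with $x_i^2=x_i$, and finish with \prettyref{claim:kronecker-quadratic-form}. Your explicit Lagrange-identity degree count (each $(x_ib_j-x_jb_i)^2$ has degree $4q+2$) is a welcome tightening of the paper's direct citation of \prettyref{fact:sos_cauchy_schwarz}, which is stated only for linear indeterminates.
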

\begin{proof}
	Similar to the even-arity case we start by the SoS provable equality constraint \prettyref{eq:edge_constraint_mis} to obtain,
	\begin{align*}
		\sum_{\substack{\paren{v_0,\dots,v_{2q}}\in V^{2q+1}\\\abs{\set{v_0,\dots,v_{2q}}}=2q+1}}x_{v_0}\dots x_{v_{2q}} &= 
		\sum_{\substack{\paren{v_0,\dots,v_{2q}}\in V^{2q+1}\\\abs{\set{v_0,\dots,v_{2q}}}=2q+1\\\set{v_0,\dots,v_{2q}}\notin E}}x_{v_0}\dots x_{j_{2q}} + \sum_{\substack{\paren{v_0,\dots,v_{2q}}\in V^{2q+1}\\\abs{\set{v_0,\dots,v_{2q}}}=2q+1\\\set{v_0,\dots,v_{2q}}\in E}}x_{v_0}\dots x_{v_{2q}}\\
		&=\sum_{\substack{\paren{v_0\dots,v_{2q}}\in V^{2q+1}\\\abs{\set{v_0,\dots,v_{2q}}}=2q+1\\\set{v_0,\dots,v_{2q}}\notin E}}x_{v_0}\dots x_{v_{2q}}
	\end{align*}
	Now again using constraint \prettyref{eq:edge_constraint_mis} and the definition of $B^{(2q+1)}$ we obtain the equality below,
	\begin{align*}
		\sum_{\substack{\paren{v_0,\dots,v_{2q}}\in V^{2q+1}\\\abs{\set{v_0,\dots,v_{2q}}}=2q+1}}x_{v_0}\dots x_{v_{2q}} = \Biggl(\sqrt{\frac{1-p}{p}}\,\Biggr)\sum_{\substack{\paren{v_0,\dots,v_{2q}}\in V^{2q+1}}} B^{(2q+1)}_{v_0,\dots,v_{2q}} \cdot x_{v_0}\dots x_{v_{2q}}
	\end{align*}
	Towards bounding this term, we note that it is LHS in \prettyref{lem:all-distinct-term-intermediate-bound} (pre-multiplied by factor of $\sqrt{(1-p)/p}$), and hence we continue by bounding the expression in RHS of \prettyref{lem:all-distinct-term-intermediate-bound}. We do so by squaring the expression and applying the Cauchy-Schwarz inequality (note that it is certifiable in degree-$(4q+2)$ SoS, see \prettyref{fact:sos_cauchy_schwarz}) as,
	\begin{align*}
		\paren{\sum_{i\in V}x_i\paren{z^{\top}A_iz}}^2 \leq \paren{\sum_{i\in V}x_i^2}\paren{\sum_{i\in V} \paren{z^{\top}A_iz}^2} = \paren{\sum_{i\in V}x_i}\paren{\sum_{i\in V} \paren{z^{\top}A_iz}^2}
	\end{align*}
	where the equality above follows from the SoS-provable constraint \prettyref{eq:indicator_constraint_mis}. Now with $\widetilde{M}$ as defined in \prettyref{def:kroncekered-matrix-define} and $w=z\otimes z$, and using \prettyref{claim:kronecker-quadratic-form} we have that,
	\begin{align*}
		\sum_i \paren{z^{\top}A_iz}^2 = \paren{z\otimes z}^{\top}\paren{\sum_i \paren{A_i \otimes A_i}}\paren{z \otimes z} = w^{\top} \widetilde{M} w.
	\end{align*}
	Now putting everything together we obtain the desired inequality (in degree-$(4q+2)$ SoS),
	\begin{align*}
		\paren{\sum_{\substack{\paren{v_0,\dots,v_{2q}}\in V^{2q+1}\\\abs{\set{v_0,\dots,v_{2q}}}=2q+1}}x_{v_0}\dots x_{v_{2q}}}^2 \leq \paren{\frac{(1-p)\paren{(2q+1)!}^2}{2p}}\paren{\sum_{i\in V}x_i}\paren{w^{\top}\widetilde{M}w}.
	\end{align*}
\end{proof}
Though the entries in the matrix $A_i$ (coming from random variable $h_e$) are centered, the matrix $A_i \otimes A_i$ can contain $h_e^2$ type squared terms which can be very large ($\approx 1/p$), and these must be separately removed (see eqn. 3.7  in \cite{RRS17}).
\begin{definition}[Centered Kroneckered Sum of Slice Matrix]\label{def:centered-kroneckered-matrix-define}
	We define the unordered pair notation where $\set{J,K} = \set{K,J}$ (compares the $q$-tuples as whole in set fashion).
	Now for $J,J',K,K' \in \cI_q$ , the swapped diagonal matrix $D$ is defined as,
	\begin{align*}
		D\Brac{\paren{J,J'},\paren{K,K'}} \defeq \widetilde{M}\Brac{\paren{J,J'},\paren{K,K'}} \cdot \one_{\set{J,K}=\set{J',K'}}
	\end{align*}
	and the centered Kroneckered sum of slice matrix is then defined as $M = \widetilde{M}-D$.
\end{definition}
Now we will bound the corresponding quadratic form that appears in RHS of \prettyref{lem:all-distinct-term-intermediate-2-bound} as,
\begin{align*}
	w^{\top}\widetilde{M}w = w^{\top}Dw + w^{\top}Mw.
\end{align*}
For every unordered pair $\set{J,K} \in \cI_q$ we define the (swapped) diagonal weight as,
\begin{align}\label{eq:diagonal-weights}
	d_{J,K} \defeq D\Brac{\paren{J,J},\paren{K,K}} = \sum_{i\in V}A_i^2\Brac{J,K}, \qquad \text{and} \qquad  d_{\max} \defeq \max_{J,K \in \cI_q}d_{J,K}.
\end{align}
\begin{lemma}\label{lem:all-distinct-term-intermediate-1-bound}
	For a $\paren{2q+1}$-uniform hypergraph $H=(V,E)$ where $H \sim \cH_0\paren{n,2q+1,p}$, with the polynomial system \hypermissdp, and matrix $D$ as defined in \prettyref{def:centered-kroneckered-matrix-define} it holds that,
	\begin{align*}
		\hypermissdp \SoSp{x}{4q+2} \set{w^{\top}Dw \leq d_{\max}\cdot \paren{\frac{1}{(q!)^2}}\paren{\sum_{i\in V}x_i}^{2q}}.
	\end{align*}
\end{lemma}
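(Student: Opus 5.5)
The plan is to compute $w^{\top}Dw$ exactly as a polynomial in $x$, using the support structure of the slices $A_i$, and then to compare the resulting elementary symmetric polynomial with $X^{2q}$ inside SoS.

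\textbf{Step 1: identify the entries of $D$.} By \prettyref{def:centered-kroneckered-matrix-define}, $D[(J,J'),(K,K')]$ is nonzero only if $\set{J,K}=\set{J',K'}$. This leaves two patterns.
\begin{itemize}
\item If $J'=J$ and $K'=K$, the entry is $\sum_i A_i[J,K]^2=d_{J,K}$.
\item If $J'=K$ and $K'=J$, the entry is $\sum_i A_i[J,K]A_i[K,J]=d_{J,K}$, by symmetry of $A_i$.
\end{itemize}
The two patterns coincide only when $J=K$, and there $d_{J,J}=0$ because $A_i[J,J]=0$. Since $w_{(J,J')}=z_Jz_{J'}$, we get the polynomial identity
\begin{align*}
w^{\top}Dw=\sum_{J\neq K}d_{J,K}\paren{z_J^2z_K^2+z_Jz_Kz_Kz_J}=2\sum_{J\neq K}d_{J,K}\,z_J^2z_K^2 .
\end{align*}
By \prettyref{claim:booleanity-higher-order-claim}, $z_J^2z_K^2=z_Jz_K$ is SoS-provable.

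\textbf{Step 2: use the support of $A_i$.} By \prettyref{def:matrix-slice}, $A_i[J,K]\neq 0$ forces $i\prec J\prec K$ or $i\prec K\prec J$ blockwise, meaning every element of one tuple precedes every element of the other. Hence $d_{J,K}\neq 0$ only when $J,K$ are disjoint and blockwise ordered. Each such unordered pair $\set{J,K}$ corresponds to a unique $2q$-set $S=J\cup K$: $J$ is the first $q$ elements of $S$ and $K$ the last $q$. Each unordered pair appears twice among the ordered pairs $(J,K)$, and $0\le d_{J,K}\le d_{\max}$ with $z_Jz_K=x_S$ a square modulo booleanity. Therefore
\begin{align*}
w^{\top}Dw\;\le\;4\,d_{\max}\sum_{S\in\binom{V}{2q}}x_S ,
\end{align*}
and this is SoS-provable in degree $4q$, since $d_{\max}-d_{J,K}\ge 0$ is a scalar and $x_S=x_S^2$.

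\textbf{Step 3: compare with $X^{2q}$.} Eqn.~\prettyref{eq:bound_t_d_term_first_half} with $d=2q$ gives $(2q)!\sum_{|S|=2q}x_S\le X^{2q}$ in degree $4q$. Combining with Step 2,
\begin{align*}
w^{\top}Dw\;\le\;\frac{4\,d_{\max}}{(2q)!}\,X^{2q}.
\end{align*}
It then remains to check $\frac{4}{(2q)!}\le\frac{1}{(q!)^2}$, which is equivalent to $\binom{2q}{q}\ge 4$. This holds for every $q\ge 2$, and in that case the lemma follows with exactly the stated constant.

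\textbf{Main obstacle: the case $q=1$.} Here $\binom{2}{1}=2$, so Steps 1--3 only give $2d_{\max}X^2$ rather than $d_{\max}X^2$. This case therefore needs a sharper comparison than \prettyref{eq:bound_t_d_term_first_half}. The first thing I would try is the exact identity $X^2=2\sum_{j<k}x_jx_k+X$, which turns the bound into $w^{\top}Dw\le 2d_{\max}(X^2-X)$; then I would try to exploit the factor $X$ that multiplies $w^{\top}Dw$ in \prettyref{lem:all-distinct-term-intermediate-2-bound}.

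If that fails, I would re-examine whether the two patterns in Step 1 really both occur for $q=1$. In particular I would check whether, under the intended convention, $w$ should be read as indexed by unordered pairs of $q$-tuples. That reading would halve the count in Step 1, giving $w^{\top}Dw\le\frac{2d_{\max}}{(2q)!}X^{2q}\le\frac{d_{\max}}{(q!)^2}X^{2q}$ for all $q\ge1$. This is the stated bound, since $\binom{2q}{q}\ge 2$ for every $q\ge 1$.
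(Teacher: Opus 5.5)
Your Step~1 is correct, and it is exactly where the paper's own proof goes wrong. The paper asserts that $D$ is block-diagonal with $1\times 1$ blocks and writes $w^{\top}Dw=\sum_{J,K}d_{J,K}(z_Jz_K)^2$. That drops the swapped entries $D[(J,K),(K,J)]=d_{J,K}$; the correct identity is yours, $w^{\top}Dw=2\sum_{J,K}d_{J,K}z_J^2z_K^2$. The paper then bounds $\sum_{J,K}d_{J,K}z_Jz_K\le d_{\max}\paren{\sum_J z_J}^2\le d_{\max}X^{2q}/(q!)^2$, summing over all ordered pairs and ignoring where $d_{J,K}$ is supported. Done with the factor $2$, that route only gives $2d_{\max}X^{2q}/(q!)^2$. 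Your Steps~2--3 instead use that $d_{J,K}\neq 0$ forces $J\cup K$ to be a $2q$-set split blockwise, which yields the sharper $4d_{\max}X^{2q}/(2q)!$. Since $\binom{2q}{q}\ge 4$ for $q\ge 2$, this is a complete and correct proof of the statement as written for every $q\ge 2$. Degree $4q\le 4q+2$ is fine.

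For $q=1$ the gap you flag cannot be closed, because the statement with constant $1/(q!)^2=1$ is false. Take $p=1/2$. Then $h_e^2\equiv 1$, so $d_{j,k}$ is just the number of vertices preceding $\min(j,k)$ in $\mathfrak S$, and $d_{\max}=n-2$. Let $a\prec b\prec c$ be the last three vertices and suppose $\set{a,b,c}\notin E$, which happens with probability $1/2$. Then $x=\one_{\set{a,b,c}}$ satisfies all axioms of \hypermissdp, and by your Step~1
\begin{align*}
w^{\top}Dw=4\paren{d_{a,b}+d_{a,c}+d_{b,c}}=4(3n-8)>9(n-2)=d_{\max}X^2 \qquad\text{for } n\ge 5,
\end{align*}
so no SoS proof can exist. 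Using instead a non-edge triple among the last $m\le n/4$ vertices makes the failure hold with probability $1-2^{-\binom{m}{3}}$.

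Consequently neither of your fallbacks can succeed. The first would need $2(X^2-X)\le X^2$, i.e.\ $X\le 2$, and borrowing the outer factor $X$ from \prettyref{lem:all-distinct-term-intermediate-2-bound} changes the statement. The second redefines $w$, but $w=z\otimes z$ indexed by ordered pairs is exactly what gives $w^{\top}\widetilde{M}w=\sum_i(z^{\top}A_iz)^2$ in \prettyref{lem:all-distinct-term-intermediate-2-bound}.

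The right resolution is to state the lemma with constant $2/(q!)^2$ (or your $4/(2q)!$), which is valid for every $q\ge 1$. This only replaces $d_{\max}$ by $2d_{\max}$ in \prettyref{lem:all-distinct-term-intermediate-4-bound}, and the extra factor is absorbed into the constants downstream.
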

\begin{proof}
	Since $D$ only keeps coordinates where $\set{J,K} = \set{J',K'}$, we have that the matrix $D$ is block-diagonal with $1\times 1$ block of canonical $\set{J,K}$ coordinates. Hence for $w = z\otimes z$ we have,
	\begin{align*}
		w^{\top}Dw = \sum_{{J,K}}d_{J,K}\paren{z_Jz_K}^2 = \sum_{J,K}d_{J,K}\cdot z_Jz_K \leq d_{\max}\sum_{J,K}z_Jz_K,
	\end{align*}
	where we used \prettyref{claim:booleanity-higher-order-claim} for middle equality and the inequality holds at degree-$2q$ by simply noting that for any $J,K \in \cI_q$ the coefficient $d_{\max}-d_{J,K} \geq 0$ and hence,
	\begin{align*}
		\paren{d_{\max}-d_{J,K}}z_Jz_K = \paren{d_{\max}-d_{J,K}}\cdot \paren{z_Jz_K}^2.   
	\end{align*}
	Next we recall the SoS degree-$2q$ provable inequality below and obtain,
	\begin{align}\label{eq:set-tuple-relate-higher-order}
		\sum_{J \in \cI_q}z_J = \sum_{v_1 \prec \dots\prec v_{q}}x_{v_1}\dots x_{v_q} \leq \frac{1}{q!}\sum_{\paren{v_1,\dots,v_q} \in V^q}x_{v_1}\dots x_{v_q} = \frac{X^q}{q!}.
	\end{align}
	Now we can finish the proof by using the inequality above as,
	\begin{align*}
		\sum_{J,K}z_Jz_K = \paren{\sum_J z_J}^2 \leq \paren{\frac{X^{2q}}{(q!)^2}}.
	\end{align*}
\end{proof}

\begin{proposition}\label{prop:high-prob-bound-diagonal}
	For a $\paren{2q+1}$-uniform hypergraph $H=(V,E)$ where $H \sim \cH_0\paren{n,2q+1,p}$, matrix $D$ as defined in \prettyref{def:centered-kroneckered-matrix-define} and $d_{\max}$ as defined in eqn.~\prettyref{eq:diagonal-weights} we have that with high probability (over the randomness of the input), there is a constant $C'_{q,D}$ such that,
	\begin{align*}
		d_{\max} \leq n + C_{q,D}\paren{\sqrt{\frac{n\log n}{p}} + \frac{\log n}{p}}.
	\end{align*}
\end{proposition}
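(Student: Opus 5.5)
The plan is to reduce the statement to a single fixed pair $(J,K)$, show that $d_{J,K}$ is a sum of at most $n$ independent bounded random variables, apply scalar Bernstein, and finish with a union bound over all pairs.

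\textbf{Step 1: structure of $d_{J,K}$.} Fix $J,K\in\cI_q$. By \prettyref{def:matrix-slice}, $A_i[J,K]=\widetilde{A}_i[J,K]+\widetilde{A}_i[K,J]$. At most one of these two terms is nonzero, namely the one for which all of $J$ precedes all of $K$, or vice versa. Such a nonzero term also requires that $i\prec$ every vertex of $J\cup K$ and that $J,K$ be disjoint. In that case
\begin{align*}
A_i[J,K]^2=h^2_{\{i\}\cup J\cup K}.
\end{align*}
Therefore, if $J$ and $K$ are not disjoint or not interleaving-free, then $d_{J,K}=0$. Otherwise
\begin{align*}
d_{J,K}=\sum_{i\in V_{J,K}} h^2_{\{i\}\cup J\cup K},
\end{align*}
where $V_{J,K}$ is the set of vertices preceding $\min(J\cup K)$, so $\abs{V_{J,K}}\leq n$. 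The hyperedges $\{i\}\cup J\cup K$ are distinct for distinct $i$, so the summands are independent.

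\textbf{Step 2: moments of the summands.} By \prettyref{fact:p-biased-properties}, $\E[h_e^2]=1$ and $0\leq h_e^2\leq 1/(p(1-p))$. Hence $\E[d_{J,K}]\leq n$. The centered summands $Z_i=h_e^2-1$ satisfy
\begin{align*}
\abs{Z_i}\leq \frac{1}{p(1-p)},\qquad \E[Z_i^2]\leq \E[h_e^4]=\frac{(1-p)^2}{p}+\frac{p^2}{1-p}\leq \frac{1}{p(1-p)}.
\end{align*}

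\textbf{Step 3: Bernstein and union bound.} Scalar Bernstein applied to $\sum_i Z_i$ gives, for $t>0$,
\begin{align*}
\Pr{d_{J,K}\geq n+t}\leq \exp\paren{-\frac{t^2/2}{\,n/(p(1-p))+t/(3p(1-p))\,}}.
\end{align*}
Choose
\begin{align*}
t=C\paren{\sqrt{\frac{n\log n}{p(1-p)}}+\frac{\log n}{p(1-p)}}
\end{align*}
with $C$ large depending on $q$. Then the failure probability is at most $n^{-(2q+1)}$. A union bound over the at most $n^{2q}$ pairs $(J,K)$ gives failure probability at most $1/n$.

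\textbf{Step 4: removing the $(1-p)$ factors.} As in the proof of \prettyref{prop:flattening-random-matrix-bounds}, when $p\leq 1/2$ the factor $1/(1-p)$ is absorbed into the constant. When $p\geq 1/2$, every $h_e^2$ is at most a constant bounded by $\max\{p/(1-p),(1-p)/p\}$, so that regime must be handled separately.

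\textbf{Main obstacle.} The only delicate point is the regime $p\to1$: the trivial bound $h_e^2\leq 1/(1-p)$ is not $O(1)$ there. The plan is to follow the paper's convention that $p$ is bounded away from $1$ (e.g. $p\leq 1/2$, as in \prettyref{fig:compare-results}), so the bound becomes $n+C_{q,D}\paren{\sqrt{n\log n/p}+\log n/p}$. The remaining steps are routine.
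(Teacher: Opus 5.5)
Your proposal is correct and follows essentially the same route as the paper: you write $d_{J,K}$ as a sum of at most $n$ independent squared $p$-biased characters, center each term, apply scalar Bernstein with variance and range both of order $1/p$, and union bound over the at most $n^{2q}$ pairs $(J,K)$. Your restriction to $p$ bounded away from $1$ matches the paper's implicit assumption $p\le 1/2$, and it is genuinely needed because the stated bound fails as $p\to 1$; your exact description of the index set $V_{J,K}$, including the pairs with $d_{J,K}=0$, is slightly cleaner than the paper's bookkeeping.
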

\begin{proof}
	Let $S_{J,K}$ denote the scalar random variable for the $d_{J,K}$ which we bound as,
	\begin{align*}
		S_{J,K} = \sum_{i=1}^{\min\paren{J \cup K}}h^2_{\set{i}\cup J \cup K} \leq \sum_{i=1}^{n-1}\E\left[h^2_{\set{i}\cup J \cup K}\right] + \sum_{i=1}^{n-1}\paren{h^2_{\set{i}\cup J \cup K}-\E\left[h^2_{\set{i}\cup J \cup K}\right]} 
	\end{align*}
	Now using $\Ex{h^2_{\set{i}\cup J \cup K}}=1$ we have that for any $J,K$ we can rewrite $S_{J,K}$ as,
	\begin{align*}
		S_{J,K} = \paren{n-1} + \sum_{i=1}^{n-1}\paren{h^2_{\set{i} \cup J \cup K}-1} \text{ and } \abs{S_{J,K}} \leq \paren{n-1} + \abs{\sum_{i=1}^{n-1}\paren{h^2_{\set i \cup J \cup K}-1}}.
	\end{align*}
	Now to bound the expression we consider another random variable $W_i \defeq h^2_{\set i \cup J \cup K}-1$ and note,
	\begin{align*}
		W_i = \begin{cases}
			\frac{1-2p}{p} &\quad \text{with probability }p\\
			\frac{2p-1}{1-p} &\quad \text{with probability }1-p
		\end{cases}
	\end{align*}
	Note that $\Ex{W_i}=0$ and for $p \leq 1/2$, we have that $K = \abs{W_i} \leq 1/p$ and it's variance as,
	\begin{align*}
		\Var(W_i) &= \Var(h^2_{\set i \cup J \cup K}-1) = \Var(h^2_{\set i \cup J \cup K}) = \paren{\Ex{h^4_{\set i \cup J \cup K}}} - \paren{\Ex{h^2_{\set i \cup J \cup K}}}^2\\
		&=p\paren{\frac{1-p}{p}}^2 + \paren{1-p}\paren{\frac{p}{1-p}}^2 - 1 \leq \frac{1}{p}
	\end{align*}
	Now towards bounding $\abs{\sum_i W_i}$ note that $\sigma^2\paren{\sum_i W_i} \leq n/p$ and using Bernstein's inequality,
	\begin{align*}
		\Pr{\abs{\sum_i W_i } >t} \leq 2\cdot \exp\paren{-\frac{t^2/2}{\sigma^2 + Kt/3}} \leq 2 \cdot \exp\paren{-\frac{t^2/2}{\frac{n}{p} + \frac{t}{3p}}}
	\end{align*}
	Let $\cE_{J,K}$ be the event that the variable $S_{J,K}$ deviates from it's mean by more than $t$ and by union bound over all $q$-tuples $J,K \in \cI_q$ we have,
	\begin{align*}
		\Pr{\exists (J,K): S_{J,K}-\Ex{S_{J,K}}>t} \leq \sum_{J,K}\Pr{\cE_{J,K}} \leq n^{2q} \cdot 2\exp \paren{-\frac{t^2/2}{\frac{n}{p} + \frac{t}{3p}}}.
	\end{align*}
	Hence for $t \geq 6q\paren{\sqrt{n\log n/p} + \log n/p}$ it follows that,
	\begin{align*}
		\frac{t^2/2}{\frac{n}{p} + \frac{t}{3p}} \geq 3q\log n,
	\end{align*}
	and hence with high probability (over randomness of $H$) we have that,
	\begin{align*}
		d_{\max} \leq n + 6q\cdot\sqrt{\frac{n\log n}{p}} + 6q\cdot \frac{\log n}{p}.
	\end{align*}
	This finishes the proof by letting $C'_{q,D}=6q$.
\end{proof}
\begin{lemma}\label{lem:all-distinct-term-intermediate-3-bound}
	For a $\paren{2q+1}$-uniform hypergraph $H=(V,E)$ where $H \sim \cH_0\paren{n,2q+1,p}$, with the polynomial system \hypermissdp, and matrix $M$ as defined in \prettyref{def:centered-kroneckered-matrix-define} it holds that,
	\begin{align*}
		\hypermissdp \SoSp{x}{4q+2} \set{w^{\top}Mw \leq \norm{M}_2\cdot \paren{\frac{1}{(q!)^2}}\paren{\sum_{i\in V}x_i}^{2q}}.
	\end{align*}
\end{lemma}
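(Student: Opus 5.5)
The plan is to reuse the template of \prettyref{lem:all-distinct-term-intermediate-1-bound}, replacing the diagonal structure of $D$ by a plain operator-norm bound. The matrix $M=\widetilde{M}-D$ is symmetric. Indeed, each $A_i$ is symmetric, so $A_i\otimes A_i$ is symmetric by \prettyref{fact:kronecker-product-transpose}, and hence $\widetilde{M}$ is symmetric. The mask $\one_{\set{J,K}=\set{J',K'}}$ is symmetric under exchanging $(J,J')$ with $(K,K')$, so $D$ is symmetric as well. Since $w=z\otimes z$ is a vector of polynomials in $x$, each of degree $2q$, we can apply \prettyref{fact:operator_sos_bound} to the indeterminate vector $w$ and then use the substitution rule. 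This gives, in degree $4q$ in $x$,
\begin{align*}
\SoSp{x}{4q}\set{w^{\top}Mw \leq \norm{M}_2\norm{w}_2^2}.
\end{align*}
Here $\norm{M}_2$ is a fixed scalar determined by the instance, not by the formal variables, so it can be treated as a nonnegative constant.

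Next we control $\norm{w}_2^2$ using the axioms of \hypermissdp. Expanding the norm,
\begin{align*}
\norm{w}_2^2=\sum_{J,K\in\cI_q}z_J^2z_K^2.
\end{align*}
By \prettyref{claim:booleanity-higher-order-claim}, each $z_J^2$ equals $z_J$ as a polynomial identity modulo the booleanity axioms, so $\norm{w}_2^2=\paren{\sum_{J}z_J}^2$. We then invoke eqn.~\prettyref{eq:set-tuple-relate-higher-order}, namely $\sum_J z_J\leq X^q/q!$, together with $\sum_J z_J\geq 0$.

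The main obstacle is squaring this inequality inside SoS. I would handle it by writing
\begin{align*}
\frac{X^{2q}}{(q!)^2}-\paren{\sum_J z_J}^2=\paren{\frac{X^q}{q!}-\sum_J z_J}\paren{\frac{X^q}{q!}+\sum_J z_J}.
\end{align*}
Both factors have SoS certificates of degree $2q$ under \hypermissdp. The first factor is a sum of squares of the repeated-index monomials by the proof of eqn.~\prettyref{eq:bound_t_d_term_first_half}. The second factor equals $X^q/q!$ plus $\sum_J z_J^2$, and $X^q$ is itself a sum of squared monomials using $x_v=x_v^2$. The multiplication rule then gives a certificate of degree $4q\le 4q+2$. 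Alternatively, exactly as in the proof of \prettyref{lem:all-distinct-term-intermediate-1-bound}, $\paren{\sum_J z_J}^2=\sum_{J,K}z_Jz_K$, and this can be compared directly term by term to the expansion of $X^{2q}/(q!)^2$.

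Combining the two steps, and using $\norm{M}_2\ge 0$ so that the inequality may be multiplied by it, yields
\begin{align*}
w^{\top}Mw\leq \norm{M}_2\,X^{2q}/(q!)^2
\end{align*}
in degree at most $4q+2$, which is the claimed bound.
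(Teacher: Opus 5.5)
Your proposal is correct and takes essentially the same route as the paper. Both apply the SoS operator-norm bound to $w=z\otimes z$, use booleanity to rewrite $\norm{w}_2^2=(\sum_J z_J)^2$, and then bound $\sum_J z_J\le X^q/q!$. The one addition is that you justify squaring this last inequality inside SoS via the factorization $\paren{X^q/q!-\sum_J z_J}\paren{X^q/q!+\sum_J z_J}$, with both factors SoS modulo the axioms; the paper leaves that step implicit, and your certificate has degree $4q\le 4q+2$.
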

\begin{proof}
	Using SoS provable operator norm bounds (see \prettyref{fact:operator_sos_bound}) we have,
	\begin{align*}
		w^{\top}Mw \leq \norm{M}_2 \norm{w}_2^2.
	\end{align*}
	Now to finish the proof we use \prettyref{claim:booleanity-higher-order-claim} and eqn.~\prettyref{eq:set-tuple-relate-higher-order} to obtain the SoS provable inequality,
	\begin{align*}
		\norm{w}_2^2 = \norm{z\otimes z}^2  = \sum_{J,K \in \cI_q}\paren{z_Jz_K}^2 = \paren{\sum_{J \in \cI_q}z_{J}}\paren{\sum_{K \in \cI_q}z_K} \leq \paren{\frac{X^q}{q!}}^2.
	\end{align*}
\end{proof}

\begin{lemma}\label{lem:all-distinct-term-intermediate-4-bound}
	For $\paren{2q+1}$-uniform random hypergraph $H=(V,E)$ where $H \sim \cH_0\paren{n,2q+1,p}$, and the polynomial system \hypermissdp, we have that,
	\begin{align*}
		\hypermissdp \SoSp{x}{4q+2} \set{\paren{\sum_{\substack{\paren{v_0,\dots,v_{2q}}\in V^{2q+1}\\\abs{\set{v_0,\dots,v_{2q}}}=2q+1}}x_{v_0}\dots x_{v_{2q}}}^2 \leq \beta_q\paren{\frac{1-p}{p}}\paren{d_{\max}+\norm{M}_2} \paren{\sum_{i\in V}x_i}^{2q+1}},
	\end{align*}
	where $\beta_q \defeq \paren{(2q+1)!}^2/4\paren{q!}^2$.
\end{lemma}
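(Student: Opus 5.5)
This lemma is essentially bookkeeping on top of the three preceding lemmas, so the plan is to chain them inside the SoS proof system. The first step is \prettyref{lem:all-distinct-term-intermediate-2-bound}. It gives, at degree $4q+2$,
\begin{align*}
	Y^2 \leq \paren{\frac{((2q+1)!)^2}{4}}\paren{\frac{1-p}{p}} X \cdot \paren{w^{\top}\widetilde{M}w},
\end{align*}
where $Y$ denotes the all-distinct sum and $X=\sum_i x_i$. Next we use the exact decomposition $\widetilde{M}=D+M$ from \prettyref{def:centered-kroneckered-matrix-define}. This is a polynomial identity, so $w^{\top}\widetilde{M}w = w^{\top}Dw + w^{\top}Mw$ holds in SoS at any degree.

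We then bound the two quadratic forms separately. \prettyref{lem:all-distinct-term-intermediate-1-bound} gives $w^{\top}Dw \leq d_{\max}X^{2q}/(q!)^2$, and \prettyref{lem:all-distinct-term-intermediate-3-bound} gives $w^{\top}Mw \leq \norm{M}_2 X^{2q}/(q!)^2$. By the addition rule,
\begin{align*}
	\hypermissdp \SoSp{x}{4q+2} \set{w^{\top}\widetilde{M}w \leq \frac{d_{\max}+\norm{M}_2}{(q!)^2}X^{2q}}.
\end{align*}

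The remaining step is to multiply this inequality by $X$ while keeping the degree at $4q+2$. This is the only point needing care. Multiplying an SoS inequality by $X$ is legitimate because $X=\sum_i x_i^2$ modulo the booleanity axioms, so $X$ is itself a sum of squares of degree-$1$ polynomials under \hypermissdp.

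The degree bookkeeping is the main thing to check. The certificate for $\frac{d_{\max}+\norm{M}_2}{(q!)^2}X^{2q}-w^{\top}\widetilde{M}w\geq 0$ consists of squares and axiom multiples of degree at most $4q$, since $w$ has degree $2q$. To see this, one rechecks the proofs of the two lemmas: they use only $z_J^2=z_J$, $(d_{\max}-d_{J,K})(z_Jz_K)^2$, the operator-norm fact at degree $4q$, and the tuple-counting inequality \prettyref{eq:set-tuple-relate-higher-order}. Multiplying by $x_i^2$ and summing over $i$ raises the degree by $2$, giving $4q+2$. The scalars $d_{\max}$ and $\norm{M}_2$ are fixed once the instance is drawn, so they act as nonnegative constants.

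Combining everything yields
\begin{align*}
	Y^2 \leq \frac{((2q+1)!)^2}{4(q!)^2}\paren{\frac{1-p}{p}}\paren{d_{\max}+\norm{M}_2}X^{2q+1},
\end{align*}
which is the claimed bound with $\beta_q = ((2q+1)!)^2/4(q!)^2$. The transitivity step from \prettyref{lem:all-distinct-term-intermediate-2-bound} is just addition of certificates.
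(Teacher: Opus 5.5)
Your proposal is correct and follows the paper's proof. Like the paper, you start from \prettyref{lem:all-distinct-term-intermediate-2-bound}, split $w^{\top}\widetilde{M}w=w^{\top}Dw+w^{\top}Mw$, bound the two pieces by \prettyref{lem:all-distinct-term-intermediate-1-bound} and \prettyref{lem:all-distinct-term-intermediate-3-bound}, and multiply through by $X$ to get the constant $\beta_q$; your justification that multiplying by $X=\sum_i x_i^2$ (mod booleanity) keeps the certificate at degree $4q+2$ fills in a degree-bookkeeping step the paper leaves implicit.
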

\begin{proof}
	We start with \prettyref{lem:all-distinct-term-intermediate-2-bound} and then using \prettyref{lem:all-distinct-term-intermediate-1-bound} and \prettyref{lem:all-distinct-term-intermediate-3-bound} to bound $w^{\top}\widetilde{M}w$,
	\begin{align*}
		\paren{\sum_{\substack{\paren{v_0,\dots,v_{2q}}\in V^{2q+1}\\\abs{\set{v_0,\dots,v_{2q}}}=2q+1}}x_{v_0}\dots x_{v_{2q}}}^2 &\leq \paren{\frac{(2q+1)!}{2}}^2\paren{\frac{1-p}{p}}\cdot X \cdot \paren{d_{\max} + \norm{M}_2}\cdot \frac{1}{\paren{q!}^2}\cdot X^{2q}\\
		&=\beta_q\paren{\frac{1-p}{p}}\paren{d_{\max}+\norm{M}_2} \paren{\sum_{i\in V}x_i}^{2q+1}.
	\end{align*}
\end{proof}

\begin{lemma}\label{prop:best-of-both-world-tensor-bound}
	For a $\paren{2q+1}$-uniform hypergraph $H=(V,E)$ where $H \sim \cH_0\paren{n,2q+1,p}$, and for the matrix $M$ as defined in \prettyref{def:centered-kroneckered-matrix-define} and $d_{\max}$ as defined in eqn.~\prettyref{eq:diagonal-weights}, it holds that with high probability (over the randomness of the input), there is a constant $\kappa_q$ such that,
	\begin{itemize}
		\item For $p \geq p^{\star} \defeq \paren{\log n}^2/\sqrt{n}$ we have that,
		\begin{align*}
			d_{\max} + \norm{M}_2 \leq \kappa_{q} \cdot \paren{n^{q + \frac{1}{2}}+ \frac{n^q\paren{\log n}^{5/2}}{p}}.
		\end{align*}
		\item For $p \leq p^{\star} \defeq \paren{\log n}^2/\sqrt{n}$ we have that,
		\begin{align*}
			d_{\max}+\norm{M}_2 \leq \kappa_q\cdot \paren{n^{q + \frac{1}{2}}\sqrt{\log n}+ \frac{q \paren{\log n}^{2}}{p}}.
		\end{align*}
	\end{itemize}
\end{lemma}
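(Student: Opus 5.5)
The plan is to bound the two pieces separately and then add them. For $d_{\max}$ I invoke \prettyref{prop:high-prob-bound-diagonal}, which gives $d_{\max}\lesssim n+\sqrt{n\log n/p}+\log n/p$ w.h.p. By AM--GM, $\sqrt{n\log n/p}\le \tfrac12\paren{n^{q+1/2}+ n^{q-1/2}\log n/p}$. Since also $n\le n^{q+1/2}$, the quantity $d_{\max}$ is dominated in both regimes by the claimed right-hand sides. The real work is $\norm{M}_2$. I will prove two different high-probability bounds on it, one log-free via chaos and one via matrix Bernstein, and in each regime use whichever is smaller. The threshold $p^\star$ is, up to the exact power of $\log n$, where the chaos variance term $n^q(\log n)^{5/2}/p$ crosses the Bernstein leading term $n^{q+1/2}\sqrt{\log n}$.

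\emph{Chaos bound (Regime I).} I write $M$ explicitly as an order-$2$ combinatorial chaos in the sense of \prettyref{def:combinatorial_chaos}. The summation index is $t=(i,J,K,J',K')\in[n]^{4q+1}$, so there are $f=4q+1$ free coordinates, each with $T_b=n$. The chaos coordinates are $I_1=\set{i}\cup J\cup K$ and $I_2=\set{i}\cup J'\cup K'$. The row coordinate is $(J,J')$ and the column coordinate is $(K,K')$. Removing $D$ means the two hyperedges $I_1,I_2$ are distinct whenever they share $i$, so $M$ is an off-diagonal (tetrahedral) degree-$2$ polynomial. Standard decoupling (de la Pe\~na--Montgomery-Smith) then lets me pass to the decoupled version, to which \prettyref{thm:moment-chaos-theorem} applies, at the cost of a constant.

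Next I compute the parameters with \prettyref{prop:bound_chaos_parameters}. For $\sigma(\cA)$ I run over the four ways to distribute $I_1,I_2$ between the rows and the columns. The worst case is $R=\set{\text{row}}$, which leaves $i,K,K'$ uncovered, together with $C=\set{\text{col},I_1,I_2}$, which covers everything. This gives $\norm{\cA_{R|C}}_2^2=n^{2q+1}$, so $\sigma(\cA)\lesssim n^{q+1/2}$. For $\nu(\cA)$, both matrix coordinates sit in $C$ and some chaos coordinate sits in $R$. The worst case is, for example, $R=\set{I_1}$, which leaves $J',K'$ uncovered. This gives $\nu(\cA)\lesssim n^{q}$. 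For the moment parameter, the centered $p$-biased $h$ satisfies $\alpha_s(h)\lesssim p^{-1/2}$ for all $s$ (since $\abs{h}\le 1/\sqrt{p(1-p)}$), so $\alpha_s(h)^2\lesssim 1/p$.

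I then take $t=\Theta(\log n)$ in \prettyref{thm:moment-chaos-theorem}. Here $d+m=n^{O(q)}$, so $\log(d+m)=O_q(\log n)$, and the theorem gives $\paren{\E\norm{M}^t}^{1/t}\lesssim n^{q+1/2}+n^q(\log n)^{5/2}/p$. Markov's inequality on the $t$-th moment then gives this bound with probability $1-n^{-\Omega(1)}$, which is the first bullet.

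\emph{Bernstein bound (Regime II).} The slice $A_i$ involves only hyperedges whose $\prec$-minimal vertex is $i$. These edge sets are disjoint across $i$, so the summands $Z_i\defeq A_i\otimes A_i-D_i$ are independent. Here $D_i$ is the swapped-diagonal part of $A_i\otimes A_i$. Each $Z_i$ is centered: the only nonzero-mean entries of $A_i\otimes A_i$ are the $h_e^2$ entries, and these lie in $D_i$. For a uniform bound on $\norm{Z_i}$ I bound $\norm{A_i}_2$ exactly as in \prettyref{prop:flattening-random-matrix-bounds}, using the independent-block decomposition and \prettyref{thm:random-matrix-wigner-bounds}. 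This gives $\norm{A_i}\lesssim n^{q/2}+\sqrt{\log n/p}$ w.h.p. By \prettyref{fact:kronecker-product-norm}, $\norm{Z_i}\lesssim n^q+\log n/p$, after truncating on this good event and adding back the diagonal via $d_{\max}$. For the variance proxy, $\norm{\sum_i\E Z_i^2}\lesssim n\cdot n^{2q}$, since each row of $A_i$ has $\lesssim n^q$ unit-variance entries.

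Matrix Bernstein in dimension $n^{2q}$ then yields $\norm{M}\lesssim n^{q+1/2}\sqrt{\log n}+(n^q+\log n/p)\log n$. Here $n^q\log n$ is absorbed into the first term, which gives the second bullet.

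\emph{Main obstacle.} I expect the main obstacle to be the chaos step: certifying that $M$ genuinely fits \prettyref{def:combinatorial_chaos} after the ordering constraints, the symmetrization $A_i=\widetilde A_i+\widetilde A_i^\top$, and the removal of $D$. Indicator restrictions such as $i\prec j_1\prec\cdots$ are not literally of product form. I would handle this by splitting $M$ into $O_q(1)$ pieces according to the coincidence pattern between $I_1$ and $I_2$, namely which of $J,K,J',K'$ overlap. Each piece is a sub-sum of a full combinatorial chaos obtained by restricting index ranges, and restricting ranges can only decrease the flattening norms. If that monotonicity is not directly available, I would recompute $\sigma$ and $\nu$ for each piece by hand, by bounding row and column sums of the flattenings.
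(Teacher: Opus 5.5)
Your proposal is correct and has the same architecture as the paper's proof. You bound $d_{\max}$ via \prettyref{prop:high-prob-bound-diagonal}. For the first bullet you use the log-free chaos bound with the same parameters $\sigma(\cA)=n^{q+1/2}$, $\nu(\cA)=n^{q}$, $\alpha(h)^2\lesssim 1/p$, $t=\Theta(\log n)$, followed by Markov. For the second you use matrix Bernstein over the independent slice summands $Z_i=A_i\otimes A_i-D_i$.

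The one real difference is the Bernstein step, and it buys something. The paper avoids truncation by using Koltchinskii's $\psi_1$ version (\prettyref{thm:general-matrix-bernstein}) with $L=\max_i\norm{\norm{M_i}}_{\psi_1}\lesssim n^q+\log n/p$. That inequality carries an extra factor $\log(L\sqrt{n}/\sqrt{\nu})$, which the paper bounds by $\log n$. Consequently \prettyref{prop:centered-random-tensor-bound-bernstein} is stated with $(\log n)^3/p$ (its derivation produces $q^2(\log n)^3/p$), not the $q(\log n)^2/p$ quoted in the second bullet. You truncate to $G_i=\set{\norm{A_i}\lesssim n^{q/2}+\sqrt{\log n/p}}$ and then apply bounded-summand Bernstein, which gives exactly $n^{q+1/2}\sqrt{\log n}+(\log n)^2/p$. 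So your route actually supports the lemma as written.

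The cost is routine bookkeeping:
\begin{itemize}
\item Each slice must fail with probability $n^{-c}$ for a large constant $c$. Rerun \prettyref{thm:random-matrix-wigner-bounds} with deviation $\sqrt{C\log n/p}$; the $O(1/n)$ of \prettyref{prop:flattening-random-matrix-bounds} does not survive a union bound over $n$ slices.
\item After recentring the truncated summands, you must absorb the bias $\norm{\sum_i\E[Z_i\one_{G_i^c}]}\lesssim n^{2q+1}n^{-c}/p$. This uses the deterministic bound $\norm{Z_i}\lesssim\norm{A_i}_F^2+1/p\lesssim n^{2q}/p$, and is negligible once $c$ is large relative to $q$ and $\log_n(1/p)$.
\end{itemize}

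Your explicit decoupling step and your concern about ordering constraints are points the paper passes over; it simply identifies the product-form chaos with $M$. The monotonicity you want is available. Every coefficient tensor here is entrywise nonnegative, and $0\le A\le B$ entrywise implies $\norm{A}\le\norm{B}$. So restricting index ranges only shrinks every $\sigma$- and $\nu$-flattening. First split along $A_i=\widetilde{A}_i+\widetilde{A}_i^{\top}$, so that each hyperedge carries a single label.

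Three small repairs are needed.

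First, your AM--GM split does not show that $d_{\max}$ is dominated in the second regime. For $p\ll\sqrt{\log n}/n$, the term $n^{q-1/2}\log n/p$ is not bounded by $n^{q+1/2}\sqrt{\log n}+q(\log n)^2/p$. Use $\sqrt{n\log n/p}\le\frac12\paren{n+\log n/p}$ instead, which works in both regimes.

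Second, $\norm{D_i}$ is not bounded by $d_{\max}$ in general. On the rows and columns indexed by pairs $(J,J)$ and $(K,K)$, $D_i$ is the Hadamard square $A_i\circ A_i$, whose norm can be of order $n^q$. However, $A_i\circ A_i$ is a principal submatrix of $A_i\otimes A_i$. The remaining entries of $D_i$, at $((J,K),(K,J))$, form a weighted partial permutation with weights at most $1/p$. So $\norm{D_i}\le\max\set{\norm{A_i}^2,1/p}$, and your $L\lesssim n^q+\log n/p$ stands.

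Third, $\E Z_i^2$ is not diagonal. In $\sum_{(K,K')}Z_i[(J,J'),(K,K')]\,Z_i[(K,K'),(J',J)]$, the terms with $K=K'$ have nonzero mean, giving entries of size up to $n^q$ at $((J,J'),(J',J))$. Your row-count argument covers only the diagonal. These swap entries add at most $n^q$ to the norm, so $\nu\lesssim n^{2q+1}$ is unaffected.
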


\begin{proof}
	The details will appear in \prettyref{sec:operator-norm-odd-bounds}. We will first show in \prettyref{prop:centered-random-tensor-bound-bernstein} in \prettyref{sec:matrix-bernstein-based-bound}, using Matrix Bernstein inequality that regardless of $p$ there is a constant $\kappa^{\mathsf{Ber}}_{q}$ such that,
	\begin{align*}
		\norm{M}_2 \leq \kappa_{q}^{\mathsf{Ber}} \cdot \paren{n^{q + \frac{1}{2}}\sqrt{\log n}+ \frac{q \paren{\log n}^{2}}{p}}.
	\end{align*}
	Then in \prettyref{prop:centered-random-tensor-bound-chaos} in \prettyref{sec:matrix-chaos-based-bound} we show using Matrix Chaos inequalities that there is a constant $\kappa^{\mathsf{chaos}}_{q}$ such that,
	\begin{align*}
		\norm{M}_2 \leq \kappa_{q}^{\mathsf{chaos}} \cdot \paren{n^{q + \frac{1}{2}}+ \frac{n^q\paren{\log n}^{5/2}}{p}}.
	\end{align*}
	Next we note from the chaos bound above that the main term in the Bernstein bound term always dominates the variance terms for our regimes of interest $p \gg 1/n^{(2q+1)/2}$, and the leading chaos bound term is always better than the leading term in Bernstein bound. Therefore to get the best of both bounds we should set our switching threshold $p^{\star}$ when the variance term from chaos bound exceeds the leading term in Bernstein bound; this happens when,
	\begin{align*}
		\frac{n^q\paren{\log n}^{5/2}}{p} \geq n^{q+\frac{1}{2}}\sqrt{\log n} \implies p \leq \frac{\paren{\log n}^2}{\sqrt{n}}.
	\end{align*}
	Finally note that in either case the bounds obtained in \prettyref{prop:high-prob-bound-diagonal} for $d_{\max} \ll \norm{M}_2$ and hence using $\kappa_q \geq 2\cdot \max\set{\kappa_q^{\mathsf{Ber}},\kappa_q^{\mathsf{chaos}}}$ finishes the proof.
\end{proof}

\begin{proof}[Proof of \prettyref{thm:odd-case-bounds}]
	We recall we denote $X=\sum_{v\in V} x_v$ and we let $U = \pE\Brac{X^{2q+1}}$ and $u=U^{1/(2q+1)}$ and let $Y=\sum_{v_0,\dots,v_{2q}\in V^{2q+1}}x_{v_0}\dots x_{v_{2q}}$ with $\abs{v_0,\dots,v_{2q}}=2q+1$ (all distinct term). Using bounds form eqn.~\prettyref{eq:lower_order_odd_bounds} and \prettyref{prop:lower_term_bounds} with $f=2q+1$ it follows that,
	\begin{align*}
		\hypermissdp \SoSp{x}{4q+2} \set{X^{2q+1} \leq Y + C_{2q+1}X^{2q}}
	\end{align*}
	By \prettyref{cor:sos_soundness} there exists an degree-$\paren{4q+2}$ pseudo-expectation such that,
	\begin{align}\label{eq:pe-bound-eqn}
		U = \pE\Brac{X^{2q+1}} \leq \pE\Brac{Y} + C_{2q+1}\cdot \pE\Brac{X^{2q}}
	\end{align}
	Using pseudo-expectation Cauchy-Schwarz (\prettyref{fact:pe_holder} with $t=2$) and \prettyref{lem:all-distinct-term-intermediate-4-bound}, and use $1-p \leq 1$,
	\begin{align*}
		\pE\Brac{Y} \leq \sqrt{\pE\Brac{Y^2}} \leq\sqrt{\frac{\beta_q}{p}\paren{d_{\max} + \norm{M}_2}}\cdot \sqrt{U}.
	\end{align*}
	Now we apply \prettyref{fact:pe_holder} with $f(x)= X = \sum_{v\in V}x_v,g(x)=1,t=2q+1$ (note that $dt=2q+1$ as degree of $f$ is $1$) and we have,
	\begin{align*}
		\pE\Brac{X^{2q}} \leq \paren{\pE\Brac{X^{2q+1}}}^{2q/2q+1} = u^{2q}
	\end{align*}
	Now putting everything together in eqn.~\prettyref{eq:pe-bound-eqn} we have an inequality in $u$ as,
	\begin{align*}
		u^{2q+1} \leq \paren{\sqrt{\frac{\beta_q}{p}\paren{d_{\max} + \norm{M}_2}}}\cdot u^{q + \frac{1}{2}} + C_{2q+1}\cdot u^{2q}
	\end{align*}
	Now either $u \leq 2\cdot C_{2q+1}$ holds or otherwise we have that $C_{2q+1}\cdot u^{2q}\leq u^{2q+1}/2$ and we have,
	\begin{align*}
		\frac{u^{2q+1}}{2} \leq \paren{\sqrt{\frac{\beta_q}{p}\paren{d_{\max} + \norm{M}_2}}}\cdot u^{q + \frac{1}{2}}.
	\end{align*}
	We simplify the expression here and solve for $u$ which gives,
	\begin{align*}
		u \leq \paren{4\cdot\beta_q}^{1/(2q+1)}\paren{\frac{d_{\max} + \norm{M}_2}{p}}^{1/(2q+1)}
	\end{align*}
	Now again using \prettyref{fact:pe_holder} with $f(x)=\sum_{v\in V}x_v,g(x)=1,t=2q+1$ (note that degree $dt=2q+1$),
	\begin{align*}
		\pE[X] = \pE[f\cdot g] \leq \paren{\pE[f^t]}^{1/t} = \paren{\pE[X^{2q+1}]}^{1/(2q+1)}=u.
	\end{align*}
	Therefore using above, it follows that in either case we have that,
	\begin{align*}
		\pE\Brac{X} \leq \max\set{2C_{2q+1},\paren{4\cdot \beta_q}^{1/(2q+1)}\paren{\frac{d_{\max} + \norm{M}_2}{p}}^{1/(2q+1)}}
	\end{align*}
\end{proof}

\subsection{Bounding the Operator Norm for Centered Kroneckered Sum of Slices}
\label{sec:operator-norm-odd-bounds}
We prove two competing bounds on $\norm{M}_2$ in this section. Firstly in \prettyref{sec:matrix-bernstein-based-bound} we use the traditional Matrix concentration inequalities and obtain a bound which has extraneous logarithmic factors in $n$ (especially for $p=\tcohm(1)$) but has a sharp dependence on $p$. Later in \prettyref{sec:matrix-chaos-based-bound} we show improved bounds with sharp dependence on $n$ using the matrix chaos inequalities from \cite{BLNvH25}.

\subsubsection{Bounds Using Matrix Bernstein Inequality}\label{sec:matrix-bernstein-based-bound}
\begin{proposition}\label{prop:centered-random-tensor-bound-bernstein}
	For a $\paren{2q+1}$-uniform hypergraph $H=(V,E)$ where $H \sim \cH_0\paren{n,2q+1,p}$, and for the matrix $M$ as defined in \prettyref{def:centered-kroneckered-matrix-define}, it holds that with high probability (over the randomness of the input), there is a constant $\kappa^{\mathsf{Ber}}_{q}$ such that,
	\begin{align*}
		\norm{M}_2 \leq \kappa_{q}^{\mathsf{Ber}} \cdot \paren{n^{q + \frac{1}{2}}\sqrt{\log n}+ \frac{\paren{\log n}^{3}}{p}}.
	\end{align*}
\end{proposition}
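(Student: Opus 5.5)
The plan is to apply the matrix Bernstein inequality to a decomposition of $M$ into a sum of independent, mean-zero random matrices, one per vertex. The key structural observation comes from the oriented tensor $T$ and \prettyref{def:matrix-slice}. A hyperedge $e$ enters only the slice $A_i$ with $i=\min_{\mathfrak S} e$. Hence the family $\{A_i\}_{i\in V}$ depends on pairwise disjoint sets of the variables $h_e$, and the slices are mutually independent.

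We set $D_i$ to be the swapped-diagonal part of $A_i\otimes A_i$, and
\begin{align*}
Z_i \defeq A_i\otimes A_i - D_i, \qquad \text{so that} \qquad M=\sum_{i\in V} Z_i .
\end{align*}
The entry $Z_i[(J,J'),(K,K')]$ equals $h_{\{i\}\cup J\cup K}\,h_{\{i\}\cup J'\cup K'}$ when these are two distinct hyperedges, and $0$ otherwise. The equal-edge case is exactly $\{J,K\}=\{J',K'\}$, which $D$ removes. Products of two distinct independent centered characters have mean zero, so $\E[Z_i]=0$ and the $Z_i$ are independent and self-adjoint.

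Matrix Bernstein needs two parameters.

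\textbf{Uniform bound $L$ on $\norm{Z_i}_2$.} By \prettyref{fact:kronecker-product-norm}, $\norm{A_i\otimes A_i}_2=\norm{A_i}_2^2$. Since $D_i$ is diagonal in the canonical $\{J,K\}$ coordinates, $\norm{D_i}_2\le \max_{J,K}A_i[J,K]^2\le 1/p$. To bound $\norm{A_i}_2$, I would split $A_i=\widetilde A_i+\widetilde A_i^\top$, where $\widetilde A_i$ has independent entries of variance at most $1$, and apply \prettyref{thm:random-matrix-wigner-bounds} exactly as in the proof of \prettyref{prop:flattening-random-matrix-bounds}. With a union bound over $i$, this gives $\norm{A_i}_2\lesssim_q n^{q/2}+\sqrt{\log n/p}$ simultaneously for all $i$ w.h.p. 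Hence
\begin{align*}
L \lesssim_q n^q + \frac{\log n}{p}.
\end{align*}
This bound only holds w.h.p., so I would use a truncated Bernstein argument. Replace $Z_i$ by $Z_i\one_{\cG_i}$ recentered, where $\cG_i$ is the good event for $A_i$. The recentering error is tiny: a trivial bound $\norm{Z_i}_2\le n^{2q}/p$ times a failure probability $n^{-\omega(1)}$ is negligible. On the intersection of the good events, $M$ equals the truncated sum.

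\textbf{Variance proxy.} We need to bound $\sigma^2=\norm{\sum_i\E[Z_i^2]}_2$. Expanding entrywise,
\begin{align*}
\E[Z_i^2][(J,J'),(L,L')]=\sum_{K,K'}\E\Brac{h_{iJK}h_{iJ'K'}h_{iKL}h_{iK'L'}}.
\end{align*}
Only pairings of the four characters survive. The generic pairing $\{iJK\}=\{iKL\},\ \{iJ'K'\}=\{iK'L'\}$ forces $J=L,\ J'=L'$ and contributes roughly $\E[A_i^2]\otimes\E[A_i^2]\preceq n^{2q}\, I$. All remaining coincidences either force extra index equalities, costing a factor of $n$, or produce a fourth moment $\E[h^4]\le 1/p$ on a set of size at most $n^q$. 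My plan is to bound the norm of each pairing pattern by a Schur test (max row sum), giving
\begin{align*}
\sigma^2\lesssim_q n\cdot\paren{n^{2q}+ \frac{n^{q}}{p}} .
\end{align*}
Matrix Bernstein in dimension $N=\binom{n}{q}^2\le n^{2q}$ then gives, w.h.p.,
\begin{align*}
\norm{M}_2\lesssim \sigma\sqrt{\log n}+L\log n\lesssim_q n^{q+1/2}\sqrt{\log n}+\sqrt{\frac{n^{q+1}\log n}{p}}+n^q\log n+\frac{(\log n)^2}{p}.
\end{align*}
The cross term is at most the average of the first and last terms. The term $n^q\log n$ is dominated by $n^{q+1/2}$. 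This yields the claimed bound, with $(\log n)^2/p\le(\log n)^3/p$ leaving slack for the truncation.

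\textbf{Main obstacle.} I expect the variance computation to be the hard step. The dependence inside $Z_i$ — the same $h_e$ appearing in several entries, and the various ways two pairs of hyperedges can coincide — has to be enumerated carefully. The danger is a pairing pattern whose row sums pick up a factor like $n^{q}/p$ per row across $n^q$ rows, which would break the bound. I would handle this by listing the finitely many coincidence patterns among $(J,K,J',K',L,L')$, where the number of patterns depends only on $q$. For each pattern I would bound the norm of the corresponding deterministic matrix, weighted by $\E[h^2]=1$ or $\E[h^4]\le 1/p$, via the max row sum.
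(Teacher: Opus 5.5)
Your decomposition $M=\sum_i Z_i$, the independence and centering argument, and $L\lesssim_q n^q+\log n/p$ match the paper. Truncating on the good events for $\norm{A_i}_2$ is a legitimate substitute for the $\psi_1$-Orlicz Bernstein inequality the paper borrows from Koltchinskii.

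The step that fails is the final absorption. With your variance proxy $\sigma^2\lesssim_q n^{2q+1}+n^{q+1}/p$, the term $\sqrt{n^{q+1}\log n/p}$ is \emph{not} at most the average of $n^{q+1/2}\sqrt{\log n}$ and $(\log n)^2/p$. Its square is $n^{q+1}\log n/p$, whereas the product of those two terms is only $n^{q+1/2}(\log n)^{5/2}/p$. Throughout the window $(\log n)^3/n^{q+1}\ll p\ll n^{-q}$ it dominates both. At $p=n^{-q-1/4}$, which lies inside the range $p\gg n^{-(2q+1)/2}$ where this bound is used, your argument gives $n^{q+5/8}\sqrt{\log n}$ instead of $n^{q+1/2}\sqrt{\log n}$. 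So as written it does not prove the proposition.

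The missing observation is that the $1/p$ part of the variance does not exist, so there are no fourth moments to enumerate. A hyperedge with minimum vertex $i$ occupies only the positions $(J,K)$ and $(K,J)$ of $A_i$. Subtracting $D_i$ therefore removes exactly the products with $e=e'$, and every nonzero entry of $Z_i$ is $h_eh_{e'}$ with $e\neq e'$. Hence in $\E[Z_i^2][\alpha,\gamma]=\sum_\beta \E[h_{e_1}h_{e_2}h_{e_3}h_{e_4}]$ one always has $e_1\neq e_2$ and $e_3\neq e_4$. Since $\E[h_e]=0$, a nonzero term requires two distinct pairs ($e_1=e_3,\,e_2=e_4$ or $e_1=e_4,\,e_2=e_3$), each contributing exactly $1$. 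A fourth moment $\E[h^4]$ would need $e_1=e_2=e_3=e_4$, which is excluded.

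The first pairing forces $\gamma=\alpha$ and gives diagonal entries at most $n^{2q}$. The second pairing is genuinely off-diagonal: for $q=1$ it puts weight up to about $n$ at $((j,j'),(j',j))$, so the paper's assertion that only $\alpha=\gamma$ survives is inaccurate. For fixed $\alpha$, however, it has at most $\binom{2q}{q}n^q$ terms. Your Schur test then gives $\norm{\E[Z_i^2]}_2\lesssim_q n^{2q}$, hence $\sigma^2\lesssim_q n^{2q+1}$, and Bernstein yields the stated bound.

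Separately, $D_i$ is not diagonal. On the coordinates $(J,J)$ it is the Hadamard square $A_i\circ A_i$, whose norm can be of order $n^q$, so $\norm{D_i}_2\le 1/p$ is false (the paper claims the same). This is harmless: $A_i\circ A_i$ is a compression of $A_i\otimes A_i$, which gives $\norm{D_i}_2\le\norm{A_i}_2^2+\max_{J,K}A_i[J,K]^2$ and leaves $L$ unchanged.
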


\begin{proof}
	For every $i \in V$, we can define the (uncentered) slice contribution matrix $\widetilde{M}_i$ as,
	\begin{align*}
		\widetilde{M}_i \defeq A_i \otimes A_i,
	\end{align*}
	and the a (swapped) diagonal slice matrix $D_i$ where we have,
	\begin{align*}
		D_i\Brac{\paren{J,J'},\paren{K,K'}} \defeq \widetilde{M}_i\Brac{\paren{J,J'},\paren{K,K'}}\cdot \one_{\set{J,K} = \set{J',K'}},
	\end{align*}
	and the centered slice contribution matrix $M_i = \widetilde{M}_i-D_i$. Now using our notation we have,
	\begin{align*}
		\widetilde{M} = \sum_{i\in V}\widetilde{M}_i, \quad D=\sum_{i\in V}D_i, \quad M=\sum_{i\in V}M_i.
	\end{align*}
	Next, note that $M_i$ is symmetric by construction, since $A_i$ is symmetric as defined in \prettyref{def:matrix-slice} and $D_i$ is symmetric by construction. Note that by convention in \prettyref{def:matrix-slice}, every hyperedge contributes to exactly one slice and therefore the random variables appearing in different $A_i$'s (and hence different $M_i$'s) are functions of disjoint sets of edge indicators $Y_e$. Hence $\set{M_i}_{i\in V}$ are mutually independent. Also note that our matrices are zero-mean $\Ex{M_i}=0$. To see this we fix $\paren{J,J'},\paren{K,K'}$, and if $\set{J,K}=\set{J',K'}$ then $M_i\Brac{\paren{J,J'},\paren{K,K'}}=0$ deterministically (as we subtracted it from $D_i$) and otherwise,
	\begin{align*}
		M_i\Brac{\paren{J,K},\paren{J',K'}} = A_i\Brac{J,K}\cdot A_i\Brac{J',K'}
	\end{align*}
	Now each non-zero $A_i\Brac{\cdot,\cdot}$ corresponds to some $p$-biased $h_e$ and for $\set{J,K} \neq \set{J',K'}$, these correspond to two distinct hyperedges $e \neq e'$ which are independent and using \prettyref{fact:p-biased-properties} since $\Ex{h_e}=0$,
	\begin{align*}
		\E\Brac{M_i\Brac{\paren{J,J'},\paren{K,K'}}} = \Ex{h_e}\cdot \Ex{h_{e'}}=0.
	\end{align*}
	Now since $\Ex{M_i}=0$ entrywise, it follows that $\Ex{M_i}=0$ as a matrix. Hence $M=\sum_i M_i$ is a sum of independent, zero-mean, self-adjoint matrices and we can use Matrix Bernstein inequality.
	\begin{fact}[Matrix Bernstein Inequality, Theorem 6.6.1 in \cite{Tro15}]\label{fact: matrix-bernstein}
		Let $\set{M_i}$ be a set of independent, zero-mean self-adjoint matrices of dimension $d$. Let $L=\max_i \norm{M_i}$ and let,
		\begin{align*}
			\nu \defeq \norm{\sum_{i\in V}\Ex{M_i^2}}.
		\end{align*}
		Then for all $t \geq 0$ it holds that,
		\begin{align*}
			\Pr{\norm{\sum_{i\in V} M_i} \geq t} \leq d\cdot \exp\paren{\frac{-t^2/2}{\nu + Lt/3}}.
		\end{align*}
	\end{fact}
	In our setting, we have $d=n^{2q}$, and we will show that $\nu \approx n^{2q+1}$ and now we know a bound on $L$ that holds with high probability. The simple deterministic bound on $M_i$'s that bounds $L \leq n^q/p$ is too crude for our purpose and the corresponding bound has extraneous $n^q$ factors in the second term in bound claimed in \prettyref{prop:centered-random-tensor-bound-bernstein}. However we note that if we fix an $i\in V$ the hypergraph corresponding to $A_i$ is a $2q$-uniform hypergraph sampled from $\cH_0(n-1,2q,p)$ and hence using \prettyref{prop:flattening-random-matrix-bounds}, for a constant $\kappa_{A,q}$ with high probability (over the randomness of input),
	\begin{align*}
		\norm{A_i}_2 \leq \kappa_{A,q}\paren{n^{q/2} + \sqrt{\frac{\log n}{p}}}.
	\end{align*}
	However,the vanilla version of Matrix Bernstein in \prettyref{fact: matrix-bernstein} requires that the bound on $\norm{M_i}$ hold almost surely. Luckily, there is a version of Matrix Bernstein (stated below) which we can  use.
	\begin{theorem}[Proposition 2 in \cite{Kolt11}]\label{thm:general-matrix-bernstein}
		Let $\set M_i$ be a set of independent, zero-mean, self-adjoint matrices of dimension $d$. Let $L=\max_i \norm{\norm{M_i}_2}_{\psi_1}$ (sub-exponential Orlicz norm). Then there is a constant $C^{\mathsf{Ber}}$ such that for any $t \geq 0$, we have that with probability at least $1-e^{-t}$,
		\begin{align*}
			\norm{\sum_{i=1}^n M_i}_2 \leq  C^{\mathsf{Ber}}\paren{\sqrt{\nu(t+\log(2d))} + L\cdot \log\paren{\frac{L\sqrt{n}}{\sqrt{\nu}}}\cdot \paren{t + \log(2d)}}.
		\end{align*}
	\end{theorem}
	Note that our matrices have dimension $d=n^{2q}$.
	To compute the variance parameter $\nu$ we note that,
	\begin{align*}
		\Ex{M_i^2\Brac{\alpha,\alpha}} = \Ex{\sum_{\beta}M_i\Brac{\alpha,\beta}\cdot M_i\Brac{\beta,\alpha}} = \Ex{\sum_{\beta}\paren{M_i\Brac{\alpha,\beta}}^2} = \sum_{\beta}\E\Brac{\paren{M_i\Brac{\alpha,\beta}}^2}.
	\end{align*}
	Now whenever $M_i\Brac{\alpha,\beta} \neq 0$ it has to be a product of two distinct $p$-biased characters $h_e,h_{e'}$ (since we already removed the $e=e'$ in the swapped diagonal $D$ matrix). Therefore since $h_e,h_{e'}$ are independent and by \prettyref{fact:p-biased-properties}  we have $\Ex{\paren{h_eh_{e'}}^2}= \Ex{h_e^2}\Ex{h^2_{e'}}=1$ and hence,
	\begin{align*}
		\Ex{M_i^2\Brac{\alpha,\alpha}} =\set{\abs{\beta: M_i\Brac{\alpha,\beta} \neq 0}} \leq \abs{\beta: \beta \in \cI_q \times \cI_q} =d \leq n^{2q}.
	\end{align*}
	Next, we consider the off-diagonal entries  in $\Ex{M_i^2\Brac{\alpha,\gamma}}$ for $\alpha \neq \gamma$. Again we have that,
	\begin{align*}
		\Ex{\paren{M_i\Brac{\alpha,\gamma}}^2} = \sum_{\beta}\E\Brac{M_i\Brac{\alpha,\beta}\cdot M_i\Brac{\beta,\gamma}}
	\end{align*}
	Now each non-zero entry is a product of four $p$-biased characters corresponding to hyperedges in the slice, and the product is non-zero iff every underlying edge-character appears an even number of times, but this forces $\alpha=\gamma$ and hence every term is $0$. Thus $\Ex{M_i^2}$ is a diagonal matrix with every diagonal entry bounded by $n^{2q}$ and hence $\Ex{M_i^2} \preceq n^{2q}I$. Using triangle inequality for norms it follows that,
	\begin{align*}
		\nu = \norm{\sum_{i\in V}\Ex{M_i^2}} \leq \sum_{i\in V}\Ex{\norm{M_i^2}} \leq n \cdot n^{2q} \leq n^{2q+1}.
	\end{align*}
	Now to bound the $L$ we will use triangle inequality on Orlicz norms along with \prettyref{fact:kronecker-product-norm},
	\begin{align*}
		\norm{\norm{M_i}}_{\psi_1} \leq \norm{\norm{A_i\otimes A_i}}_{\psi_1} + \norm{\norm{D_i}}_{\psi_1} \leq \norm{\norm{A_i}}^2_{\psi_1} + \norm{\norm{D_i}}_{\psi_1}.
	\end{align*}
	We will first deterministically bound this for $D_i$ matrix (for $p \leq 1/2$) as below,
	\begin{align*}
		\norm{D_i}_2 = \max_{J,K}A_i^2\Brac{J,K} \leq \abs{h_e}^2 \leq \paren{\frac{1}{\sqrt{p(1-p)}}}^2 \leq \paren{\frac{2}{\sqrt{p}}}^2 \leq \frac{4}{p}.
	\end{align*}
	Now to bound the sub-exponential norm, we examine the sub-guassian norm of $A_i$ (denoted $\psi_2$).
	\begin{fact}[Proposition 2.5.2 in \cite{Ver18}]\label{fact:sub-gaussian=define}
		For non-negative random variable $X$ satisfying,
		\begin{align*}
			\Pr{X \geq a+b\sqrt{t}} \leq e^{-t}, \quad \forall t\geq 0,
		\end{align*}
		there is a constant $C^{\mathsf{SG}}$ such that,
		\begin{align*}
			\norm{X}_{\psi_2} \leq C^{\mathsf{SG}}\paren{a+b}.
		\end{align*}
	\end{fact}
	Since the hypergraph induced on link graph of $i$ is $2q$-uniform we can use \prettyref{thm:random-matrix-wigner-bounds} and analyze $A_i$ using \prettyref{cons:independent-block-matrix-decompose} and a union bound, where $\sigma \leq n^{q/2}$ and $\sigma_{\star} \leq 2/\sqrt{p}$ and for $u\geq 0$,
	\begin{align*}
		\Pr{\norm{A_i} \geq Cn^{q/2} +u} \leq n^q\cdot \exp\paren{-\frac{u^2}{C'/p}} = n^q \cdot \exp\paren{-cpu^2}.
	\end{align*}
	Now let $u=\sqrt{(t + q\log n)/p}$ and it follows that,
	\begin{align*}
		\Pr{\norm{A_i} \geq C''\paren{n^{q/2} + \sqrt{\frac{q\log n}{p}}}} \leq e^{-t}.
	\end{align*}
	Now using \prettyref{fact:sub-gaussian=define} it follows that setting constant $C^{\mathsf{SG}}=2^q\cdot C''$ we have,
	\begin{align*}
		\norm{\norm{A_i}}_{\psi_1} \leq C^{\mathsf{SG}}\paren{n^{q/2} + \sqrt{\frac{q\log n}{p}}}.
	\end{align*}
	\begin{fact}[Lemma 2.7.6 in \cite{Ver18}]
		A random variable $X$ is sub-gaussian if and only if $X^2$ is sub-exponential. Moreover the norms are related as,
		\begin{align*}
			\norm{X^2}_{\psi_1} = \norm{X}^2_{\psi_2}.
		\end{align*}
	\end{fact}
	Using this connection between sub-guassian and sub-exponential norm, we consider $\norm{\norm{M_i}}_{\psi_1}$ as,
	\begin{align*}
		\norm{\norm{M_i}}_{\psi_1} &= \norm{\norm{A_i^2}}_{\psi_1} = \norm{\norm{A_i}}^2_{\psi_2} \leq \paren{C^{\mathsf{SG}}\paren{n^{q/2} + \sqrt{\frac{q\log n}{p}}}}^2\\
		&=\paren{C^{\mathsf{SG}}}^2\paren{n^q + 2n^{q/2}\sqrt{\frac{q\log n}{p}} + \frac{q\log n}{p}} \leq C^{\mathsf{SE}}\paren{n^q + \frac{q\log n}{p}}.
	\end{align*}
	where $C^{\mathsf{SE}} = 2C^{\mathsf{SG}}$ and we use AM-GM inequality in the last step. Now putting it all together with $C^{\mathsf{Orl}} = 2 C^{\mathsf{SE}}$ it follows that,
	\begin{align*}
		L \leq C^{\mathsf{Orl}}\paren{n^q + \frac{q\log n}{p}}.
	\end{align*}
	Now using \prettyref{thm:general-matrix-bernstein} by setting $t=4 \log n$ we obtain that for large enough constant $C$,
	\begin{align*}
		\norm{\sum_{i=1}^nM_i}_2 \leq C\paren{n^{q + \frac{1}{2}}\sqrt{\log n} + n^q\cdot q\cdot \log^2n + q^2 \cdot \frac{\log^3n}{p}}
	\end{align*}
	Now for a large enough $n$ and $\kappa_q^{\mathsf{Ber}}=3q^2C$ we obtain the desired result that,
	\begin{align*}
		\norm{M}_2 \leq \kappa_{q}^{\mathsf{Ber}} \cdot \paren{n^{q + \frac{1}{2}}\sqrt{\log n}+ \frac{\paren{\log n}^{2}}{p}}.
	\end{align*}
\end{proof}

\subsubsection{Bounds Using Matrix Chaos Inequalities}\label{sec:matrix-chaos-based-bound}
To get logarithmic free bounds (in dimension parameter $n$) we apply the recent progress on sharp matrix chaos inequalities \cite{BLNvH25}. First we state our main result.

\begin{proposition}\label{prop:centered-random-tensor-bound-chaos}
	For a $\paren{2q+1}$-uniform hypergraph $H=(V,E)$ where $H \sim \cH_0\paren{n,2q+1,p}$, and for the matrix $M$ as defined in \prettyref{def:centered-kroneckered-matrix-define}, it holds that with high probability (over the randomness of the input), there is a constant $\kappa^{\mathsf{chaos}}_{q}$ such that,
	\begin{align*}
		\norm{M}_2 \leq \kappa_{q}^{\mathsf{chaos}} \cdot \paren{n^{q + \frac{1}{2}}+ \frac{n^q\paren{\log n}^{5/2}}{p}}.
	\end{align*}
\end{proposition}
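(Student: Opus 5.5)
The plan is to realize $M$ as a (decoupled) matrix chaos of order $r=2$ in the independent $p$-biased characters $\set{h_e}$, compute its parameters $\sigma(\cA)$ and $\nu(\cA)$ via \prettyref{prop:bound_chaos_parameters}, and then apply the moment bound \prettyref{thm:moment-chaos-theorem} with $t\asymp \log n$ followed by Markov's inequality.

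\textbf{Step 1: writing $M$ as an order-$2$ chaos.} Expanding $A_i\otimes A_i$ entrywise gives
\begin{align*}
M=\sum_{i\in V}\ \sum_{\substack{J,K,J',K'\in\cI_q\\ \set{J,K}\neq\set{J',K'}}} h_{\set{i}\cup J\cup K}\, h_{\set{i}\cup J'\cup K'}\ e_{(J,J')}e_{(K,K')}^{\top},
\end{align*}
with the ordering and distinctness constraints of \prettyref{def:matrix-slice} built in.

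Since we removed $D$, the two characters in every surviving term are distinct hyperedges $e\neq e'$. Hence $M$ is an off-diagonal (tetrahedral) chaos of order $2$.

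\textbf{Step 2: decoupling.} I would replace $h_eh_{e'}$ by $h^{(1)}_eh^{(2)}_{e'}$ with independent copies. By the de la Pe\~na--Montgomery-Smith decoupling inequality for Banach-space valued chaos, the $L_t$ norms of $\norm{M}_2$ change only by a constant $C_q$. This puts $M$ in the form required by \prettyref{thm:moment-chaos-theorem}.

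\textbf{Step 3: the combinatorial-type structure.} The index tuple is $t=(i,J,K,J',K')$ with $f=5$ blocks of sizes $n,N_q,N_q,N_q,N_q$, where $N_q\le n^q$. The index maps are:
\begin{itemize}
\item chaos coordinates $I_1=\set{i,J,K}$ and $I_2=\set{i,J',K'}$;
\item row coordinate $I_3=\set{J,J'}$ and column coordinate $I_4=\set{K,K'}$.
\end{itemize}

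The obstacle is that $M$ is not literally of combinatorial type. The sum carries ordering and distinctness constraints, vertex coincidences across $J,K,J',K'$, and the exclusion $\set{J,K}\neq\set{J',K'}$.

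I would handle this by inclusion--exclusion over the boundedly many ($O_q(1)$) coincidence and ordering patterns. Each pattern is a combinatorial chaos with possibly merged blocks. The ordering constraints only zero out entries; I would treat them by splitting into sign-and-order classes as in \prettyref{cons:independent-block-matrix-decompose}, or by noting that the flattening norms of a restriction to a product of coordinate subsets are dominated.

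For each pattern I would check that the $\sigma$- and $\nu$-parameters are no larger than in the generic pattern. Merging blocks only shrinks the products $\prod T_b$. This bookkeeping is the step I expect to be most tedious and delicate.

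\textbf{Step 4: computing the parameters.} Using \prettyref{prop:bound_chaos_parameters}, $\norm{\cA_{R|C}}_2^2=\prod_{b\notin\cR}T_b\prod_{b\notin\cC}T_b$.

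For the $\sigma$-flattenings (row coordinate in $R$, column coordinate in $C$):
\begin{itemize}
\item $R=\set{\text{row}}$, $C=\set{1,2,\text{col}}$: $\cR$ misses $i,K,K'$ and $\cC$ misses nothing, giving norm$^2$ $\le n\cdot n^{2q}$.
\item Mixed choices, e.g.\ $R=\set{1,\text{row}}$: $\cR$ misses only $K'$ and $\cC$ misses only $J$, giving norm$^2$ $\le n^{2q}$.
\end{itemize}
So I expect $\sigma(\cA)\lesssim n^{q+1/2}$.

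For the $\nu$-flattenings (both matrix coordinates in $C$, $R\neq\emptyset$):
\begin{itemize}
\item $R=\set{1}$: $\cR$ misses $J',K'$ and $\cC$ misses nothing, giving norm $n^q$.
\item $R=\set{1,2}$: $\cR$ misses nothing and $\cC$ misses $i$, giving norm $\sqrt n$.
\end{itemize}
So $\nu(\cA)\lesssim n^q$.

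\textbf{Step 5: the moment parameter.} For $h_e$, $\E|h_e|^s\le p\,(1/p)^{s/2}+1$. So for $s\ge 2$, $\alpha_s(h)\le 2p^{-1/2+1/s}\le 2p^{-1/2}$, and therefore $\alpha_s(h)^2\lesssim 1/p$.

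\textbf{Step 6: conclusion.} Apply \prettyref{thm:moment-chaos-theorem} with $r=2$, $d\le n^{2q}$, $m\le n^{2q+1}$ and $t=\ceil{\log n}$. This gives
\begin{align*}
\paren{\E\norm{M}_2^t}^{1/t}\lesssim_q n^{q+1/2}+\frac{(\log n)^{5/2}}{p}\,n^q.
\end{align*}
Markov's inequality on the $t$-th moment then gives $\Pr{\norm{M}_2\ge e\cdot(\text{bound})}\le e^{-t}=1/n$. Absorbing the decoupling constant and the $O_q(1)$ pattern count into $\kappa^{\mathsf{chaos}}_q$ yields \prettyref{prop:centered-random-tensor-bound-chaos}.
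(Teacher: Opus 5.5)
Your proposal is correct and follows essentially the same route as the paper: the same order-$2$ combinatorial-chaos representation with $I_1=\set{i,J,K}$, $I_2=\set{i,J',K'}$, $I_3=\set{J,J'}$, $I_4=\set{K,K'}$; the same parameters $\sigma(\cA)\lesssim n^{q+1/2}$, $\nu(\cA)\lesssim n^{q}$, $\alpha(h)^2\lesssim 1/p$; and \prettyref{thm:moment-chaos-theorem} with $t\asymp\log n$ followed by Markov. Your decoupling step (valid precisely because removing $D$ leaves only pairs $e\neq e'$) and your handling of the ordering and distinctness constraints address points the paper passes over. For the constraints, the inclusion--exclusion is unnecessary: every flattening of the constrained $0/1$ coefficient tensor is entrywise dominated by the corresponding flattening of the unconstrained one, and $0\le B\le A$ entrywise implies $\norm{B}_2\le\norm{A}_2$, which also covers the non-product ordering restrictions.
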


\begin{proof}
	We apply the matrix chaos framework to bound the matrix $M$ which is a sum of Kronecker product of matrix slices. The first step is to reformulate our matrix in a form comparable with the matrix chaos as defined in \prettyref{def:combinatorial_chaos} which is a random sum over deterministic matrices. The matrix $M$ is indexed by a pair of $q$-tuples where we let row index be $\paren{J,J'}$ and column index be $\paren{K,K'}$ and an entry of the matrix $M$ is then given by (using \prettyref{def:kronecker-product-definition}),
	\begin{align*}
		M\left[\paren{J,J'},\paren{K,K'}\right] &= \sum_{i\in V}\paren{A_i\otimes A_i}\left[\paren{J,J'},\paren{K,K'}\right] \cdot \one_{\set{J,K}\neq \set{J',K'}}\\
		&=\sum_{i\in V}A_i\Brac{J,K}\cdot A_i\Brac{J',K'\cdot \one_{\set{J,K}\neq \set{J',K'}}}=\sum_{i\in V} \zeta_{i,J,K} \cdot \zeta_{i,J',K'}.\cdot \one_{\set{J,K}\neq \set{J',K'}}
	\end{align*}
	We note that $\zeta_e$ is the same $p$-biased Fourier character we called $h_e$ earlier (to avoid confusion with our notation in \prettyref{sec:matrix-chaos-overview} we call it $\zeta$ for the purpose of this proof).
	Thus our matrix $M$ is a chaos of order-$2$, where we set $r=2$ since each entry is sum of product of two random variables. This is actually a combinatorial chaos as we show by a mapping to \prettyref{def:combinatorial_chaos}. In that definition, the chaos is a sum over a vector $t$ and the random variables are $h$. For the rest of the proof we say that role of the generic random variable $h$ is played by our hyperedge variables $\zeta$. The summation vector $t$ represents the collection of all vertices used to index these two random variables. The first variable $\zeta_{i,J,K}$ is indexed by $2q+1$ vertices in $\set{i}\cup J \cup K$, and the second $\zeta_{i,J',K'}$ by $\set{i}\cup J'\cup K'$. Since they share the vertex $i$, the total number of distinct vertices in the most general case is $f=4q+1$. The vector $t$ is a tuple of these $f$ vertices, $s=\paren{i,j_1,\dots,j_q,k_1,\dots,k_q,j'_1,\dots,j'_q,k'_1,\dots,k'_q}$ where each vertex is drawn from $V=[n]$, we sum over $s \in [n]^f$, which means $T_1,\dots,T_f=n$. The index maps $I_g(s)$ (from \prettyref{def:combinatorial_chaos}) then select vertices from the long tuple vector $s$ to build chaos and matrix coordinates. We have $r=2,r+1=3$ and $r+2=4$, and the index maps are,
	\begin{align*}
		I_1(s) &= \set{i,j_1,\dots,j_q,k_1,\dots,k_q}\\
		I_2(s) &= \set{i,j'_1,\dots,j'_q,k'_1,\dots,k'_q}\\
		I_3(s) &= \set{j_1,\dots,j_q,j'_1,\dots,j'_q} && \paren{\text{row index}}\\
		I_4(s) &= \set{k_1,\dots,k_q,k'_1,\dots,k'_q} && \paren{\text{column index}}. \numberthis \label{eq:index-sets}
	\end{align*}
	Now we proceed to check our matrix is indeed a combinatorial chaos by using maps in eqn.~\prettyref{eq:index-sets},
	\begin{align*}
		Y&=\sum_{s\in [n]^{f}}h_{I_1(s)}\cdot h_{I_2(s)}\paren{e_{I_3(s)}\otimes e_{I_4(s)}^{\top}}\\
		&=\sum_{\paren{i',\widehat{J},\widehat{K},\widehat{J'},\widehat{K'}}\in [n]^f}\zeta_{i',\widehat{J},\widehat{K}}\cdot \zeta_{i',\widehat{J'},\widehat{K'}}\paren{e_{\paren{\widehat{J},\widehat{J'}}}\otimes e_{\paren{\widehat{J'},\widehat{K'}}}^{\top}}
	\end{align*}
	and we verify that $Y[\paren{J,J'},\paren{K,K'}] = \sum_{i'\in V} \zeta_{i',J,K}\cdot \zeta_{i',J',K'}$. This is exactly the definition of matrix chaos $M[\paren{J,J'},\paren{K,K'}]$ with the summation index $i$ relabeled as $i'$ and hence $Y=M$.
	Next we consider $\cA$, the underlying order-$4$ coefficient tensor for the chaos matrix and proceed to parameter calculations for $\cA$ by applying \prettyref{prop:bound_chaos_parameters}. 
	We start by first computing the $\sigma$-flattening parameter of $\cA$ of our chaos matrix. Recall that here we are required to have that $3\in R,4\in C$ and consequently we consider all four possible flattenings and compute the spectral norms of $\norm{\cA_{[R|C]}}$, the spectral norm of flattened coefficient tensor $\cA$ reshaped  into matrix where rows are indexed by $R$ and columns by $C$.
	\begin{itemize}
		\item Where all chaos coordinates are in $R$ and hence $R=\set{1,2,3},C=\set{4}$ which gives $\cR=I_1\cup I_2\cup I_3=\set{i,j_1,\dots,j_q,j'_1,\dots,j'_q,k_1,\dots,k_q,k'_1,\dots,k'_q}$ and we have that $\cC=I_4=\set{k_1,\dots,k_q,k'_1,\dots,k'_q}$ and hence $\overline{\cR}=\emptyset$ and $\overline{\cC}=\set{i,j_1,\dots,j_q,j'_1,\dots,j'_q}$ which gives,
		\begin{align*}
			\norm{\cA_{[1,2,3|4]}} = \sqrt{n^0}\cdot \sqrt{n^{2q+1}} = {n^{(2q+1)/2}}.
		\end{align*}
		
		\item Where chaos coordinates lie in both $R=\set{1,3},C=\set{2,4}$ which gives $\cR=I_1\cup I_3=\set{i,j_1,\dots,j_q,j'_1,\dots,j'_q,k_1,\dots,k_q}$ and $\cC=I_2\cup I_4=\set{i,j'_1,\dots,j'_q,k_1,\dots,k_q,k'_1,\dots,k'_q}$ and hence $\overline{\cR}=\set{k'_1,\dots,k'_q}$ and $\overline{\cC}=\set{j_1,\dots,j_q}$ which gives,
		\begin{align*}
			\norm{\cA_{[1,3|2,4]}} = \sqrt{n^q}\cdot \sqrt{n^q} = {n}^q.
		\end{align*}
		
		\item Where again chaos coordinates lie in both $R=\set{2,3},C=\set{1,4}$ which gives $\cR=I_2\cup I_3=\set{i,j_1,\dots,j_q,j'_1,\dots,j'_q,k'_1,\dots,k'_q}$ and $\cC=I_1\cup I_4=\set{i,j_1,\dots,j_q,k_1,\dots,k_q,k'_1,\dots,k'_q}$ and hence $\overline{\cR}=\set{k_1,\dots,k_q}$ and $\overline{\cC}=\set{j'_1,\dots,j'_q}$ which gives,
		\begin{align*}
			\norm{\cA_{[2,3|1,4]}} = \sqrt{n^q} \cdot \sqrt{n^q} = {n}^q.
		\end{align*}
		
		\item Where all chaos coordinates are in $C$ and hence $R=\set{3},C=\set{1,2,4}$ which gives $\cR=I_3=\set{j_1,\dots,j_q,j'_1,\dots,j'_q}$ with $\overline{\cR}=\set{i,k_1,\dots,k_q,k'_1,\dots,k'_q}$ and $\cC=I_1\cup I_2 \cup I_4=\set{i,j_1,\dots,j_q,j'_1,\dots,j'_q,k_1,\dots,k_q,k'_1,\dots,k'_q}$ and hence $\overline{\cC}=\emptyset$ and we obtain,
		\begin{align*}
			\norm{\cA_{[3|1,2,4]}} = \sqrt{n^{2q+1}} \cdot \sqrt{n^0} = {n^{(2q+1)/2}}
		\end{align*}
	\end{itemize}
	and now using the definition of the parameter $\sigma\paren{\cA}$ and above calculation we have that,
	\begin{align*}
		\sigma\paren{\cA} = \max\set{\norm{\cA_{[1,2,3|4]}},\norm{\cA_{[1,3|2,4]}},\norm{\cA_{[2,3|1,4]}}, \norm{\cA_{[3|1,2,4]}}} = n^{(2q+1)/2}.
	\end{align*}.
	
	Next we move to the $\nu$-flattening matrices defined from $\cA$ of our chaos matrix. The $\nu$-flattening requires $3,4\subseteq C$ and $R\neq \emptyset$ and there are three possibilities,
	\begin{itemize}
		\item Where all chaos coordinates are in $R$ and hence $R=\set{1,2},C=\set{3,4}$ which gives $\cR=I_1\cup I_2=\set{i,j_1,\dots,j_q,j'_1,\dots,j'_q,k_1,\dots,k_q,k'_1,\dots,k'_q}$ and we have that $\cC=I_3\cup I_4=\set{j_1,\dots,j_q,j'_1,\dots,j'_q,k_1,\dots,k_q,k'_1,\dots,k'_q}$ and hence $\overline{\cR}=\emptyset$ and $\overline{\cC}=\set{i}$ which gives,
		\begin{align*}
			\norm{\cA_{[1,2|3,4]}} = \sqrt{n^0}\cdot \sqrt{n^1} = \sqrt{n}.
		\end{align*}

		\item Where $R=\set{1},C=\set{2,3,4}$ which gives $\cR=I_1=\set{i,j_1,\dots,j_q,k_1,\dots,k_q}$ and $\cC=I_2\cup I_3 \cup I_4=\set{i,j_1,\dots,j_q,j'_1,\dots,j'_q,k_1,\dots,k_q,k'_1,\dots,k'_q}$ (so then $\overline{\cC}=\emptyset$) and we have $\overline{\cR}=\set{j'_1,\dots,j'_q,k'_1,\dots,k'_q}$ which gives,
		\begin{align*}
			\norm{\cA_{[1|2,3,4]}} = \sqrt{n^{2q}}\cdot \sqrt{n^0} = {n}^q.
		\end{align*}
		
		\item Where $R=\set{2},C=\set{1,3,4}$ which gives $\cR=I_1=\set{i,j'_1,\dots,j'_q,k'_1,\dots,k'_q}$ and $\cC=I_1\cup I_3 \cup I_4=\set{i,j_1,\dots,j_q,j'_1,\dots,j'_q,k_1,\dots,k_q,k'_1,\dots,k'_q}$ (so then $\overline{\cC}=\emptyset$) and we have $\overline{\cR}=\set{j_1,\dots,j_q,k_1,\dots,k_q,}$ which gives,
		\begin{align*}
			\norm{\cA_{[2|1,3,4]}} = \sqrt{n^{2q}}\cdot \sqrt{n^0} = {n}^q
		\end{align*}
	\end{itemize}
	and now using definition of parameter $\nu(\cA)$ and calculation above we have that,
	\begin{align*}
		\nu\paren{\cA} = \max\set{\norm{\cA_{[1,2|3,4]}},\norm{\cA_{[1|2,3,4]}},\norm{\cA_{[2|1,3,4]}}} = n^q
	\end{align*}
	Now since $h$ is a standardized centered Bernoulli random variable (see \prettyref{def:scaled_signed_tensor}) we can compute $\alpha(h)$. We let $d'=\log\paren{d+m}$ and assume $d' \geq 2$ and then,
	\begin{align*}
		\alpha(h)^2 &= \paren{\E\left[\abs{h}^{d'}\right]}^{2/d'} = \paren{\paren{1-p}\paren{\frac{p}{1-p}}^{d'/2} + p\paren{\frac{1-p}{p}}^{d'/2}}^{2/d'}\\
		&=\paren{\paren{1-p}^{1-d'/2}p^{d'/2} + p^{1-d'/2}\paren{1-p}^{d'/2}}^{2/d'}\\
		&\leq \paren{p^{d'/2} + p^{1-d'/2}}^{2/d'} \leq \paren{2\cdot p^{1-d'/2}}^{2/d'} = \frac{2^{2/d'}\cdot p^{2/d'}}{p} \leq \frac{2}{p}
	\end{align*}
	Now it follows using \prettyref{thm:chaos_thm} that with $d=\bigO\paren{n^{2q}}$ and $m=\binom{n}{2q+1}=\bigO\paren{n^{2q+1}}$,
	\begin{align*}
		\E\left[\norm{M}_2\right] \leq C_{\mathsf{MR}}\paren{n^{\paren{2q+1}/2} + \log\paren{n^{2q}+n^{2q+1}}^{5/2}\frac{2n^q}{p}}
	\end{align*}
	where $C_{\mathsf{MR}}$ is the constant from \cite{BLNvH25}.
	Now for high-probability bound we let $\phi(t)=C_{\mathsf{ISMR}}\paren{\sigma(\cA) + \alpha_{t+c\log(d+m)}(h)^r\paren{t+\log(d+m)}^{(r+3)/2}\nu(\cA)}$ and using \prettyref{thm:moment-chaos-theorem} we have that $ \paren{\E\left[\norm{Y}^t\right]}^{1/t} \leq \phi(t)$. Now for any $\lambda>0$ using Markov's inequality we obtain that,
	\begin{align*}
		\Pr{\norm{M} \geq \lambda \cdot \phi(t)} = \Pr{\norm{M}^t \geq \lambda^t \cdot \phi(t)^t}  \leq \frac{\E\left[\norm{M}^t\right]}{\lambda^t \cdot \phi(t)^t} \leq \lambda^{-t}
	\end{align*}
	Now we let $\lambda=e$ and the above application of Markov's ineqality gives us for $C_{\text{hp}}= e \cdot C_{\text{ISMR}}$,
	\begin{align*}
		\Pr{\norm{M} \geq C_{\text{hp}}\paren{\sigma(\cA) + \alpha_{t+c\log(d+m)}(h)^r\paren{t + \log(d+m)}^{(r+3)/2}\cdot \nu(\cA)}} \leq e^{-t}
	\end{align*}
	For a $\paren{2q+1}$-uniform hypergraph we already computed that $\sigma(\cA)=n^{(2q+1)/2}$ and $\nu(A)=n^q$ and we have $r=2,d=\bigO\paren{n^{2q}},m=\bigO\paren{n^{2q+1}}$, so $\log(d+m) = (2q+1)\bigO(\log n)$. Now we set $t=10 \log n$ and the term $t+c\log\paren{d+m}=\bigO\paren{\log n}$ and we can bound,
	\begin{align*}
		\alpha_{t+c\log(d+m)}(h)^{2}\paren{t+\log(d+m)^{5/2}} &\leq \paren{2/p}\paren{10\log n + c\paren{2q+1}\log n}^{5/2} \\
		&= \paren{2/p}\paren{10+c\paren{2q+1}}^{5/2}\paren{\log n}^{5/2}
	\end{align*}
	where $c_q=2\cdot\paren{10+c\paren{2q+1}}^{5/2},$ and we obtain that with probability $1-1/n^{O(1)}$,
	\begin{align*}
		\norm{M} \leq C_{\text{hp}}\paren{n^{(2q+1)/2} + \frac{c_q \cdot n^q\cdot \paren{\log n}^{5/2}}{p}}.
	\end{align*}
	We let $\kappa^{\mathsf{chaos}}_{q} = C_{\text{hp}} \cdot c_q$ and we have that with high probability (over the randomness of the input),
	\begin{align*}
		\norm{M}_2 \leq \kappa^{\mathsf{chaos}}_{q} \cdot \paren{n^{(2q+1)/2} + \frac{n^q \cdot (\log n)^{5/2}}{p}}.
	\end{align*}
\end{proof}

\section{Recovering Planted \texorpdfstring{$r$}{r}-Colorable Subhypergraphs}
\label{sec:coloring-recover}
In this section we prove our main result for recovering a $r$-colorable subhypergraphs planted in semirandom hypergraphs, a formal version of \prettyref{cor:planted-coloring-recover}. We consider a semirandom model for planting a $r$-colorable hypergraph along lines of \cite{LPR25} as below,

\begin{definition} [Arbitrary Planted-in-Semirandom Hypergraph Model]
	\label{def:semirandom_colorable_hypergraph}
	Given $n,k,r,\ell \in \mathbb{N}$ and $p\in [0,1]$, the distribution $\cH_2(n,k,r,\ell,p)$ defines a semirandom planted $r$-colorable $\ell$-uniform hypergraph and an instance from the semirandom model $H \sim \mathcal{H}_2(n,k,r,\ell,p)$ is constructed as follows,
	\begin{enumerate}
		\item Let $V=[n]$ and fix an arbitrary subset $S \subset V$ such that $\abs{S}=k$.
		\item Add hyperedges arbitrarily within $S$ such that the induced hypergraph is $r$-colorable, with color classes $S_1,\dots,S_2,\dots,S_r$ where $\abs{S_i}=k_i,i\in [r]$ and $k=\sum\limits_{i=1}^rk_i$. 
		\item For each $\ell$-tuple $\set{i_1,i_2,\dots,i_{\ell}} \subset \paren{V\setminus S}$ include the corresponding hyperedge independently at random with probability $p$.
		\item An adversary may inspect the hypergraph constructed so far and add an arbitrary set of hyperedges between the vertices in $S$ and vertices in $V\setminus S$.
		\item A monotone adversary may additionally add (but not delete) hyperedges in the hypergraph induced on the vertices in $V\setminus S$.
	\end{enumerate}
\end{definition}
The goal is that given a hypergraph $H$, output a set $\widehat{S} \subset V$ that contains almost all of $S$ and has size close to $k$. The main idea, similar to \cite{LPR25} is that since each class is an independent set, and hence one can apply the $r=1$ refutation bound to each color class separately. This is possible as the SoS objective (see formulation below) is a linear function over the vertex set.
 
\subsection{\texorpdfstring{$r$}{r}-Colorable Subhypergraph Recovery Algorithm}
We consider a hypergraph $H=(V,E)$ where $H\sim \cH_2\paren{n,k,r,\ell,p}$ and a polynomial optimization formulation (denoted $\mathcal{P}_{\textsf{$r$-Col}}$) with decision variables $\set{x_v^{(i)}:i\in [r],v\in V}$ that indicate if a vertex $v$ belongs to the color class $S_i$ given by \prettyref{def:semirandom_colorable_hypergraph}.

\newcommand{\hypercolorsdp}{\textup{\hyperref[sdp:hypergraph_color_poly]{$\mathcal{P}_{\textsf{$r$-Col}}$}}}
\begin{namedSDP}{$\mathcal{P}_{\textsf{$r$-Col}}$ (Polynomial System for $\ell$-Uniform $r$-Colorable Subhypergraphs)}
	\label{sdp:hypergraph_color_poly}
	\begin{align*}
		\max \,&\sum_{v\in V}\sum_{i\in [r]}\,{x_v^{(i)}}\\
		\qquad\text{subject to } \qquad\qquad\qquad &\\
		\prod_{u\in e}x_u^{(i)}&=0 && \forall e \in E,\forall i\in [r] \numberthis \label{eq:edge_constraint_color}\\
		x_v^{(i)}x_v^{(j)}&=0 &&\forall i,j\in [r],i\neq j \numberthis\label{eq:choose_color_class_constraint}\\
		\paren{x_{u}^{(i)}}^2 &= x_u^{(i)} && \forall u \in  V \numberthis \label{eq:indicator_constraint_color}
	\end{align*}
\end{namedSDP}
\begin{construction}[Intended Solution]\label{cons:intent_solution}
	For a given $\ell$-uniform hypergraph $H=(V,E)$ where we have $H \sim \cH_2(n,k,r,\ell,p)$, the intended solution to \hypercolorsdp\, is,
	\begin{align*}
		x_v^{(i)} = 
		\begin{cases}
			1 \quad \text{if }v\in S_i\\
			0 \quad \text{otherwise}.
		\end{cases}
	\end{align*}  
\end{construction}
We present our \sosalg\, next that does threshold rounding on degree-$(2l+1)$ pseudoexpectation $\pE$ obtained by rounding a degree-$2\ell$ SoS relaxation of \hypercolorsdp,
\paragraph{SoS Recovery Algorithm:}
\label{par:sdp_recovery_algorithm}
For an instance of $\ell$-uniform hypergraph $H=(V,E)$, and for choice of parameter $\delta>0$,
\begin{enumerate}
	\item Compute degree-$2\ell$ SoS relaxation of the polynomial system \hypercolorsdp.\\
	Let $\set{\pE\left[\mathbf{x}^{(i)}\right],i\in[r]}$ be the optimal solution to the polynomial system.
	
	\item Return $\widehat{S}=\set{v\in V: \sum\limits_{i\in [r]}\pE\left[x_v^{(i)}\right] \geq 1-\delta}$.
\end{enumerate}

\begin{theorem}[Formal Version of \prettyref{cor:planted-coloring-recover}]\label{thm:main_result_formal}
	For any choice of $\varepsilon>0$, and an $\ell$-uniform hypergraph $H=(V,E)$ generated from the semirandom model $H\sim \cH_2\paren{n,k,r,\ell,p}$ and for regimes of $k\geq \paren{c_{p,\ell}/ \varepsilon^2}\paren{r\sqrt{n}}$, and for $p\geq (\log n)^{5/2}/\sqrt{n}$, where $c_{p,\ell}$ as defined in eqn.~\prettyref{eq:final_constant}, the \sosalg\, running in time $n^{\bigO\paren{\ell}}$, outputs a set $\widehat{S}$ such that with high probability (over the randomness of input) we have,
	\begin{multicols}{2}
		\begin{itemize}
			\item $\abs{\widehat{S}} \leq \paren{1+ \varepsilon}k$
			\item $\abs{\widehat{S} \cap S} \geq \paren{1-\varepsilon}k$.
		\end{itemize}
	\end{multicols}
\end{theorem}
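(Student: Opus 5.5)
The plan is to follow the LPR25 template: combine a lower bound on the total pseudo-mass, which comes from the optimality of the SoS solution, with an upper bound on the pseudo-mass outside $S$, which comes from the refutation certificates of \prettyref{thm:main-theorem-formal}. Concretely, write $m_v \defeq \sum_{i\in[r]}\pE[x_v^{(i)}]$.

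\textbf{Step 1 (global mass).} Constraint \prettyref{eq:choose_color_class_constraint} together with booleanity gives $\sum_i x_v^{(i)}$ idempotent, hence $0\le m_v\le 1$ for every $v$. The intended solution of \prettyref{cons:intent_solution} is feasible: monochromatic edges only live inside $S$ and are avoided by the coloring. Every hyperedge added by either adversary contains a vertex outside $S$, on which all $x^{(i)}=0$. By optimality, $\sum_v m_v \ge k$.

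\textbf{Step 2 (mass outside $S$).} For each color $i$, restricting $\pE$ to the variables $\{x^{(i)}_v\}_{v\in V\setminus S}$ gives a degree-$2\ell$ pseudo-expectation feasible for $\hypermissdp$ on the induced subhypergraph $H[V\setminus S]$. This holds because the constraints \prettyref{eq:edge_constraint_color} for edges inside $V\setminus S$ are exactly the MIS edge constraints. That subhypergraph is a sample of $\cH_0(n-k,\ell,p)$ plus monotone additions, and is untouched by the arbitrary adversary acting on $S\times(V\setminus S)$.

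By \prettyref{thm:main-theorem-formal} and \prettyref{lem:monotone-adversary}, w.h.p.\ $\sum_{v\notin S}\pE[x^{(i)}_v]\le B$ for each $i$. In the regime $p\ge p^\star$ the bound is $B=c_\ell(\sqrt n/p^{1/\ell} + \cdots)\le c_{p,\ell}\sqrt n$, where I absorb the lower-order terms using $p\ge(\log n)^{5/2}/\sqrt n$. The key point is that one must check that the second term, $n^{q/(2q+1)}(\log n)^{5/2\ell}/p^{2/\ell}$, is $O_{p,\ell}(\sqrt n)$ in this regime; I expect this bookkeeping to be where the constant $c_{p,\ell}$ in eqn.~\prettyref{eq:final_constant} is defined. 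The w.h.p.\ event depends only on the random edges inside $V\setminus S$, so it holds simultaneously for all adversary choices. Summing over colors gives $\sum_{v\notin S} m_v \le rB$.

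\textbf{Step 3 (threshold counting).} Since $m_v\le 1$, Step 1 and Step 2 give
\[
\sum_{v\in S}(1-m_v)=k-\sum_{v\in S}m_v\le \sum_{v\notin S}m_v\le rB.
\]
Markov's inequality then bounds the number of $v\in S$ with $m_v<1-\delta$ by $rB/\delta$. The number of $v\notin S$ with $m_v\ge 1-\delta$ is at most $rB/(1-\delta)$. Taking $\delta=\varepsilon/2$, say, and using $k\ge (c_{p,\ell}/\varepsilon^2)\, r\sqrt n$ makes $rB/\delta\le\varepsilon k$ and $rB/(1-\delta)\le \varepsilon k$. This yields $|\widehat S\cap S|\ge(1-\varepsilon)k$ and $|\widehat S|\le k+\varepsilon k$.

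\textbf{Remaining issues.} The main obstacle is Step 2. I must verify carefully that the restricted operator remains a valid pseudo-expectation satisfying the MIS axioms at degree $2\ell$; this holds because a restriction of a pseudo-expectation to a subset of variables preserves positivity and the relevant ideal constraints. I must also confirm that the certificate's high-probability guarantee is uniform over the adversary. Finally, I will handle the additive approximation error from \prettyref{thm:sos_tractable_solve} by absorbing it into $\varepsilon$.
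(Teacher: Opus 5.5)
Your proposal is correct and follows the paper's argument essentially step for step. Feasibility of the intended solution gives total pseudo-mass at least $k$. Restricting each color's pseudo-expectation to $V\setminus S$ and invoking the MIS certificate together with \prettyref{lem:monotone-adversary} bounds the outside mass by $c_{p,\ell}\, r\sqrt{n}$. A Markov-style threshold count then finishes; your split bound $\abs{\widehat S\setminus S}\le rB/(1-\delta)$ stands in for the paper's total-mass bound $\abs{\widehat S}(1-\delta)\le(1+\eta)k$.

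One small arithmetic slip: from $rB\le\varepsilon^2 k$, the choice $\delta=\varepsilon/2$ gives $rB/\delta\le 2\varepsilon k$, not $\varepsilon k$. Take $\delta=\varepsilon$ (for $\varepsilon\le 1/2$) or absorb a constant into $c_{p,\ell}$. The paper's own choice $\eta=\varepsilon^2/3$, $\delta=\varepsilon/3$ needs the same kind of slack, since it implicitly requires $k\ge 3c_{p,\ell}\,r\sqrt{n}/\varepsilon^2$.
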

\noindent
\begin{remark}
    Recovery in \prettyref{thm:main_result_formal} is possible for a broader range of parameters $\paren{n,k,\ell,p}$. However, for clarity, and to highlight the sharp dependence on $n$, we restrict to the regime where the leading-order term, $\sqrt{n}/p^{1/\ell}$ governs our refutation certificate bounds.
\end{remark}
Hence, for the purpose of recovery guarantees, we are only interested in parameter regimes where the dominant term in the refutation bound is $\sqrt{n}/p^{1/\ell}$, regardless of whether $\ell$ is even or odd. For even $\ell$, this behavior always holds for $p\gg 1/n^{\ell/2}$. For odd $\ell$, it requires that,
\begin{align*}
    \frac{\sqrt{n}}{p^{1/\ell}} \geq \frac{n^{(\ell-1)/2\ell}\paren{\log n}^{5/2\ell}}{p^{2/\ell}} \implies p \geq \frac{(\log n)^{5/2}}{\sqrt{n}}.
\end{align*}
This explains lower bound on $p$ in \prettyref{thm:main_result_formal}, and for such regimes of $p$, the certificate bound is,
\begin{align*}\numberthis\label{eq:final_constant}
    c_{\ell} \cdot \frac{\sqrt{n}}{p^{1/\ell}} \defeq c_{p,\ell} \cdot \sqrt{n}.
\end{align*}
\noindent
Thus putting together results in \prettyref{sec:sos-even-analysis} and \prettyref{sec:sos-odd-analysis}  we obtain a uniform result.    
\begin{corollary}\label{cor:random_hypergraph_mis_bounds}
	For a random hypergraph $H=(V,E)$ generated from $\cH_1\paren{n,\ell,p}$, and a degree-$2\ell$ pseudoexpectation $\pE$ for the SoS relaxation of \hypermissdp\, and for regimes of $p \geq (\log n)^{5/2}/\sqrt{n}$  it follows that with with high probability (over the randomness of the input),
	\begin{align*}
		\sum_{v\in V}\pE[x_v] \leq c_{p,\ell}\cdot \sqrt{n}.
	\end{align*}
\end{corollary}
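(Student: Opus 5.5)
This corollary only stitches together the two main theorems and the monotonicity lemma, so no new probabilistic estimate is required. I will first reduce to the purely random model and then split by the parity of $\ell$.

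\textbf{Step 1: reduce to $\cH_0$.} An instance $H'\sim\cH_1(n,\ell,p)$ is obtained from some $H\sim\cH_0(n,\ell,p)$ by adding hyperedges. By \prettyref{lem:monotone-adversary}, every pseudo-expectation feasible for \hypermissdp$(H')$ is also feasible for \hypermissdp$(H)$. So it suffices to prove, with high probability over $H\sim\cH_0(n,\ell,p)$, that
\[
\sum_v\pE[x_v]\le c_{p,\ell}\sqrt n
\]
for every degree-$2\ell$ pseudo-expectation feasible for $H$.

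\textbf{Step 2: even $\ell=2q$.} I will apply \prettyref{thm:even-case-bounds}, which gives
\[
\sum_v\pE[x_v]\le c_{2q}\Bigl(\frac{\sqrt n}{p^{1/2q}}+\frac{(\log n)^{1/2q}}{p^{1/q}}\Bigr).
\]
The second term is at most the first exactly when $p\ge \log n/n^{q}$. For $q\ge1$ this is implied by $p\ge(\log n)^{5/2}/\sqrt n$ once $n$ is large. The bound is then at most $2c_{2q}\sqrt n/p^{1/\ell}$, and I absorb the factor $2$ into $c_\ell$.

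\textbf{Step 3: odd $\ell=2q+1$.} Here $p\ge(\log n)^{5/2}/\sqrt n\ge p^\star=(\log n)^2/\sqrt n$, so Regime I of \prettyref{thm:odd-case-bounds} applies and gives
\[
\sum_v\pE[x_v]\le c_{2q+1}\Bigl(\frac{\sqrt n}{p^{1/\ell}}+\frac{n^{q/\ell}(\log n)^{5/2\ell}}{p^{2/\ell}}\Bigr).
\]
The comparison is the one already displayed before eqn.~\prettyref{eq:final_constant}. Raising both terms to the power $\ell$, the second term is at most the first iff $n^{q}(\log n)^{5/2}/p^2\le n^{\ell/2}/p$, i.e. iff $p\ge(\log n)^{5/2}/\sqrt n$. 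So the leading term dominates, and the bound is at most $2c_{2q+1}\sqrt n/p^{1/\ell}$.

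\textbf{Step 4: conclude.} Setting $c_{p,\ell}\sqrt n:=c_\ell\sqrt n/p^{1/\ell}$ as in eqn.~\prettyref{eq:final_constant}, with $c_\ell$ doubled, gives the claim. The high-probability events come from the finitely many spectral-norm bounds (\prettyref{prop:flattening-random-matrix-bounds}, \prettyref{prop:high-prob-bound-diagonal}, and \prettyref{prop:centered-random-tensor-bound-chaos}), each failing with probability $o(1)$.

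\textbf{Degree bookkeeping.} This is the only delicate point. The odd theorem is stated for degree $4q+2=2\ell$ pseudo-expectations, which matches. The even theorem is stated for degree $4q=2\ell$, which also matches. I will also check that the exponents written in Regime I here agree with those in \prettyref{thm:odd-case-bounds}, in particular $(\log n)^{5/(4q+2)}=(\log n)^{5/2\ell}$, so that the threshold computation in Step 3 is exactly the one claimed.
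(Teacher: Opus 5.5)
Your proposal is correct and follows the same route as the paper. The paper also combines \prettyref{thm:even-case-bounds} with Regime~I of \prettyref{thm:odd-case-bounds}, uses your computation that $\sqrt{n}/p^{1/\ell}$ dominates once $p \geq (\log n)^{5/2}/\sqrt{n}$ (your even-case threshold $p \geq \log n/n^{q}$ being a sharper form of its remark $p \gg 1/n^{\ell/2}$), and passes from $\cH_0$ to $\cH_1$ via \prettyref{lem:monotone-adversary}.
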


\subsection{Proof by \texorpdfstring{\cite{LPR25}}{LPR25} Analysis}
\begin{lemma}{\label{lem:bound_pe_coloring}}
	For a feasible solution to the degree-$2\ell$ SoS relaxation of \hypercolorsdp\, it follows that,
	\begin{align*}
		\sum_{i\in[r]}\pE\left[x_v^{(i)}\right] \leq 1 ,\forall v \in V.
	\end{align*}
\end{lemma}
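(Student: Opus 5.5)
We want to show that for each fixed vertex $v$, the polynomial $1-\sum_{i\in[r]}x_v^{(i)}$ has nonnegative pseudo-expectation. The plan is to show that, under the axioms of \hypercolorsdp, this polynomial equals a square, so nonnegativity is certified at low degree. Soundness (\prettyref{cor:sos_soundness}) then gives the claim for any degree-$2\ell$ pseudo-expectation, since $2\ell\geq 2$.

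Set $s_v\defeq\sum_{i\in[r]}x_v^{(i)}$ and compute $s_v^2$. The diagonal terms simplify via booleanity (\ref{eq:indicator_constraint_color}): $\paren{x_v^{(i)}}^2=x_v^{(i)}$. The cross terms $x_v^{(i)}x_v^{(j)}$ with $i\neq j$ vanish by the color-exclusivity constraint (\ref{eq:choose_color_class_constraint}). Hence, modulo the constraints, $s_v^2=s_v$, and this identity uses only degree-$2$ multiples of equality axioms.

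Next, expand $(1-s_v)^2=1-2s_v+s_v^2$ and substitute $s_v^2=s_v$. This gives $(1-s_v)^2=1-s_v$ modulo the ideal generated by the axioms. Therefore
\begin{align*}
\hypercolorsdp \SoSp{x}{2}\set{1-\sum_{i\in[r]}x_v^{(i)}=\Bigl(1-\sum_{i\in[r]}x_v^{(i)}\Bigr)^2\geq 0}.
\end{align*}

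To conclude, use that a feasible pseudo-expectation satisfies the equality axioms: $\pE[f\cdot p]=0$ whenever $f$ is an axiom and $\deg(fp)\leq 2\ell$ (\prettyref{rem:equality_satisfying}). Together with $\pE[q^2]\ge 0$ for the degree-$1$ polynomial $q=1-s_v$, this yields $\pE[1-s_v]=\pE[(1-s_v)^2]\geq 0$. Linearity of $\pE$ then gives $\sum_i\pE[x_v^{(i)}]\le 1$.

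There is no real obstacle in this argument. The only point needing care is the degree bookkeeping, namely that the degree-$2$ ideal multiples used are within the degree-$2\ell$ budget, which holds trivially.
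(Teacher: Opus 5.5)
Your proposal is correct and matches the paper's proof: both show that, modulo the booleanity and color-exclusivity axioms, $1-\sum_{i\in[r]}x_v^{(i)}=\bigl(1-\sum_{i\in[r]}x_v^{(i)}\bigr)^2$, and then conclude via soundness (\prettyref{cor:sos_soundness}). Your degree count of $2$ for this certificate is, if anything, more accurate than the degree-$1$ annotation in the paper.
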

\begin{proof}
	One can show that for the polynomial system \hypercolorsdp, it holds that,
	\begin{align*}
		\SoSp{x}{1}\paren{1-\paren{\sum_{i\in[r]}x_v^{(i)}}} \geq 0.
	\end{align*}
	This follows by first using the axiom (\ref{eq:choose_color_class_constraint}) and rewriting the above as,
	\begin{align*}
		1-\sum_{i\in[r]}x_v^{(i)} = 1-2\sum_{i\in[r]}x_v^{(i)} + \sum_{i\in [r]}x_v^{(i)} + 2\sum_{\substack{i,j\in[r]\\i\neq j}}x_v^{(i)}x_v^{(j)}
	\end{align*}
	Next using the constraint (\ref{eq:indicator_constraint_color}) it follows that,
	\begin{align*}
		1-\sum_{i\in[r]}x_v^{(i)} &= 1- 2\sum_{i\in[r]}x_v^{(i)} + \sum_{i\in [r]}x_v^{(i)} + 2\sum_{\substack{i,j\in[r]\\c\neq c'}}x_v^{(i)}x_v^{(j)}\\
		&= 1- 2\sum_{c\in[r]}x_v^{(i)} + \sum_{i\in [r]}\paren{x_v^{(i)}}^2 + 2\sum_{\substack{i,j\in[r]\\i\neq j}}x_v^{(i)}x_v^{(j)}\\
		&=\paren{1-\paren{\sum_{i\in [r]}x_v^{(i)}}}^2
	\end{align*}
	Now using \prettyref{cor:sos_soundness} one obtains the desired conclusion that $\sum_{i\in[r]}\pE\left[x_v^{(i)}\right] \leq 1$.
\end{proof}

\begin{proposition}[Proposition 17 in \cite{LPR25}]
	\label{prop:recover_large_set}
	For a optimal solution $\set{\pE\left[\mathbf{x}^{(i)}\right],i\in[r]}$ to the degree-$2\ell$ SoS relaxation of the polynomial  system \hypercolorsdp\,  such that,
	\begin{align*}
		\sum_{v\in S}\sum_{i\in [r]}\pE\left[x_v^{(i)}\right] \geq (1-\eta)k,
	\end{align*}
	it follows that $\abs{\widehat{S} \cap S} \geq k\paren{1-\frac{\eta}{\delta}}$.
\end{proposition}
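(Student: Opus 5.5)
We prove this by a Markov-type averaging over the marginals $a_v \defeq \sum_{i\in[r]}\pE\left[x_v^{(i)}\right]$. The two ingredients are the upper bound $a_v\le 1$ from \prettyref{lem:bound_pe_coloring}, which holds for every $v\in V$, and the lower bound on the total mass, $\sum_{v\in S} a_v \ge (1-\eta)k$, which is the hypothesis.

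First, split $S$ into the recovered vertices $S\cap\widehat{S}$ and the missed vertices $S\setminus\widehat{S}$. By definition of the \sosalg, every $v\in S\setminus\widehat{S}$ has $a_v<1-\delta$. Every $v\in S\cap\widehat{S}$ has $a_v\le 1$ by \prettyref{lem:bound_pe_coloring}. Write $m\defeq\abs{S\setminus\widehat{S}}$. Then
\begin{align*}
(1-\eta)k \le \sum_{v\in S} a_v \le \abs{S\cap \widehat{S}}\cdot 1 + m(1-\delta) = (k-m) + m(1-\delta) = k - \delta m .
\end{align*}
Rearranging gives $\delta m\le \eta k$, so $m\le (\eta/\delta)k$. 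Hence $\abs{\widehat{S}\cap S} = k-m \ge k(1-\eta/\delta)$, which is the claim.

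There is no real obstacle here; the argument is a single counting step. The one point to check is that \prettyref{lem:bound_pe_coloring} applies to the optimal degree-$2\ell$ pseudo-expectation: the SoS identity $1-\sum_i x_v^{(i)} = \paren{1-\sum_i x_v^{(i)}}^2$ modulo the axioms has degree $2\le 2\ell$, so \prettyref{cor:sos_soundness} gives $a_v\le 1$. If the solver only satisfies the constraints approximately, $a_v\le 1$ holds up to a negligible additive error, and this can be absorbed into $\eta$.

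Optimality of the solution is not used in this step. It enters only later, through the hypothesis that the pseudo-mass on $S$ is at least $(1-\eta)k$. That hypothesis will be established by comparing with the intended solution of \prettyref{cons:intent_solution} and applying the refutation bound of \prettyref{cor:random_hypergraph_mis_bounds} to each color class restricted to $V\setminus S$.
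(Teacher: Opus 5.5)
Your proof is correct and is essentially the paper's argument. The paper takes $Z$ uniform over the marginals $\set{\sum_{i\in[r]}\pE\left[x_v^{(i)}\right]}_{v\in S}$ and applies Markov's inequality to $1-Z$, using $Z\le 1$ from \prettyref{lem:bound_pe_coloring}; your direct count of the missed vertices $S\setminus\widehat{S}$ is the same inequality written out explicitly.
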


\begin{proof}
	Let $Z$ be a uniform random variable that takes values in the set $\set{\sum_{i\in[r]}\pE\left[x_v^{(c_i)}\right]}_{v \in S}$ with probability $1/k$ each. Then using above one obtains that
	\begin{align*}
		\E[Z]=\sum_{v \in S}\frac{1}{k}\sum_{i\in [r]}\pE\left[x_v^{(i)}\right]= \frac{1}{k}\sum_{v \in S}\sum_{i \in [r]}\pE\left[x_v^{(i)}\right] \geq {1-\eta}.
	\end{align*}
	Using \prettyref{lem:bound_pe_coloring} one notes that $\sum_{i\in [r]}\pE[x_v^{(i)}] \leq 1,\forall v \in V$. Therefore, it also follows that $Z \leq 1$. Hence applying Markov's inequality on the random variable $1-Z$.
	\begin{align*}
		\Pr{Z \geq 1-\delta} &= 
		1-\Pr{Z\leq 1-\delta}
		= 1-\Pr{1-Z \geq \delta}\\
		&\geq 1- \frac{\E[1-Z]}{\delta} \geq 1- \frac{\eta}{\delta}.
	\end{align*}
	Since $Z$ is a uniform distribution over $\set{\sum_{i\in[r]}\pE\left[x_v^{(i)}\right]}_{v \in S}$ we have that,
	\begin{align*}
		\abs{\widehat{S}\cap S}= \abs{\set{v \in S:\sum_{i\in[r]}\pE\left[x_v^{(i)}\right] \geq 1-\delta}} \geq k\paren{1-\frac{\eta}{\delta}}.
	\end{align*} 
\end{proof}

\begin{lemma}\label{lem:sos_coloring_bound}
	Let $\set{\pE\left[\mathbf{x}^{(i)}\right],i\in[r]}$ be a feasible solution to degree-$2\ell$ SoS relaxation of the polynomial system \hypercolorsdp\, for a semirandom hypergraph $H=(V,E)$ generated from $\cH_1(n,\ell,p)$, for regimes of $p \geq (\log n)^{5/2}/\sqrt{n}$. Then with high probability (over the randomness of the input) it holds that,
	\begin{align*}
		\sum_{i\in[r]}\sum_{v\in V}\pE\left[x_v^{(i)}\right] \leq {{c_{p.\ell}}} \cdot\paren{r\sqrt{n}}.
	\end{align*}
	where $c_{p,\ell}$ as defined in eqn.~\prettyref{eq:final_constant}.
\end{lemma}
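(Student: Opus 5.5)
The plan is to reduce to the single-color refutation bound of \prettyref{cor:random_hypergraph_mis_bounds}, applied once per color class. Fix $i\in[r]$ and consider the degree-$2\ell$ pseudo-expectation $\pE$ restricted to the variables $\set{x_v^{(i)}}_{v\in V}$. Since \hypercolorsdp\ contains, for this fixed $i$, the edge constraints $\prod_{u\in e}x_u^{(i)}=0$ for all $e\in E$ and the booleanity constraints $(x_u^{(i)})^2=x_u^{(i)}$, the restricted operator is a feasible degree-$2\ell$ pseudo-expectation for \hypermissdp\ on $H$. Every SoS derivation used in \prettyref{sec:sos-even-analysis} and \prettyref{sec:sos-odd-analysis} uses only these axioms, so by \prettyref{cor:sos_soundness} the conclusions transfer to $\pE$ in the variables $x^{(i)}$. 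Summing over $i\in[r]$ then gives the factor $r$.

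The obstacle is that $H\sim\cH_2$ is not a semirandom hypergraph from $\cH_1(n,\ell,p)$ on all of $V$: inside $S$ the edges are arbitrary, with fewer edges than random, and the crossing edges are adversarial. I would therefore split $\sum_{v}\pE[x^{(i)}_v]=\sum_{v\in S}\pE[x_v^{(i)}]+\sum_{v\in V\setminus S}\pE[x_v^{(i)}]$. For the $S$ part, \prettyref{lem:bound_pe_coloring} bounds the total over all colors by $k$. For the $V\setminus S$ part, the induced hypergraph on $V\setminus S$ is a random $\cH_0(n-k,\ell,p)$ hypergraph plus monotone additions. The restriction of $\pE$ to $\set{x_v^{(i)}}_{v\in V\setminus S}$ satisfies all edge constraints of this induced hypergraph, since those are a subset of the axioms. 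By \prettyref{lem:monotone-adversary} and \prettyref{cor:random_hypergraph_mis_bounds} applied with $n-k\le n$ vertices, this gives $\sum_{v\in V\setminus S}\pE[x_v^{(i)}]\le c_{p,\ell}\sqrt{n}$ w.h.p. Note that $X=\sum_{v\notin S}x_v^{(i)}$ is a polynomial in those variables only, so the certificate applies verbatim. Summing over $i$ gives at most $c_{p,\ell}\,r\sqrt{n}$ from outside $S$.

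As stated, the lemma takes $H\sim\cH_1(n,\ell,p)$ on all of $V$, where the argument is immediate: apply \prettyref{cor:random_hypergraph_mis_bounds} to each color $i$ and add. I would present it that way and record the restricted version above as the form actually used downstream in proving \prettyref{thm:main_result_formal}.

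The only step needing care is the legitimacy of restricting a pseudo-expectation to a subset of variables while keeping feasibility. This is immediate because the pseudo-moment matrix in those variables is a principal submatrix, hence still PSD, and the constraints involving only those variables remain satisfied. The high-probability event is also a single event about the random hypergraph, independent of $i$, so no union bound over colors is needed.
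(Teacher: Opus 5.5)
Your proposal is correct and is essentially the paper's argument. For each fixed $i\in[r]$, the pseudo-expectation restricted to $\{x_v^{(i)}\}_{v\in V}$ is feasible for the degree-$2\ell$ relaxation of $\mathcal{P}_{\textsf{MIS}}$, so Corollary~\ref{cor:random_hypergraph_mis_bounds} bounds each color's mass by $c_{p,\ell}\sqrt{n}$, and summing over $i$ gives the claim; your remark about restricting to $V\setminus S$ is exactly how the paper later invokes this lemma in the proof of Theorem~\ref{thm:main_result_formal}, and your observation that the high-probability event depends only on $H$ (so the paper's ``union bound'' over colors is unnecessary) is a correct sharpening.
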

\begin{proof}
	The proof simply follows from the fact that using the constraints of the polynomial system \hypercolorsdp\, for a fixed $i\in [r]$ it follows from \prettyref{cor:random_hypergraph_mis_bounds} that,
	\begin{align*}
		\sum_{v\in V}\pE\left[x_v^{(i)}\right] \leq \paren{{c_{p,\ell}}}\sqrt{n}
	\end{align*}
	Hence by a union bound one obtains that with high probability (over the randomness of input),
	\begin{align*}
		\sum_{i\in[r]}\sum_{v\in V}\pE\left[x_v^{(i)}\right] \leq {{c_{p,\ell}}}\cdot {r\sqrt{n}}.
	\end{align*}
\end{proof}

Now we can finish the proof of \prettyref{thm:main_result_formal}, which essentially follows the proof of Theorem 9 in \cite{LPR25} for graphs ($\ell=2$). We reproduce the argument here for $\ell$-uniform hypergraphs.

\begin{proof}[Proof of \prettyref{thm:main_result_formal}].
	We start by noting that the intended feasible solution in \prettyref{cons:intent_solution} (easy to see satisfies constraints \prettyref{eq:edge_constraint_color}, \prettyref{eq:choose_color_class_constraint}, and \prettyref{eq:indicator_constraint_color} gives an objective value $k$. Thus we have,
	\begin{align}\label{eq:large_mass_planted}
		k \leq \sum_{i\in [r]}\sum_{v\in V}\pE\left[x_v^{(i)}\right] = \sum_{i\in [r]}\sum_{v\in S}\pE\left[x_v^{(i)}\right] + \sum_{i\in [r]}\sum_{v\in V\setminus S}\pE\left[x_v^{(i)}\right]
	\end{align}
	We note that the algorithm and analysis never uses the edges in the subgraph $S \times \paren{V\setminus S}$ and hence the same proof also works for arbitrary edges between vertices in $S$ and vertices in $V\setminus S$ as in \prettyref{def:semirandom_colorable_hypergraph}. Further for the monotone edges considered in the model given by \prettyref{def:semirandom_colorable_hypergraph} we note that the adding edges only increases the number of constraints and hence any bounds obtained for original SDP remain valid for new optimal solution $\set{\pE\left[{x'}_{\!\!v}^{(i)}\right],i\in [c]}$ and it follows,
	\begin{align*}\numberthis\label{eq:coloring-application-rest-mass}
		\sum_{i\in [r]}\sum_{v\in V\setminus S}\pE\left[{x'}_{\!\!v}^{(i)}\right] \leq\sum_{i\in [r]}\sum_{v\in V\setminus S}\pE\left[x_v^{(i)}\right].
	\end{align*}
	The construction in \prettyref{cons:intent_solution} is still a feasible solution and gives objective value $k$,
	Now one considers a degree-$2\ell$ SoS relaxation of \hypercolorsdp, but only on the subhypergraph induced on $V\setminus S$. Note that $\set{\pE\left[\mathbf{x}^{(i)}\right],i\in [r]}_{v\in V\setminus S}$ is still a feasible solution to the polynomial system on the hypergraph induced by $V\setminus S$ as the constraints \prettyref{eq:edge_constraint_color}, \prettyref{eq:choose_color_class_constraint}, and \prettyref{eq:indicator_constraint_color} continue to hold. Therefore using \prettyref{lem:sos_coloring_bound} in eqn.~\prettyref{eq:large_mass_planted} it follows that,
	\begin{align*}
		\sum_{i\in [r]}\sum_{v\in S}\pE\left[x_v^{(i)}\right] \geq k-{{c_{p,l}}}\cdot{r\sqrt{n}} \geq \paren{1-\eta}k.
	\end{align*}
	Finally using \prettyref{prop:recover_large_set} for $\eta=\varepsilon^2/3$ and $\delta=\varepsilon/3$ shows that $\abs{\widehat{S} \cap S} \geq \paren{1-\varepsilon}k$. From our bounds in eqn~\prettyref{eq:large_mass_planted} and eqn~\prettyref{eq:coloring-application-rest-mass} we have that,
    \begin{align*}
        \sum_{i\in \Brac{r}}\sum_{v\in V\setminus S}\pE\Brac{x_v^{(i)}} \leq c_{p,\ell}\cdot r\sqrt{n} \leq \eta k,
    \end{align*}
    and by \prettyref{lem:bound_pe_coloring} we have for every $v\in V$ that $\sum\limits_{i\in \Brac{r}}\pE\Brac{x_v^{(i)}} \leq 1$. So the total pseudo-mass is atmost $\paren{1+\eta}k$ and since each $v\in \widehat{S}$ has mass atleast $1-\delta$ we obtain,
    \begin{align*}
        \abs{\widehat{S}}\paren{1-\delta} \leq \paren{1+\eta}k.
    \end{align*}
    For our choice of parameters $\delta=\varepsilon^2/3$ and $\eta=\varepsilon/3$ and using $1/(1-x) \leq 1+x$ we have,
    \begin{align*}
        \abs{\widehat{S}} \leq \paren{1+\eta}k \cdot \paren{1+\delta} = \paren{1+\frac{\varepsilon^2}{3}}\paren{1+\frac{\varepsilon}{3}}k \leq \paren{1+\frac{\varepsilon^3}{9} + \frac{\varepsilon^2}{3} + \frac{\varepsilon}{3}}k \leq \paren{1+\varepsilon}k.
    \end{align*}
\end{proof}

\section{Low-Degree Polynomial Lower Bounds for Random Hypergraphs}
\label{sec:ldp-hypergraph-quiet-planted}

We now extend the quiet distribution in \prettyref{def:graph-quiet-planted} from the graph setting to $\ell$-uniform hypergraphs, and formally prove our main result for it.

\begin{definition}[Quiet Planted Distribution]\label{def:hypergraph-quiet-planted}
    Start with an empty hypergraph $H=(V,E)$, and for a given fixed $k \leq n/4$ we construct the quiet planted distribution $\widetilde{\sfP}$ as follows:
    \begin{enumerate}
        \item Sample i.i.d. random variables $\chi_1,\dots,\chi_n \sim \Ber(\rho)$ for $\rho= 2k/n$. Let $S \defeq \set{i:\chi_i=1}$.
        \item  Define a probability parameter for $t=0,1,\dots,\ell$ as,
        \begin{align*}
            q_t = p\paren{1- \paren{-\frac{\rho}{1-\rho}}^{\ell-t}}
        \end{align*}
        Condition on $\chi=\paren{\chi_1,\dots,\chi_n}$, sample edges as $X_e \sim \Ber\paren{q_{t_e}},\forall e \in \cE_{\ell}$, where $t_e = \abs{e \cap S}$.\\
        Let ${\sfP}'$ be the resulting planted distribution generated at this stage of the process.\label{step:unconditional-quiet-hypergraph}
        \item Let $\cG_k$ denote the good event that $\abs{S} \geq k$ and define $\widetilde{\sfP} \defeq {\sfP'}\paren{\cdot \mid \cG_k}$.
    \end{enumerate}
\end{definition}

\begin{theorem} \label{thm:quiet-planting-hypergraph-theorem}
    Fix an integer $\ell \geq 2$, and let $\sfQ_n = \cH_0\paren{n,\ell,p}$, where $0 <p \leq 1/2$. Let $D=D_n \geq 1$, and let $\varepsilon \in (0,1)$. Then, for all sufficiently large $n$, there exists $\kappa_{\ell,1},\kappa_{\ell,2},C_{\ell}>0$ depending only on $\ell$, and there exists a quiet planted distribution $\widetilde{\sfP}$ supported on $\cR(k)$ such that $\Adv^2_{\leq D}\paren{\widetilde{\sfP},\sfQ} \leq 1+\varepsilon$, for the regimes of $n,k,\ell,p,D,\varepsilon$ satisfying,
    \begin{align*}
        k \leq \paren{\kappa_{\ell,1}\cdot \varepsilon^{1/(2(\ell+1))}\cdot 2^{-C_{\ell}\cdot D^{1-\frac{1}{\ell}}\log(2D)}} \frac{\sqrt{n}}{p^{1/\ell}},\quad \text{and} \quad k \geq \kappa_{\ell,2}\paren{D\log\paren{\frac{en^{\ell}}{p}} + \log\paren{\frac{16D}{\varepsilon}}}.
    \end{align*}
\end{theorem}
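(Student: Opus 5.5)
Since $\widetilde{\sfP}$ is $\sfP'$ conditioned on $\cG_k$, it is supported on $\cR(k)$: $q_\ell = p(1-1)=0$, so $S$ is independent, and $|S|\ge k$ on $\cG_k$. The work is in bounding the advantage. I would first do the computation for the unconditioned $\sfP'$ and then transfer it to $\widetilde{\sfP}$. By \prettyref{prop:advantage-computation}, $\Adv^2_{\le D}-1=\sum_{1\le|\alpha|\le D}(\bbE_{\sfP}[h_\alpha])^2$. Conditioning on $\chi$, the edges are independent, so
\begin{align*}
\bbE_{\sfP'}[h_\alpha]=\bbE_\chi\Brac{\prod_{e\in\alpha} m_{t_e}},\qquad m_t\defeq \frac{q_t-p}{\sqrt{p(1-p)}}=-\sqrt{\tfrac{p}{1-p}}\paren{-\tfrac{\rho}{1-\rho}}^{\ell-t}.
\end{align*}

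\textbf{Key quiet identity.} Fix a vertex $u$ and the other vertices' labels, and let $t$ be the count of $S$-members among the other $\ell-1$ vertices. Then
\begin{align*}
(1-\rho)\,m_{t}+\rho\, m_{t+1}\propto (1-\rho)(-r)^{\ell-t}+\rho(-r)^{\ell-t-1}=0,\qquad r=\tfrac{\rho}{1-\rho}.
\end{align*}
Hence if $u$ lies in exactly one hyperedge of $\alpha$, averaging over $\chi_u$ kills the term, and $\bbE_{\sfP'}[h_\alpha]=0$ unless every vertex of $V(\alpha)$ has degree at least $2$ in $\alpha$. For the surviving $\alpha$, $\ell|\alpha|\ge 2|V(\alpha)|$. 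I would then bound $|\bbE[h_\alpha]|$ by expanding the product as a polynomial in $\chi$. Each factor $m_t$ equals $-\sqrt{p/(1-p)}\,(-r)^{\ell}\cdot \prod_{v\in e}(-1/r)^{\chi_v}$, which gives $\bbE[h_\alpha]=c^{|\alpha|}\prod_{v}\bbE[(-1/r)^{d_v\chi_v}]$ with $c=-\sqrt{p/(1-p)}(-r)^\ell$. Each vertex factor is $(1-\rho)+\rho(-1/r)^{d_v}$, of size roughly $\rho^{1-d_v}$ for $d_v\ge2$. This gives
\begin{align*}
|\bbE[h_\alpha]|\lesssim p^{|\alpha|/2}\rho^{\ell|\alpha|}\prod_v \rho^{1-d_v}=p^{|\alpha|/2}\rho^{|V(\alpha)|},
\end{align*}
up to $2^{O(|\alpha|\ell)}$ factors, using $\sum_v d_v=\ell|\alpha|$. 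The number of $\alpha$ with given $|\alpha|=a$ and $|V(\alpha)|=v$ is at most $n^{v}\,(v^\ell)^{a}$. The contribution is therefore at most $\sum_{a,v} n^v v^{\ell a}p^{a}\rho^{2v}=\sum (n\rho^2)^v p^a v^{\ell a}$. Using $a\ge 2v/\ell$ and $p\le1$ gives $p^a\le p^{2v/\ell}$, so each term is at most $(4k^2p^{2/\ell}/n)^v\,v^{\ell a}$. Since $v\le \ell a\le \ell D$ and $v\ge \ell$, the combinatorial factor $v^{\ell a}$ must be absorbed against $(k^2p^{2/\ell}/n)^{v}$. With $a\le D$ and $v\gtrsim$ a minimum of order $a^{1-1/\ell}$... this is the delicate counting step and is what yields the $2^{-C_\ell D^{1-1/\ell}\log(2D)}$ factor. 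I would first try a sharper enumeration: count hypergraphs on $v$ labeled vertices with $a$ edges by at most $\binom{\binom{v}{\ell}}{a}\le (ev^\ell/a)^a$. I would then split into $a\le v^{?}$ regimes and use $v\ge (\ell! a)^{1/\ell}$ to bound $(v^\ell/a)^a$ by $2^{O(D^{1-1/\ell}\log D)\,v}$ after maximizing over $a\le D$. This is the main obstacle. With $k$ below the stated threshold and the extra $\varepsilon^{1/(2(\ell+1))}$ factor, the geometric series in $v\ge\ell$ sums to at most $\varepsilon/2$.

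\textbf{Conditioning.} Finally, I pass from $\sfP'$ to $\widetilde{\sfP}$. Since $\bbE[|S|]=2k$, Chernoff gives $\Pr[\cG_k^c]\le e^{-k/8}$. For each $\alpha$, $|\bbE_{\widetilde{\sfP}}[h_\alpha]-\bbE_{\sfP'}[h_\alpha]|\lesssim \Pr[\cG_k^c]\,\max|h_\alpha|/\Pr[\cG_k]\le e^{-k/8}(1/p)^{D/2}\cdot2$. There are at most $(en^\ell)^D$ such $\alpha$, and $k\ge\kappa_{\ell,2}(D\log(en^\ell/p)+\log(16D/\varepsilon))$ makes the total squared error at most $\varepsilon/2$. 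Combined with $(x+y)^2\le 2x^2+2y^2$, this gives $\Adv^2_{\le D}(\widetilde{\sfP},\sfQ)\le1+\varepsilon$.
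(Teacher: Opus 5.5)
Your proposal follows the paper's proof essentially step for step: your quiet identity is the paper's $\bbE[\zeta-\rho]=0$ after the substitution $\chi_v-\rho=(-\rho)(-1/r)^{\chi_v}$. You obtain the same coefficient bound $(\bbE_{\sfP'}[h_\alpha])^2\le (B_\ell p)^{|\alpha|}\rho^{2|V(\alpha)|}$, the same handshake step $p^{|\alpha|-2|V(\alpha)|/\ell}\le 1$, and the same conditioning argument (Chernoff plus $(x+y)^2\le 2x^2+2y^2$). The counting step you flag as the obstacle is closed in the paper along the lines of your sketch, via $|\alpha|\le\min\{D,|V(\alpha)|^\ell\}\le D^{1-1/\ell}|V(\alpha)|$ together with $\log(C_\ell|V(\alpha)|^\ell)=O_\ell(\log D)$ (as $|V(\alpha)|\le \ell D/2$); note also that the geometric series must start at $|V(\alpha)|=\ell+1$, not at $\ell$, because minimum degree two forces at least two hyperedges, and only then does the $\varepsilon^{1/(2(\ell+1))}$ factor in $k$ give an $O(\varepsilon)$ bound.
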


\begin{proof}
    We will work with the planted distribution ${\sfP'}$ generated at end of \prettyref{step:unconditional-quiet-hypergraph} in \prettyref{def:hypergraph-quiet-planted}, and the resulting conditional distribution $\widetilde{\sfP}$. We start with the following useful result.
    
    \begin{claim}\label{claim:quick-claim-hypergraph}
        For every candidate hyperedge $e \in \cE_{\ell}$ and the planted distribution ${\sfP'}$ we have that,
        \begin{align*}
            \bbE_{\sfP'}\Brac{Y_{e} \mid \chi} = \lambda\prod_{i\in e}\paren{\chi_i-\rho}, \quad \text{where} \quad \lambda \defeq -\sqrt{\frac{p}{1-p}}\cdot\paren{\frac{1}{1-\rho}}^\ell.
        \end{align*}
    \end{claim}

    \begin{proof}
        Fix $e\in \cE_{\ell}$ and let $t=t_e = \abs{e \cap S}$. Then first we note that by definition of $q_t$ we have,
    \begin{align*}
        \bbE\Brac{Y_e \mid \chi} = \frac{q_t-p}{\sqrt{p(1-p)}} = - \sqrt{\frac{p}{1-p}}\paren{-\frac{\rho}{1-\rho}}^{\ell-t}.
    \end{align*}
    On the other hand we can write the term,
    \begin{align*}
        \prod_{i\in e}\paren{\chi_i - \rho} = \paren{1-\rho}^t\paren{-\rho}^{\ell-t},
    \end{align*}
    and verify that the expression on RHS is actually the same value by noting that,
    \begin{align*}
        \lambda \prod_{i\in e}\paren{\chi_i - \rho}= - \sqrt{\frac{p}{1-p}}\paren{1-\rho}^{-\ell}\paren{1-\rho}^t\paren{-\rho}^{\ell-t} = - \sqrt{\frac{p}{1-p}}\paren{-\frac{\rho}{1-\rho}}^{\ell-t},
    \end{align*}
    which proves our desired identity.
    \end{proof}
    Fix a candidate hyperedge set $\alpha \subseteq \cE_{\ell}$ and let $F_{\alpha}$ be the corresponding hypergraph     with set of edges given by $\alpha$ and $V(\alpha)$ as the corresponding vertex set. We also let $d_{\alpha}(u) = \deg_{F_{\alpha}}(u)$.

    \begin{lemma}
        Fix a set of candidate hyperedges $\alpha \subseteq \cE_{\ell}$. Then under the planted distribution $\sfP'$,
        \begin{align*}
            \bbE_{\sfP'}\Brac{h_{\alpha}}=\lambda^{\abs{\alpha}}\prod_{u\in V(\alpha)}\bbE\Brac{\paren{\zeta-\rho}^{d_{\alpha}(u)}}, \quad \zeta \sim \Ber(\rho).
        \end{align*}
        Further, if a subhypergraph $F_{\alpha}$ on the set of candidate hyperedges $\alpha$  contains a vertex of degree $1$, we have $\bbE_{\sfP'}\Brac{h_{\alpha}}=0$. If every vertex in $V(\alpha)$ has degree atleast $2$, then for a constant $B_{\ell} \geq 2^{2\ell+1}$ we have,
        \begin{align*}
            \paren{\bbE_{\sfP'}\Brac{h_{\alpha}}}^2 \leq \paren{B_{\ell}p}^{\abs{\alpha}}\rho^{2\abs{V(\alpha)}}.
        \end{align*}
    \end{lemma}
    
    \begin{proof}
        We start by noting that for a set $\alpha$, by the construction of planted distribution $\sfP'$, its constituent hyperedges are conditionally independent given $\chi$. Using \prettyref{claim:quick-claim-hypergraph} we have that,
        \begin{align*}
            \bbE_{\sfP'}\Brac{h_{\alpha} \mid \chi} &= \bbE_{\sfP'}\Brac{\prod_{e\in \alpha}Y_e \mid \chi} = \prod_{e \in \alpha}\bbE_{\sfP'}\Brac{Y_e \mid \chi}=\lambda^{\abs{\alpha}}\prod_{e\in \alpha}\prod_{i\in e}\paren{\chi_i-\rho} =\lambda^{\abs{\alpha}}\prod_{u\in V(\alpha)}\paren{\chi_u-\rho}^{d_{\alpha}(u)},
        \end{align*}
        where last inequality holds because every vertex $v$ appears once for every incident hyperedge. Now we take the expectation over $\chi$ and using independence of $\chi_u$'s
        we have,
        \begin{align*}
            \bbE_{\sfP'}\Brac{h_{\alpha}} = \bbE\Brac{\bbE_{\sfP'}\Brac{h_{\alpha} \mid \chi}} = \lambda^{\abs{\alpha}}\prod_{u\in V(\alpha)}\bbE\Brac{\paren{\zeta - \rho}^{d_{\alpha}(u)}}
        \end{align*}
        where $\zeta \sim \Ber(\rho)$ and thus $\bbE\Brac{\zeta - \rho}=0$. Hence, for a vertex of degree $1$ we have some $u$ has $d_{\alpha}(u)=1$ and thus some factor of $\bbE\Brac{\zeta-\rho}$ makes the coefficient $\bbE_{\sfP'}\Brac{h_{\alpha}}=0$. If all vertices in $F_{\alpha}$ have degree atleast $2$, since $\zeta \in \set{0,1}$, and for $t\geq 2$, we have $\abs{\bbE\Brac{\paren{\zeta-\rho}^t}} \leq \rho$. Hence it follows that every factor in expression of $\bbE_{\sfP'}\Brac{h_{\alpha}}$ above is bounded by $\rho$ and since $\rho \leq 1$ we have,
        \begin{align*}
            \Big\lvert{\bbE_{\sfP'}\Brac{h_{\alpha}}}\Big\rvert \leq \abs{\lambda}^{\abs{\alpha}} \prod_{u\in V(\alpha)}\bbE\Brac{\paren{\zeta-\rho}^{d_{\alpha}(u)}} \leq \abs{\lambda}^{\abs{\alpha}}\prod_{u\in V(\alpha)}\paren{\rho} \leq \abs{\lambda}^{\abs{\alpha}}\cdot \rho^{\abs{V(\alpha)}}.
        \end{align*}
        Using $\rho \leq 1/2$ and $p \leq 1/2$ so that $\abs{\lambda}^2 \leq 2^{2\ell+1}p \leq  B_{\ell}p$, and we have,
        \begin{align*}
            \paren{\bbE_{\sfP'}\Brac{h_{\alpha}}}^2 \leq \paren{B_{\ell}p}^{\abs{\alpha}}\rho^{2\abs{V(\alpha)}}.
        \end{align*}
    \end{proof}
    \begin{claim}\label{claim:handshake-hypergraph}
        Fix an $\alpha \subseteq \cE_{\ell}$, if we have $\deg\paren{F_{\alpha}} \geq 2$. Then $\abs{V\paren{\alpha}} \leq \abs{\alpha}\cdot \ell/2$.
    \end{claim}

    \begin{proof}
        This simply follows by double counting the incidences between vertices and hyperedges,
        \begin{align*}
            \ell\abs{\alpha} =  \ell \cdot \abs{\text{edges}} = \sum_{u \in V(\alpha)}d_{\alpha}(u) \geq 2\abs{V(\alpha)}.
        \end{align*}
    \end{proof}
    We first bound the advantage for the unconditioned planted distribution $\sfP'$.
    Now note that for every hypergraph where the minimum degree is $2$, there must be at least $\ell+1$ vertices. Now we fix $\abs{\alpha},\abs{V(\alpha)}$ and count how many such subhypergraphs $F_{\alpha}$ exists. We can first choose $\abs{V(\alpha)}$ many vertices and then there are $\binom{\abs{V(\alpha)}}{\ell}$ choices for edges from which we choose $\abs{\alpha}$. We can now use \prettyref{prop:advantage-computation} and plug the subhypergraph count, the coefficients and obtain,
    \begin{align*}
        \Adv^2_{\leq D}\paren{{\sfP'},\sfQ}- 1 &\leq \sum_{\abs{\alpha} \leq D}\paren{\bbE_{\sfP'}\Brac{h_{\alpha}}}^2 \leq \sum_{\abs{\alpha} \leq D}\sum_{\abs{V(\alpha)}=\ell+1}^{\abs{\alpha}\cdot \ell/2}\binom{n}{\abs{V(\alpha)}} \binom{\binom{\abs{V(\alpha)}}{\ell}}{\abs{\alpha}} \paren{B_{\ell}p}^{\abs{\alpha}} \rho^{2\abs{V(\alpha)}}
    \end{align*}
    Exchanging the order of summation we can rewrite the expression as,
    \begin{align*}\numberthis\label{eq:advantage-equation-exchanged-hypergraph}
        \Adv^2_{\leq D}\paren{{\sfP'},\sfQ}- 1 \leq \sum_{\abs{V(\alpha)}=\ell+1}^{\ell D/2}\binom{n}{\abs{V(\alpha)}}\rho^{2\abs{V(\alpha)}} \sum_{\abs{\alpha}=2\abs{V(\alpha)}/\ell}^{\min\set{D,\binom{\abs{V(\alpha)}}{\ell}}}\binom{\binom{\abs{V(\alpha)}}{\ell}}{\abs{\alpha}}\paren{B_{\ell}p}^{\abs{\alpha}}.
    \end{align*}
    Using crude bounds on binomial terms, where we let $\binom{n}{\abs{V(\alpha)}}\leq n^{\abs{V(\alpha}}$ and $\binom{\binom{V(\alpha)}{\ell}}{\abs{\alpha}} \leq \abs{V(\alpha)}^{\ell
    \cdot \abs{\alpha}}$ and using $4^{\abs{V(\alpha)}} \leq 4^{\ell\abs{\alpha}/2}$ , we have that for a constant $C_{\ell}=2^{\ell}\cdot B_{\ell}$,
    \begin{align*}
        \Adv^2_{\leq D}\paren{{\sfP'},\sfQ}- 1 
        &\leq \sum_{\abs{V(\alpha)}=\ell+1}^{\ell D/2}\sum_{\abs{\alpha}=2\abs{V(\alpha)}/\ell}^{\min\set{D,\binom{\abs{V(\alpha)}}{\ell}}}
        \paren{C_{\ell}\abs{V(\alpha)}^{\ell}p}^ {\abs{\alpha}}\paren{\frac{k^2}{n}}^{\abs{V(\alpha)}}\\
        &= \sum_{\abs{V(\alpha)}=\ell+1}^{\ell D/2}\sum_{\abs{\alpha}=2\abs{V(\alpha)}/\ell}^{\min\set{D,\binom{\abs{V(\alpha)}}{\ell}}}
        \paren{C_{\ell}\abs{V(\alpha)}^{\ell}}^ {\abs{\alpha}}\cdot p^{\abs{\alpha}-\frac{2\abs{V(\alpha)}}{\ell}}\cdot \paren{\frac{p^{2/\ell}k^2}{n}}^{\abs{V(\alpha)}}
    \end{align*}
    Using $\abs{\alpha} \geq 2\abs{V(\alpha)}/\ell$ and $p \leq 1$ the middle term involving $p$ can be upper bounded by $1$ so that,
    \begin{align*}
        \Adv^2_{\leq D}\paren{{\sfP'},\sfQ}- 1 &\leq \sum_{\abs{V(\alpha)}=\ell+1}^{\ell D/2}\paren{\frac{p^{2/\ell}k^2}{n}}^{\abs{V(\alpha)}}\sum_{\abs{\alpha}=2\abs{V(\alpha)}/\ell}^{\min\set{D,\binom{\abs{V(\alpha)}}{\ell}}}
        \paren{C_{\ell}\abs{V(\alpha)}^{\ell}}^ {\abs{\alpha}} 
    \end{align*}
    For a fixed $\abs{V(\alpha)}$, let the largest possible value for $\abs{\alpha}$ be denoted $m_{\ell,V_{\alpha}}= \min\set{D,\binom{\abs{V(\alpha)}}{\ell}}$. Now the inner summation above involves atmost $D$ terms and the expression for advantage simplifies,
    \begin{align*}\numberthis\label{eq:bound-from-here-hypergraph}
        \Adv^2_{\leq D}\paren{{\sfP'},\sfQ}- 1 \leq D\sum_{\abs{V(\alpha)}=\ell+1}^{\ell D/2}\paren{C_{\ell}\abs{V(\alpha)}^{\ell}}^ {m_{\ell,V_{\alpha}}}\cdot \paren{\frac{p^{2/\ell}k^2}{n}}^{\abs{V(\alpha)}}
    \end{align*}
    \begin{claim}\label{claim:bound-hypergraph}
        There exists $A_{\ell}\geq 0$, depending only on $\ell$ such that for every $D \geq 1$ and every $\abs{V(\alpha)}$ satisfying $\ell +1 \leq \abs{V(\alpha)} \leq \ell D/2$, and for any $C_{\ell} \geq 1$ we have that,
        \begin{align*}
            \paren{C_{\ell}\abs{V(\alpha)}^{\ell}}^{m_{\ell,V_{\alpha}}} \leq 2^{g\paren{\ell,D}\cdot\abs{V(\alpha)}},\quad \text{for} \quad g\paren{\ell,D} = A_{\ell}D^{1-\frac{1}{\ell}}\log(D).
        \end{align*}
    \end{claim}

    \begin{proof}
        Note that by definition $m_{\ell,V_{\alpha}} \leq \min\set{D,\binom{V(\alpha)}{\ell}} \leq \min \set{D,\abs{V(\alpha)}^{\ell}}$. Using AM-GM inequality, we have that for $a,b \geq 0$ and $\theta  \in \Brac{0,1}$ we have,
        \begin{align*}
            \min \set{a,b} \leq a^{1-\theta}b^{\theta}.
        \end{align*}
        Setting $a=D,b=\abs{V(\alpha)}^{\ell}$, and $\theta=1/\ell$, we have $m_{\ell,V_{\alpha}} \leq D^{1-(1/\ell)}\abs{V(\alpha)}$. Also since $\abs{V(\alpha)} \leq \ell D/2$ we have $\log\paren{C_{\ell}\abs{V(\alpha)}^{\ell}} \leq C'_{\ell}\log\paren{D}$,for a constant $C'_{\ell}$ depending only on $\ell$. Therefore,
        \begin{align*}
            \log\paren{\paren{C_{\ell}\abs{V(\alpha)}^{\ell}}^{m_{\ell,V_{\alpha}}}} = m_{\ell,V_{\alpha}}\log\paren{C_{\ell}\abs{V(\alpha)}^{\ell}} \leq C'_{\ell}\abs{V(\alpha)}D^{1-1/\ell}\log(D).
        \end{align*}
        Now taking $A_{\ell} \geq C'_{\ell}$ finishes our proof.
    \end{proof}

    Now in eqn.~\prettyref{eq:bound-from-here-hypergraph} we note that $C_{\ell}\abs{V(\alpha)}^{\ell} \geq 1$, the inner sum has atmost $D$ terms and is hence bounded by $D\paren{C_{\ell}\abs{V(\alpha)}^{\ell}}^{m_{\ell,V_{\alpha}}}$. Using \prettyref{claim:bound-hypergraph} it follows that there is a constant $A_{\ell}>0$ such that $\paren{C_{\ell}\abs{V(\alpha)}^{\ell}}^{m_{\ell,V_{\alpha}}} \leq 2^{g(\ell,D)\cdot \abs{V(\alpha)}}$. Now putting it together we have,
    \begin{align*}
        \Adv^2_{\leq D}\paren{\sfP',\sfQ}-1 \leq D\sum_{\abs{V(\alpha)}=\ell+1}^{\ell D/2}\paren{\paren{\frac{p^{2/\ell}k^2}{n}}\cdot 2^{g(\ell,D)}}^{\abs{V(\alpha)}}
    \end{align*}
    For our given regimes of $k,n,\ell,p,D,\varepsilon$ we have that $\paren{p^{2/\ell}\cdot k^2/n} \leq \kappa_{\ell,1}^2\cdot \varepsilon^{1/(\ell+1)}\cdot 2^{-2C_{\ell}\cdot g'(\ell,D)}$ where $g'(\ell,D)=D^{1-(1/\ell)}\log(D)$. Now choosing a large enough $C_{\ell}>A_{\ell}$ and $\kappa_{\ell,1}>0$ small enough (choice of $\kappa_{\ell,1} \leq 1/\sqrt{2}$ works) so that for every $D \geq 1$ and every $\varepsilon \in (0,1)$ we have,
    \begin{align*}
         \Adv^2_{\leq D}\paren{\sfP',\sfQ}-1 \leq D\sum_{\abs{V(\alpha)}=\ell+1}^{\infty}\paren{\kappa_{\ell,1}^2\varepsilon^{1/(\ell+1)}2^{-\paren{2C_{\ell}-A_{\ell}}g'(\ell,D)}}^{\abs{V(\alpha)}} \leq \frac{\varepsilon}{4},
    \end{align*}
    where we used that $2^{-(2C_{\ell}-A_{\ell})g'(\ell,D)} \leq 1$ and the geometric series has a multiplicative ratio $\kappa_{\ell,1}^2 \leq 1/2$ and we can bound it by $2\paren{\kappa_{\ell,1}^2\varepsilon^{1/(\ell+1)}}^{\ell+1}$
    and for $\ell \geq 2$, we do obtain $\Adv^2_{\leq D}\paren{\sfP',\sfQ} \leq 1+\varepsilon/4$.  Now for the planted distribution to be supported on independent set of size atleast $k$, denoted by good event $\cG_k = \set{\abs{S} \geq k}$ and recall $\widetilde{\sfP}\defeq \sfP'\paren{\cdot \mid \cG_k}$. Under our construction in \prettyref{def:hypergraph-quiet-planted} it is easy to verify that $\widetilde{\sfP}$ is supported $\cR(k)$. Now note that $\bbE\brac{\abs{S}}=2k$ by Chernoff bounds for a constant $c'$ of our choice, we have,
    \begin{align*}
        \delta_n \defeq {\sfP'}\paren{\cG_k^{c}} = \Pr{\abs{S} < k} \leq e^{-c'k}.
    \end{align*}
    Next, note that for any $\alpha$ where $\abs{\alpha} \leq D$,  for $p \leq 1/2$ we have $\abs{Y_e}\leq 1/\sqrt{p}$ and $\abs{h_{\alpha}}\leq p^{-\abs{\alpha}/2}$. So,
    \begin{align*}
        \abs{\bbE_{\widetilde{\sfP}}\Brac{h_{\alpha}} - \bbE_{\sfP'}\Brac{h_{\alpha}}} \leq 2\delta_n\cdot p^{-\abs{\alpha}/2}
    \end{align*}
    Now using the inequality that $(a+b)^2 \leq 2a^2+2b^2$ we can rewrite the above as,
    \begin{align*}
        \sum_{\abs{\alpha}=1}^D\paren{\bbE_{\widetilde{\sfP}}\Brac{h_{\alpha}}}^2 \leq 2\sum_{\abs{\alpha}=1}^D \paren{\bbE_{\sfP'}\Brac{h_{\alpha}}}^2 + 8\delta_n^2 \sum_{\abs{\alpha}=1}^D\binom{\binom{n}{\ell}}{\abs{\alpha}}p^{-\abs{\alpha}}.
    \end{align*}
    We already argued that the first term is $\varepsilon/2$ and for the second term,
    \begin{align*}
        \sum_{\abs{\alpha}=1}^D\binom{\binom{n}{\ell}}{\abs{\alpha}}p^{-\abs{\alpha}} \leq D\paren{\frac{en^\ell}{p}}^D.
    \end{align*}
    Therefore the conditioning error is now atmost,
    \begin{align*}
        8D\cdot \exp\paren{-2c'k + D\log \paren{en^\ell/p}}.
    \end{align*}
    By choosing sufficiently large enough $\kappa_{\ell,2} \geq 1/2c'$ (depending only on $\ell$), the bound given by,
    \begin{align*}
        k \geq \kappa_{\ell,2}\paren{D\log \paren{\frac{en^{\ell}}{p}} + \log \paren{\frac{16D}{\varepsilon}}},
    \end{align*}
    ensures the conditioning error is atmost $\varepsilon/2$. Putting together we obtain,
    \begin{align*}
        \Adv^2_{\leq D}\paren{\widetilde{\sfP},\sfQ} \leq 1 + 2\cdot \frac{\varepsilon}{4} + \frac{\varepsilon}{2} = 1+\varepsilon.
    \end{align*}
\end{proof}

\paragraph{Acknowledgements.}
This research was supported in part by the International Centre for Theoretical Sciences (ICTS) for the discussion meeting on Geometry, Probability, and Algorithms (ICTS/\!/gpa2025/05).
AL was supported by Walmart Centre for Tech Excellence at IISc, and SERB award CRG/2023/002896. RP was supported by Prime Minister's Research Fellowship, India. We thank Santosh Vempala and Jeff Xu for helpful discussions. We thank the anonymous reviewers for their comments.

\newcommand{\etalchar}[1]{$^{#1}$}
\providecommand{\bysame}{\leavevmode\hbox to3em{\hrulefill}\thinspace}
\providecommand{\MR}{\relax\ifhmode\unskip\space\fi MR }
\providecommand{\MRhref}[2]{%
  \href{http://www.ams.org/mathscinet-getitem?mr=#1}{#2}
}

\appendix
\section{Reduction to Planted Cliques in Graphs}
\label{app:graph_reduction}
We start by discussing the special case of $r=1$, where the planted substructure is an independent set. The setup is often referred to as the planted hypergraph clique problem, as the independent set corresponds to a clique in the complement graph. A key observation, due to \cite{ZX18} is that the problem admits a natural reduction to the well-studied planted clique problem in graphs.

\begin{proposition}[Section 4.1 in \cite{ZX18}]
	For an $\ell$-uniform hypergraph $H=(V,E)$ with a planted independent set $S$ (as in \prettyref{def:semirandom_colorable_hypergraph} with $r=1$), with high probability can be reduced to graph $G=(\cU,\cF)$ on $n-\paren{\ell-2}$ vertices with planted independent set $S' \subset S$ of $k-\paren{\ell-2}$ vertices.
\end{proposition}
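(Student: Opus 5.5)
The plan is to reduce to the graph case by fixing $\ell-2$ vertices of the planted set and passing to their link graph. Concretely, pick an arbitrary set $T\subseteq S$ with $\abs{T}=\ell-2$; this is possible as soon as $k\geq \ell-2$. Define $\cU \defeq V\setminus T$, so $\abs{\cU}=n-(\ell-2)$. The link graph $G=G_T=(\cU,\cF)$ has edge set
\begin{align*}
\cF \defeq \set{\set{u,v}\subseteq \cU : T\cup\set{u,v}\in E}.
\end{align*}
The candidate planted set is $S' \defeq S\setminus T$, so $\abs{S'}=k-(\ell-2)$.

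The first step is to verify that $S'$ is independent in $G$. If $u,v\in S'$ and $\set{u,v}\in\cF$, then $T\cup\set{u,v}\subseteq S$ is a hyperedge inside $S$. This contradicts the independence of $S$ in $H$ (the $r=1$ case of \prettyref{def:semirandom_colorable_hypergraph}). This step is deterministic.

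The second step is to identify the distribution of $G$ off the planted set. For $u,v\in \cU\setminus S'$ we have $u,v\notin S$, but $T\subseteq S$, so the $\ell$-set $T\cup\set{u,v}$ touches $S$. Such hyperedges are precisely the ones the model lets the adversary control across $S\times(V\setminus S)$; only hyperedges fully inside $V\setminus S$ are random. So I would not claim $G[\cU\setminus S']$ is $G(n,p)$. Instead I would claim $G$ is an instance of the \emph{graph} analogue of the same semirandom model, namely the \cite{LPR25} graph model: a planted independent set $S'$, edges incident to $S'$ arbitrary, and the rest adversarially chosen.

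The main obstacle is that the graph-side rest is not random in this model, because every link edge meets $T\subseteq S$. There are two fixes, and I would try the first. The first fix is to read "planted independent set in $G$" in the structural sense the proposition states: $G$ has $n-(\ell-2)$ vertices and contains an independent set $S'$ of size $k-(\ell-2)$, which the first step proves outright. The second fix applies when one wants the graph-side randomness needed to invoke graph recovery algorithms. In that case I would restrict to the setting of \cite{ZX18}, where all hyperedges not inside $S$ are i.i.d.\ $p$. There, distinct pairs $\set{u,v}$ give distinct $\ell$-sets $T\cup\set{u,v}$, so the link edges are independent $\Ber(p)$. The "with high probability" qualifier then covers the conditions needed for the reduction (for instance $k\geq \ell-2$ and concentration of the link-graph degrees), and each of these holds by a Chernoff bound and a union bound over $\cU$.
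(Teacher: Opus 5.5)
\textbf{Gap: the reduction cannot choose $T\subseteq S$.} The step where you ``pick an arbitrary set $T\subseteq S$'' is the problem. A reduction runs on $H$ without knowing $S$, and recovering $S$ is the whole point, so it cannot place $T$ inside $S$. Once that choice is made for free, the statement collapses to a deterministic triviality, and the ``with high probability'' qualifier has no work to do. Attaching it instead to $k\geq \ell-2$ and to Chernoff bounds on link-graph degrees misidentifies where the probability comes from.

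\textbf{The missing idea, which is essentially the paper's whole proof, is to guess $T$.} Sample $U\subseteq V$ with $\abs{U}=\ell-2$ uniformly at random and build the link graph $G_U$ exactly as you did. Then $\Pr{U\subseteq S}=\binom{k}{\ell-2}/\binom{n}{\ell-2}=\Theta_\ell\paren{(k/n)^{\ell-2}}$. Repeat this independently $\Theta_\ell\paren{(n/k)^{\ell-2}\log(1/\delta)}$ times, which is polynomially many. With probability $1-\delta$, some trial has $U_{t^*}\subseteq S$. For that trial your first step applies verbatim: $G_{U_{t^*}}$ has $n-(\ell-2)$ vertices and independent set $S\setminus U_{t^*}$ of size $k-(\ell-2)$. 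The graph algorithm is then run on every $G_{U_t}$.

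\textbf{What you got right.} Your independence argument is correct. So is your observation that, under \prettyref{def:semirandom_colorable_hypergraph}, every link edge of a $U\subseteq S$ touches $S$ and is therefore adversarially controlled. The paper's phrase ``in a random Erd\H{o}s--R\'enyi graph'' implicitly relies on the \cite{ZX18} model, where all hyperedges not inside $S$ are i.i.d.\ $\mathsf{Ber}(p)$. Your second fix is the right setting for that claim. Neither point, however, substitutes for the random guessing of $U$.
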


\begin{proof}
	We consider a natural reduction from hypergraph planted clique/independent set to graph planted clique/independent set problem as discussed in \cite{ZX18},
	\begin{construction}\label{cons:hypergraph_to_graph}
		Given an instance of a hypergraph $H=(V,E)$ generated according to \prettyref{def:semirandom_colorable_hypergraph}, the reduction proceeds as follows:
		\begin{enumerate}
			\item Fix a set $U=\set{u_1,u_2,\dots,u_{\ell-2}} \subset V$ of $\ell-2$ vertices chosen uniformly at random.
			\item Construct a graph $G=(\cU,\cF)$ where $\cU=V\setminus U$ and for any $i,j\in \cU$, include the edge $\set{i,j}\cup U$ to $\cF$ if the corresponding hyperedge $\set{i,j}\cup U \in E$ is present in the hypergraph.
		\end{enumerate}
	\end{construction}
	We observe that the reduction in \prettyref{cons:hypergraph_to_graph} is an ideal reduction for recovery (yields a graph where $\cU$ is a clique/independent)
	set if the randomly chosen set $U$ lies entirely inside the planted set $S$. In that case, the reduced graph $G$ over $\cU$ has $S\setminus U$ as an independent set of size $k-(\ell-2)$ in a random \Erdos-\Renyi graph. One can repeat the reduction $T$ times by selecting sets $U_1,U_2,\dots,U_T$ independently at random. The probability that a fixed set $U_t$ is fully contained in $S$ is given by,
	\begin{align*}
		\Pr{U_t \subset S} = \frac{\dbinom{k}{\ell-2}}{\dbinom{n}{\ell-2}}= \bigO_{\ell}\paren{\frac{k}{n}}^{\ell-2} \text{and hence for } T=\theta_{\ell}\paren{\paren{\frac{n}{k}}^{\ell-2}\log(1/\delta)}
	\end{align*}
	many repetitions, we have that with probability $1-\delta$ there exists a $t^*\in [T]$ such that $U_{t^*} \subset S$.
\end{proof}

In the light of this reduction, algorithmic results for the graph setting, such as those in \cite{AKS98,FK00} can be directly applied. 
In particular, one can recover the planted clique when $k=\tcohm_p\paren{\sqrt{n}}$, matching the conjectured computational threshold for the problem in graphs and hypergraph settings. This shows that for the $r=1$ hypergraph case , the problem of detecting and recovering planted clique/independent set is no harder than the graph counterpart; the computational threshold on the planted size $k$ in hypergraphs is at most that for the graphs.

However, this is also an issue for us since the algorithms for graphs in \cite{AKS98,FK00} also require a lower bound of $p \gtrsim \sqrt{\log n/n}$, beyond these values of $p$ the certificates yield vacuous bounds (bounds larger than $n$) on the certificate of independent set in $G(n,p)$. Since we are reducing our hypergraph problem to a graph with same probability $p$, we need the same lower bound of $p \gtrsim \sqrt{\log n/n}$ for the above analysis to yield non-vacuous results.

The problem of detecting and recovering a planted $r$-colorable subhypergraph in an $\ell$-uniform random hypergraph is significantly more challenging when $r \geq 2$. 
In our setting, the reduction in \prettyref{cons:hypergraph_to_graph} fails for $r \geq 2$, since the reduction may no longer preserve the coloring structure.

\begin{figure}[htbp]
	\includegraphics[scale=0.3]{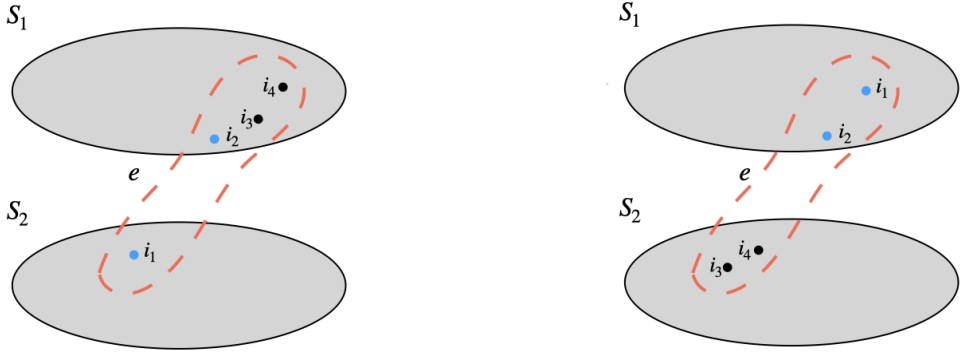}
	\caption{Difficulties with Reduction from $2$-Colorable $4$-Uniform Planted Hypergraphs to Graphs}
	\centering
\end{figure}

\begin{itemize}
	\item \textbf{Attempt 1 (Fail).} If the set $U \subseteq S$ (the $\ell-2$ vertices) contain vertices from different color classes, a hyperedge $e=U \cup \set{i_{\ell-1},i_{\ell}}$ might have $i_{\ell-1}$ and $i_{\ell}$ belonging to the same color class (say $S_1$). The reduction would create a graph edge $\set{i_{\ell-1},i_{\ell}}$ which is an edge within a color class $S_1$, thus violating the $r$-colorability. This is shown on the left in figure above, where $U = \set{i_1,i_2}$ and the violating edge $\set{i_3,i_4}$.

	\item \textbf{Attempt 2 (Fail).} Even if one attempts to refine this reduction by requiring that the set $U$ belongs to a single color class (say $ U \subset S_1$), the issue persists. There may still be an hyperedge $e = U \cup \set{i_{\ell-1},i_{\ell}}$ such that the vertices $i_{\ell-1}$ and $i_{\ell}$ both belong to the same color class (say $S_2$). The reduction then creates a graph edge $\set{i_{\ell-1},i_{\ell}}$ which is again within a single color class and violates the planted coloring structure. This is depicted in the right of the figure, where $U = \set{i_1,i_2} \subset S_1$ and the violating edge $\set{i_3,i_4}\subset S_2$
\end{itemize}

\section{Supplemental Proofs and Technical Discussion}
\subsection{Independence Number Certificates in \texorpdfstring{\cite{AOW15}}{AOW15}}
\label{app:aow-proof-reconstruct}
We state the main technical result in \cite{AOW15} on polynomials with random coefficients.
\begin{theorem}[Theorem 4.1 in \cite{AOW15}]\label{thm:random-poly-bound-aow15}
	Let $\set{w(T)}_{T\in [n]^{\ell}}$ be independent random variables such that for each $T \in [n]^{\ell}$ we have $\E[w(T)]=0, \Pr{w(T)\neq 0}\leq p,\abs{w(T)}\leq 1$, then for $p \geq 1/n^{\ell/2}$ and $\ell \geq 2$ there is an efficient algorithm that certifies for all $x \in \bbR^n$ such that $\norm{x}_{\infty} \leq 1$ that with high probability,
	\begin{align*}
		\sum_{T\in [n]^{\ell}}\paren{w(T)\prod_{u\in T}x_u} \leq \bigO_{\ell} \paren{\sqrt{p}\cdot n^{3\ell/4} \cdot \log^{3/2}(n)}.
	\end{align*}
\end{theorem}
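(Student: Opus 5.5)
The plan is to handle even and odd $\ell$ separately. In both cases the goal is a spectral certificate: write the random polynomial as a quadratic form of a random matrix in a tensor power of $x$, then combine an operator-norm bound with the crude bound $\norm{x^{\otimes s}}_2^2 \leq n^s$ implied by $\norm{x}_\infty \leq 1$. The certificate is efficient because it only requires computing a spectral norm, and it is SoS-friendly by \prettyref{fact:operator_sos_bound}, \prettyref{fact:sos_cauchy_schwarz} and the axioms $x_u^2 \leq 1$.

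\emph{Even $\ell=2q$.} Split each $T=(J,K)$ with $J,K\in[n]^q$ and let $W$ be the $n^q\times n^q$ matrix with $W[J,K]=w(J,K)$ (symmetrize if needed). Then the polynomial equals $y^\top W y$ with $y=x^{\otimes q}$, so it is at most $\norm{W}_2\, n^q$. The entries of $W$ are independent, centered, bounded by $1$, and have variance at most $p$. Matrix Bernstein (or \prettyref{thm:random-matrix-wigner-bounds}) then gives, with high probability,
\begin{align*}
\norm{W}_2 \lesssim \sqrt{p\, n^q \log n} + \log n .
\end{align*}
Since $p\geq n^{-\ell/2}=n^{-q}$, the first term dominates, and the bound is $\sqrt{p}\, n^{3\ell/4}\sqrt{\log n}$, which is better than what is claimed.

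\emph{Odd $\ell=2q+1$.} Write $T=(i,J,K)$ with $J,K\in[n]^q$, set $Q_i(x)=y^\top W_i y$ where $W_i[J,K]=w(i,J,K)$, so the polynomial is $\sum_i x_i Q_i$. Apply Cauchy--Schwarz over $i$ and use $x_i^2\leq 1$:
\begin{align*}
\Bigl(\sum_i x_iQ_i\Bigr)^2 \leq n\sum_i Q_i^2 = n\,(y\otimes y)^\top \Bigl(\sum_i W_i\otimes W_i\Bigr)(y\otimes y).
\end{align*}
By \prettyref{claim:kronecker-quadratic-form}, the right-hand side is a quadratic form in $y\otimes y$ with $\norm{y\otimes y}_2^2\leq n^{2q}$. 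As in \prettyref{def:centered-kroneckered-matrix-define}, decompose $\sum_i W_i\otimes W_i = D + N$, where $D$ collects the terms $w(T)^2$ in which a single coefficient is repeated.

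For the $D$ part, the quadratic form is at most $\sum_T w(T)^2$. This is a sum of $n^\ell$ independent $[0,1]$ variables with mean at most $p$, so by scalar Chernoff it is at most $O(pn^\ell+\log n)$ with high probability.

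For the $N$ part, use $\norm{N}_2\, n^{2q}$. The matrix $N=\sum_i N_i$ is a sum of independent, centered, self-adjoint matrices. Each row of $N_i$ has squared-entry mass about $p^2n^{2q}$, so $\norm{\sum_i \E N_i^2} \lesssim p^2 n^{2q+1}$. The norm $\norm{N_i}$ is at most $\norm{W_i}^2\lesssim pn^q\log n$ with high probability, by the even-case bound applied to $W_i$. A truncated or Orlicz version of matrix Bernstein (\prettyref{thm:general-matrix-bernstein}) should then give
\begin{align*}
\norm{N}_2\lesssim p\, n^{\ell/2}\log n + pn^q\log^2 n,
\end{align*}
where again $p\geq n^{-\ell/2}$ is used to absorb the lower-order terms.

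Combining these bounds:
\begin{align*}
\Bigl(\sum_i x_iQ_i\Bigr)^2\lesssim n\bigl(pn^{\ell}+p\,n^{\ell/2}n^{\ell-1}\log^{3}n\bigr) \lesssim p\, n^{3\ell/2}\log^3 n,
\end{align*}
using $n^{\ell+1}\leq n^{3\ell/2}$ for $\ell\geq 2$. Taking square roots gives $\sqrt p\, n^{3\ell/4}\log^{3/2}n$.

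I expect the main obstacle to be the concentration bound for $\norm{N}_2$. Its summands are products of two independent coefficients, and $\norm{N_i}$ is only controlled with high probability rather than almost surely. Bookkeeping the logarithmic factors and the tail parameter, so that the lower-order Bernstein term is dominated exactly when $p\geq n^{-\ell/2}$, is where the $\log^{3/2} n$ in the statement comes from. I would first try truncating on the high-probability event $\{\norm{W_i}\lesssim \sqrt{pn^q\log n}\}$ for all $i$ and applying the standard matrix Bernstein bound to the truncated, recentered summands.
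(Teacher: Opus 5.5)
The paper does not prove this statement; it quotes it from \cite{AOW15} and uses it as a black box in the appendix. Your argument is the standard proof and is correct in substance. It uses the same machinery as \prettyref{sec:sos-even-analysis} and \prettyref{sec:sos-odd-analysis}: flattening for even $\ell$, and for odd $\ell$ Cauchy--Schwarz over the first coordinate, $\sum_i W_i\otimes W_i$, removal of the repeated-coefficient part, and matrix Bernstein with a tail parameter that is only controlled with high probability. The only change is that $\norm{x}_\infty\le 1$ gives $\norm{x^{\otimes q}}_2^2\le n^q$ where the paper uses booleanity to get $X^q$. That is exactly the difference the paper points to in its comparison with \cite{AOW15}.

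Some intermediate estimates need repair, though none changes the conclusion.

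\emph{Tail parameter and truncation level.} Your bound $\norm{N_i}_2\lesssim pn^q\log n$ and your truncation event $\{\norm{W_i}_2\lesssim\sqrt{pn^q\log n}\}$ are wrong when $pn^q\lesssim \log n$, and the hypothesis $p\ge n^{-\ell/2}$ allows this range. For example, if the nonzero coefficients are $\pm1$, then $\norm{W_i}_2\ge 1$ with high probability, so that event fails with high probability. Truncate instead at $C(\sqrt{pn^q\log n}+\log n)$. The recentering bias is then negligible, since $\norm{N_i}_2\le 2\norm{W_i}_2^2\le 2n^{2q}$ deterministically.

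\emph{Effect on $\norm{N}_2$.} With the corrected level, the Bernstein tail term contributes up to $\log^3 n$ to $\norm{N}_2$. This is not absorbed into your displayed bound for $\norm{N}_2$. After multiplying by $n\cdot n^{2q}=n^{\ell}$, however, it is at most $pn^{3\ell/2}\log^3 n$ exactly because $pn^{\ell/2}\ge 1$. So the final estimate survives, and this term is what forces the hypothesis $p\ge n^{-\ell/2}$.

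\emph{Symmetry.} $W_i$, and hence $N_i$, is not symmetric, because $w(i,J,K)$ and $w(i,K,J)$ are independent. Either symmetrize the quadratic forms or use the rectangular (dilation) form of matrix Bernstein. Both $\sum_i\E N_iN_i^\top$ and $\sum_i\E N_i^\top N_i$ are still $\lesssim p^2n^{2q+1}$; they pick up only harmless swapped off-diagonal entries of size at most $p^2n^q$.

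\emph{Even case.} Near $p=n^{-q}$ the $\log n$ term in $\norm{W}_2$ is not dominated by the first term, contrary to what you wrote. The resulting bound $\sqrt{p}\,n^{3\ell/4}\log n$ is still within the claim.
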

Now they construct a collection of random variables for all potential hyperedges
$e \in {\binom{\Brac{n}}{\ell}}$ of an $\ell$-uniform random hypergraph $H=(V,E)$ they associate an arbitrary tuple $T_e$ among the $\ell!$ tuples in $[n]^{\ell}$ containing the same set of $\ell$ elements. For each $T \in [n]^{\ell}$, they define random variables as,
\begin{align*}
	w(T)  =\begin{cases}
		-(1-p) &\quad \text{if }T=T_e, e\in E\\
		\qquad p &\quad \text{if }T=T_e,e\notin E\\
		\qquad 0 &\quad \text{otherwise}
	\end{cases}
\end{align*}
Now for the indicator $x \in \set{0,1}^n$ of an independent set $S$ so that $\prod_{u\in T}x_u=1$ if $T \subset S$ and $0$ otherwise. Then they observe (see proof of Theorem 2.9 in \cite{AOW15}) that,
\begin{align*}
	\sum_{T\in [n]^{\ell}}w(T)\prod_{u\in T}x_u = \sum_{e\in \binom{[n]}{\ell}}w(T_e)\prod_{u\in E}x_u = p\sum_{e\in \binom{[n]}{\ell}}\prod_{u\in e}x_u - \paren{\sum_{e\in \binom{[n]}{\ell}}\one_{e\in E}\prod_{u\in e}x_u}
\end{align*}
Now the second term in expression above is non-zero iff  $e\in E$ and $\prod_{u\in e}x_u=1$ which means $e \subseteq S$ but since $S$ is an independent set this can never happen simultaneously. The summand in the first term is also $1$ if $\prod_{u\in S}x_u=1$ which happens if $e \subseteq S$ and hence it simply counts number of size-$\ell$ subsets of $S$ and by \prettyref{thm:random-poly-bound-aow15} one obtains a certifiable bound of,
\begin{align*}
	p\binom{\abs{S}}{\ell} = \sum_{T\in [n]^{\ell}}w(T)\prod_{u\in T}x_u \leq \bigO_{\ell} \paren{\sqrt{p}\cdot n^{3\ell/4} \cdot \log^{3/2}(n)}
\end{align*}
Solving this for $k=\abs{S}$ yields a certifiable bound on the size of independent set as,
\begin{align*}
	k = \abs{S} \leq \bigO_{\ell}\paren{\frac{n^{3/4}\cdot \log^{3/2\ell}(n)}{\sqrt{p}}}.
\end{align*}
The bound obtained in \cite{AOW15} is lossy for independent set problem as they use a generic bound that holds for a large class of random polynomials, and apply it to an indirect proxy polynomial for the problem. We on the other hand analyze the natural polynomial that directly captures the independent set problem, which enables us to use the algebraic structure in our analysis and this gets rid of the unnecessary $n^{1/4}\cdot \mathsf{polylog}(n)$ factors.

\subsection{Analyzing the Canonical Planted Distribution for Hypergraphs}
\label{app:analyzing-canonical-planted-distribution-hypergraphs}
We recall the canonical planted distribution in \prettyref{cons:canonical-planted-distribution-hypergraph}. This is the natural planted distribution associated with the problem, and appears in the pseudocalibration approach for SoS lower bounds for planted cliques \cite{BHK+16}, and also considered in the work \cite{JPR+21} for sparse independent sets where they discuss it's limitation towards proving sharp low-degree polynomial lower bounds.

For an $\ell$-uniform hypergraph the distribution $\sfP'$ can be constructed as below:
\begin{enumerate}
    \item Sample a random hypergraph $\cH_0=(V,E)$ as per \prettyref{def:random-hypergraph}.
    \item Choose a subset $S \subset V$ by picking each vertex at random with probability $\rho = 2k/n$. Let $\chi_i$ be an indicator random variable for a vertex $i$ which denotes if the vertex is chosen or not.
    \item Remove every hyperedge $e \in E$ if $e \subseteq S$, i.e. set $X_{e}=0$.
\end{enumerate}

\begin{theorem}\label{thm:canonical-planting-ldp-bound}
    Fix integers $\ell \geq 2, D \geq 1$. Let $\sfQ=\cH_0\paren{n,\ell,p}$, with $p \leq 1/2$, and $\sfP'$ as defined above. Let $\cG_k$ denote the good event that $\abs{S} \geq k$. Then $\sfP \defeq \sfP'\paren{\cdot \mid \cG_k}$ is supported on $\cR(k)$, and we have,
    \begin{align*}
        \Adv_{\leq D}\paren{\sfP,\sfQ} = 1 + o(1), \quad \text{for} \quad k=o\paren{\frac{\sqrt{n}}{p^{1/2\ell}}}
    \end{align*}
\end{theorem}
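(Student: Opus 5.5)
The plan follows the same template as the proof of \prettyref{thm:quiet-planting-hypergraph-theorem}, now applied to the canonical planted distribution $\sfP'$ in place of the quiet one. First I compute the low-degree coefficients. Conditioned on $\chi$, the edges are independent; an edge $e$ with $e\subseteq S$ has $X_e=0$, and every other edge is $\Ber(p)$. Hence $\bbE[Y_e\mid\chi]=-\sqrt{p/(1-p)}\cdot\prod_{i\in e}\chi_i$. It follows that for $\alpha\subseteq\cE_\ell$,
\begin{align*}
\bbE_{\sfP'}\Brac{h_\alpha}=\paren{-\sqrt{\tfrac{p}{1-p}}}^{\abs{\alpha}}\bbE\Brac{\prod_{u\in V(\alpha)}\chi_u}=\paren{-\sqrt{\tfrac{p}{1-p}}}^{\abs{\alpha}}\rho^{\abs{V(\alpha)}}.
\end{align*}
With $p\le 1/2$ this gives $(\bbE_{\sfP'}[h_\alpha])^2\le (2p)^{\abs{\alpha}}\rho^{2\abs{V(\alpha)}}$. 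Unlike the quiet case, leaves do not kill any coefficient, so every shape contributes.

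Next I bound $\Adv^2_{\le D}(\sfP',\sfQ)-1$ using \prettyref{prop:advantage-computation}. I group $\alpha$ by $v=\abs{V(\alpha)}$ and $a=\abs{\alpha}$, which satisfy $\ell\le v\le \ell a$ and $a\le D$. The number of such $\alpha$ is at most $n^{v}\,v^{\ell a}$, so the sum is at most $\sum_{a\le D}\sum_{v}(C_\ell v^\ell p)^{a}(4k^2/n)^{v}$. The key inequality is $a\ge v/\ell$, since every vertex is covered by some edge. Because $p\le1$ this gives $p^{a}\le p^{v/\ell}$. Each term is therefore at most $(C_\ell v^{\ell})^{a}\paren{4k^2p^{1/\ell}/n}^{v}$, which is exactly where the threshold $k=o(\sqrt n/p^{1/2\ell})$ comes from.

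For the combinatorial prefactor, I use $a\le D$ and $v\le \ell D$ to bound $(C_\ell v^\ell)^a\le 2^{O_\ell(D\log D)}$. Since $v\ge \ell\ge 2$, the whole sum is at most $D^2\,2^{O_\ell(D\log D)}\,\tau^{\ell}/(1-\tau)$, where $\tau\defeq 4k^2p^{1/\ell}/n=o(1)$. For fixed $D$ (or $D$ growing slowly relative to $1/\tau$) this is $o(1)$. I would state the implicit dependence on $D$ in the same style as \prettyref{thm:quiet-planting-hypergraph-theorem}, absorbing it into the $o(\cdot)$.

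Finally I handle the conditioning on $\cG_k$ exactly as before. Under $\sfP'$ we have $\bbE\abs{S}=2k$, so Chernoff gives $\sfP'(\cG_k^c)\le e^{-ck}$. Using $\abs{h_\alpha}\le p^{-\abs{\alpha}/2}$, the conditioning error is at most $D(en^\ell/p)^D e^{-2ck}$, which is $o(1)$ once $k\gg D\log(n/p)$; this implicit lower requirement on $k$ should be noted. Support on $\cR(k)$ holds because $S$ is independent by construction and $\abs{S}\ge k$ on $\cG_k$. The main obstacle is step two: checking that the inequality $a\ge v/\ell$ is the only loss. This holds because isolated edges ($v=\ell$, $a=1$) give exactly $\tau^\ell$, and that term is tight, since the edge-count statistic shows the threshold cannot be improved.
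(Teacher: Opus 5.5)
Your coefficient computation and the main counting argument (grouping by $\abs{\alpha}$ and $\abs{V(\alpha)}$, and using $\abs{\alpha}\ge\abs{V(\alpha)}/\ell$ to trade $p^{\abs{\alpha}}$ for $p^{\abs{V(\alpha)}/\ell}$) match what the paper does. The gap is in the last step. You import the additive conditioning argument from the quiet-distribution proof, and that only gives the result under the extra hypothesis $k\gg D\log(n/p)$. You noted this requirement yourself, but it is not part of the statement. The theorem is claimed for every $k=o(\sqrt{n}/p^{1/2\ell})$ and every $p\le 1/2$. Your error term $D(en^\ell/p)^D e^{-2ck}$ is not small when $k\lesssim D\log(n/p)$. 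In particular it is never small, even for $k$ of order $n$, once $D\log(1/p)\gg n$. So as written you prove a strictly weaker statement.

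The missing observation is that for the canonical planting, conditioning costs only a multiplicative factor. The event $\cG_k$ is a function of $S$ alone, and given $S$ the edges have the same law under $\sfP$ and $\sfP'$. Therefore
\begin{align*}
\bbE_{\sfP}\Brac{h_\alpha}=\paren{-\sqrt{\tfrac{p}{1-p}}}^{\abs{\alpha}}\sfP'\paren{V(\alpha)\subseteq S\mid\cG_k}\quad\text{and}\quad \sfP'\paren{V(\alpha)\subseteq S\mid\cG_k}\le\frac{\rho^{\abs{V(\alpha)}}}{\sfP'\paren{\cG_k}}.
\end{align*}
Each squared coefficient thus picks up only the factor $\sfP'(\cG_k)^{-2}$. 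This factor is $1+o(1)$ by Chernoff, and it is at most $4$ for every $k$, since $2k$ is the median of $\abs{S}\sim\mathsf{Bin}(n,2k/n)$. Your bound on $\sum_\alpha(\bbE_{\sfP'}[h_\alpha])^2$ then goes through unchanged, with no lower bound on $k$ and no dependence on $\log(1/p)$.

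This works here but not in the quiet case for a specific reason. Here $\bbE[h_\alpha\mid S]$ is a fixed sign times an indicator, so there is no cancellation for the conditioning to destroy. In the quiet distribution, the vanishing of leaf coefficients relies on independence of the $\chi_u$, and conditioning breaks that. That is why the paper falls back on the cruder additive bound there.
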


\begin{proof}
    Fix a candidate edge set $\alpha \subseteq \cE_{\ell}$, and let $F_{\alpha}$ be the corresponding hypergraph with set of hyperedges given by $\alpha$ and $V(\alpha)$ be the corresponding vertex set.
    \begin{claim}\label{claim:canonical-coefficient-bounds}
        For a fixed $\alpha \subseteq \cE_{\ell}$, and for the planted distribution $\sfP'$ as defined above, we have that,
        \begin{align*}
            \paren{\bbE_{\sfP'}\Brac{h_{\alpha}}}^2 \leq \paren{2p}^{\abs{\alpha}}\cdot \rho^{2\abs{V(\alpha)}}.
        \end{align*}
    \end{claim}
    \begin{proof}
        Conditioned on $S$, we have the edge variables $\set{Y_e}_{e\in \alpha}$ are independent. Hence,
        \begin{align*}
            \bbE_{\sfP'}\Brac{h_{\alpha} \mid S} = \bbE_{\sfP'}\Brac{\prod_{e\in \alpha}Y_e \mid S} = \prod_{e\in \alpha}\bbE_{\sfP'}\Brac{Y_e \mid S} = -\paren{\sqrt{\frac{p}{1-p}}}^{\abs{\alpha}} \prod_{e\in \alpha}, \one_{e\subseteq S}
        \end{align*}
        where the last inequality uses the fact that if $\alpha$ contains some $e \nsubseteq S$ then $Y_e$ is a normalized $\Ber(p)$ random variable that has $0$ mean. Now, the event every hyperedge in $\alpha$ is contained in $S$ is equivalent to the event that $V(\alpha) \subseteq S$. 
        \begin{align*}
            \bbE_{\sfP'}\Brac{h_{\alpha} \mid S}= \paren{-\sqrt{\frac{p}{1-p}}}^{\abs{\alpha}}\prod_{e\in \alpha}\one_{\alpha \subseteq S}  = \paren{-\sqrt{\frac{p}{1-p}}}^{\abs{\alpha}}\prod_{e\in \alpha}\one_{V(\alpha) \subseteq S}
        \end{align*}
        Taking expectation over $S$, and by the construction of $\sfP'$ we have vertices are included with probability $\rho$, which allows for an easy computation that yields,
        \begin{align*}
            \bbE_{\sfP'}\Brac{h_{\alpha}} = \paren{-\sqrt{\frac{p}{1-p}}}^{\abs{\alpha}}\cdot \Pr{V(\alpha)\subseteq S} = \paren{-\sqrt{\frac{p}{1-p}}}^{\abs{\alpha}} \cdot \rho^{\abs{V(\alpha)}}
        \end{align*}
        For $p \leq 1/2$ we have $p/(1-p)\leq 2p$, and squaring the expression above finishes the proof.
    \end{proof}
    Now we consider the conditioned planted distribution $\sfP$ defined as $\sfP=\sfP'\paren{\cdot \mid \cG_k}$ where $\cG_k$ is the event that $\abs{S} \geq k$. Note that $\abs{S} \sim \mathsf{Bin}\paren{n,\rho}$ and $\bbE\Brac{\abs{S}}=n\cdot \rho=2k$. Using Chernoff bounds,
    \begin{align*}
        \Pr{\cG_k^{c}} = \Pr{\abs{S}<k} = \Pr{\abs{S}< \frac{\bbE\Brac{\abs{S}}}{2}} \leq e^{-\tcohm(k)} = o_n(1), \quad \implies \quad \Pr{\cG_k} = 1-o_n(1).
    \end{align*}
    For the conditional distribution $\sfP$ we have $\bbE_{\sfP}\Brac{h_{\alpha}} = \bbE_{\sfP'}\Brac{h_{\alpha} \mid \cG_k}$. Now using the conditional calculation in \prettyref{claim:canonical-coefficient-bounds}, we now have with additional conditioning on $\cG_k$ as,
    \begin{align*}\numberthis\label{eq:conditional-canonical-expression}
        \paren{\bbE_{\sfP}\Brac{h_{\alpha}}}^2 = \paren{\frac{p}{1-p}}^{\abs{\alpha}} \cdot \paren{\Pr{V(\alpha)\subseteq S \mid \cG_k}}^2
    \end{align*}
    Again for $p \leq 1/2$ the first term is atmost $2p$ and conditional probability can be written as,
    \begin{align*}
        \Pr{V(\alpha) \subseteq S \mid \cG_k} = \frac{\Pr{V(\alpha)\subseteq S,\cG_k}}{\Pr{\cG_k}} \leq \frac{\Pr{V(\alpha)\subseteq S}}{\Pr{\cG_k}}.
    \end{align*}
    Now putting this back in eqn.~\prettyref{eq:conditional-canonical-expression} and using $\Pr{\cG_k}=1-o(1)$ we have,
    \begin{align*}
        \paren{\bbE_{\sfP}\Brac{h_{\alpha}}}^2 \leq \paren{\frac{p}{1-p}}^{\abs{\alpha}}\cdot \frac{\rho^{2\abs{V(\alpha)}}}{{\Pr{\cG_k}}^2} \leq \paren{1+o_n(1)}\paren{2p}^{\abs{\alpha}}\rho^{2\abs{V(\alpha)}}.
    \end{align*}
    Now using \prettyref{prop:advantage-computation} and plugging the above bound we have,
    \begin{align*}
        \Adv^2_{\leq D}\paren{\sfP,\sfQ} - 1 = \sum_{1\leq \abs{\alpha}\leq D}\paren{\bbE_{\sfP}\Brac{h_{\alpha}}}^2 \leq \paren{1+o(1)}\sum_{\abs{\alpha}=1}^D\sum_{\alpha \in \cE_{\ell}}\paren{2p}^{\abs{\alpha}}\cdot \rho^{2\abs{V(\alpha)}}.
    \end{align*}
    Now we group terms according to $\abs{V(\alpha)}$. For any $\ell$-uniform candidate hyperedge set $\alpha$, at least one $\ell$ edge contributes $\ell$ vertices and $\abs{\alpha}$ edges contribute at most $\abs{V(\alpha)}\ell$ vertices, so $\ell \leq \abs{V(\alpha)}\leq \abs{\alpha}\cdot \ell$. For a fixed $\abs{\alpha},\abs{V(\alpha)}$, the number of possible subhyperpgraphs $F_{\alpha}=\paren{V(\alpha),\alpha}$ with $\abs{\alpha}$ hyperedges and $\abs{V(\alpha)}$ vertices is atmost,
    \begin{align*}
        \binom{n}{\abs{V(\alpha)}} \cdot \binom{\binom{\abs{V(\alpha)}}{\ell}}{\abs{\alpha}},
    \end{align*}
    which follows by first choosing $\abs{V(\alpha)}$ many vertices and then there are $\binom{V(\alpha)}{\ell}$ choices for hyperedges out of which one chooses $\abs{\alpha}$. Now $\ell,D$ are fixed and so  we can upper bound as,
    \begin{align*}
         \binom{n}{\abs{V(\alpha)}} \cdot \binom{\binom{\abs{V(\alpha)}}{\ell}}{\abs{\alpha}} \leq n^{\abs{V(\alpha)}}\cdot {\binom{\abs{V(\alpha)}}{\ell}}^{\abs{\alpha}} \leq n^{\abs{V(\alpha)}}\cdot \abs{V(\alpha)}^{\abs{\alpha}\ell} \leq n^{\abs{V(\alpha)}}\cdot \paren{\ell D}^{\ell D}.
    \end{align*}
    Let $C''_{\ell,D}=\paren{\ell D}^{\ell D}$ and plugging it back into the $\Adv_{\leq D}$ computation, grouping by $\abs{V(\alpha)}$,
    \begin{align*}
        \Adv^2_{\leq D}\paren{\sfP,\sfQ}-1 \leq \paren{1+o(1)}C''_{\ell,D}\sum_{\abs{\alpha}=1}^D\sum_{\abs{V(\alpha)}=\ell}^{\abs{\alpha}\ell}n^{\abs{V(\alpha)}}\paren{2p}^{\abs{\alpha}}\rho^{2\abs{V(\alpha)}}.
    \end{align*}
    Again since $\abs{\alpha} \leq D$, for a fixed $D$ we let $C'_{\ell,D}=C''_{\ell,D}\cdot 2^D$ and we rewrite above as,
    \begin{align*}\numberthis\label{eq:important-equation}
        \Adv^2_{\leq D}\paren{\sfP,\sfQ}-1 \leq \paren{1+o(1)}C'_{\ell,D}\sum_{\abs{\alpha}=1}^D\sum_{\abs{V(\alpha)}=\ell}^{\abs{\alpha}\ell}n^{\abs{V(\alpha)}}{p}^{\abs{\alpha}}\rho^{2\abs{V(\alpha)}}.
    \end{align*}
    As argued above, for an $\ell$-uniform hypergraph $\abs{\alpha} \geq \abs{V(\alpha)}/\ell$ and since $p<1$ we have $p^{\abs{\alpha}} \leq p^{\abs{V(\alpha)/\ell}}$, and plugging into the expression above we have,
    \begin{align*}
        \Adv^2_{\leq D}\paren{\sfP,\sfQ}-1 \leq \paren{1+o(1)}C'_{\ell,D}\sum_{\abs{\alpha}=1}^D\sum_{\abs{V(\alpha)}=\ell}^{\abs{\alpha}\ell}\paren{n\rho^2p^{1/\ell}}^{\abs{V(\alpha)}},
    \end{align*}
    We upper bound the summation as $D$ times the inner summation and letting $C_{\ell,D}=D\cdot C'_{\ell,D}$ yields,
    \begin{align*}
         \Adv^2_{\leq D}\paren{\sfP,\sfQ}-1 \leq \paren{1+o(1)}C_{\ell,D}\sum_{\abs{V(\alpha)}=\ell}^{\abs{\alpha}\ell}\paren{n\rho^2p^{1/\ell}}^{\abs{V(\alpha)}}
    \end{align*}
    By our assumption on the parameter regimes for $k$ we have that each term in expression above is $o_n(1)$ and since $D,\ell$ are fixed we have,
    \begin{align*}
        C_{\ell,D}\sum_{\abs{V(\alpha)}=\ell}^{\abs{\alpha}\ell}\paren{n\rho^2p^{1/\ell}}^{\abs{V(\alpha)}} \leq C_{\ell,D}\cdot  \leq C_{\ell,D}\cdot D\ell \cdot o_n(1) = o_n(1)
    \end{align*}
    Now we obtain the desired guarantee as $\Adv^2_{\leq D}-1 \leq \paren{1+o_n(1)}\cdot o_n(1)$ and hence $\Adv^2_{\leq D} \leq 1 + o_n(1)$. Now it remains to check that $\sfP$ is supported on $\cR\paren{k}$. We note that $\sfP$ is the conditional distribution on the good event $\cG_k$ that the planted set $\abs{S} \geq k$. By our construction, every hyperedge contained inside $S$ is deleted and $S$ is an independent set of size atleast $k$. Therefore, the sampled hypergraph belongs to $\cR(k)$ with probability $1$ under $\sfP$.
\end{proof}

\end{document}